\numberwithin{equation}{section}
\theoremstyle{plain}
\begin{document}
\begin{frontmatter}

\title{{\large CovNet: Covariance Networks for Functional Data\\ on Multidimensional Domains}}

\runtitle{Covariance Networks for Functional Data on Multidimensional Domains}

\begin{aug}
\author{\fnms{Soham} \snm{Sarkar}\ead[label=e1]{soham.sarkar@epfl.ch}}
\and
\author{\fnms{Victor M.} \snm{Panaretos}\ead[label=e2]{victor.panaretos@epfl.ch}}

\thankstext{t1}{Research supported by a Swiss National Science Foundation grant.}

\runauthor{S. Sarkar \& V.M. Panaretos}

\affiliation{Ecole Polytechnique F\'ed\'erale de Lausanne}

\address{Institut de Math\'ematiques\\
Ecole Polytechnique F\'ed\'erale de Lausanne \\
e-mail: 
\href{mailto:soham.sarkar@epfl.ch}{soham.sarkar@epfl.ch}, 
\href{mailto:victor.panaretos@epfl.ch}{victor.panaretos@epfl.ch}}

\end{aug}

\begin{abstract}
Covariance estimation is ubiquitous in functional data analysis. Yet, the case of functional observations over multidimensional domains introduces computational and statistical challenges, rendering the standard methods effectively inapplicable. To address this problem, we introduce \emph{Covariance Networks} (CovNet) as a modeling and estimation tool. The CovNet model is  \emph{universal} -- it can be used to approximate any covariance up to desired precision. Moreover, the model can be fitted efficiently to the data and its neural network architecture allows us to employ modern computational tools in the implementation. The CovNet model also admits a closed-form eigendecomposition, which can be computed efficiently, without constructing the covariance itself. This facilitates easy storage and subsequent manipulation of a covariance in the context of the CovNet. We establish consistency of the proposed estimator and derive its rate of convergence. The usefulness of the proposed method is demonstrated by means of an extensive simulation study and an application to resting state fMRI data.
\end{abstract}

\begin{keyword}[class=AMS]
\kwd[Primary ]{62G05, 62M40, 62M45}
\kwd[; secondary ]{15A99, 68T07}
\end{keyword}

\begin{keyword}
\kwd{deep learning}
\kwd{FDA}
\kwd{neural network}
\kwd{nonparametric model}
\kwd{universal approximation}
\end{keyword}

\end{frontmatter}

\tableofcontents

\section{Introduction}\label{sec:intro}

We consider the problem of covariance estimation from a collection of functional observations defined over a multidimensional domain. To be precise, let $\X = \{X(\uvec):\uvec \in \Q\}$ be a compactly supported random field, i.e., a real-valued second-order stochastic process on a compact set $\Q \subset \R^d$, with covariance kernel $c(\uvec,\vvec) = \cov(X(\uvec),X(\vvec))$. We want to estimate $c$ based on an independent and identically distributed (i.i.d.) sample $\X_1,\ldots,\X_N \sim \X$. In particular, we work in the framework of \emph{functional data analysis} \citep[FDA, see][]{ramsay2002,hsing2015}, where we assume that $\X$ takes values in $\L_2(\Q)$, the space of all real-valued square-integrable functions on $\Q$.

Covariance estimation, along with mean estimation, is a fundamental problem in functional data analysis and has multifaceted applications, e.g., in regression, prediction, classification. This problem has been studied extensively for observations over one-dimensional domains (i.e., $d=1$) or \emph{curve data} (see \cite{wang2016} for a detailed review). The same is, however, not true for observations over multidimensional domains. Although, in principal, these two regimes are similar, and most methods for curve data can be ``readily used" for data over multidimensional domains, in practice, the dimensionality of the problem draws a clear distinction between the two paradigms. To appreciate this, suppose that we observe the random fields on a grid of size $K \times \cdots \times K$ in $\Q \subset \R^d$. In this case, the estimation of the empirical covariance requires $\O(K^{2d})$ computations. The storage cost for this estimator is also of the order $\O(K^{2d})$ which, for $d=2$, can be prohibitive even for $K \approx 100$. The problem becomes even more severe when $d$ is larger ($d\geq 3$), which is increasingly common, e.g., for observations over spatial volumes or of a spatio-temporal nature. Moreover, subsequent manipulation, e.g., inversion, needed in applications, requires computation in the order of $\O(K^{3d})$, leading to a prohibitive computational burden.

To put things into perspective, consider the \textsl{$1000$ Functional Connectomes Project}\footnote{\url{https://www.nitrc.org/projects/fcon_1000/}} which contains functional magnetic resonance imaging (fMRI) of brains for more than $1200$ individuals. For each individual, the data consist of 3D brain-scans on a grid of size $64 \times 64 \times 33$ taken at $2$ second intervals over $225$ time points. Covariance estimation is of utmost importance in fMRI studies as it captures the connectivity patterns in the brain \citep{aston2012,stoehr2021}. At the same time, it is extremely difficult to do so nonparametrically because of the dimensionality of the problem. For instance, the empirical covariance estimator for the 3D fMRI data would be an object of size $64 \times 64 \times 33 \times 64 \times 64 \times 33$ which requires $68$ Gb memory to compute and store (at $32$-byte precision). This is impossible with a regular computer which usually has $16$ or $32$ Gb of memory. Also, apart from looking at the connectivity pattern of the brain as a 3D object, it is also of importance to check how these patterns evolve over time, i.e., to consider the full 4D data on a grid of size $64 \times 64 \times 33 \times 225$. The problem becomes even more severe in this case, where the empirical covariance would require approximately $3.4 \times 10^6$ Gb of memory during computation and for storage \citep[see also][]{aston2012,stoehr2021}.

To alleviate this \emph{curse of dimensionality}, further modeling assumptions are often made on the underlying covariance, the most popular being that of \emph{separability}. A \emph{separable model} assumes that the true covariance over the multidimensional domain can be factored into several covariances over one-dimensional domains, i.e., $c(\uvec,\vvec) = c_1(u_1,v_1) \times \cdots \times c_d(u_d,v_d)$ for $\uvec, \vvec \in \Q$. This greatly simplifies the problem and entails enormous computational savings. For instance, in the case of observations on a grid, a separable model can be estimated with $\O(dK^2)$ computations and has the same order of storage requirements. The gain during application of the model is even better -- the inversion of the model requires $\O(dK^3)$ computations compared to the $\O(K^{3d})$ for the empirical covariance. Despite all these advantages, separability is merely a modeling assumption, which is highly restrictive and often violated in practice \citep{aston2017,constantinou2017,rougier2017,bagchi2020}. Still, it is often the preferred choice in practice, not because it is believed to hold, but rather for the savings that it entails \citep{gneiting2006,pigoli2018}. Perhaps it is safe to say that the popularity of the separable model stems from the non-availability of a better alternative. It is worth clarifying here that when the data are \emph{sparse} (each random field is observed at a few randomly scattered locations) there do exist methods applicable to multidimensional domains without assuming separability (for example, \cite{wang2020} used a penalized method, leading to the use of tensor-products of splines, and yielding a low-rank approximation to the covariance). However, such approaches are infeasible in the \emph{dense} regime, where each random field is measured on the same dense grid (e.g., the fMRI data), and which is our main interest in this paper. It is the denseness of the measurements that gives rise to the severe computational challenges mentioned above (e.g., in a dense regime, the approach of \cite{wang2020} is infeasible since it requires computation of the ``raw covariances" -- equivalent to the computation of the empirical covariance).

In an effort to deliver a more general yet tractable approach, we propose a new model for covariance estimation using neural networks. Neural networks have long been successfully used in nonparametric function estimation, and recently, they have been shown to successfully overcome the curse of dimensionality in nonparametric regression \citep{bauer2019,schmidt-hieber2020}. Also, they have been used for mean estimation of functional data over multidimensional domains \citep{wang2021}. Motivated by the success of neural networks, we propose \emph{Covariance Networks} (CovNet) as a framework for the estimation of the covariance of multidimensional random fields. A CovNet is a positive semi-definite function on $\mathcal{Q}\times \mathcal{Q}$ described by a neural network architecture. In particular, we define and study three variants: the \emph{shallow} CovNet model and the \emph{deep} CovNet model which differ with respect to the \emph{depth} of the network; and the \emph{deepshared} CovNet model which is a restricted (regularised) version of the deep CovNet model.\\

\noindent Our framework features several advantages, namely:
\begin{enumerate}
\item It is genuinely \emph{nonparametric} -- any covariance can be approximated up to arbitrary precision via a CovNet structure. We establish this so-called \emph{universal approximation property} of the CovNet models in Theorems~\ref{thm:universal_approximation_shallow} and \ref{thm:universal_approximation_deep}. Moreover, the proposed model has an explicit functional form. This functional form has its own advantage in applications such as kriging, where we do not need to interpolate or smooth the estimated covariance before use.

\item Fitting a CovNet to the data is computationally tractable. The models we introduce can be fitted at the level of the data, without the need to compute or store any high-order objects. Moreover, the neural network structure allows us to exploit modern machine learning tools during the estimation. These are discussed in Section~\ref{sec:implementation}.

\item The special structure of the CovNet models ensures that the eigendecomposition of the associated operator can be obtained without the need to explicitly form the operator itself. Thus, we can access the eigensystem of the fitted CovNet very easily without ever forming any higher order objects. This allows us to store and subsequently manipulate the fitted model very easily (see Section~\ref{sec:eigendecomposition}).

\item The CovNet estimators come with theoretical guarantees. In particular, we establish their consistency and derive their rates of convergence (Section~\ref{sec:asymptotics}). Our analyses are \emph{fully nonparametric} -- we make no structural assumption on the underlying covariance $\C$ for our derivations.
\end{enumerate}

The rest of the article is organized as follows. We lay out our methodology in the next section. In particular, we begin by describing the shallow CovNet model in Section~\ref{sec:shallow_model} and establish its \emph{universality}. In Section~\ref{sec:deep_learning}, we extend the shallow CovNet model by using \emph{deep} architectures, leading to the \emph{deep} and the \emph{deepshared} CovNet models. In Section~\ref{sec:implementation}, we demonstrate how the CovNet models can be efficiently estimated in practice. The eigendecomposition of the CovNet operator is discussed in Section~\ref{sec:eigendecomposition}. In Section~\ref{sec:empirical_demonstration}, we demonstrate the usefulness of the proposed CovNet models by means of a detailed simulation study and an application to the fMRI data. We establish the theoretical properties of these models in Section~\ref{sec:asymptotics}. Some concluding remarks are made in Section~\ref{sec:conclusion}. The proofs of our asymptotic results are provided in the appendices. The appendices also cover some related mathematical ideas, as well as some further numerical results.

\section{Covariance networks}\label{sec:method}

We start with some background concepts, more details can be found in \cite{hsing2015} and Appendix~\ref{supp:math_background}. Let $\Q$ be a compact subset of $\R^d$ and let $\X = \{X(\uvec):\uvec \in \Q\}$ be a random element of $\L_2(\Q)$. For $d=1$, $\X$ is usually referred to as a \emph{random curve}, whereas for $d>1$, it is referred to as a \emph{random field}. We assume that $\X$ has finite second moment, i.e., $\E(\|\X\|^2) < \infty$, which ensures the existence of its mean $m = \E(\X)$ and covariance $\C = \E\{(\X-m) \otimes (\X-m)\}$ (both the expectations are in the Bochner sense, see \citealt{hsing2015}). Here, $\|\cdot\|$ is the $\L_2$-norm associated with the inner-product $\langle f,g \rangle = \int_{\Q} f(\uvec) g(\uvec) \diff \uvec$ for $f,g \in \L_2(\Q)$, and the tensor product $h\otimes g $ denotes the rank 1 operator $f\mapsto \langle f,g\rangle h$.  The covariance $\C$ is a linear operator from $\L_2(\Q)$ onto itself, given by
\[
\C f(\uvec) = \int_{\Q} c(\uvec,\vvec)\,f(\vvec) \diff\vvec,~~ f \in \L_2(\Q).
\]
Here, $c \in \L_2(\Q \times \Q)$ defined as $c(\uvec,\vvec) = \cov(X(\uvec),X(\vvec))$ is the \emph{covariance kernel} associated with $\X$. We also say that $\C$ is the \emph{integral operator} associated with the kernel $c$. The operator $\C$ is positive semi-definite and the kernel $c$ is non-negative definite. The Hilbert-Schmidt norm $\vertj{\cdot}_2$ of $\C$ is finite, and $\vertj{\C}_2 = \|c\|_{\L_2(\Q \times \Q)}$. Thus, the covariance operator $\C$ and the covariance kernel $c$ are linked by an isometric isomorphism. Consequently, we can use $\C$ and $c$ interchangeably, and the estimation of the covariance $\C$ is equivalent to the estimation of the kernel $c$. Since the object of interest is the covariance rather than the mean, we work under the assumption that $m=\E(\X) = 0$, unless specifically mentioned.

\subsection{Shallow architecture}\label{sec:shallow_model}

We propose to estimate the covariance kernel $c$ using the following neural network structure:
\begin{equation}\label{eq:shallow_covnet_kernel}
c_{\rm sh}(\uvec,\vvec) = \sum_{r=1}^R \sum_{s=1}^R \lambda_{r,s}\, \sigma(\mathbf w_r^\top\uvec+b_r)\,\sigma(\mathbf w_s^\top\vvec+b_s),\qquad \uvec,\vvec \in \Q,
\end{equation}
where $R \in \N$ is the \emph{width}, $\sigma:\R \to \R$ is an {\em activation} function, and $\Lambda := ((\lambda_{r,s}))$ is a positive semi-definite matrix. The parameters $\mathbf w_r \in \R^d$ and $b_r \in \R$ for $r=1,\ldots,R$ are the \emph{weights} and the \emph{biases} of the model \eqref{eq:shallow_covnet_kernel}. Positive semi-definiteness of $\Lambda$ readily implies that $c_{\rm sh}$ is a non-negative definite kernel. For a given activation function $\sigma$ and width $R \in \N$, we define
\begin{equation}\label{eq:shallow_covnet_kernel_class}
\F^{\rm sh}_{R,\sigma} = \left\{c_{\rm sh} \text{ of the form } \eqref{eq:shallow_covnet_kernel} : \Lambda = ((\lambda_{r,s})) \succeq 0, \mathbf w_1,\ldots,\mathbf w_R \in \R^d, b_1,\ldots,b_R \in \R\right\},
\end{equation}
to be the class of {\em shallow Covariance Network} kernels or \emph{shallow CovNet} kernels. We also define
\begin{equation}\label{eq:shallow_covnet_operator_class}
\widetilde\F^{\rm sh}_{R,\sigma} = \left\{\G : \G \text{ is an integral operator with kernel } g \in \F^{\rm sh}_{R,\sigma}\right\},
\end{equation}
to be the class of shallow covariance network operators or shallow CovNet operators.

We call the structure \eqref{eq:shallow_covnet_kernel} \emph{shallow} because each of the constituents $\sigma(\mathbf w_r^\top\cdot+\,b_r)$ for $r=1,\ldots,R$ of the kernel \eqref{eq:shallow_covnet_kernel} is a shallow neural network, i.e., a neural network with a single hidden layer. The special structure of the kernel \eqref{eq:shallow_covnet_kernel} allows us to visualize it as a neural network with two hidden layers, as depicted in Figure~\ref{fig:shallow_covnet}. In the first layer, starting from the inputs $\uvec$ and $\vvec$, single-layer perceptrons $\sigma(\mathbf w_r^\top\uvec+b_r)$ and $\sigma(\mathbf w_r^\top\vvec+b_r), r=1,\ldots,R$ are computed. In the next layer, these outputs are cross-multiplied with the weights $\lambda_{r,s}$ to produce the final result $c_{\rm sh}(\uvec,\vvec)$. As one can see, this is a feed-forward neural network \citep[see][Chapter~6]{anthony1999}, which is completely determined (for fixed $\sigma$ and $R$) by the parameters $\mathbf w_1,\ldots,\mathbf w_R$, $b_1,\ldots,b_R$ and $\Lambda = ((\lambda_{r,s}))$ (with the added restriction on $\Lambda$).

\begin{figure}[t]
\centering
\includegraphics[height=0.35\textheight]{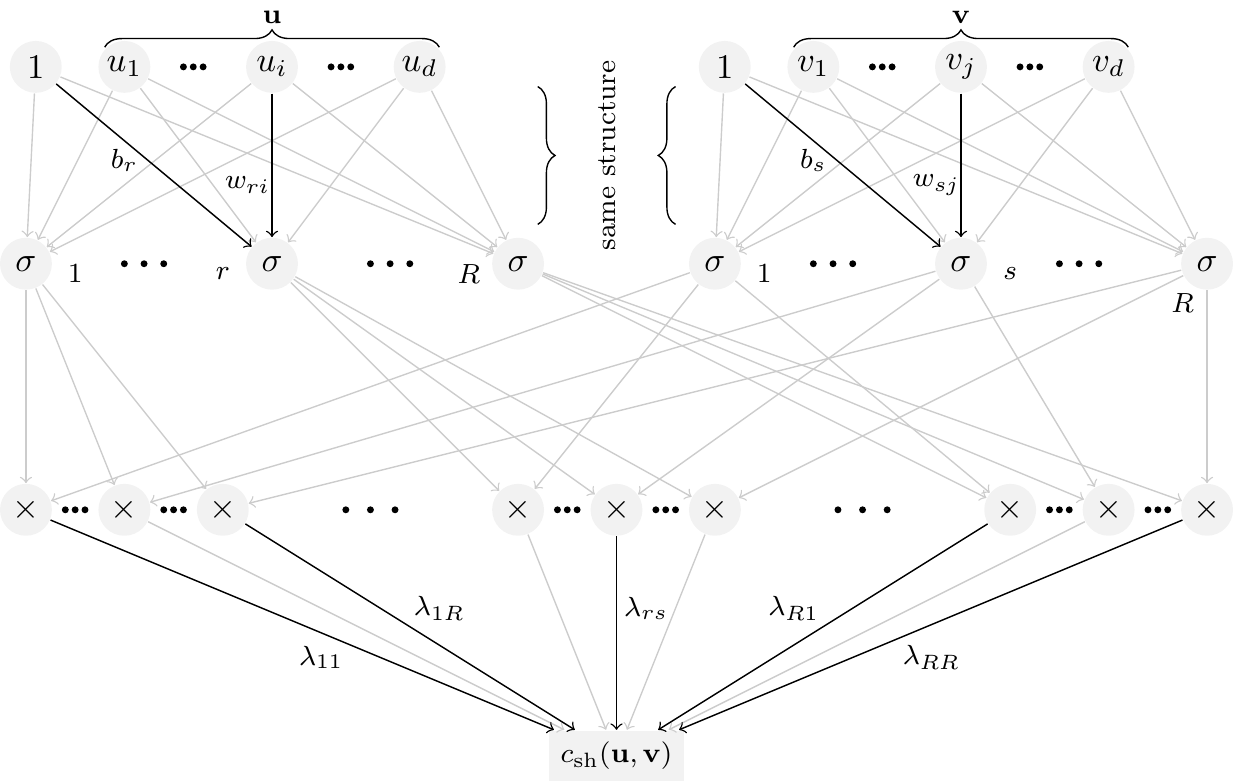}
\caption{\label{fig:shallow_covnet}A schematic representation of the shallow CovNet structure. The top layer corresponds to the inputs $\uvec$ and $\vvec$. In the first hidden layer (second from the top), the inputs are projected by weights, shifted by bias and transformed by the activation function $\sigma$ to produce single-layer perceptrons $\sigma(\mathbf w_r^\top\uvec+b_r), \sigma(\mathbf w_r^\top\vvec+b_r),r=1,\ldots,R$. The weights and biases of the two sides (left and right), corresponding to $\uvec$ and $\vvec$, are the same. In the second hidden layer (third from the top), the outputs of the first hidden layer are cross-multiplied. These are then multiplied by weights $\lambda_{r,s}$ and added to produce the output $c_{\rm sh}(\uvec,\vvec)$.}
\end{figure}

As mentioned in the introduction, the shallow CovNet structure \eqref{eq:shallow_covnet_kernel} is a general model, in the sense that any covariance kernel can be approximated with arbitrary precision using a shallow CovNet kernel of the form \eqref{eq:shallow_covnet_kernel}. Thus, we do not need to make any assumption on the underlying covariance $c$, resulting in a completely nonparametric procedure. However, we do need a particular condition on the activation function $\sigma$ of the network.
\begin{definition}[Sigmoidal activation\label{def:sigmoidal_function}]
An activation function $\sigma: \R \to [0,1]$ is said to be sigmoidal if it is non-decreasing with $\lim_{x \to \infty} \sigma(x) = 1 \text{ and } \lim_{x \to -\infty} \sigma(x) = 0$.
\end{definition}
In probabilistic terms, a sigmoidal function is a cumulative distribution function. Sigmoidal activations are very common in the literature of neural networks. One of the most popular activation functions,  the \emph{logistic function} $\sigma(t) = 1/(1+\exp(-t))$ is a sigmoidal function often also referred to as the \emph{sigmoid}. Many other popularly used activation functions are also sigmoidal \citep[see][Chapter~16, for a plethora of examples]{gyorfi2002}. It is worthwhile to note that the definition of sigmoidal functions is not universal. In this article, whenever we refer to a sigmoidal function, we mean it in the sense of Definition~\ref{def:sigmoidal_function}.

If we use a sigmoidal activation function, then any covariance kernel can be approximated up to arbitrary precision using a shallow CovNet kernel of the form \eqref{eq:shallow_covnet_kernel}. Such a property is often referred to as the \emph{universal approximation property} in the computer science literature.

\begin{theorem}[Shallow CovNet is Universal Approximator\label{thm:universal_approximation_shallow}]
Let $c: \Q \times \Q \to \R$ be the kernel of the covariance operator $\C$. Also, assume that the activation function $\sigma$ is sigmoidal. Then, for every $\epsilon > 0$, there exists $R \in \N$ and $c_{\rm sh} \in \F^{\rm sh}_{R,\sigma}$ such that 
\[
\int_{\Q \times \Q} \big|c(\uvec,\vvec) - c_{\rm sh}(\uvec,\vvec)\big|^2 \diff\uvec\diff\vvec \le \epsilon.
\]
If in addition $c$ is continuous, then the same conclusion holds uniformly. That is, for every $\epsilon > 0$, we can find $R \in \N$ and $c_{\rm sh} \in \F^{\rm sh}_{R,\sigma}$ such that 
\[
\sup_{\uvec,\vvec \in \Q} \big|c(\uvec,\vvec) - c_{\rm sh}(\uvec,\vvec)\big| \le \epsilon.
\]
\end{theorem}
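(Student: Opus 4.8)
The plan is to reduce the bivariate approximation of $c$ on $\Q\times\Q$ to finitely many univariate approximations via the spectral decomposition of $\C$, and then to invoke the classical universal approximation theorem for single-hidden-layer sigmoidal networks (Cybenko). Since $\C$ is a non-negative, self-adjoint Hilbert--Schmidt operator (recall $\vertj{\C}_2 = \|c\|_{\L_2(\Q\times\Q)} < \infty$), the spectral theorem supplies an orthonormal system $\{e_j\}$ in $\L_2(\Q)$ and eigenvalues $\mu_j \ge 0$ with $c = \sum_{j\ge 1} \mu_j\, e_j \otimes e_j$, the series converging in $\L_2(\Q\times\Q)$. I would first choose $M$ large enough that the truncation $c_M := \sum_{j=1}^M \mu_j\, e_j \otimes e_j$ satisfies $\|c - c_M\|_{\L_2(\Q\times\Q)} < \epsilon/2$, and set $f_j := \sqrt{\mu_j}\, e_j$, so that $c_M$ has kernel $\sum_{j=1}^M f_j(\uvec)\, f_j(\vvec)$.

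Next I would approximate each $f_j \in \L_2(\Q)$ by a single-hidden-layer network $g_j(\uvec) = \sum_{r} \alpha_{jr}\,\sigma(\mathbf w_{jr}^\top \uvec + b_{jr})$; since $\sigma$ is sigmoidal, the density of such networks in $\L_2(\Q)$ lets me make $\|f_j - g_j\|$ arbitrarily small. Pooling the finitely many perceptrons appearing across $j=1,\ldots,M$ into a single list $\sigma(\mathbf w_k^\top\cdot + b_k)$, $k=1,\ldots,R$, I would write $g_j = \sum_{k=1}^R a_{jk}\,\sigma(\mathbf w_k^\top\cdot + b_k)$ and form
\[
c_{\rm sh} := \sum_{j=1}^M g_j \otimes g_j, \qquad \lambda_{k,l} := \sum_{j=1}^M a_{jk}\,a_{jl}.
\]
Expanding the products shows that $c_{\rm sh}$ is exactly of the form \eqref{eq:shallow_covnet_kernel} with $\Lambda = ((\lambda_{k,l})) = A^\top A$, where $A = ((a_{jk}))$; hence $\Lambda \succeq 0$ automatically, and therefore $c_{\rm sh} \in \F^{\rm sh}_{R,\sigma}$.

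For the error I would use the telescoping identity $f_j \otimes f_j - g_j \otimes g_j = f_j \otimes (f_j - g_j) + (f_j - g_j)\otimes g_j$ together with $\|h \otimes g\|_{\L_2(\Q\times\Q)} = \|h\|\,\|g\|$ to obtain $\|c_M - c_{\rm sh}\|_{\L_2(\Q\times\Q)} \le \sum_{j=1}^M \|f_j - g_j\|\,(2\sqrt{\mu_j} + \|f_j - g_j\|)$; since the $\mu_j$ are bounded, taking each $\|f_j - g_j\|$ small enough drives this below $\epsilon/2$, and the triangle inequality yields the $\L_2$ claim. For the uniform statement I would replace the bare spectral theorem by Mercer's theorem: when $c$ is continuous the eigenfunctions $e_j$ are continuous and $\sum_j \mu_j\, e_j \otimes e_j$ converges uniformly, so $\sup|c - c_M|$ is small; the $\sup$-norm version of Cybenko's theorem approximates each continuous $f_j$ uniformly on the compact set $\Q$, and the same product bound now in $\sup$-norm (using $\sup|f_j| < \infty$) closes the argument.

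The hard part will be honoring the positive semi-definiteness of $\Lambda$ while retaining arbitrary accuracy: a direct application of a bivariate approximation theorem to $c$ on $\Q\times\Q$ would generally produce a coefficient matrix that is not PSD, hence a $c_{\rm sh}$ outside the admissible class $\F^{\rm sh}_{R,\sigma}$. Approximating the spectral square-roots $f_j = \sqrt{\mu_j}\,e_j$ individually and assembling them as the sum of squares $\sum_j g_j\otimes g_j$ is exactly what forces $\Lambda = A^\top A$ to be PSD by construction; this reduction is the step I would take most care over, the remaining estimates being routine.
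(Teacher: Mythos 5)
Your proposal is correct and follows essentially the same route as the paper's own proof: truncate the spectral expansion of $c$ (Mercer's expansion in the continuous case), approximate the finitely many eigenfunctions by single-hidden-layer sigmoidal networks, assemble the approximant as a sum of rank-one squares so that $\Lambda$ is positive semi-definite by construction, and close with the same tensor-product error bound $\|f\otimes f - g\otimes g\| \le 2\|f\|\,\|f-g\| + \|f-g\|^2$. The differences are purely bookkeeping: you absorb $\sqrt{\mu_j}$ into $f_j$ and pool the perceptrons into a Gram matrix $\Lambda = A^\top A$, whereas the paper keeps the eigenvalues $\eta_i$ as explicit weights (yielding a block-diagonal $\Lambda$) and calibrates the per-eigenfunction tolerances via the trace norm rather than via the finiteness of the truncation.
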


\begin{remark}
The proof of the theorem rests on the universal approximation property of single hidden layer neural networks on the class of square integrable functions on $\Q$. The sigmoidal condition on the activation function ensures this property, but is not necessary \citep[see, e.g.,][]{pinkus1999}.
\end{remark}

The construction of the shallow CovNet in Figure~\ref{fig:shallow_covnet} shows an immediate way to make the structure \emph{deep} by augmenting more layers. It is well-known that shallow networks may require a rather large width to approximate a function, whereas the same precision can be achieved by using a deep and less wide network \citep{eldan2016,liang2017,poggio2017}. Deep networks can capture more complex structures than the shallow networks with much fewer parameters. Moreover, during training, shallow networks are more prone to get stuck at bad local minima, which are usually avoided by deep networks \citep{choromanska2015}. In the next section, we extend the CovNet structure by incorporating deep networks instead of perceptrons in the construction.

\subsection{Deep architectures}\label{sec:deep_learning}
We start with a brief description of deep neural networks. For an integer $L>1$, an integer-tuple $\mathbf p = (p_1,\ldots,p_L)$, matrices $\mathrm W_1 \in \R^{p_1 \times d}$, $\mathrm W_2 \in \R^{p_2 \times p_1},\ldots,\mathrm W_L \in \R^{p_L \times p_{L-1}}$, vectors $\mathbf b_1 \in \R^{p_1},\ldots,\mathbf b_L \in \R^{p_L}, \mathbf w_{L+1} \in \R^{p_L}$, and $b_{L+1} \in \R$, define the function $g: \R^d \to \R$ which maps $\uvec \mapsto g(\uvec)$ recursively as follows:
\begin{align}\label{eq:deep_neural_network}
\uvec_1 &= \sigma(\mathrm W_1 \uvec + \mathbf b_1) \nonumber\\
\uvec_{l+1} &= \sigma(\mathrm W_{l+1} \uvec_l + \mathbf b_{l+1}) \quad \text{ for } l=1,\ldots,L-1, \nonumber\\
g(\uvec) &= \sigma(\mathbf w_{L+1}^\top \mathbf u_L + b_{L+1}).
\end{align}
Here, for a vector $\mathbf z \in \R^p$, $\sigma(\mathbf z)$ represents the component-wise application of the function $\sigma$. The function $g$ is a deep neural network, where $L$ is the number of hidden layers or \emph{depth}, $p_1,\ldots,p_L$ are the \emph{widths} of the hidden layers ($p_{\rm max} = \max\{p_1,\ldots,p_L\}$ is sometimes referred to as the width of the network) and $\mathrm W_1,\ldots,\mathrm W_L,\mathbf w_{L+1},\mathbf b_1,\ldots,\mathbf b_L, b_{L+1}$ are the network parameters. A schematic representation of the deep neural network is shown in Figure~\ref{fig:deep_covnet}(a). Starting from the input $\uvec$, we go to the first hidden layer by multiplying it with the weight matrix $\mathrm W_1$, adding the bias $\mathbf b_1$ and applying the activation function $\sigma$ component-wise on the resultant. The same structure is repeated for all the subsequent layers. We define the class
\begin{align}\label{eq:deep_neural_network_class}
\mathcal D_{L,\mathbf p} = \big\{g: \R^d \to \R \text{ of the form \eqref{eq:deep_neural_network} with } &\mathrm W_1 \in \R^{p_1 \times d}, \mathbf b_1 \in \R^{p_1}, \mathrm W_2 \in \R^{p_2 \times p_1}, \mathbf b_2 \in \R^{p_2},\ldots, \nonumber\\
&\kern2ex\mathrm W_L \in \R^{p_L \times p_{L-1}}, \mathbf b_L \in \R^{p_L}, \mathbf w_{L+1} \in \R^{p_L}, b_{L+1} \in \R\big\},
\end{align}
to be the class of all possible deep neural networks with depth $L$ and widths $p_1,\ldots,p_L$. A network from the class $\mathcal D_{L,\mathbf p}$ has $\sum_{l=0}^L (p_l+1) p_{l+1}$ parameters, where $p_0 = d$ and $p_{L+1} = 1$.

We define the deep CovNet kernel as
\begin{equation}\label{eq:deep_covnet_kernel}
c_{\rm d}(\uvec,\vvec) = \sum_{r=1}^R \sum_{s=1}^R \lambda_{r,s}\,g_r(\uvec)\,g_s(\vvec),\qquad \uvec,\vvec \in \Q,
\end{equation}
where $g_1,\ldots,g_R \in \mathcal D_{L,\mathbf p}$ and $\Lambda:=((\lambda_{r,s}))$ is positive semi-definite. This is similar to the shallow CovNet kernel \eqref{eq:shallow_covnet_kernel}, except the constituents $g_r(\uvec)$ are deep networks of the form \eqref{eq:deep_neural_network} instead of the perceptrons $\sigma(\mathbf w_r^\top\uvec+b_r)$. A schematic representation of the deep CovNet kernel is shown in Figure~\ref{fig:deep_covnet}(b). We define the class of deep CovNet kernels and the corresponding class of operators as
\begin{align}\label{eq:deep_covnet_class}
\F^{\rm d}_{R,L,\mathbf p,\sigma} &= \big\{c_{\rm d} \text{ of the form \eqref{eq:deep_covnet_kernel}}: \Lambda = ((\lambda_{r,s})) \succeq \mathrm 0, g_1,\ldots,g_R \in \mathcal D_{L,\mathbf p}\big\} \nonumber\\
\widetilde\F^{\rm d}_{R,L,\mathbf p,\sigma} &= \big\{\G: \G \text{ is the integral operator associated with kernel } g \in \F^{\rm d}_{R,L,\mathbf p,\sigma}\big\}.
\end{align}

\begin{remark}
As can be seen from the construction of the deep CovNet structure, it is possible to allow the individual networks $g_1,\ldots,g_R$ to have different depths and widths, allowing for more flexible models. However, this complicates the analysis, so we do not pursue this model in this paper.
\end{remark}

\begin{figure}[t!]
\centering
\begin{tabular}{c}
(a) Schematic representation of a deep neural network \\
\hspace{0.1in}
\includegraphics[height=1in]{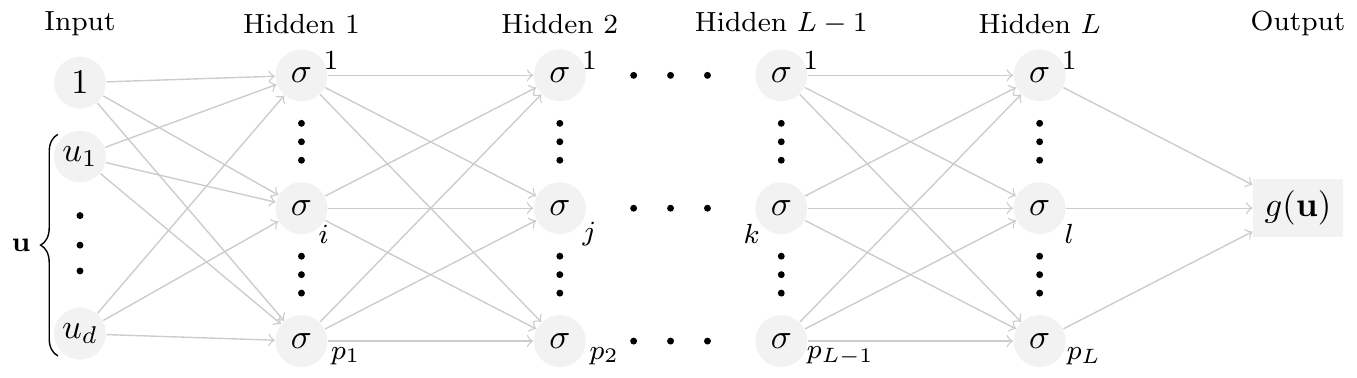} \\
\\
(b) Schematic representation of a deep CovNet structure \\
\hspace{0.1in}
\includegraphics[height=0.5\textheight]{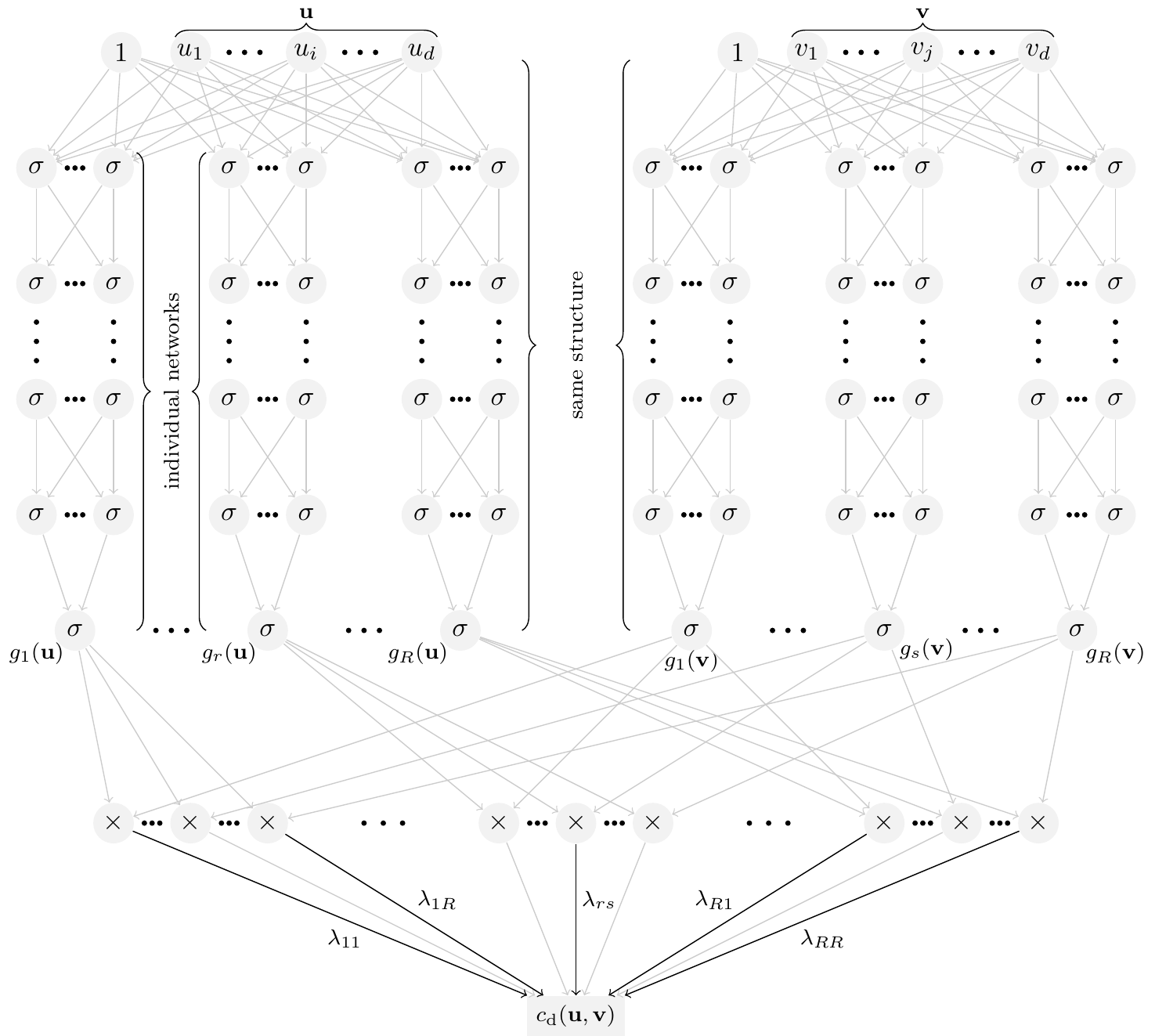} \\
\end{tabular}
\caption{\label{fig:deep_covnet}(a) A schematic representation of the deep neural network. (b) A schematic representation of the deep CovNet structure. Starting from the inputs $\uvec$ and $\vvec$, $R$ individual deep neural networks are fitted to get the outputs $g_1(\uvec),\ldots,g_R(\uvec)$ and $g_1(\vvec),\ldots,g_R(\vvec)$. The networks on left and right (corr.\ to $\uvec$ and $\vvec$) are the same. Cross-products of the outputs of the individual networks layer are taken in the next layer, which are then multiplied by weights $\lambda_{r,s}$ and added to produce the output $c_{\rm d}(\uvec,\vvec)$.}
\end{figure}

The deep CovNet model is quite rich. Moreover, it retains the universal approximation property of the shallow CovNet model (Theorem~\ref{thm:universal_approximation_deep}). But the number of parameters of the deep CovNet model can be quite large, making it prone to overfitting. Thus, some sort of regularization is needed for the deep CovNet structure to make it more stable. We do this by enforcing \emph{weight sharing} among the constituents as follows. For an integer $L>1$, integer-tuple $\mathbf p=(p_1,\ldots,p_L)$, matrices $\mathrm W_1 \in \R^{p_1 \times d}$, $\mathrm W_2 \in \R^{p_2 \times p_1},\ldots,\mathrm W_L \in \R^{p_L \times p_{L-1}}$, and vectors $\mathbf b_1 \in \R^{p_1},\ldots,\mathbf b_L \in \R^{p_L}$, we define the networks $g_1,\ldots,g_R$ jointly as 
\begin{align}\label{eq:deepshared_neural_network}
\uvec_1 &= \sigma(W_1\uvec+\mathbf b_1), \nonumber\\
\uvec_{l+1} &= \sigma(W_{l+1}\uvec_l + \mathbf b_{l+1}) \quad \text{ for } l=1,\ldots,L-1,\nonumber\\
g_r(\uvec) &= \sigma(\bm\omega_r^\top \uvec_L + \beta_r), \quad r=1,\ldots,R,
\end{align}
where $\bm\omega_r \in \R^{p_L}$ and $\beta_r \in \R$ for $r=1,\ldots,R$. Individually, each of the networks $g_1,\ldots,g_R$ is an element of the class $\mathcal D_{L,\mathbf p}$. But collectively, they share certain patterns among themselves, specifically they share all their parameters except for the ones in the final layer (see Figure~\ref{fig:deepshared_covnet}). We formally define the \emph{deepshared} CovNet kernel as
\begin{equation}\label{eq:deepshared_covnet_kernel}
c_{\rm ds}(\uvec,\vvec) = \sum_{r=1}^R\sum_{s=1}^R \lambda_{r,s}\,g_r(\uvec)\,g_s(\vvec),\qquad \uvec,\vvec \in \Q,
\end{equation}
where $g_1,\ldots,g_R$ are networks with shared structures as defined in \eqref{eq:deepshared_neural_network}. A schematic representation of the structure \eqref{eq:deepshared_covnet_kernel} is shown in Figure~\ref{fig:deepshared_covnet}. We also define the class of deepshared CovNet kernels and the corresponding operators as
\begin{align}\label{eq:deepshared_covnet_class}
\F^{\rm ds}_{R,L,\mathbf p,\sigma} &= \big\{c_{\rm ds} \text{ of the form \eqref{eq:deepshared_covnet_kernel}}: \Lambda = ((\lambda_{r,s})) \succeq \mathrm 0, g_1,\ldots,g_R \text{ of the form \eqref{eq:deepshared_neural_network}}\big\} \nonumber\\
\widetilde\F^{\rm ds}_{R,L,\mathbf p,\sigma} &= \big\{\G: \G \text{ is the integral operator associated with kernel } g \in \F^{\rm ds}_{R,L,\mathbf p,\sigma}\big\}.
\end{align}

\begin{figure}[t]
\centering
\includegraphics[height=0.5\textheight]{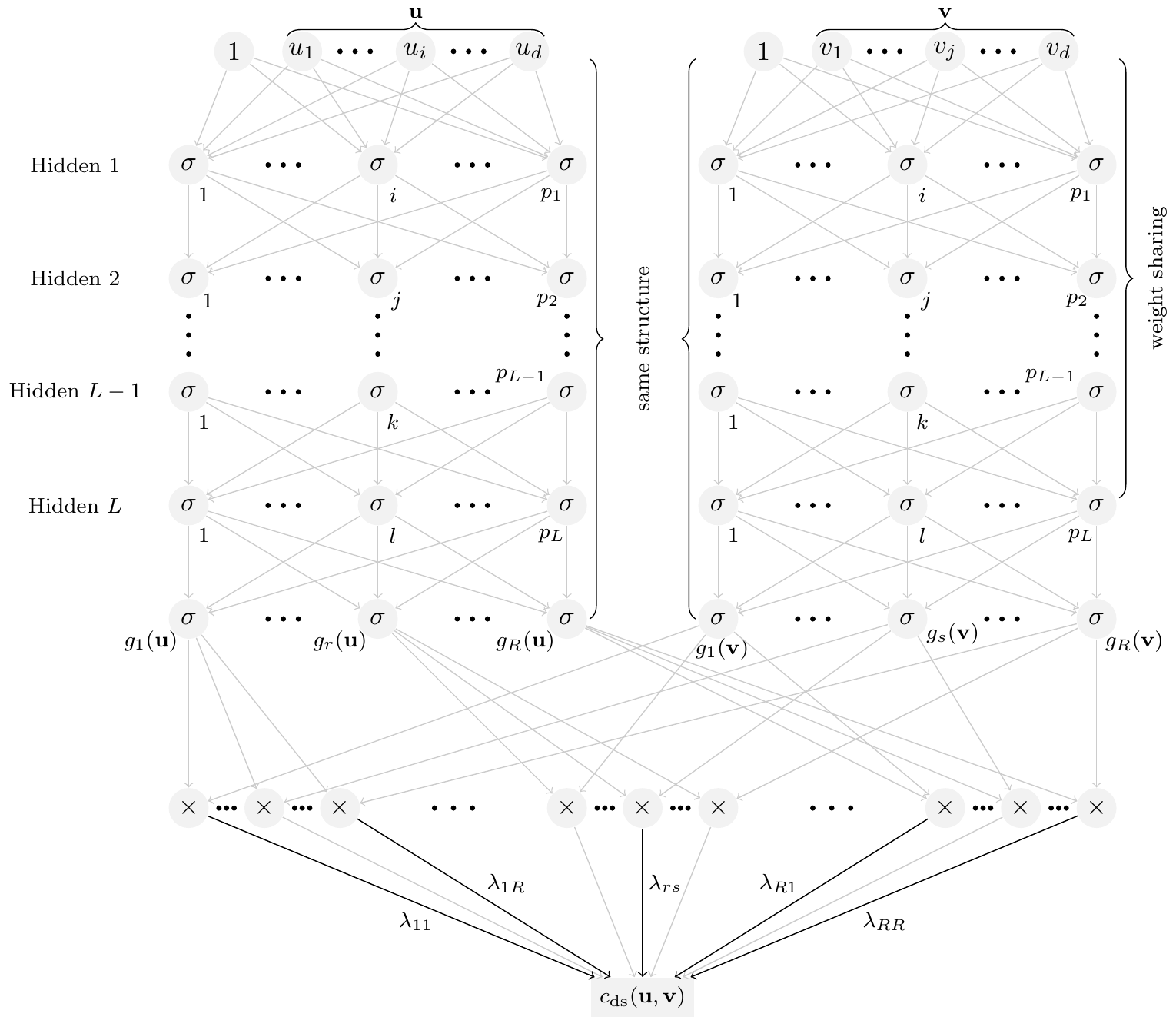}
\caption{\label{fig:deepshared_covnet}A schematic representation of the deepshared CovNet structure. Starting from the inputs $\uvec$ and $\vvec$, $R$ deep neural networks with shared weights are fitted to get the outputs $g_1(\uvec),\ldots,g_R(\uvec)$ and $g_1(\vvec),\ldots,g_R(\vvec)$. These outputs are combined by cross-multiplication, and finally addition after multiplication by weights $\lambda_{r,s}$, to produce $c_{\rm ds}(\uvec,\vvec)$.}
\end{figure}

The shared structure drastically reduces the number of parameters of the model. A deepshared CovNet kernel from the class $\F^{\rm ds}_{L,\mathbf p,R,\sigma}$ requires $\sum_{l=0}^{L-1} (p_l+1)p_{l+1} + R(p_L+1) + R(R+1)/2$ parameters, compared to $R \big(\sum_{l=0}^L (p_l+1) p_{l+1}\big) + R(R+1)/2$ for a deep CovNet kernel from the class $\F^{\rm d}_{L,\mathbf p,R,\sigma}$. To appreciate this, assume that each of the hidden layers have the same width $R$, i.e., $p_1=\cdots=p_L=R$. In this case, the deepshared kernel contains $\O(R^2)$ parameters, compared to $\O(R^3)$ parameters for the deep kernel.

\medskip

Similar to the shallow CovNet model, both the deep and the deepshared models are universal approximators, i.e., they can approximate any covariance kernel up to any desired accuracy.
\begin{theorem}\label{thm:universal_approximation_deep}
Let $\C$ be a covariance operator on $\L_2(\Q)$ with kernel $c$. Also, assume that the activation function $\sigma$ is sigmoidal. Then, for every $\epsilon > 0$, we can find a deep CovNet kernel $c_{\rm d}$ and a deepshared CovNet kernel $c_{\rm ds}$ such that 
\[
\int_{\Q \times \Q} \big|c(\uvec,\vvec) - c_{\rm d}(\uvec,\vvec)\big|^2 \diff\uvec\diff\vvec \le \epsilon \text{ and } \int_{\Q \times \Q} \big|c(\uvec,\vvec) - c_{\rm ds}(\uvec,\vvec)\big|^2 \diff\uvec\diff\vvec \le \epsilon.
\]
If in addition $c$ is continuous, then the same conclusion holds uniformly. That is, for every $\epsilon > 0$, we can find $c_{\rm d}$ and $c_{\rm ds}$ such that 
\[
\sup_{\uvec,\vvec \in \Q} \big|c(\uvec,\vvec) - c_{\rm d}(\uvec,\vvec)\big| \le \epsilon \text{ and } \sup_{\uvec,\vvec \in \Q} \big|c(\uvec,\vvec) - c_{\rm ds}(\uvec,\vvec)\big| \le \epsilon.
\]
\end{theorem}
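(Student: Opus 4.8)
The plan is to prove the statement for the deepshared kernel first, and to obtain the deep kernel essentially for free. The key observation is the inclusion $\F^{\rm ds}_{R,L,\mathbf p,\sigma}\subseteq\F^{\rm d}_{R,L,\mathbf p,\sigma}$: a family $g_1,\dots,g_R$ obeying the weight-sharing constraint \eqref{eq:deepshared_neural_network} is in particular a family of networks in $\mathcal D_{L,\mathbf p}$, so every $c_{\rm ds}$ of the form \eqref{eq:deepshared_covnet_kernel} is simultaneously a $c_{\rm d}$ of the form \eqref{eq:deep_covnet_kernel}. Hence the very kernel $c_{\rm ds}$ I construct below already serves as the required $c_{\rm d}$, and it suffices to handle the more constrained (and therefore harder) deepshared model.

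For the deepshared model I would mimic the proof of Theorem~\ref{thm:universal_approximation_shallow}, the only new ingredient being the common feature map built by the shared trunk. Since $\C$ is a positive semi-definite, compact, self-adjoint operator, the spectral theorem gives $c(\uvec,\vvec)=\sum_{k\ge1}\mu_k\,\phi_k(\uvec)\phi_k(\vvec)$ with $\mu_k\ge0$, the convergence being in $\L_2(\Q\times\Q)$ and, by Mercer, uniform when $c$ is continuous. For the $\L_2$ statement I would first replace $c$ by a continuous finite-rank non-negative definite kernel (approximating each $\phi_k$ in $\L_2(\Q)$ by a continuous function), thereby reducing to the case where $c=\sum_{k=1}^K\mu_k\phi_k(\uvec)\phi_k(\vvec)=:c_K$ is continuous with continuous factors $\phi_k$; this lets the uniform estimates below transfer to $\L_2$ without any change-of-variables subtleties.

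Next I would fix the shared parameters $\mathrm W_1,\mathbf b_1,\dots,\mathrm W_L,\mathbf b_L$ in \eqref{eq:deepshared_neural_network} so that the induced trunk $T:\uvec\mapsto\uvec_L$ is a continuous injection of $\Q$ into $[0,1]^{p_L}$; for a continuous, strictly increasing sigmoidal (e.g.\ the logistic) this is immediate on taking each layer to be $\sigma$ applied componentwise to a full-column-rank affine map, so that each layer, and hence $T$, is injective, and $T$ is a homeomorphism onto $\mathcal I:=T(\Q)$. The crucial point is that $g_r(\uvec)=\sigma(\bm\omega_r^\top T(\uvec)+\beta_r)$, i.e.\ for a fixed trunk the maps $\uvec\mapsto g_r(\uvec)$ are exactly the hidden units of a \emph{single} hidden layer network in the variable $T(\uvec)$. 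For each factor $\phi_k$ the pullback $\phi_k\circ T^{-1}$ is continuous on $\mathcal I$; extending it by Tietze and invoking the single hidden layer universal approximation property underlying Theorem~\ref{thm:universal_approximation_shallow} (see \cite{pinkus1999}), I approximate it uniformly on $\mathcal I$ by $\sum_r a_{k,r}\sigma(\bm\omega_r^\top\cdot+\beta_r)$. Choosing the units $(\bm\omega_r,\beta_r)_{r=1}^R$ once (with $R$ large enough to serve all $K$ pullbacks at once, using common hidden units and separate readouts $a_{k,r}$) gives $\phi_k(\uvec)\approx\sum_{r=1}^R a_{k,r}\,g_r(\uvec)$ uniformly on $\Q$ for every $k\le K$. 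Substituting into $c_K$ and expanding yields
\[
c_K(\uvec,\vvec)\approx\sum_{r=1}^R\sum_{s=1}^R\Bigl(\sum_{k=1}^K\mu_k a_{k,r}a_{k,s}\Bigr)g_r(\uvec)\,g_s(\vvec)=\sum_{r,s}\lambda_{r,s}\,g_r(\uvec)\,g_s(\vvec),
\]
which is of the form \eqref{eq:deepshared_covnet_kernel}; here $\Lambda=(\lambda_{r,s})=\sum_{k=1}^K\mu_k\,\mathbf a_k\mathbf a_k^\top\succeq0$ since every $\mu_k\ge0$, so the kernel is a bona fide deepshared CovNet kernel. A routine bound combining the truncation error, the uniform factor errors, and the boundedness of the finitely many $g_r\in[0,1]$ in the finite double sum then controls $\|c-c_{\rm ds}\|$ in $\L_2(\Q\times\Q)$, and in sup-norm when $c$ is continuous.

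The main obstacle I anticipate is constructing the shared trunk $T$ so that single hidden layer readouts in $T(\uvec)$ are dense in the relevant function space \emph{uniformly across the whole class of sigmoidal activations of Definition~\ref{def:sigmoidal_function}}. For continuous, strictly increasing $\sigma$ the injectivity argument above is clean, but a merely non-decreasing (possibly discontinuous, e.g.\ step-type) $\sigma$ can force $T$ to take only finitely many values, destroying injectivity; there one must instead arrange the trunk's induced partition of $\Q$ to be fine enough that piecewise-constant readouts approximate each continuous $\phi_k$ uniformly (using uniform continuity on the compact $\Q$). Carrying this out while keeping a single trunk valid for all $K$ factors simultaneously, and preserving the positive semi-definiteness of $\Lambda$, is the delicate part; everything else parallels the proof of Theorem~\ref{thm:universal_approximation_shallow}.
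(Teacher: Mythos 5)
Your inclusion $\F^{\rm ds}_{R,L,\mathbf p,\sigma}\subseteq\F^{\rm d}_{R,L,\mathbf p,\sigma}$ is correct, so universality of the deepshared class would indeed deliver the deep class for free; note that this is the \emph{reverse} of the paper's reduction. Your direct attack on the deepshared class (spectral truncation, reduction to a continuous finite-rank kernel, an injective shared trunk $T$, Tietze extension of the pullbacks $\phi_k\circ T^{-1}$, common hidden units with separate readouts, and $\Lambda=\sum_{k=1}^K\mu_k\,\mathbf a_k\mathbf a_k^\top\succeq0$) is sound when $\sigma$ is continuous and strictly increasing, and the bookkeeping you sketch does close the argument in that case.

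The genuine gap is the one you flag yourself and leave unresolved: Theorem~\ref{thm:universal_approximation_deep} assumes only that $\sigma$ is sigmoidal in the sense of Definition~\ref{def:sigmoidal_function}, i.e., non-decreasing with limits $0$ and $1$, possibly discontinuous and possibly constant on large intervals. For such $\sigma$ the trunk $T$ built from \eqref{eq:deepshared_neural_network} need not be injective -- for a step-type activation $T(\Q)$ is a finite set -- so $T^{-1}$ and the pullbacks $\phi_k\circ T^{-1}$ do not exist, and every readout $\sigma(\bm\omega_r^\top T(\uvec)+\beta_r)$ is constant on the cells of the partition induced by $T$. Making that partition fine enough, simultaneously for all $K$ factors and while retaining the approximation property of the last-layer units, is exactly the unproved step; as written, your proof covers only a strict subclass of the activations allowed by the theorem. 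The paper's own proof shows this obstacle is avoidable altogether because it runs the reduction in the opposite direction: it first proves the deep case exactly as in Theorem~\ref{thm:universal_approximation_shallow} (truncate the spectral expansion and approximate each eigenfunction by a linear combination of \emph{unconstrained} deep networks, for which it can invoke a cited universal approximation theorem), and then observes that any deep CovNet kernel is \emph{exactly} a deepshared CovNet kernel with wider shared layers: stack the layers of the $R$ independent networks block-diagonally (zero weights between blocks) inside one shared trunk of widths $Rp_1,\ldots,Rp_L$, and let each final readout $\bm\omega_r$ select its own block. This emulation is an algebraic identity requiring no injectivity, continuity, or strict monotonicity of $\sigma$: weight sharing costs nothing once the widths may grow, so the delicate trunk construction on which your route hinges is unnecessary. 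Put differently, both routes ultimately rest on a black-box universality result for the constituent networks, but the paper only needs it for unconstrained networks, whereas your route needs it for the shared-trunk architecture itself -- and that is precisely where it breaks for general sigmoidal $\sigma$.
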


Although both the models are universal approximators, the depth, width and the number of components $R$ of the approximator for the two models can be different. Moreover, although this result is similar to the one for the shallow model, we expect the deep and the deepshared approximator to have a much smaller number of parameters compared to the shallow approximator \citep{poggio2017}.
\begin{remark}\label{remark:deep_activation_choice}
The proof of the theorem again depends on the universal approximation property of deep neural networks, and the sigmoidal condition on the activation function is not necessary. From the proof of the theorem, one can see that the universal approximation property of the deep (and the deepshared) CovNet structure is guaranteed whenever the associated class of deep neural networks has the universal approximation property. In particular, deep and deepshared CovNet models with the highly popular ReLU activation: $\sigma(t) = \max\{t,0\}$ are also universal approximators.
\end{remark}

\medskip

Theorems~\ref{thm:universal_approximation_shallow} and \ref{thm:universal_approximation_deep} justify the use of covariance networks (shallow, deep or deepshared) for modeling the covariance kernel $c$. Given a fixed width $R$ and activation function $\sigma$, approximation by a shallow CovNet amounts to determining a $\G$ in the class $\widetilde\F^{\rm sh}_{R,\sigma}$ that is closest to $\C$ in terms of the Hilbert-Schmidt norm, i.e.,
\begin{equation}\label{eq:best_shallow_covnet}
\Chat^{\rm sh}_{R,\sigma} \in \argmin_{\G \in \widetilde\F^{\rm sh}_{R,\sigma}} \verti{\C-\G}_2^2.
\end{equation}
Given a sample of random fields $\X_1,\ldots,\X_N \overset{\iid}{\sim} \X$ in $\L_2(\Q)$, with covariance $\C$, we can replace $\C$ in \eqref{eq:best_shallow_covnet} by the empirical covariance operator $\Chat_N = N^{-1} \sum_{n=1}^N \X_n \otimes \X_n$ to obtain an estimator:
\begin{equation}\label{eq:estimator_shallow_covnet}
\Chat_{R,N}^{\rm sh} \in \argmin_{\G \in \widetilde\F^{\rm sh}_{R,\sigma}} \vertj{\Chat_N - \G}^2.
\end{equation}
We call this the shallow CovNet estimator. Similarly, for given width $R$, depth $L$ and activation $\sigma$, we can define the estimators based on the deep and deepshared models as
\begin{align}\label{eq:deep_and_deepshared_estimator}
\Chat^{\rm d}_{L,R,N} \in \argmin_{\G \in \widetilde\F^{\rm d}_{R,L,\mathbf p,\sigma}} \vertj{\Chat_N - \G}_2^2 \quad \text{ and } \quad \Chat^{\rm ds}_{L,R,N} \in \argmin_{\G \in \widetilde\F^{\rm ds}_{R,L,\mathbf p,\sigma}} \vertj{\Chat_N - \G}_2^2,
\end{align}
which we call the deep CovNet and the deepshared CovNet estimators, respectively. The estimators $\Chat^{\rm sh}_{R,N}$, $\Chat^{\rm d}_{L,R,N}$, and $\Chat^{\rm ds}_{L,R,N}$ can be seen as regularized versions of the empirical covariance, by projection into the corresponding CovNet classes. Nevertheless, it is crucial to note here that, although the definition of the estimators involve $\Chat_N$, we never actually need to form the empirical covariance in order to construct them. The estimators $\Chat_{R,N}^{\rm sh}$, $\Chat^{\rm d}_{L,R,N}$ and $\Chat^{\rm ds}_{L,R,N}$ can be computed directly at the level of the data, without the need to ever store or access the $2d$-dimensional object $\Chat_N$. We discuss the implementation details in the next section.

Note that although these estimators depend on the widths $p_1,\ldots,p_L$, we have suppressed it in the notation for ease of exposition. Also, in our numerical experiments, we have used $p_1=\cdots=p_L = R$, which justifies this notation. This choice is motivated by the empirical evidence that suggests using the same width for all the hidden layers \citep[see][Section~19.3.2]{bengio2012}.

\begin{remark}
Observe the notation in \eqref{eq:best_shallow_covnet}--\eqref{eq:deep_and_deepshared_estimator}. In either of the equations, we cannot guarantee that the minimizer is unique. Here, and throughout the article, by the notation $\widehat\G \in \argmin_{\G \in \widetilde\F} \vertj{\G - \C}_2^2$, we mean that $\widehat\G$ is an element (out of possibly many) of the class $\widetilde\F$ satisfying $\vertj{\widehat\G - \C}_2^2 \le \vertj{\G - \C}_2^2$ for all $\G \in \widetilde\F$. Note that this non-uniqueness does not affect the subsequent development, in particular the asymptotic theory for the estimator.
\end{remark}

\section{Practical implementation}\label{sec:implementation}

Note that for all three CovNet models \eqref{eq:shallow_covnet_kernel}, \eqref{eq:deep_covnet_kernel} and \eqref{eq:deepshared_covnet_kernel}, the covariance kernel is of the form
\begin{equation}\label{eq:covnet_kernel_generic}
\sum_{r=1}^R\sum_{s=1}^R \lambda_{r,s}\,g_r(\uvec)\,g_s(\vvec), \qquad \uvec,\vvec \in \Q,
\end{equation}
where $\Lambda:=((\lambda_{r,s}))$ is positive semi-definite, and $g_1,\ldots,g_R$ are allowed to vary keeping up to the model under consideration. In particular, $g_r(\uvec) = \sigma(\mathbf w_r^\top \uvec + b_r)$ for shallow CovNet, $g_r$'s are the individual deep neural networks from the class $\mathcal D_{L,\mathbf p}$ \eqref{eq:deep_neural_network_class} for deep CovNet, and $g_r$'s are jointly defined as in \eqref{eq:deepshared_neural_network} for deepshared CovNet. We denote the generic class of all such kernels (with the additional structures on the functions $g_1,\ldots,g_R$) by $\F_R$ and the corresponding class of operators by $\widetilde\F_R$. Thus, $\F_R$ (resp., $\widetilde\F_R$) can be $\F_{R,\sigma}^{\rm sh}$, $\F_{R,L,\mathbf p,\sigma}^{\rm d}$, or $\F_{R,L,\mathbf p,\sigma}^{\rm ds}$ (resp., $\widetilde\F_{R,\sigma}^{\rm sh}$, $\widetilde\F_{R,L,\mathbf p,\sigma}^{\rm d}$, or $\widetilde\F_{R,L,\mathbf p,\sigma}^{\rm ds}$) depending on the situation. Here, and throughout, we suppress the dependence on $L,p_1,\ldots,p_L$ and $\sigma$ for convenience, unless specifically mentioned.

For a given $R \in \N$, the CovNet structure \eqref{eq:covnet_kernel_generic} is completely determined by the parameters of $g_1,\ldots,g_R$, and the coefficients $\Lambda = ((\lambda_{r,s}))$. In particular, apart from $\Lambda$, these parameters are $\mathbf w_1,\ldots,\mathbf w_R, b_1,\ldots,b_R$ for the shallow CovNet \eqref{eq:shallow_covnet_kernel}, the weights and the biases of the individual deep neural networks $g_1,\ldots,g_R$ for the deep CovNet \eqref{eq:deep_covnet_kernel}, and $W_1,\ldots,W_L,\mathbf b_1,\ldots,\mathbf b_L$ and $\bm\omega_1,\ldots,\bm\omega_R,\beta_1,\ldots,\beta_R$ for the deepshared CovNet \eqref{eq:deepshared_covnet_kernel}. Thus, obtaining the estimators in \eqref{eq:estimator_shallow_covnet} or \eqref{eq:deep_and_deepshared_estimator} is equivalent to finding these parameters minimizing the corresponding criterion
\begin{equation*}
\ell := \ell(\Theta) = \vertj{\Chat_N - \G}_2^2.
\end{equation*}
Here, we use $\Theta$ to denote all the estimable parameters (i.e., the parameters of $g_1,\ldots,g_R$, and $\Lambda$), taking into account the positive-definiteness of $\Lambda$ (which reduces the number of \emph{free parameters}).

As already mentioned, we do not need to form the tensor $\Chat_N$ (or candidate tensor $\G$) to minimize $\ell$. The trick is to not fit the covariance directly, but to instead fit the observed fields $\X_1,\ldots,\X_N$ themselves by \emph{neural networks with shared structures}. To be precise, consider the fields
\begin{equation}\label{eq:observation_network}
\X^{\rm NN}_n(\uvec) = \sum_{r=1}^R \xi_{n,r}\,g_r(\uvec), \qquad n=1,\ldots,N,
\end{equation}
where $\xi_{n,r} \in \R$ for $n=1,\ldots,N$, $r=1,\ldots,R$, and $g_1,\ldots,g_R$ are the constituents of the CovNet model \eqref{eq:covnet_kernel_generic} under consideration. By construction, the fields $\X_1^{\rm NN},\ldots,\X_N^{\rm NN}$ are themselves neural networks with shared components $g_1,\ldots,g_R$ (and hence shared parameters), but potentially different coefficients $\xi_{n,r}$. Define the operator
\begin{equation}\label{eq:covnet_empirical_formulation}
\G_{R,N}^{\rm NN} = \frac{1}{N} \sum_{n=1}^N (\X^{\rm NN}_n - \bar\X^{\rm NN}) \otimes (\X^{\rm NN}_n - \bar\X^{\rm NN}),
\end{equation}
which is the empirical covariance based on the neural networks $\X^{\rm NN}_1,\ldots,\X^{\rm NN}_N$, and let $\widetilde\F^{\rm NN}_{R,N}$ be the class of all such covariance operators:
\begin{align}\label{eq:covnet_empirical_class}
\widetilde\F^{\rm NN}_{R,N} = \big\{\text{all empirical covariance operators of the form \eqref{eq:covnet_empirical_formulation}}\big\}.
\end{align}
Because of the shared structure of the networks $\X^{\rm NN}_n$, the kernel of the operator $\G^{\rm NN}_{R,N}$ has the CovNet structure \eqref{eq:covnet_kernel_generic}:
\begin{equation*}
g^{\rm NN}_{R,N}(\uvec,\vvec) = \sum_{r=1}^R \sum_{s=1}^R \lambda_{r,s}\,g_r(\uvec)\,g_s(\vvec),
\end{equation*}
where $\lambda_{r,s} = N^{-1}\sum_{n=1}^N (\xi_{r,n} - \bar{\xi}_r)(\xi_{s,n} - \bar{\xi}_s)$. At the same time, if $N > R$, any CovNet operator from the class $\widetilde\F_R$ can be written as an empirical covariance operator of the form \eqref{eq:covnet_empirical_formulation}. Specifically, for every $\G \in \widetilde\F_R$, we can find $N$ networks of the form \eqref{eq:observation_network} such that $\G$ is the empirical covariance operator of those $N$ networks. This should be intuitively clear, but we nevertheless state this formally below, and a detailed construction is shown in Appendix~\ref{supp:implementation}.
\begin{proposition}\label{prop:covnet_optimizer_equivalence}
If $N > R$, then $\widetilde\F^{\rm NN}_{R,N} = \widetilde\F_R$.
\end{proposition}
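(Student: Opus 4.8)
The plan is to establish the two set inclusions separately, noting that $\widetilde\F^{\rm NN}_{R,N}\subseteq\widetilde\F_R$ holds for every $N$, whereas the reverse inclusion is where the hypothesis $N>R$ is used. The inclusion $\widetilde\F^{\rm NN}_{R,N}\subseteq\widetilde\F_R$ is essentially the computation already displayed preceding the statement: for networks $\X^{\rm NN}_n=\sum_r\xi_{n,r}g_r$ as in \eqref{eq:observation_network}, the operator \eqref{eq:covnet_empirical_formulation} has kernel $\sum_{r,s}\lambda_{r,s}g_r(\uvec)g_s(\vvec)$ with $\lambda_{r,s}=N^{-1}\sum_n(\xi_{n,r}-\bar\xi_r)(\xi_{n,s}-\bar\xi_s)$, so that $\Lambda=((\lambda_{r,s}))$ is the sample covariance matrix of the coefficient vectors $\bm\xi_n=(\xi_{n,1},\ldots,\xi_{n,R})^\top$ and is therefore positive semi-definite; since the $g_r$ are by construction constituents of the relevant CovNet class, $\G^{\rm NN}_{R,N}\in\widetilde\F_R$.

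The content of the proposition is the reverse inclusion $\widetilde\F_R\subseteq\widetilde\F^{\rm NN}_{R,N}$. Fix $\G\in\widetilde\F_R$ with kernel \eqref{eq:covnet_kernel_generic} and $\Lambda\succeq0$. Because the constituents $g_1,\ldots,g_R$ can be kept fixed and identical on both sides, it suffices to produce scalars $\xi_{n,r}$ whose centered sample covariance matrix equals the given $\Lambda$; the resulting operator $\G^{\rm NN}_{R,N}$ will then reproduce the kernel of $\G$ verbatim. Collecting the coefficients into $\Xi\in\R^{N\times R}$ with rows $\bm\xi_n^\top$ and writing $H=I_N-N^{-1}\mathbf1\mathbf1^\top$ for the centering matrix, the target becomes the single matrix equation $N^{-1}\Xi^\top H\Xi=\Lambda$.

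I would look for a centered solution, i.e.\ one with $\mathbf1^\top\Xi=0$, since then $\bar{\bm\xi}=0$, $\bar\X^{\rm NN}=0$, $H\Xi=\Xi$, and the target collapses to $\Xi^\top\Xi=N\Lambda$. Taking the symmetric positive semi-definite square root $\Lambda^{1/2}$, set $\Xi=\sqrt N\,Q\,\Lambda^{1/2}$, where $Q\in\R^{N\times R}$ is any matrix with orthonormal columns ($Q^\top Q=I_R$) all orthogonal to $\mathbf1$ (so $\mathbf1^\top Q=0$). Then $\Xi^\top\Xi=N\,\Lambda^{1/2}(Q^\top Q)\Lambda^{1/2}=N\Lambda$ and $\mathbf1^\top\Xi=\sqrt N(\mathbf1^\top Q)\Lambda^{1/2}=0$, exactly as required.

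The one point that needs checking --- and the sole place the hypothesis enters --- is the existence of such a $Q$: I need $R$ orthonormal vectors of $\R^N$ lying in the $(N-1)$-dimensional subspace $\{\mathbf1\}^\perp$, which is possible precisely when $R\le N-1$, i.e.\ $N>R$. Concretely one can extract $Q$ from any orthonormal basis of $\{\mathbf1\}^\perp$ (for instance a Helmert or discrete-cosine basis). I expect no genuine obstacle beyond this elementary dimension count; the remaining work is purely presentational, namely confirming that fixing the $g_r$ keeps the constructed operator in the correct subclass (shallow, deep, or deepshared) and that the coefficient-to-$\Lambda$ map is exactly the sample-covariance map of \eqref{eq:covnet_empirical_formulation}. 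Since the functions $g_r$ are merely carried along and play no role in the linear-algebraic realization, this causes no analytic difficulty.
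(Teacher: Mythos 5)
Your proof is correct and takes essentially the same route as the paper's: both establish the easy inclusion $\widetilde\F^{\rm NN}_{R,N}\subseteq\widetilde\F_R$ from the positive semi-definiteness of sample covariances, and both obtain the reverse inclusion by keeping the constituents $g_1,\ldots,g_R$ fixed and realizing the given $\Lambda$ as the centered sample covariance of $N$ coefficient vectors constructed from the square root $\Lambda^{1/2}$. The only difference is in the base configuration: the paper picks explicit vectors $\bm\zeta_n$ (canonical basis vectors, $-\mathbf{1}$, and zeros) whose sample covariance $\widehat\Sigma_N$ is invertible and then recolors via $\bm\xi_n=\Lambda^{1/2}\widehat\Sigma_N^{-1/2}\bm\zeta_n$, whereas you choose the base configuration already whitened and centered (an orthonormal frame $Q$ in $\{\mathbf{1}\}^\perp$, which exists precisely when $N>R$), so that $\Xi=\sqrt{N}\,Q\,\Lambda^{1/2}$ works directly without any matrix inversion and makes the dimension count where the hypothesis $N>R$ enters fully transparent.
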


This simple correspondence between the CovNet operator class $\widetilde\F_R$ and the class of empirical covariances of neural networks with shared structure \eqref{eq:covnet_empirical_class} is of great consequence in estimating the CovNet model based on the observed data. Note that the criterion $\ell$ is non-convex in the parameters, so we cannot find the explicit minimizer. Instead, we need to rely on some iterative minimization procedure, e.g., gradient descent or its variants \citep[][Chapters~2 and 4]{buduma2017}. The application of gradient descent requires us to calculate the gradient of the minimization criterion. But, with modern optimization routines, this can be done numerically on a computer, without the need to compute the derivatives analytically. In particular, the special neural network structure of our method allows us to employ \emph{automatic differentiation} techniques to efficiently compute the derivative at machine precision \citep{baydin2018}. In essence, the minimizer can be efficiently obtained if we can compute the criterion efficiently. This is where the empirical covariance formulation \eqref{eq:observation_network} and \eqref{eq:covnet_empirical_formulation} come in handy. With this formulation, we compute the criterion $\ell$ as a function of the parameters of $g_1,\ldots,g_R$, and coefficients $\xi_{n,r}$ instead of $\lambda_{r,s}$. At each step of gradient descent, we obtain the fields $\X_n^{\rm NN}$ as feed-forward neural networks. The minimization criterion $\ell$ can be computed by simply computing inner-products between the observed fields $\X_n$ and the fitted networks $\X^{\rm NN}_n$, as shown below. For simplicity, we assume that the observed fields $\X_1,\ldots,\X_N$ are centered, so that the empirical covariance is $\Chat_N = N^{-1} \sum_{n=1}^N \X_n \otimes \X_n$. We also assume that the fitted networks $\X^{\rm NN}_1,\ldots,\X^{\rm NN}_N$ are centered, so that their empirical covariance is $\G^{\rm NN}_{R,N} = N^{-1} \sum_{n=1}^N \X^{\rm NN}_n \otimes \X^{\rm NN}_n$. With these, we get the following formula for the minimization criterion:
\begin{align*}
\ell := \verti{\Chat_N - \G^{\rm NN}_{R,N}}_2^2 = \frac{1}{N^2}\sum_{n=1}^N\sum_{m=1}^N \langle \X_n,\X_m\rangle^2 + \frac{1}{N^2}\sum_{n=1}^N\sum_{m=1}^N \langle \X^{\rm NN}_n,\X^{\rm NN}_m\rangle^2 - \frac{2}{N^2}\sum_{n=1}^N\sum_{m=1}^N \langle \X_n,\X^{\rm NN}_m\rangle^2.
\end{align*}
The detailed derivations are shown in Appendix~\ref{supp:loss_function_detailed}.

\begin{remark}\label{remark:psd_by_design}
The alternative formulation also helps us in imposing positive semi-definiteness on $\Lambda$. After estimating the parameters from the reformulated problem, we obtain $\lambda_{r,s}$ as $N^{-1} \sum_{n=1}^N (\xi_{n,r}-\bar\xi_r) (\xi_{n,s}-\bar\xi_s)$. By virtue of this construction, the resulting matrix $\Lambda = (\lambda_{r,s})$ is automatically positive semi-definite. This is quite useful, as it circumvents the need to work with a constrained optimisation problem on a cone.
\end{remark}

In practice, we observe the data on a grid of size $D = K_1 \times \cdots \times K_d$, say $\{\uvec_1,\ldots,\uvec_D\}$. Let us denote the $i$-th measurement corresponding to the $n$-th field by $X_{ni}$ for $n=1,\ldots,N$, $i=1,\ldots,D$. So, we can store the observed fields as an $N \times D$ matrix $\mathbf{X} = ((X_{ni}))$. Similarly, the fitted networks $\X^{\rm NN}_1,\ldots,\X^{\rm NN}_N$ can be evaluated at the $D$ grid points and all of these can be stored as an $N \times D$ matrix $\mathbf X^{\rm NN} = ((X^{\rm NN}_{ni}))$, where $X^{\rm NN}_{ni} = \X^{\rm NN}_n(\uvec_i)$. We can approximate $\langle \X_n,\X_m \rangle$ by the average over the grid points, i.e., $\langle \X_n,\X_m\rangle \cong D^{-1} \sum_{i=1}^D X_{ni} X_{mi}$. It is easy to see that this is the $(n,m)$-th element of the $N \times N$ matrix $D^{-1}\mathbf X \mathbf X^\top$. Similarly, we approximate $\langle \X^{\rm NN}_n,\X^{\rm NN}_m \rangle$ and $\langle \X_n, \X^{\rm NN}_m \rangle$ by the corresponding averages $D^{-1}\sum_{i=1}^D X^{\rm NN}_{ni}X^{\rm NN}_{mi}$ and $D^{-1} \sum_{i=1}^D X_{ni} X^{\rm NN}_{mi}$, which are the $(n,m)$-th elements of $D^{-1}\mathbf X^{\rm NN} \mathbf X^{\rm NN,\top}$ and $D^{-1}\mathbf X\mathbf X^{\rm NN,\top}$, respectively. Thus, apart from the computation of $\mathbf X^{\rm NN}$, the computational cost of $\ell$ is $\O(N^2D)$. Moreover, to store the model, we only need to store the parameters of $g_1,\ldots,g_R$, and the coefficient matrix $\Lambda$, which is completely free of the grid size $D$. In particular, this amounts to a storage cost of $\O(R^2 + Rd)$ for the shallow CovNet model, $\O\big(R^2 + Rp_L + R\sum_{l=0}^{L-1} (p_l+1)p_{l+1}\big)$ for the deep CovNet model, and $\O\big(R^2 + Rp_L + \sum_{l=0}^{L-1} (p_l + 1) p_{l+1}\big)$ for the deepshared CovNet model ($p_0 = d$ for the latter two). It is easy to see the savings relative to the empirical covariance, which requires $\O(ND^2)$ computations and $\O(D^2)$ storage.

In the above discussion, we have not addressed the computational requirements for $\mathbf X^{\rm NN}$. It is not difficult to show that for a fixed set of parameters, computation of the matrix $\mathbf X^{\rm NN}$ needs $\O(DR(N+d))$ operations for the shallow CovNet, $\O\big(DR(N+\sum_{l=0}^{L-1} p_l p_{l+1}+p_L)\big)$ operations for the deep CovNet, and $\O\big(D(NR + \sum_{l=0}^{L-1} p_l p_{l+1} + Rp_L)\big)$ operations for the deepshared CovNet (see Appendix~\ref{append:implementation_grid}). Thus, for a fixed set of parameters, the computational cost for the evaluation of $\ell$ remains linear in the grid size $D$ for all three CovNet models. Of course, we need to re-evaluate the criterion for each step of the gradient descent algorithm. But that is also the case for other modern machine learning methods. Moreover, the computation can be sped up by considering other techniques from machine learning, such as the stochastic or mini-batch version of gradient descent and parallel computing \citep{bengio2012,buduma2017}.

\begin{remark}
We have not tried to find analytic expression for the derivative of $\ell$ as a function of the parameters. Instead, we focused more on evaluating the criterion efficiently, and rely on automatic differentiation to compute the gradient. There are three reasons for doing this. Firstly, because of the complex neural network structure, finding analytic expressions for the gradient is cumbersome. This becomes more relevant for the deep and the deepshared CovNet structures. Secondly, we have at our disposal modern optimization routines, which are very efficient in automatic differentiation, especially with neural network structures such as ours. In our code, we have used the \texttt{autograd} feature of \texttt{pytorch} (\url{https://pytorch.org/}). Finally, even if we compute the derivatives analytically, when implementing the method on a computer the accumulation of errors for analytic derivatives may sometimes be quite large, especially for complex structures such as neural networks. Automatic differentiation, on the other hand, produces results which are exact up to machine precision, and thus are preferred to analytic derivatives \citep{baydin2018}.
\end{remark}

\begin{remark}\label{remark:covnet_mean_estimation}
In our derivations, we have assumed that the fields $\X_1,\ldots,\X_N$ as well as $\X_1^{\rm NN},\ldots,\X_N^{\rm NN}$ are centered. In practice, we can center the observed fields by subtracting the mean (empirical or estimated by some other method), with negligible computational overhead. For the fitted fields $\X_1^{\rm NN},\ldots,\X_N^{\rm NN}$, because of their shared structure, the mean turns out to be $\bar\X^{\rm NN}(\uvec) = \sum_{r=1}^R \bar\xi_r\,g_r(\uvec)$. So, centering the fitted fields boils down to centering the coefficients $\xi_{n,r}$.  {We can use another approach, where we do not center the fields (observed or fitted) beforehand and minimize a slightly different criterion. In this case, we also get an estimate of the mean as a by-product (see Appendix\,\ref{sec:loss_function_mean} for details).}
\end{remark}

\section{Eigendecomposition of the estimated covariance operator}\label{sec:eigendecomposition}

Once we estimate the covariance, it is important to be able to \emph{manipulate} it, e.g., for regression, prediction, or even for visualization purposes. For such tasks, typical manipulations involve inverting the covariance operator or obtaining its eigendecomposition, either of which may be quite demanding in practice. For instance, for data observed on a grid of size $D = K_1 \times \cdots \times K_d$, the empirical covariance is stored as a $D \times D$ matrix. The inversion in this case requires $\O(D^3)$ operations, which is highly demanding and sometimes even prohibitive. Even if the inverse is constructed, it is available only at the $D\times D$ pre-specified locations. To evaluate the inverse (or the covariance itself, for that matter) at any other location, as required e.g., in kriging, one needs to apply some sort of interpolation or smoothing on a high-dimensional (in our case, $\R^d \times \R^d$) object, which can be even more demanding than the inversion itself. Finally, the cost of storing the inverse and/or the eigenfunctions adds another layer of burden. 

By contrast, the proposed CovNet estimators enjoy considerable advantage in this respect. The special form of the CovNet operators allow us to easily compute their eigendecomposition. Note that our estimated CovNet kernels are of the form
\begin{equation*}
\widehat c(\uvec,\vvec) = \sum_{r=1}^R \sum_{s=1}^R \widehat\lambda_{r,s}\, \widehat g_r(\uvec)\,\widehat g_s(\vvec), \qquad \uvec,\vvec \in \Q.
\end{equation*}
Thus, for the estimated covariance operator $\Chat$ and for any $f \in \L_2(\Q)$,
\begin{align*}
\Chat f(\uvec) = \int_{\Q} \widehat c(\uvec,\vvec) f(\vvec) \diff\vvec = \sum_{r=1}^R\sum_{s=1}^R \widehat\lambda_{r,s}\,\widehat g_r(\uvec) \int_{\Q} \widehat g_s(\vvec)\,f(\vvec) \diff\vvec = \sum_{r=1}^R a_r\,\widehat g_r(\uvec),
\end{align*}
where $a_r = \sum_{s=1}^R \widehat\lambda_{r,s} \int_{\Q} \widehat g_s(\vvec)\,f(\vvec)\diff\vvec$. This shows that the eigenfunctions of $\Chat$ are of the form $\psi(\uvec) = \sum_{r=1}^R a_r\,\widehat g_r(\uvec)$ for some $a_1,\ldots,a_R \in \R$. Now, for such a function $\psi$,
\begin{equation*}
\|\psi\|^2 = \sum_{r=1}^R\sum_{s=1}^R a_r\,a_s \int_{\Q} \widehat g_r(\uvec)\,\widehat g_s(\uvec)\diff\uvec = \sum_{r=1}^R \sum_{s=1}^R a_r\,a_s\,\widetilde g(r,s) = \mathbf a^\top\widetilde{\mathrm G}\,\mathbf a,
\end{equation*}
where $\widetilde g(r,s) = \int_{\Q} \widehat g_r(\uvec)\,\widehat g_s(\uvec)\,\diff\uvec$, $\mathbf a = (a_1,\ldots,a_R)^\top$ and $\widetilde{\mathrm G} = ((\widetilde g(r,s)))_{1\le r,s\le R}$. Also,
\begin{align*}
\langle \Chat \psi,\psi\rangle &= \iint_{\Q \times \Q} \widehat c(\uvec,\vvec)\,\psi(\uvec)\, \psi(\vvec) \diff\uvec\diff\vvec \\
&= \sum_{r=1}^R\sum_{s=1}^R \widehat\lambda_{r,s} \iint_{\Q \times \Q} \widehat g_r(\uvec)\,\widehat g_s(\vvec)\,\psi(\uvec)\,\psi(\vvec)\,\diff\uvec\,\diff\vvec \\
&= \sum_{r=1}^R\sum_{s=1}^R \widehat\lambda_{r,s} \sum_{i=1}^R\sum_{j=1}^R a_i\,a_j \int_{\Q} \widehat g_r(\uvec)\,\widehat g_i(\uvec)\diff\uvec \int_{\Q} \widehat g_s(\vvec)\,\widehat g_j(\vvec)\diff\vvec \\
&= \sum_{i=1}^R \sum_{j=1}^R a_i\,a_j \bigg(\sum_{r=1}^R \sum_{s=1}^R \widehat\lambda_{r,s}\,\widetilde g(r,i)\, \widetilde g(s,j)\bigg) = \sum_{i=1}^R \sum_{j=1}^R a_i\,a_j (\widetilde{\mathrm G}\Lambda\widetilde{\mathrm G})_{i,j} = \mathbf{a}^\top \widetilde{\mathrm G}\Lambda\widetilde{\mathrm G}\,\mathbf a.
\end{align*}
Thus, finding the leading eigenvalue and eigenfunction of $\Chat$ reduces to maximizing $\mathbf{a}^\top \widetilde{\mathrm G}\Lambda\widetilde{\mathrm G}\,\mathbf a$ subject to $\mathbf a^\top\widetilde{\mathrm G}\,\mathbf a = 1$. This amounts to solving
\begin{equation*}
(\widetilde{\mathrm G}\Lambda\widetilde{\mathrm G} - \eta \widetilde{\mathrm G})\,\mathbf a = \mathbf 0.
\end{equation*}
Again, if $\psi_i(\uvec) = \sum_{r=1}^R a_{i,r}\,\widehat g_r(\uvec)$, then we can similarly show that
\begin{equation*}
\langle \psi_i,\psi_j\rangle = \mathbf a_i^\top \widetilde{\mathrm G}\,\mathbf a_j \text{ and } \langle \Chat\psi_i,\psi_j\rangle = \mathbf a_i^\top\widetilde{\mathrm G}\Lambda\widetilde{\mathrm G}\,\mathbf a_j,
\end{equation*}
where $\mathbf a_i = (a_{i,1},\ldots,a_{i,R})^\top$ is the vector of coefficients of $\psi_i$. Thus, finding the subsequent eigenvalues and eigenfunctions also amounts to solving $(\widetilde{\mathrm G}\Lambda\widetilde{\mathrm G} - \eta\widetilde{\mathrm G})\,\mathbf a = \mathbf 0$, with added orthogonality constraints. In summary, finding the eigensystem of the CovNet operator $\Chat$ boils down to finding the solution of the generalized eigenvalue problem \citep[][Chapter~7]{golub2013} involving the non-negative definite matrices $\widetilde{\mathrm G}\Lambda\widetilde{\mathrm G}$ and $\widetilde{\mathrm G}$. Several optimization routines are available to obtain the solution. Also, this can be done very efficiently since the matrices $\Lambda$ and $\widetilde{\mathrm G}$ involved in the computations are of the order $R \times R$, and typical values of $R$ will be much smaller than $D=K_1\times\hdots\times K_d$. The matrix $\Lambda$ is obtained during the estimation procedure. The only bottleneck is the computation of the matrix $\widetilde{\mathrm G}$, which involves the integrals
\begin{equation*}
\widetilde g(r,s) = \int_{\Q} \widehat g_r(\uvec)\,\widehat g_s(\uvec)\diff\uvec.
\end{equation*}
These are integrals on a compact subset of $\R^d$. When $d$ is moderate, we can approximate the integral using Monte-Carlo methods, while for large $d$, we can resort to using quasi-Monte-Carlo methods \citep{dick2013}. In typical FDA applications, $d$ is $2,3$ or $4$ (corresponding to spatial/spatio-temporal data on $\R^2$ and $\R^3$), and it suffices to use Monte-Carlo integration. For this, we generate independent observations $\uvec_1,\ldots,\uvec_M$ distributed uniformly on $\Q$, and approximate the integral as
\[
\widetilde g(r,s) \simeq \frac{1}{M} \sum_{j=1}^M \widehat g_r(\uvec_j)\,\widehat g_s(\uvec_j).
\]
Also, when the functions $\widehat g_1,\ldots,\widehat g_R$ are bounded (e.g., when the activation $\sigma$ is sigmoidal), we can control the approximation error up to any desired accuracy by selecting $M$ large enough. After generating the observations $\uvec_1,\ldots,\uvec_M$, $\widetilde{\mathrm G}$ can be obtained by passing them through $\widehat g_1,\ldots,\widehat g_R$ to create an $R \times M$ matrix $\mathrm G$, and then computing the outer-product $\mathrm G {\mathrm G}^\top$. It can be verified that the overall computational cost remains linear in $M$ (see Appendix~\ref{append:implementation_grid}). So, even with a large value of $M$, the computational time is quite small. Moreover, after obtaining the eigendecomposition, the complete eigenstructure can be stored using an $R \times R$ matrix of coefficients $\widehat{\mathbf A} = (\widehat{\mathbf a}_1^\top,\ldots,\widehat{\mathbf a}_R^\top)^\top$ and a vector of eigenvalues $\widehat{\bm\eta} = (\widehat\eta_1,\ldots,\widehat\eta_R)$, in addition to the already estimated parameters. The usefulness of the eigendecomposition is shown in Section~\ref{sec:simulation_eigen}.

\section{Empirical study}\label{sec:empirical_demonstration}

We now demonstrate the usefulness of the proposed methods by means of a variety of numerical examples. We start with some simulated examples, where we generate the data from a Gaussian process \citep[][Chapter~1]{adler2007} on $[0,1]^d$ with mean $0$ and variance $\C$. We consider the following five choices for the kernel $c$.

\begin{enumerate}[{Ex}\,1]
\item \emph{Brownian sheet}: $c(\uvec,\vvec) = c_{\rm bm}(u_1,v_1) \times \cdots \times c_{\rm bm}(u_d,v_d)$ for $\uvec,\vvec \in [0,1]^d$, where $c_{\rm bm}(u,v) = \min\{u,v\}$ is the covariance of the standard Brownian motion \citep[][Sec.~1.4.3]{adler2007}.

\item \emph{Rotated Brownian sheet}: $c(\uvec,\vvec) = \tilde c(\mathrm O\uvec,\mathrm O\vvec)$, where $\mathrm O$ is a rotation matrix and $\tilde c$ is the covariance kernel of the Brownian sheet from Ex\,1.

\item \emph{Integrated Brownian sheet}: $c(\uvec,\vvec) = c_{\rm ibm}(u_1,v_1) \times \cdots \times c_{\rm ibm}(u_d,v_d)$ for $\uvec,\vvec \in [0,1]^d$, where $c_{\rm ibm}(u,v) = (u^2/2)\,(v-u/3) \1\{u\le v\} + (v^2/2)\,(u-v/3) \1\{u>v\}$ is the covariance of the integrated Brownian motion.

\item \emph{Rotated integrated Brownian sheet}: $c(\uvec,\vvec) = \tilde c(\mathrm O\uvec,\mathrm O\vvec)$, where $\mathrm O$ is a rotation matrix and $\tilde c$ is the covariance kernel of the integrated Brownian sheet from Ex\,3.

\item \emph{Mat\'ern covariance}: $c_{\nu}(\uvec,\vvec) = 2^{1-\nu}/\Gamma(\nu)\, (\sqrt{2\nu}\,\|\uvec-\vvec\|_d)^\nu\,K_{\nu}(\sqrt{2\nu}\,\|\uvec-\vvec\|_d)$, where $\Gamma$ is the gamma function, $K_{\nu}$ is the modified Bessel function of the second kind and $\|\cdot\|_d$ is the Euclidean distance on $\R^d$ \citep[][Chapter~4]{rasmussen2006}. The Mat\'ern covariance is indexed by the parameter $\nu>0$, which regulates its smoothness.
\end{enumerate}

Note that the covariance kernels in Ex\,1 and 2 are separable. We eliminate the separability in Ex\,3 and 4 by introducing a rotation of the domain. The Mat\'ern covariance in Ex\,5 is stationary and isotropic, but not separable for any finite $\nu$. On the other hand, none of the other covariances are stationary. Ex\,1 and 3 yield continuous but nowhere differentiable random fields, whereas Ex\,2 and 4 yield continuously differentiable random fields. For Ex\,5, the random fields are $\lceil \nu \rceil-1$ times differentiable in the mean-square sense.

We carried out our experiments with $d=2$ and $d=3$, which we refer to as 2D and 3D, respectively. For each experiment, we generated $N$ independent fields at $K \times \cdots \times K$ regular grid points on $[0,1]^d$. Henceforth, we refer to $K$ as the resolution. We used the three CovNet models (shallow, deep and deepshared) on the generated data to estimate $\C$. To facilitate comparison, we also consider the empirical covariance estimator and the best separable covariance estimator \citep[e.g.,][]{dette2020}. For each of these estimators, we compute the relative estimation error $\vertj{\Chat-\C}_2/\vertj{\C}_2$. Note that
\[
\vertj{\Chat - \C}_2^2 = \iint_{[0,1]^d \times [0,1]^d} \big(\widehat c(\uvec,\vvec) - c(\uvec,\vvec)\big)^2 \diff\uvec \diff\vvec ~~\text{ and }~~ \vertj{\C}_2^2 = \iint_{[0,1]^d \times [0,1]^d} c^2(\uvec,\vvec) \diff\uvec \diff\vvec
\]
cannot always be computed analytically. So, we use a Monte-Carlo approximation. We generate $M$ points $(\uvec_1,\vvec_1),\ldots,(\uvec_M,\vvec_M)$ from the uniform distribution on $[0,1]^d \times [0,1]^d$ and approximate
\[
\vertj{\Chat - \C}_2^2 \simeq \frac{1}{M} \sum_{i=1}^M \big(\widehat c(\uvec_i,\vvec_i) - c(\uvec_i,\vvec_i)\big)^2 ~~\text{ and }~~ \vertj{\C}_2^2 \simeq \frac{1}{M} \sum_{i=1}^M c^2(\uvec_i,\vvec_i).
\]
These are then used to approximate the relative errors of the estimators. The advantage of using Monte-Carlo is that we can control the approximation error up to any desired accuracy by selecting $M$ large enough. Also, by evaluating the estimators on a different set of locations than where the data were generated, we avoid committing an \emph{inverse crime} \citep{kaipio2005}. In particular, we used $M=50000$ in 2D and $M=100000$ in 3D.

\begin{figure}[t]
\centering
\begin{tabular}{cc}
(a) Brownian sheet & (b) Rotated Brownian sheet \\
\includegraphics[width=0.45\linewidth]{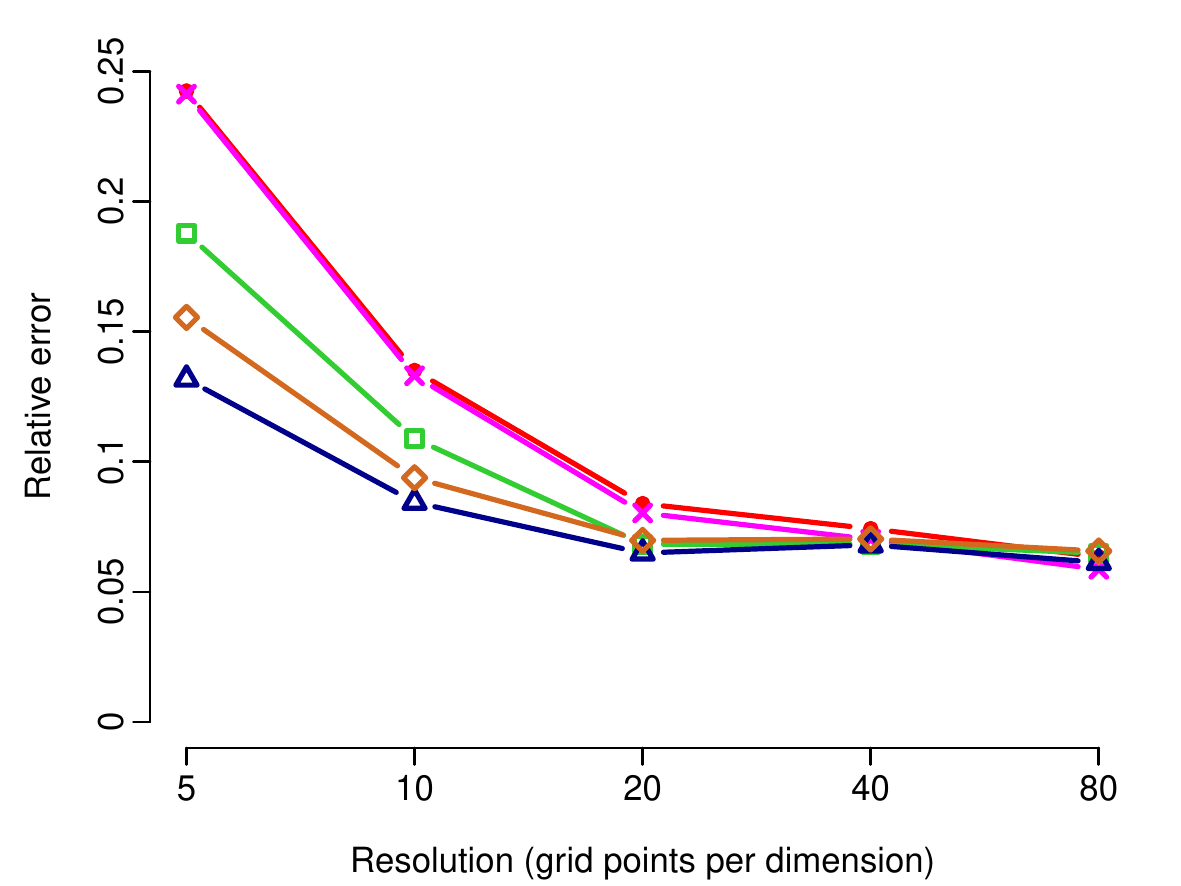} &
\includegraphics[width=0.45\linewidth]{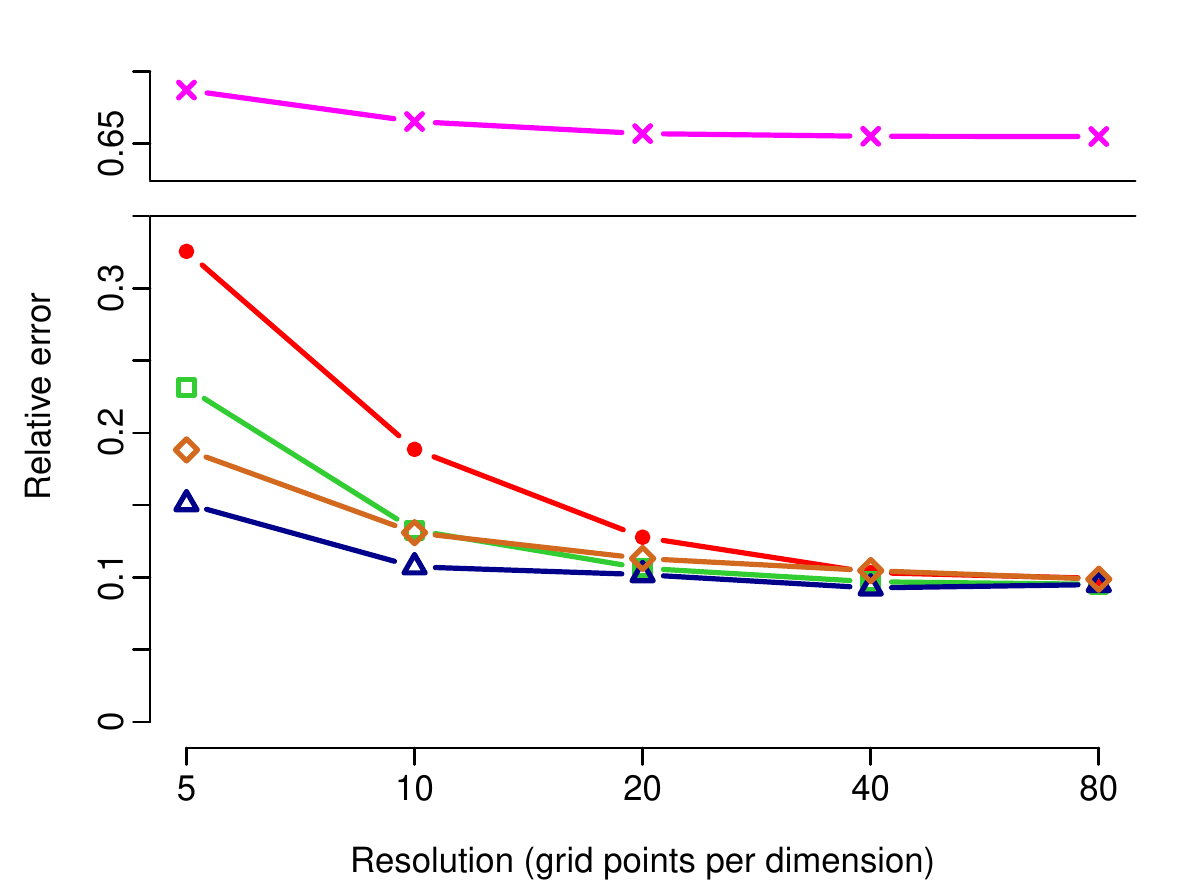} \\ [5pt]
(c) Integrted Brownian sheet & (d) Rotated integrated Brownian sheet \\
\includegraphics[width=0.45\linewidth]{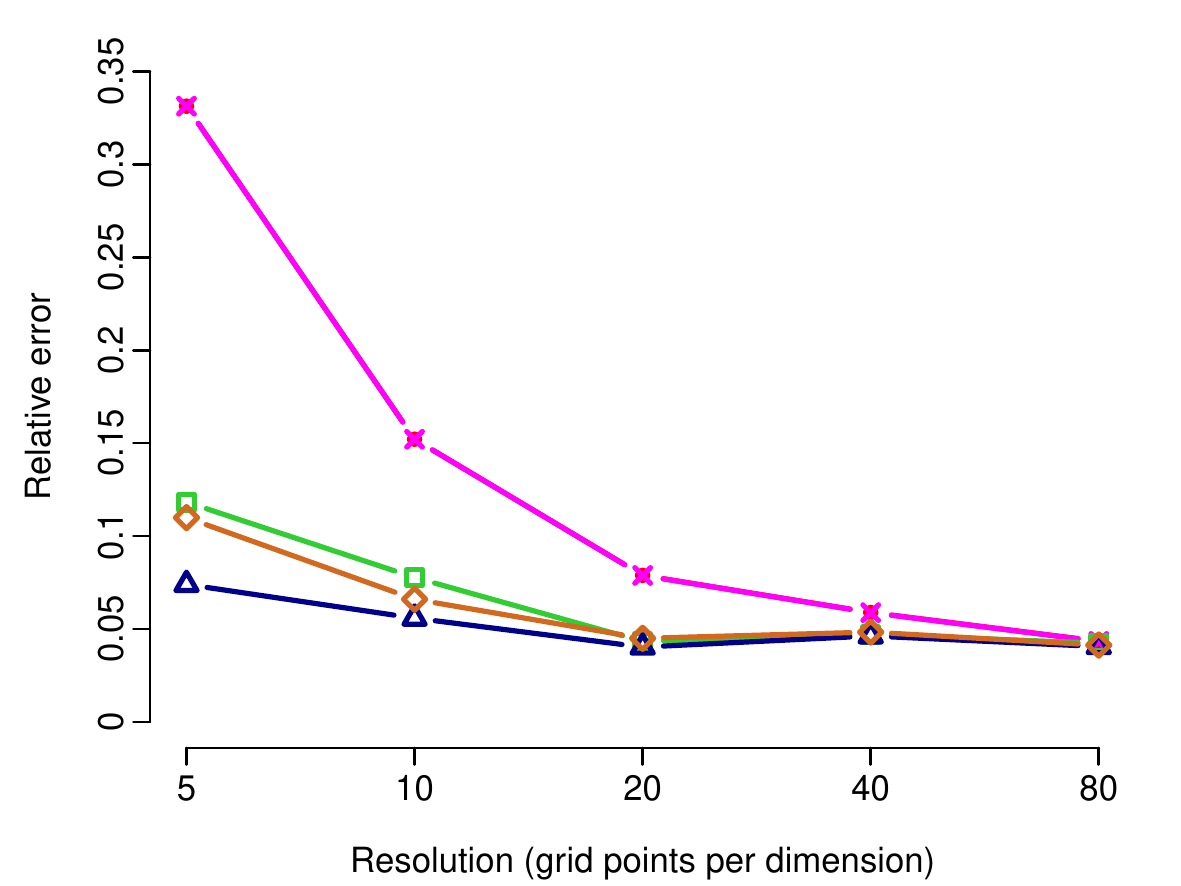} &
\includegraphics[width=0.45\linewidth]{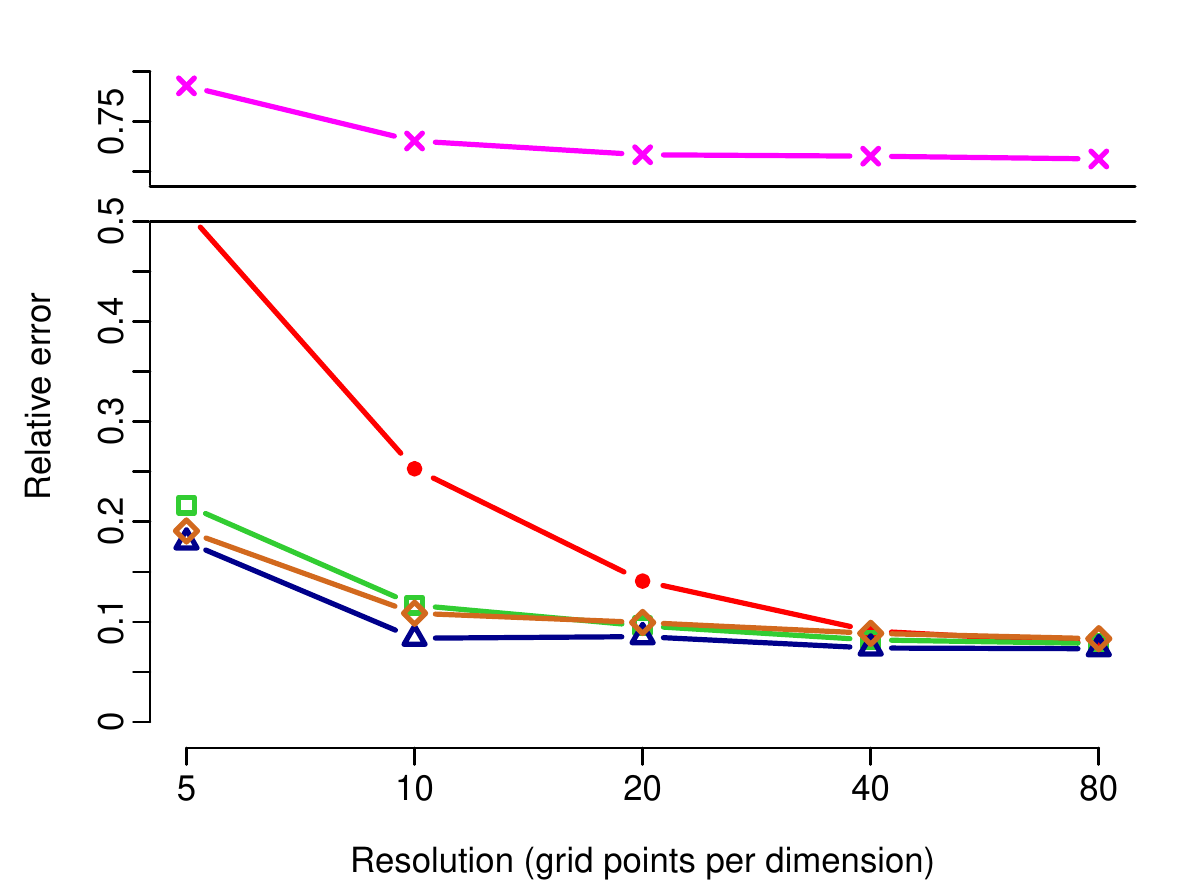} \\
\multicolumn{2}{c}{\textbf{Legend:} Empirical \Line[Emp]\Emp\Line[Emp]~~ Best separable \Line[Sep]\Sep\Line[Sep]~~ Shallow \Line[Sh]\Sh\Line[Sh]~~ Deep \Line[D]\D\Line[D]~~ DeepShared \Line[DS]\DS\Line[DS]}
\end{tabular}
\caption{\label{fig:BM_IBM_2D} Relative errors of different methods for different examples in 2D. Results are reported for a fixed sample size of $500$ and varying resolution. The numbers are averages based on $25$ simulation runs.}
\end{figure}

We also considered two different setups based on the sample size and the resolution: (a) fixed resolution $K$ and varying sample size $N$ and (b) fixed sample size $N$ and varying resolution $K$. Also, for the Mat\'ern example, we considered different values of $\nu$ with fixed sample size and resolution. For setup (a), the results are unremarkable -- the errors of all the estimators decrease as $N$ increases. These are reported in Appendix~\ref{supp:additional_simulations}. The results for setup (b) are rather interesting and exhibit the superiority of the CovNet estimators. We show these results for 2D in Figures~\ref{fig:BM_IBM_2D}--\ref{fig:Matern_2D}, where the reported numbers are the average relative errors based on $25$ simulation runs. The results for 3D are qualitatively similar, and we report them in Appendix~\ref{supp:additional_simulations}.

For the CovNet estimators, the results depend on the choice of hyperparameters $R$ and $L$. We used $R=5,10,20,40,80$ for the shallow CovNet model, and $L=2,3,4,R=5,10,20,40$ for the deep and the deepshared CovNet models in our experiments. In Figures~\ref{fig:BM_IBM_2D}--\ref{fig:Matern_2D}, we report the best result (i.e., minimum average estimation error) obtained by each CovNet model. In Section~\ref{sec:hyperparameter}, we discuss a practical method to select the hyperparameters and exhibit the corresponding results. It is seen there that the selection method yields values comparable to the ``best choice". For all the CovNet models, we used the standard sigmoid activation function $\sigma(t) = 1/(1+\exp(-t))$. For the optimization involved in fitting these models, we used the \texttt{ADAM} optimizer \citep{kingma2014} available in \texttt{pytorch}.

\begin{figure}[t]
\centering
\begin{tabular}{cc}
(a) $N=250$, resolution $25 \times 25$ & (b) $N=500$, $\nu=0.01$ \\
\includegraphics[width=0.45\linewidth]{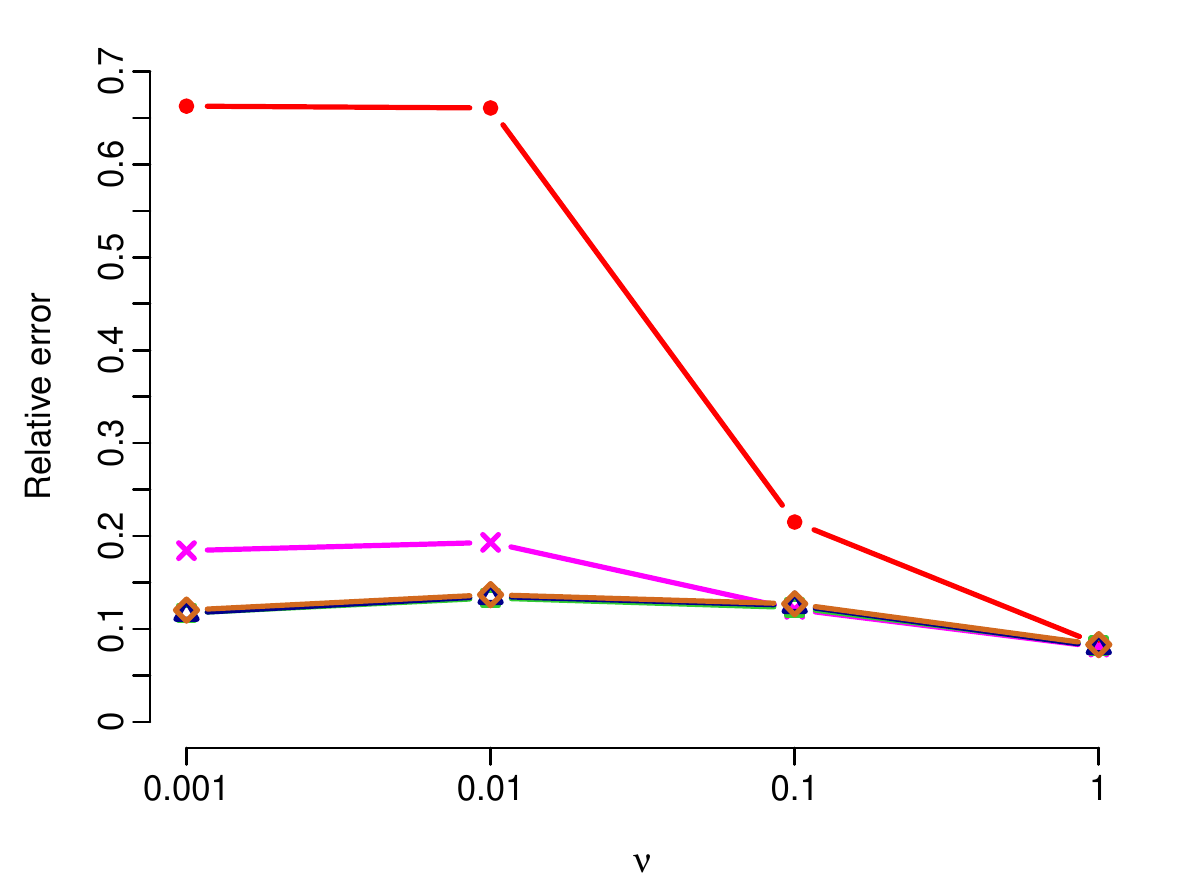} &
\includegraphics[width=0.45\linewidth]{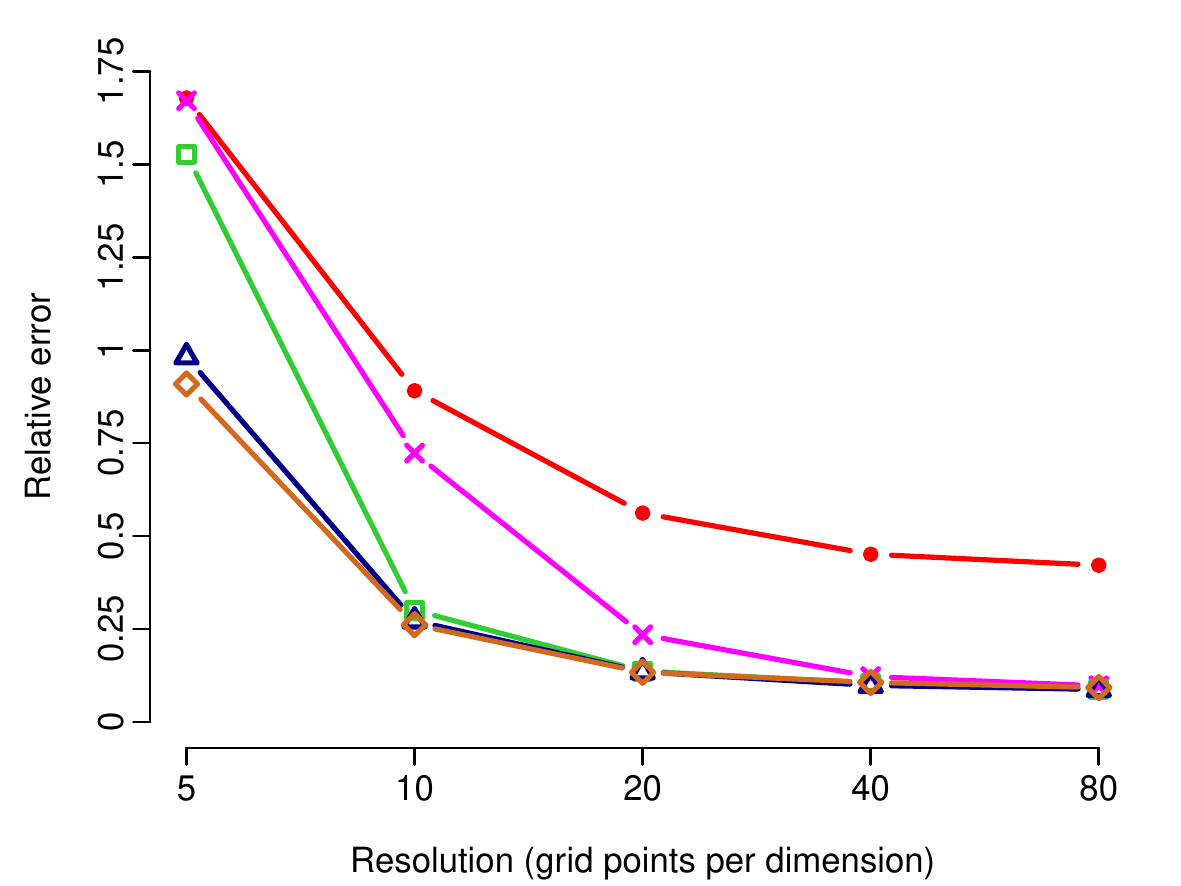} \\
\multicolumn{2}{c}{\textbf{Legend:} Empirical \Line[Emp]\Emp\Line[Emp]~~ Best separable \Line[Sep]\Sep\Line[Sep]~~ Shallow \Line[Sh]\Sh\Line[Sh]~~ Deep \Line[D]\D\Line[D]~~ DeepShared \Line[DS]\DS\Line[DS]}
\end{tabular}
\caption{\label{fig:Matern_2D} Relative errors of different methods for the Mat\'ern covariance model in 2D. In (a), results are for sample size $250$ and resolution $25 \times 25$ with varying smoothness parameter $\nu$. In (b), results are for sample size $500$ and $\nu=0.01$ with varying resolution. The numbers are averages based on $25$ simulation runs.}
\end{figure}

In Figure~\ref{fig:BM_IBM_2D}, we show the results for the first four examples (Ex\,1--4) in 2D with fixed sample size $N=500$ and varying resolution $K=5,10,20,40,80$. The Brownian sheet and the integrated Brownian sheet examples (Ex\,1 and 2) are separable. But even for these examples, the proposed CovNet estimators perform better than the best separable estimator (Fig.~\ref{fig:BM_IBM_2D}(a) and (c)), especially in low resolutions. This shows the ability of the CovNet model to learn the underlying pattern, even when we observe the fields at a rather small number of locations. For the rotated examples, we chose the matrix $\mathrm O$ to be the $45^\circ$-rotation along the $x$-axis:
\[
\mathrm O = \begin{pmatrix} 1/\sqrt{2} & -1/\sqrt{2} \\ 1/\sqrt{2} & 1/\sqrt{2} \end{pmatrix}.
\]
The absence of separability of $\C$ has dire consequence on the performance of the best separable estimator. The other estimators are seemingly unaffected by this, and the CovNet estimators outperform the empirical estimator. Among the CovNet estimators, the deepshared variant performed much better than the others. Recall that the integrated Brownian sheet (both the usual and the rotated) is one order smoother than the (corresponding version of) Brownian sheet. While this added smoothness enhances the performance of the CovNet estimators, we see an opposite effect on the other estimators, especially with small resolutions. The difference between the performance of the different CovNet models is also lesser in the smoother examples.

In Figure~\ref{fig:Matern_2D}, we show the results for the Mat\'ern covariance model (Ex\,5) in 2D. We consider two different setups. In panel (a), we show the results with $N=250$ and $K=25$ with varying $\nu$. Here, the empirical covariance performs very poorly, especially when the surfaces are rougher (i.e., for smaller values of $\nu$). The CovNet estimators perform better than the best separable estimator when $\nu$ is small. When $\nu$ is large, i.e., the surfaces are smoother, the errors of the best separable estimator is almost indistinguishable from those of the CovNet estimators. However, same relative error does not mean that the estimators share the same characteristic. In fact, in this example, the CovNet estimators have an advantage over the other estimators, which is evident from the eigendecomposition of the estimators (see Figure~\ref{fig:eigen_Matern}). Detailed discussion on this is given in Section~\ref{sec:simulation_eigen}. In panel (b), we report the results for $N=500$ and $\nu=0.01$ with varying resolution $K$. Here, we again see the superiority of the CovNet estimators, especially when the resolution is low.

\smallskip

A few words are in order about the cost of storage and manipulation of the estimators. The storage of the empirical covariance estimator becomes prohibitive rather quickly. Although we can compute the estimation error of the empirical covariance relatively easily, it is very costly to manipulate it, e.g., by inverting, for further applications like kriging. The scenario is much better for the best separable estimator. But, both the empirical and the best separable estimators produce a discretized object. Thus, even to evaluate the estimated covariance at a location outside of the observation grid, one needs to interpolate or smooth the estimated covariance. Depending on the smoother used, this can dramatically increase the cost associated with the estimator. The functional form of the CovNet estimators, on the other hand, do not suffer from such problems. After estimation, the storage of the model is quite cheap -- one only needs to store the matrices and vectors associated with the neural network model, which can be done very efficiently. Moreover, using the eigendecomposition methods discussed in Section~\ref{sec:eigendecomposition}, we can easily manipulate the fitted model.

\subsection{Choice of hyperparameters}\label{sec:hyperparameter}

The performance of the proposed method depends on the choice of hyperparameters, namely the number of components $R$ and the depth of the network $L$ (for deep and deepshared models). Thus, it is important to select these hyperparameters from the data, which is quite challenging for neural networks \citep{bengio2012}. We can use $V$-fold cross-validation for this purpose, where we split the data into $V$ parts. One of these $V$ parts is used as the \emph{validation set} and the rest are used as the \emph{training set}. For a particular choice of hyperparameters, the training set is used to fit the model, and its performance is evaluated on the validation set. This procedure is repeated for all the $V$ parts to get the average cross-validation score for a particular set of hyperparameters. Finally, we select the set of hyperparameters that admit the smallest average cross-validation score.

The alternative formulation of the loss function (cf.\ Section~\ref{sec:implementation}) is again useful for the cross-validation. Suppose that our data is split as $\X^{\rm tr}_1,\ldots,\X^{\rm tr}_{N_1}$ and $\X^{\rm va}_1,\ldots,\X^{\rm va}_{N_2}$, constituting the training and the validation sets, respectively. The model is fitted on the training set to produce the estimate $\widehat\G^{\rm tr}$. We evaluate the performance of the model on the validation set by computing the loss $\vertj{\Chat^{\rm va} - \widehat\G^{\rm tr}}_2^2$, where $\Chat^{\rm va}$ is the empirical covariance based on the validation set. Recall that by construction, both $\widehat\G^{\rm tr}$ and $\Chat^{\rm va}$ are of the form
\[
\widehat\G^{\rm tr} = \frac{1}{N_1}\sum_{n=1}^{N_1} \X^{\rm tr,NN}_n \otimes \X^{\rm tr,NN}_n ~~\text{ and }~~ \Chat^{\rm va} = \frac{1}{N_2}\sum_{n=1}^{N_2} \X^{\rm va}_n \otimes \X^{\rm va}_n,
\]
where $\X^{\rm tr,NN}_1,\ldots,\X^{\rm tr,NN}_{N_1}$ are the neural networks fitted to the training sample (see Section~\ref{sec:implementation}). Here, we have assumed w.l.o.g.\ that the observations are centered. Now, it is easy to see that the loss has the explicit form
\[
\vertj{\Chat^{\rm va} - \widehat\G^{\rm tr}}_2^2 = \frac{1}{N_1^2} \sum_{n=1}^{N_1}\sum_{m=1}^{N_1} \langle \X_n^{\rm tr,NN},\X_m^{\rm tr,NN} \rangle^2 + \frac{1}{N_2^2} \sum_{n=1}^{N_2}\sum_{m=1}^{N_2} \langle \X_n^{\rm va},\X_m^{\rm va} \rangle^2 - \frac{2}{N_1N_2}\sum_{n=1}^{N_1}\sum_{m=1}^{N_2} \langle \X_n^{\rm tr,NN},\X_m^{\rm va} \rangle^2,
\]
which depends only on the inner-products. Thus, we can compute the loss efficiently, without forming the high-order covariances $\Chat^{\rm va}$ or $\widehat\G^{\rm tr}$.

\begin{table}[t]
\centering
\caption{\label{table:errors_CV}Relative errors (in \%) of the CovNet models with hyperparameters chosen using $5$-fold cross-validation. Difference from the least observed error over the range of hyperparameters is shown in parentheses. Relative errors for the empirical and the best separable estimators are also reported. The reported numbers are based on one simulation run with $N=500$ and $K=25$.}
\begin{tabular}{lrrrrr}
Example & Empirical & Best separable & Shallow & Deep & Deepshared \\ [3pt]
Brownian sheet 2D  & $9.58$ &	$9.22$ & $9.26\,(0.41)$ & $7.93\,(0.24)$ & $8.72\,(0.35)$ \\
Rotated Brownian sheet 2D  & $11.79$ & $65.99$ & $10.03\,(0.64)$ & $10.36\,(0.69)$ & $9.70\,(0.17)$ \\ [2pt]
Integrated Brownian sheet 2D  & $7.99$ & $7.98$ & $7.34\,(0.06)$ & $9.58\,(3.37)$ & $7.44\,(0.09)$ \\
Rotated integrated Brownian sheet 2D & $11.02$ & $70.53$ & $6.93\,(0.00)$ & $6.69\,(0.04)$ & $6.15\,(0.36)$ \\ [2pt]
Matern 2D $\nu=0.001$ & $51.74$ & $17.65$ & $12.47\,(0.17)$ & $12.50\,(1.43)$ & $13.11\,(1.74)$ \\
Matern 2D $\nu=0.01$ & $51.52$ & $18.40$ & $14.25\,(0.62)$ & $13.95\,(2.56)$ & $13.05\,(0.00)$ \\
Matern 2D $\nu=0.1$ & $17.00$ & $11.68$ & $12.25\,(0.57)$ & $11.55\,(0.19)$ & $11.38\,(0.37)$ \\
Matern 2D $\nu=1$ & $8.23$ & $8.23$ & $8.14\,(0.00)$ & $8.73\,(0.63)$ & $8.00\,(0.00)$
\end{tabular}
\end{table}

The results for the proposed cross-validation strategy are shown in Table~\ref{table:errors_CV} for the examples in 2D. Note that cross-validation is time consuming, and a complete simulation study with cross-validation is rather difficult. So, for each of the examples considered in the previous section, we report relative errors for the three CovNet models (shallow, deep and deepshared) selected via cross-validation based on a single simulation run with $500$ observations at a resolution of $25 \times 25$. For each model, we also show the difference from the least observed relative error over the range of hyperparameters. The relative errors for the empirical and the best separable estimators are also reported to facilitate comparison. In all the examples, the average difference from the best result was less than $1\%$ for all the CovNet models, while the maximum difference was less than $1.75\%$ for the shallow and the deepshared models and less than $3.5\%$ for the deep CovNet model. These results clearly show that the cross-validation method can identify a good set of hyperparameters in practice.

\subsection{Estimated eigenstructure}\label{sec:simulation_eigen}
Here, we demonstrate the usefulness of the eigendecomposition of the CovNet estimators. For this purpose, we consider three examples in 2D, the rotated Brownian sheet (Ex\,2), the rotated integrated Brownian sheet (Ex\,4) and the Mat\'ern covariance (Ex\,5) with $\nu=0.01$. For each example, we plot the eigensurfaces of the CovNet estimators obtained using the method proposed in Section~\ref{sec:eigendecomposition}. The reported results are based on $500$ samples. For the first two examples, we used a resolution of $10 \times 10$. The Mat\'ern example with $\nu=0.01$ is much more rough, and a resolution of $10 \times 10$ was too low for all the methods (see Figure~\ref{fig:Matern_2D}(b)). So, for this example, we used a resolution of $25 \times 25$. For comparison, we have also plotted the leading eigensurfaces of the true covariance $\C$, the empirical estimator and the best separable estimator. For the CovNet estimators, we selected the hyperparameters via the cross-validation strategy described in the previous section.

\begin{figure}[t!]
\centering
\includegraphics[width=0.77\linewidth]{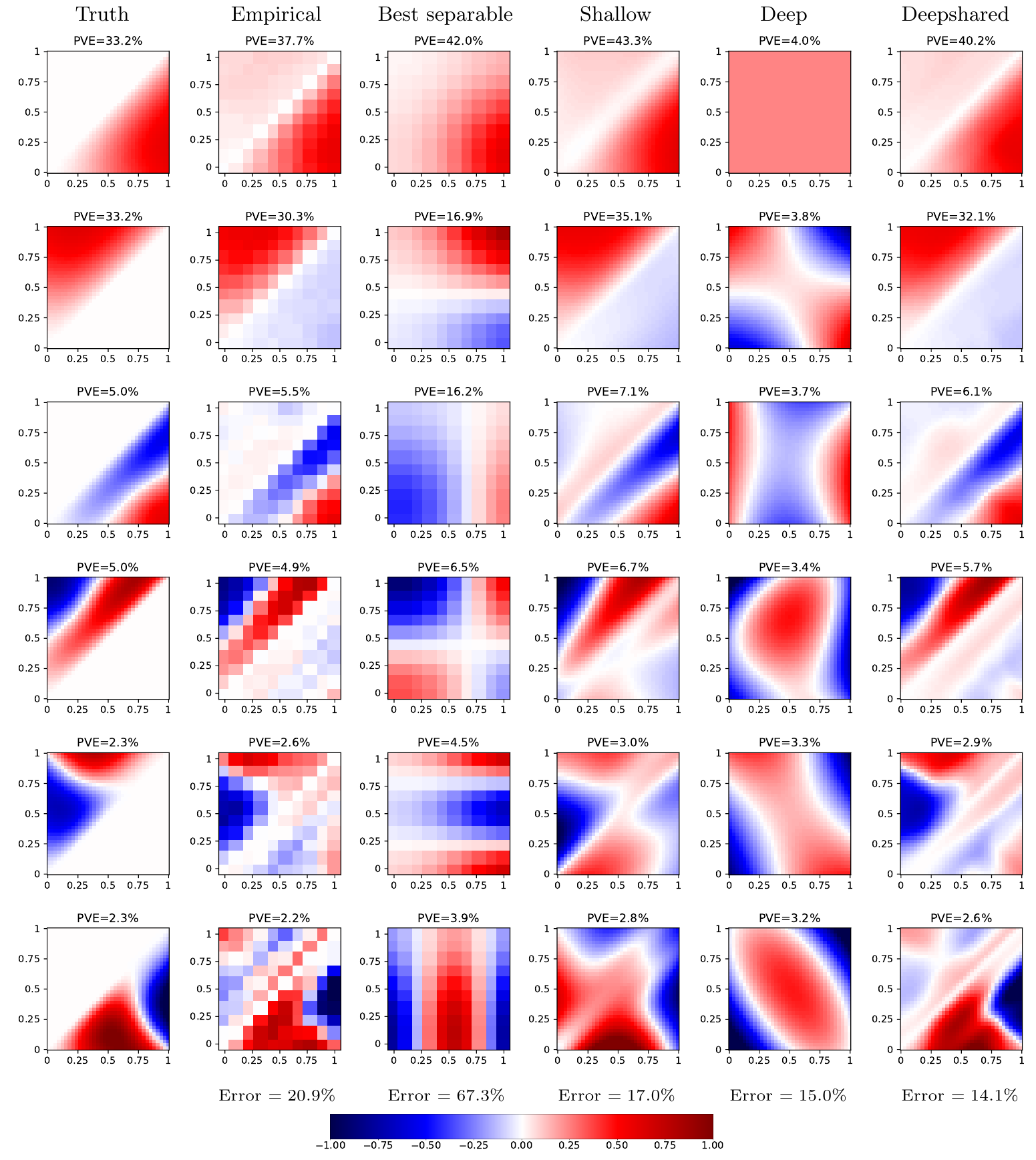}
\caption{\label{fig:eigen_BM_rotated}First six eigensurfaces of the true covariance and different covariance estimators in the rotated Brownian sheet example with $N=500$ and resolution $10 \times 10$. For the CovNet estimators, the eigenfunctions are computed using the methods described in Section~\ref{sec:eigendecomposition}. The plots are heatmaps of the 2D eigenfunctions.}
\end{figure}

In Figure~\ref{fig:eigen_BM_rotated}, we plot the eigensurfaces for the rotated Brownian sheet. Note that in this case, the  eigensurfaces of order seven and beyond explain less than $1\%$ of the total variation of true covariance. Hence, we report the first six eigensurfaces for each covariance (the truth and the estimators). The usefulness of the CovNet estimators is quite evident from these plots. Here, for each surface, we make observations at $100$ ($10 \times 10$) locations. Clearly, this is not enough for the empirical or the best separable estimators. In contrast, the shallow and the deepshared CovNet estimators are able to extract the features of the true covariance.

\begin{figure}[t]
\centering
\includegraphics[width=0.77\linewidth]{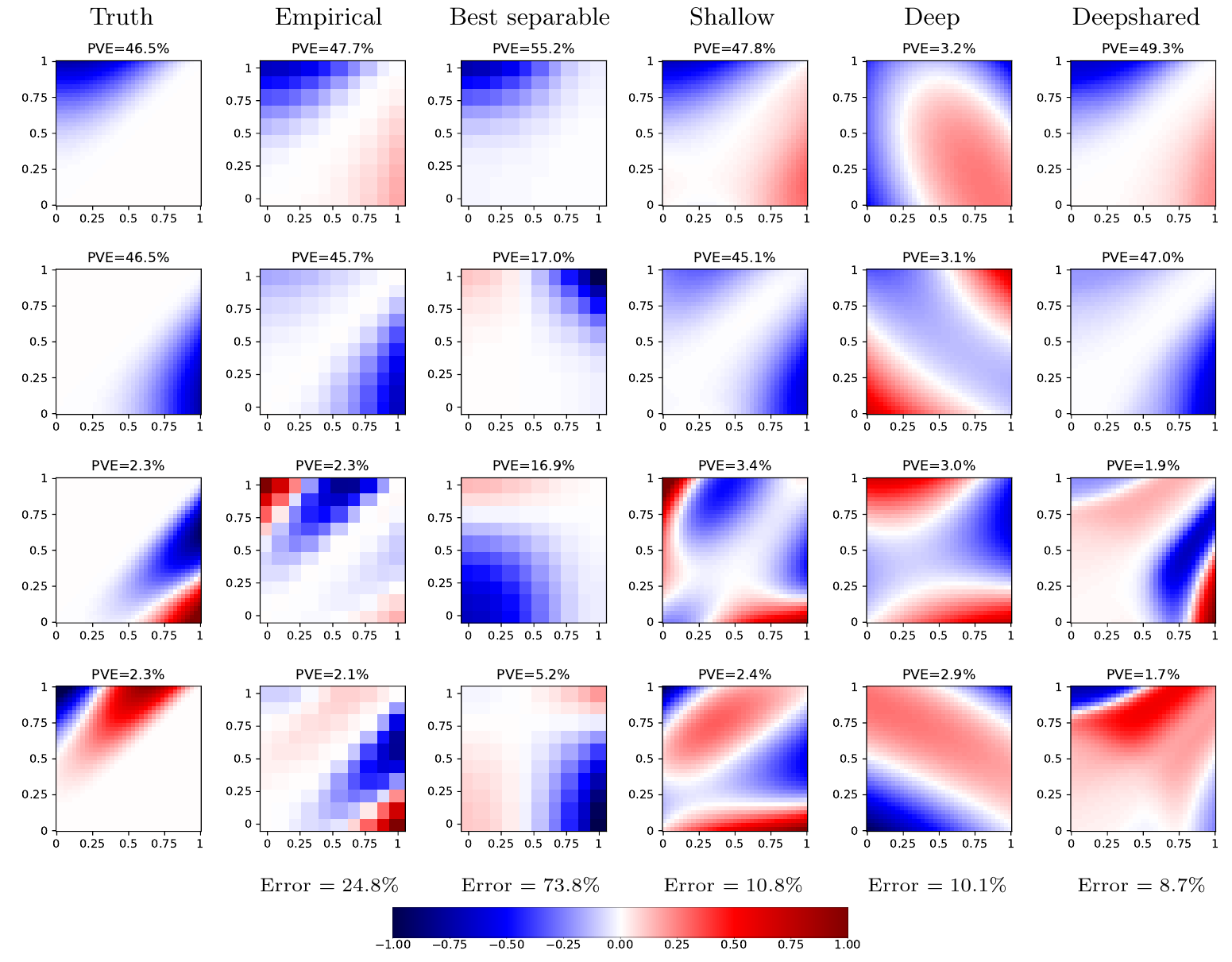}
\caption{\label{fig:eigen_IBM_rotated} First four eigensurfaces of the true covariance and different covariance estimators in the rotated integrated Brownian sheet example with $N=500$ and resolution $10 \times 10$. For the CovNet estimators, the eigenfunctions are computed using the methods described in Section~\ref{sec:eigendecomposition}. The plots are heatmaps of the 2D eigenfunctions.}
\end{figure}

The results for the rotated integrated Brownian sheet are shown in Figure~\ref{fig:eigen_IBM_rotated}. We plot the top four eigensurfaces as the other explain less than $1\%$ of the total variation. The true covariance is quite smooth in this example, as a result the empirical covariance does a better job. Even then, the functional form of the CovNet estimators gives them an edge, which is reflected in the estimation errors.

\begin{figure}[h!]
\centering
\includegraphics[width=0.77\linewidth]{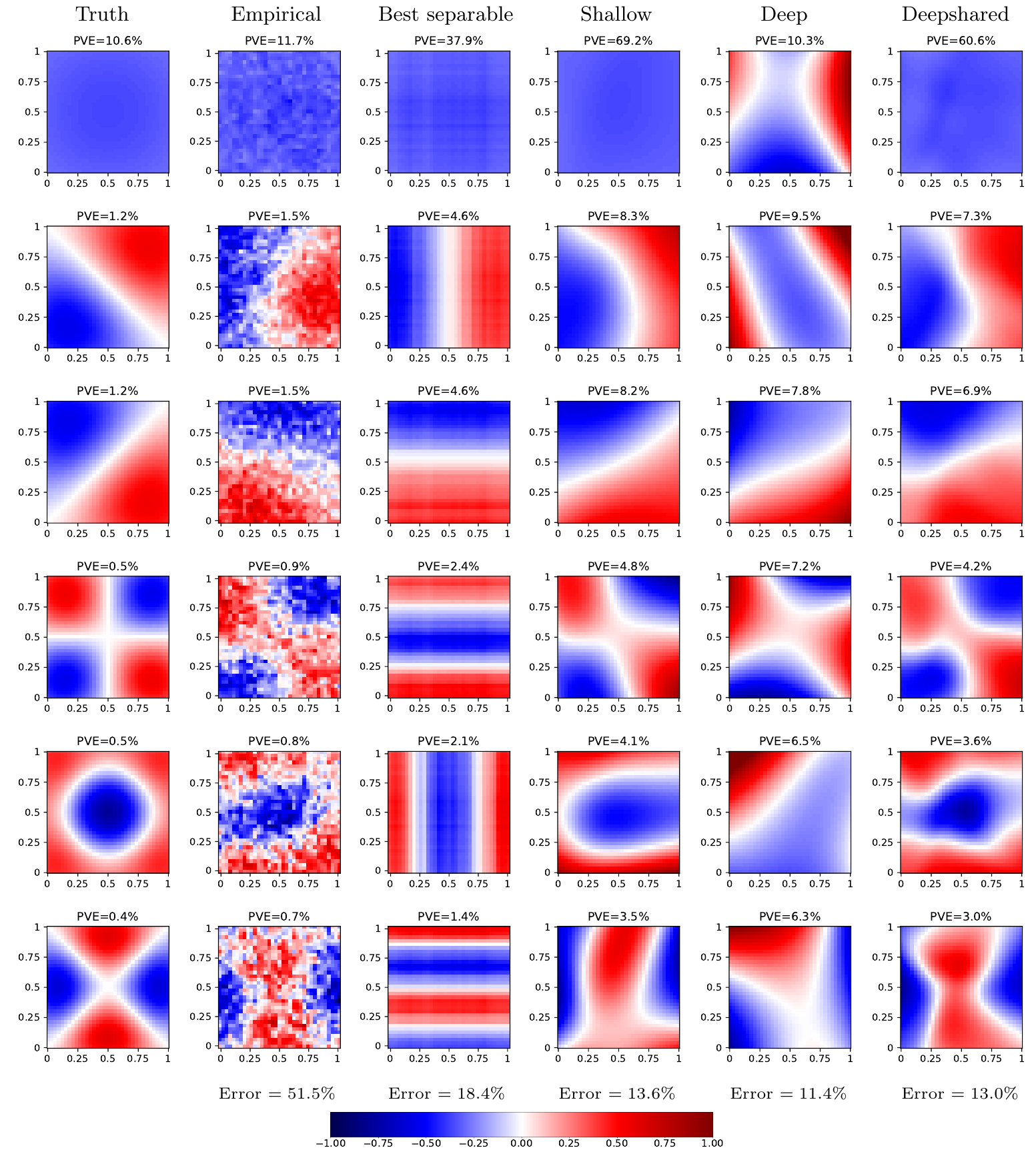}
\caption{\label{fig:eigen_Matern} First six eigensurfaces of the true covariance and different covariance estimators in the Mat\'ern example with $\nu=0.01$, $N=500$, and resolution $25 \times 25$. For the CovNet estimators, the eigenfunctions are computed using the methods described in Section~\ref{sec:eigendecomposition}. The plots are heatmaps of the 2D eigenfunctions.}
\end{figure}

In Figure~\ref{fig:eigen_Matern}, we show the results for the Mat\'ern covariance with $\nu=0.01$. In this case, the underlying process has rough sample paths. This roughness of the observations has a damaging effect on the performance of the empirical covariance. And, although the estimation error of the best separable estimator is relatively low, the estimated eigensurfaces have little resemblance with the true eigensurfaces. In fact, they are not able to capture the underlying features, and the roughness of the observations can be clearly seen to affect the performance. The shallow and the deepshared CovNet estimators do an excellent job in identifying the salient features, even from the rough observations.

A few comments are in order for the deep CovNet estimator. In all three examples, the deep CovNet estimator is seemingly unable to capture the true eigenstructure. However, the estimation error for this model is rather low. This is perhaps due to the high complexity of the model, which allows it to approximate the covariance well enough. But, without any restriction, the model apparently does not really learn \emph{interesting patterns of variation} from the data. The added restriction of the deepshared model (in terms of weight sharing) resolves this problem. The deepshared model is quite rich, but at the same time it is able to extract interesting traits from the data.

We would also like to point out the favorable computational aspect of the CovNet estimators in this context. As already mentioned, the eigendecomposition for the CovNet estimators can be performed without forming the covariance operators. This is sharply in contrast with the empirical covariance, for which we need to apply eigendecomposition on a $K^d \times K^d$-dimensional object, which can be prohibitive depending on $K$ and $d$. The best separable estimator is seemingly immune to this problem. But, even for this estimator, the computation for the eigendecomposition increases with $K$ in the order of $\O(dK^3)$ (eigendecomposition of $d$ matrices, each of the order $K \times K$). The eigendecomposition for the CovNet estimators, on the other hand, is completely free of $K$ (after estimation of the model, of course). There is, however, a Monte-Carlo step involved in the process. But, in all the examples, it took us only a few seconds to obtain the eigendecomposition, which was much faster than the other two estimators. Moreover, the functional form of the CovNet models have additional benefits, as can be seen from the plots.

\subsection{Application to fMRI data}\label{sec:fMRI}
To explore the usefulness of our methodology in a real setting, we consider the fMRI data sets from the \textsl{$1000$ Functional Connectomes Project}. These data sets are available at \url{https://www.nitrc.org/projects/fcon_1000/}, and consist of resting state fMRI scans for more than $1200$ subjects collected at different locations all over the world. For each subject, the data consist of 3D scans of the brain taken at a resolution of $64 \times 64 \times 33$ over $225$ time points. These data sets were previously analyzed by \cite{aston2012,stoehr2021} in the FDA setting, where they checked for the stationarity of the 3D MRI scans over time for each of $197$ individuals from Beijing, China. For our demonstration, we considered \texttt{sub69518} from Beijing, which was identified to exhibit stationarity  by \cite{aston2012}. This gives us $225$ 3D scans on a grid of size $64 \times 64 \times 33$, which we treat as i.i.d.\ observations. We preprocessed the data set by removing a polynomial trend of order $3$ from each voxel as suggested by \cite{aston2012}. Further, we scaled the scans to have voxel-wise unit variance before applying the methodology.

\begin{figure}[t!]
\centering
\begin{tabular}{c}
\small (a) Deepshared CovNet \\
\includegraphics[width=0.9\linewidth]{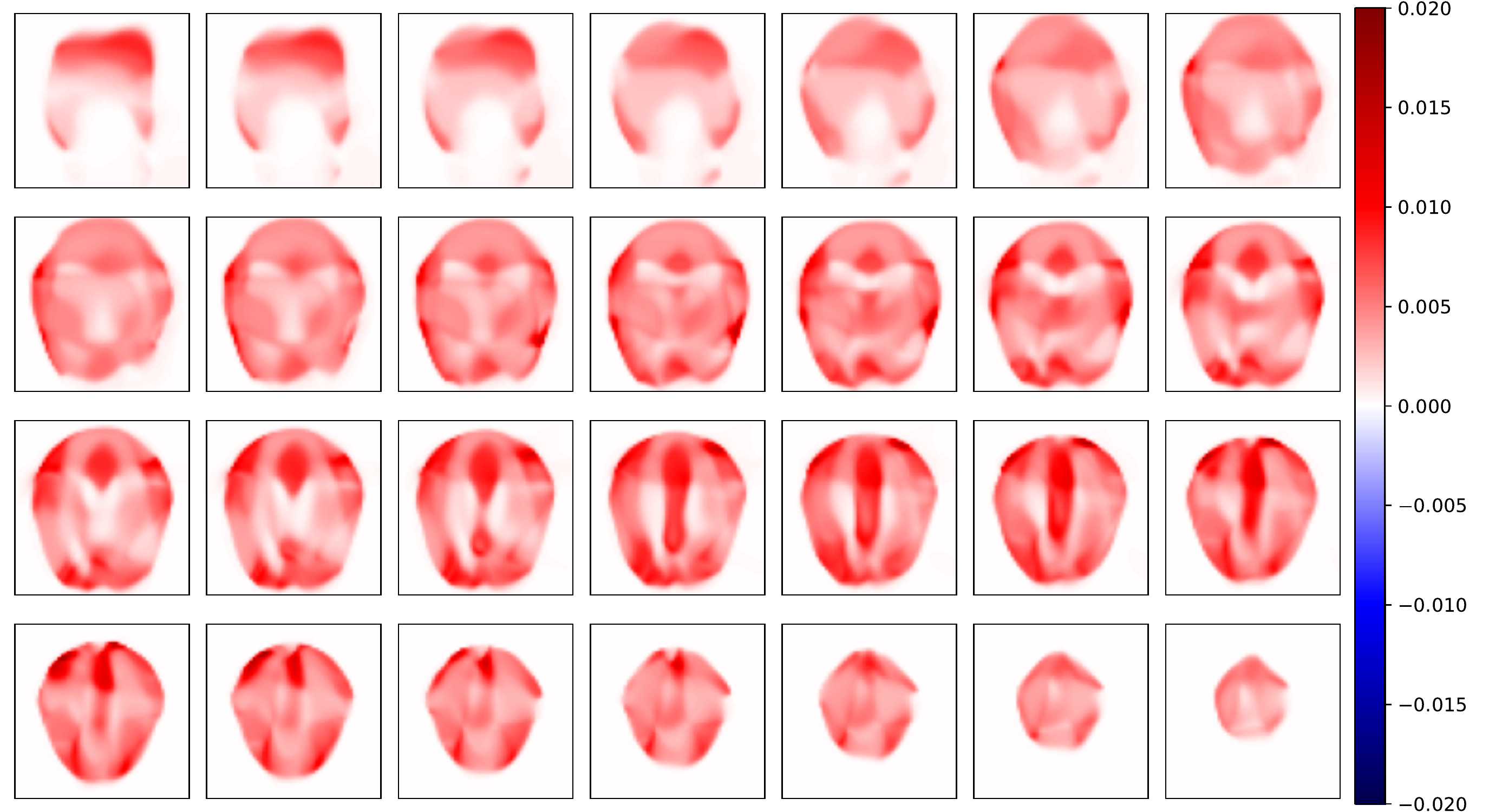}\\ [5pt]
(b) Separable Model \\
\includegraphics[width=0.9\linewidth]{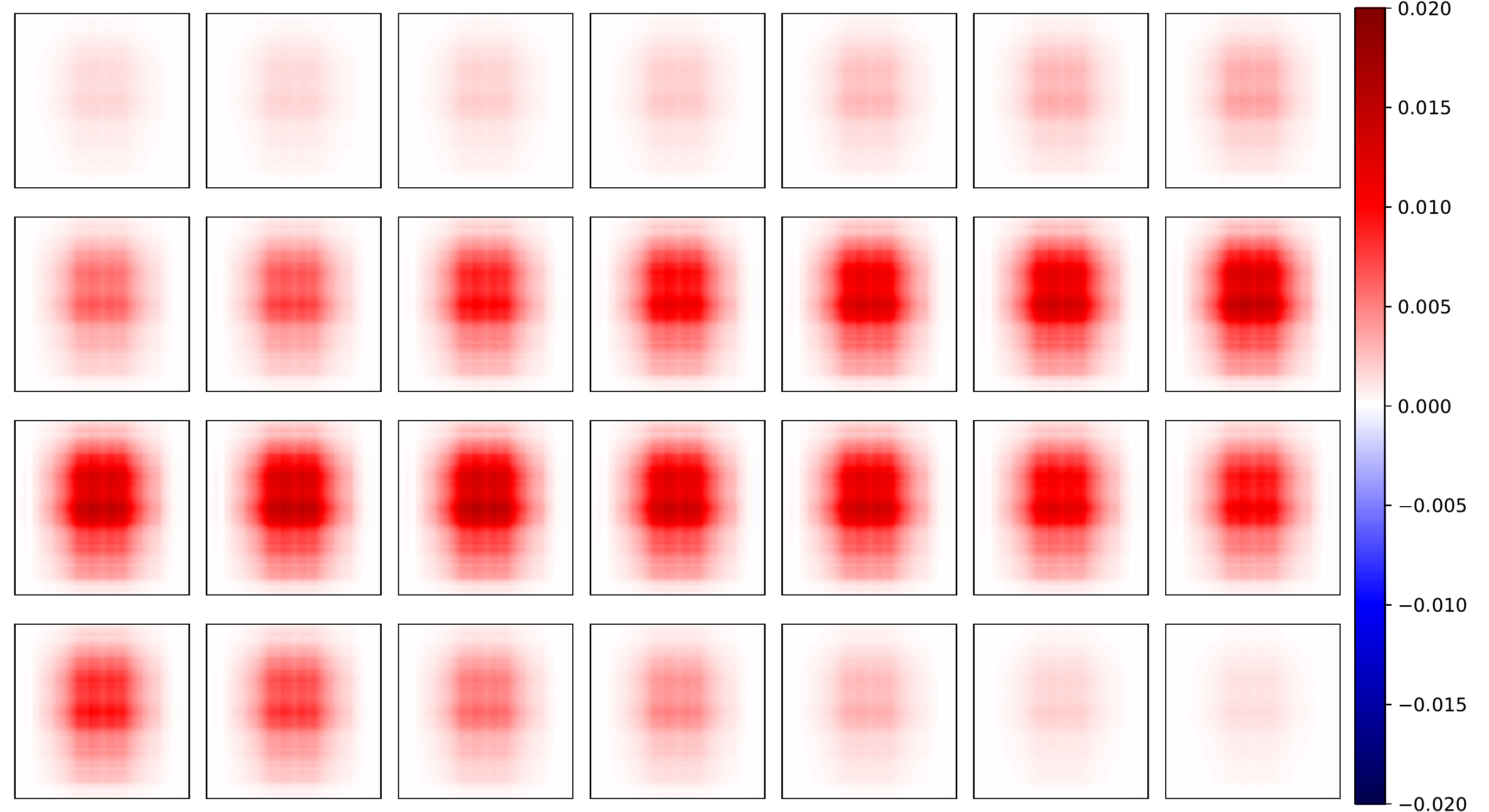}
\end{tabular}
\caption{The leading eigenfunction of (a) the deepshared CovNet and (b) the separable model, fitted to the 3D fMRI data. The hyperparameters of the deepshared CovNet (depth$=6$, $R=30$) were selected via cross-validation described in Section~\ref{sec:hyperparameter}. The plots are heatmaps of 2D slices of the 3D eigenfunctions, where the slices are taken over the z-axis.\label{fig:fMRI_deepshared_1}}
\end{figure}

As already mentioned in the introduction, covariance estimation is one of the most important problems for resting state fMRI data as it enables to understand the connectivity patterns of the brain. At the same time, the high-dimensionality of the problem makes it extremely difficult to achieve. In contrast, the CovNet models can be fitted efficiently to this data. In particular, we applied the deepshared CovNet model to this data owing to its superiority over the other CovNet models in the simulations. We selected the hyperparameters via the cross-validation strategy described in Section~\ref{sec:hyperparameter}. In Figure~\ref{fig:fMRI_deepshared_1}(a), we show the leading eigenfunction of the fitted CovNet model. For comparison, we also fitted a separable covariance model. But, as pointed out by \cite{aston2012}, finding the best separable approximation is also difficult in 3D. So, we used the separable estimator via marginalization as proposed by the authors. The leading eigenfunction of the separable estimator is shown in Figure~\ref{fig:fMRI_deepshared_1}(b).

\begin{figure}[t!]
\centering
\begin{tabular}{c}
\small (a) Second leading eigenfunction \\
\includegraphics[width=0.9\linewidth]{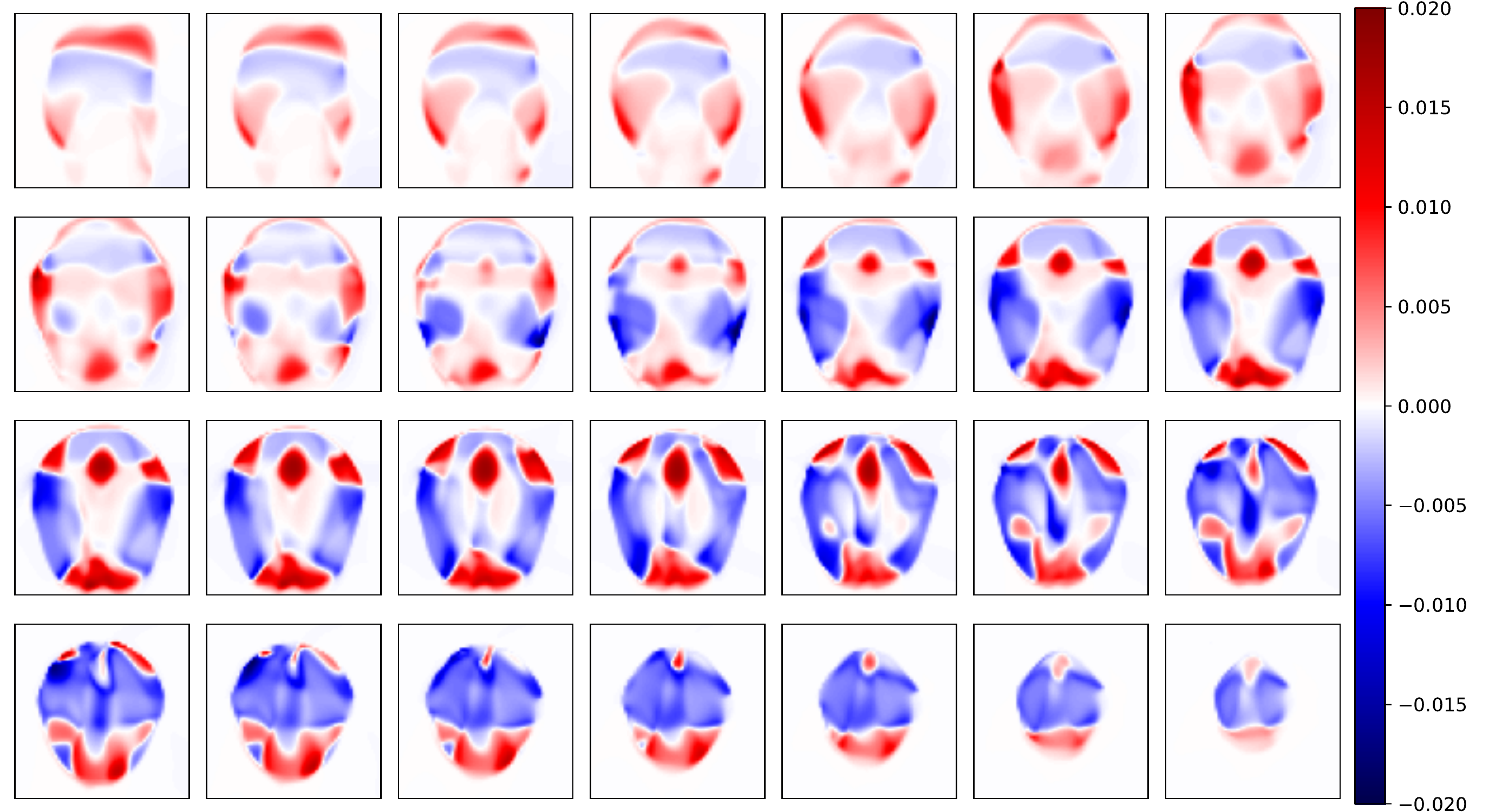}\\ [5pt]
(b) Third leading eigenfunction \\
\includegraphics[width=0.9\linewidth]{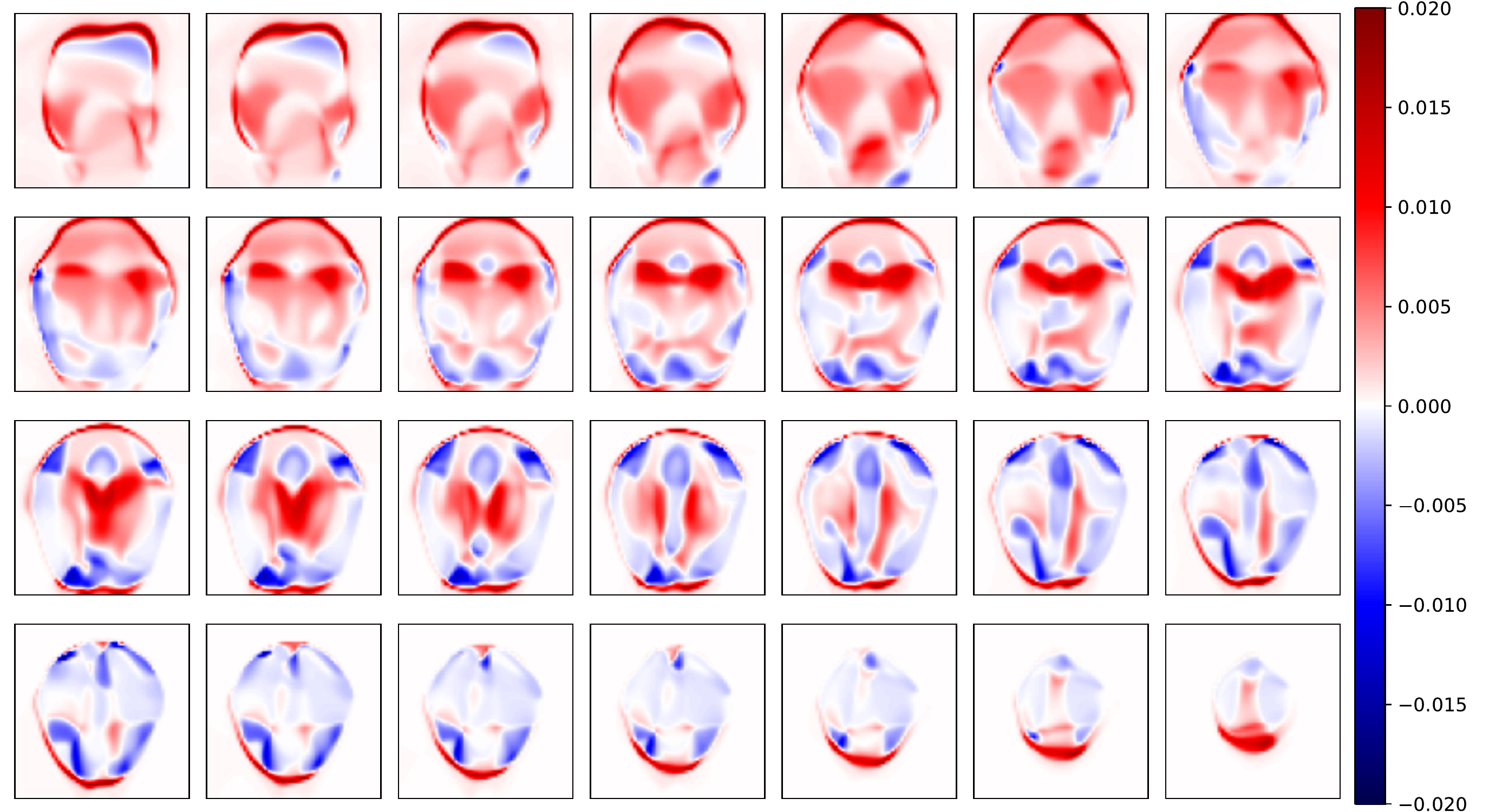}
\end{tabular}
\caption{(a) The second and (b) the third leading eigenfunctions of the deepshared covariance network fitted to the 3D fMRI data. The second and the third eigenfunctions explain $26.85\%$ and $19.05\%$ variability, respectively. The plots are heatmaps of 2D slices of the 3D eigenfunctions, where the slices are taken over the z-axis. \label{fig:fMRI_top3}}
\end{figure}

The plots clearly show that the deepshared CovNet is able to capture a much richer structure than the separable model. While the leading eigenfunction of the CovNet model accounts for $42.74\%$ of the variability explained, the same for the separable model is only $0.0004\%$. In fact, to explain $42.74\%$ variability, we would need $45538$ components in the separable model. Moreover, the fitted CovNet is able to extract the natural symmetry in the brain during resting state \citep{raemaekers2018}, while no such constraints were imposed a priori.

In Figure~\ref{fig:fMRI_top3}, we show the second and the third leading eigenfunctions of the fitted CovNet which, together with the leading eigenfunction, account for $88.64\%$ of the variability. In contrast, for the separable model, we would require $84578$ components to explain $88.64\%$ variability. The plots clearly exhibit the ability of the deepshared CovNet to identify complex structures from the data.

\section{Asymptotic theory}\label{sec:asymptotics}

We conclude the article by developing asymptotic theory for our estimators. In particular, we prove that the CovNet estimators are consistent, and derive their rates of convergence. We will consider two different setups -- (a) when the fields are fully observed and (b) when the fields are observed on a grid with possible noise contamination.

\subsection{The case of fully observed fields}\label{sec:asymptotics_full}
Here, our data consist of i.i.d.\ random fields $\X_1,\ldots,\X_N$ distributed as $\X$. For convenience, we start by assuming that the random field $\X$ is bounded, i.e., there exists $\beta_N > 0$ such that $\|\X\|^2 \le \beta_N$ almost surely. This type of boundedness assumption is quite common in the theoretical analysis of neural networks \citep[e.g.,][]{gyorfi2002,schmidt-hieber2020}. Note that the bound is allowed to grow with $N$, so this can also be seen as a growing truncation level. We will eventually remove this boundedness condition.

We also need to impose some condition on the approximating class, for which we restrict the eigenstructure of the matrix $\Lambda = ((\lambda_{r,s}))$ in \eqref{eq:covnet_kernel_generic}. Specifically, for a constant $\lambda_N > 0$, we enforce that $\Lambda \preceq \lambda_N\mathrm I_R$. Note that for any non-negative definite matrix $\Lambda$, we always have $\Lambda \preceq \lambda_{\max}\mathrm I_R$, where $\lambda_{\max}$ is the largest eigenvalue of $\Lambda$. Thus, our assumption imposes a restriction on the largest eigenvalue of the matrix $\Lambda$ in \eqref{eq:covnet_kernel_generic}. We write $\F_{R,\lambda_N}$ to denote the restricted class of kernels corresponding to the generic class $\F_R$. In particular,
\begin{align}\label{eq:covnet_class_bounded}
\F^{\rm sh}_{R,\lambda_N} &= \big\{c_{\rm sh} \text{ of the form \eqref{eq:shallow_covnet_kernel}}: \mathrm 0 \preceq \Lambda = ((\lambda_{r,s})) \preceq \lambda_N {\rm I}_{R}, \mathbf w_r \in \R^d, b_r \in \R\big\}, \nonumber\\
\F^{\rm d}_{R,L,\lambda_N} &= \big\{c_{\rm d} \text{ of the form \eqref{eq:deep_covnet_kernel}}: \mathrm 0 \preceq \Lambda := ((\lambda_{r,s})) \preceq \lambda_N\,\mathrm I_R, g_1,\ldots,g_R \in \mathcal D_{L,R}\big\} \text{ and } \nonumber\\
\F^{\rm ds}_{R,L,\lambda_N} &= \big\{c_{\rm ds} \text{ of the form \eqref{eq:deepshared_covnet_kernel}}: \mathrm 0 \preceq \Lambda:=((\lambda_{r,s})) \preceq \lambda_N \mathrm I_R, \nonumber\\
&\kern30ex g_1,\ldots,g_R \text{ of the form \eqref{eq:deepshared_neural_network} with }p_1=\cdots=p_L=R\big\},
\end{align}
denote the restricted shallow, deep, and deepshared CovNet classes. In the above, we write $\mathcal D_{L,R}$ to denote the deep neural network class $\mathcal D_{L,\mathbf p}$ when $p_1=\cdots=p_L = R$. We also denote the class of operators corresponding to the class of kernels $\F_{R,\lambda_N}$ by $\widetilde\F_{R,\lambda_N}$. Recall that our estimators are defined as
\begin{equation}\label{eq:covnet_estimator_bounded}
\Chat^{\rm sh}_{R,N} \in \argmin_{\G \in \widetilde\F^{\rm sh}_{R,\lambda_N}} \vertj{\Chat_N - \G}_2^2,~ \Chat^{\rm d}_{R,L,N} \in \argmin_{\G \in \widetilde\F^{\rm d}_{R,L,\lambda_N}} \vertj{\Chat_N - \G}_2^2 ~\text{ and }~ \Chat^{\rm ds}_{R,L,N} \in \argmin_{\G \in \widetilde\F^{\rm ds}_{R,L,\lambda_N}} \vertj{\Chat_N - \G}_2^2.
\end{equation}

In the following, we start by proving two different kinds of results. First, we prove consistency of the estimators under appropriate conditions, and then derive their rates of convergence.

\begin{theorem}\label{thm:covnet_consistency}
Let $\X_1,\ldots,\X_N \overset{\iid}{\sim} \X$, where $\X$ takes values in $\L_2(\Q)$, and $\Q$ is a compact subset of $\R^d$. Also assume that $\|\X\|^2 \le \beta_N$ almost surely, $\E(\X) = 0$ and $\cov(\X) = \C$. Let $\Chat_{R,N}^{\rm sh}$, $\Chat_{R,L,N}^{\rm d}$ and $\Chat_{R,L,N}^{\rm ds}$ be the shallow, the deep, and the deepshared CovNet estimators given by \eqref{eq:covnet_estimator_bounded}. Suppose that $R \to \infty, \lambda_N \to \infty$ as $N \to \infty$, and define $\Delta_N = \max\{\beta_N,|\Q|R\lambda_N\}$.
\begin{enumerate}[(A)]
\item If $dR^2 \Delta_N^4 \log(\Delta_N)/N \to 0$ as $N \to \infty$, then the shallow CovNet estimator is weakly consistent for $\C$, i.e., $\vertj{\Chat^{\rm sh}_{R,N} - \C}_2 \overset{P}{\to} 0$. Additionally, if $\Delta_N^4/N^{1-\delta} \to 0$ for some $\delta \in (0,1)$, then the estimator is strongly consistent for $\C$, i.e., $\vertj{\Chat^{\rm sh}_{R,N} - \C}_2 \overset{a.s.}{\to} 0$ as $N \to \infty$.

\item Let $R>d$. If $L^4R^8\Delta_N^4 \log^2(L \Delta_N)/N \to 0$ as $N \to \infty$, then the deep CovNet estimator is weakly consistent for $\C$, i.e., $\vertj{\Chat^{\rm d}_{R,L,N} - \C}_2 \overset{P}{\to} 0$. Additionally, if $L^4R^8\Delta_N^4 \log^2(L \Delta_N)/N^{1-\delta} \to 0$ for some $\delta \in (0,1)$, then the estimator is strongly consistent for $\C$, i.e., $\vertj{\Chat^{\rm ds}_{R,L,N} - \C}_2 \overset{a.s.}{\to} 0$ as $N \to \infty$.

\item Let $R>d$. If $L^4R^6\Delta_N^4 \log^2(L \Delta_N)/N \to 0$, then the deepshared CovNet estimator is weakly consistent for $\C$, i.e., $\vertj{\Chat^{\rm ds}_{R,L,N} - \C}_2 \overset{P}{\to} 0$. Additionally, if $L^4R^6\Delta_N^4 \log^2(L \Delta_N)/N^{1-\delta} \to 0$ for some $\delta \in (0,1)$, then the estimator is strongly consistent for $\C$, i.e., $\vertj{\Chat^{\rm ds}_{R,L,N} - \C}_2 \overset{a.s.}{\to} 0$ as $N \to \infty$.
\end{enumerate}
\end{theorem}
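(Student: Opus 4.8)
The plan is to split the error of each estimator into a deterministic \emph{approximation} term and a stochastic \emph{estimation} term, and to control the latter by a uniform law of large numbers over the relevant CovNet class. Let $\widehat\G$ denote any one of the three estimators in \eqref{eq:covnet_estimator_bounded}, which minimizes $\G\mapsto\vertj{\Chat_N-\G}_2^2$ over the restricted class $\widetilde\F_{R,\lambda_N}$, and let $\C^{\ast}\in\widetilde\F_{R,\lambda_N}$ be a minimizer of $\G\mapsto\vertj{\C-\G}_2^2$ over the same class (any near-minimizer works). Expanding both squared norms, invoking the optimality $\vertj{\Chat_N-\widehat\G}_2^2\le\vertj{\Chat_N-\C^{\ast}}_2^2$, and noting that the $\vertj{\Chat_N}_2^2$ terms cancel, a short manipulation with the Hilbert--Schmidt inner product $\langle\cdot,\cdot\rangle_2$ yields the basic inequality
\begin{equation*}
\vertj{\widehat\G-\C}_2^2 \;\le\; \vertj{\C^{\ast}-\C}_2^2 \;+\; 4\,Z_N, \qquad Z_N:=\sup_{\G\in\widetilde\F_{R,\lambda_N}}\big|\langle\Chat_N-\C,\,\G\rangle_2\big|.
\end{equation*}
Consistency then follows once the deterministic term $\vertj{\C^{\ast}-\C}_2^2$ and the stochastic term $Z_N$ are both shown to vanish in the appropriate mode of convergence.

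For the approximation term I would fix $\epsilon>0$ and invoke Theorems~\ref{thm:universal_approximation_shallow} and~\ref{thm:universal_approximation_deep}: there is a CovNet kernel of some finite width (and, in the deep and deepshared cases, finite depth) within $\epsilon$ of $c$ in $\L_2(\Q\times\Q)$. Since $R\to\infty$ and $\lambda_N\to\infty$, for all large $N$ this fixed approximant lies in $\widetilde\F_{R,\lambda_N}$ --- the constraint $\Lambda\preceq\lambda_N\mathrm I_R$ is eventually inactive for it --- so $\vertj{\C^{\ast}-\C}_2\le\epsilon$ for $N$ large; as $\epsilon$ is arbitrary, $\vertj{\C^{\ast}-\C}_2\to0$. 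This step uses only the already established universality together with $R,\lambda_N\to\infty$.

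The heart of the argument is the uniform bound on $Z_N$. Writing $\langle\Chat_N-\C,\G\rangle_2=N^{-1}\sum_{n=1}^N\big(\langle\G\X_n,\X_n\rangle-\E\langle\G\X,\X\rangle\big)$ exhibits $Z_N$ as the supremum of a centered empirical process indexed by $\widetilde\F_{R,\lambda_N}$. The summands are bounded: for $\G\in\widetilde\F_{R,\lambda_N}$ one checks $\vertj{\G}_2\le|\Q|R\lambda_N$ (using $\Lambda\preceq\lambda_N\mathrm I_R$ and the bounded activation, which gives $|g_r|\le1$), so with $\|\X\|^2\le\beta_N$ the envelope is $|\langle\G\X,\X\rangle|\le|\Q|R\lambda_N\beta_N\le\Delta_N^2$. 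I would then discretize $\widetilde\F_{R,\lambda_N}$ by a $\delta$-net, apply Hoeffding's inequality at each net point (the envelope $\Delta_N^2$ giving sub-Gaussian tails $\exp(-Nt^2/2\Delta_N^4)$), and union-bound, obtaining with high probability a bound of order $\Delta_N^2\sqrt{\log\mathcal N(\delta)/N}+\delta$, where $\mathcal N(\delta)$ is the covering number of $\widetilde\F_{R,\lambda_N}$ in Hilbert--Schmidt norm. Since the activation is Lipschitz, the parameter-to-kernel map is Lipschitz, so $\log\mathcal N(\delta)$ is of the order of the number of free parameters times a logarithmic factor in $\Delta_N/\delta$: for the shallow model this is of order $dR^2\log(\Delta_N/\delta)$, and optimizing over $\delta$ makes the bound vanish exactly when $dR^2\Delta_N^4\log(\Delta_N)/N\to0$. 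For the deep and deepshared classes one must instead track how the per-layer Lipschitz constants compound through the $L$ layers, which produces the larger complexity factors $L^4R^8\log^2(L\Delta_N)$ and $L^4R^6\log^2(L\Delta_N)$, respectively, and hence the stated conditions for $Z_N\overset{P}{\to}0$.

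Strong consistency is obtained from the same deviation bound by summing tail probabilities: under the strengthened hypotheses (with $N^{1-\delta}$ replacing $N$) the tail bounds become summable, so Borel--Cantelli upgrades $Z_N\overset{P}{\to}0$ to $Z_N\overset{a.s.}{\to}0$, while the deterministic approximation term is unaffected. \textbf{The main obstacle} is precisely the covering-number estimate for the deep and deepshared operator classes: one must bound the metric entropy of a family of \emph{quadratic} forms $\langle\G\X,\X\rangle$ assembled from depth-$L$ networks and weighted by a matrix $\Lambda$ with $\Lambda\preceq\lambda_N\mathrm I_R$, controlling how perturbations of the weight matrices propagate through the composition of $L$ activations. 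Securing the correct \emph{polynomial} dependence on $R$, $L$ and $\lambda_N$ (rather than an exponential-in-$L$ blow-up) is the delicate part; by contrast the shallow case is routine, as it involves no layer composition.
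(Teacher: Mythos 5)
Your overall architecture is sound, and your decomposition is in fact a legitimate simplification of the paper's. The basic inequality $\vertj{\widehat\G-\C}_2^2 \le \vertj{\C^{\ast}-\C}_2^2 + 4Z_N$ with $Z_N=\sup_{\G}\big|\ip{\Chat_N-\C,\G}_2\big|$ is correct (the cross term is $2\ip{\Chat_N-\C,\widehat\G-\C^{\ast}}_2$ plus a non-positive optimality gap), and it turns the stochastic part into a \emph{linear} empirical process with i.i.d.\ bounded summands $\langle \G\X_n,\X_n\rangle-\E\langle\G\X,\X\rangle$ and envelope $\Delta_N^2$. The paper instead works with $V_N=\sup_{\G}\big|\vertj{\G-\Chat_N}_2^2-\E\vertj{\G-\Chat_N}_2^2\big|$, which is quadratic in $\Chat_N$ and forces a U-statistic treatment: an auxiliary independent copy $\widetilde\C_N$, the bounded-differences inequality, and a symmetrization with separate handling of diagonal and off-diagonal terms (Lemmas~\ref{lemma:covnet1_type1_VN_convergence}--\ref{lemma:covnet1_type1_EV1_bound}). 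Your treatment of the bias term (universality plus $R,\lambda_N\to\infty$ making the eigenvalue constraint eventually inactive, after zero-padding $\Lambda$ to width $R$) and your Borel--Cantelli upgrade to almost-sure convergence both match the paper and are fine.

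The genuine gap is the covering-number step --- the one you yourself flag as the main obstacle --- and your proposed route for it would fail. You assert that $\log\cover(\delta,\widetilde\F_{R,\lambda_N},\vertj{\cdot}_2)$ is of the order of the number of free parameters times $\log(\Delta_N/\delta)$ ``since the activation is Lipschitz, so the parameter-to-kernel map is Lipschitz.'' But Definition~\ref{def:sigmoidal_function} only requires $\sigma$ to be non-decreasing with limits $0$ and $1$: a sigmoidal activation need not be Lipschitz, nor even continuous (a unit step is admissible), so the parameter-to-kernel map need not be Lipschitz at all. Worse, even for the logistic sigmoid the parameters in \eqref{eq:covnet_class_bounded} range over all of $\R^d\times\R$ --- no bound on $\mathbf w_r,b_r$ is imposed --- so a net in parameter space over an unbounded domain has no finite cardinality to offer; a Lipschitz argument could only work after truncating the parameter space, which would change the estimator and require separate justification. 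This is exactly why the paper takes the combinatorial route: the class $\{\sigma(\mathbf w^\top\uvec+b)\}$ has VC dimension at most $d+2$ \emph{because $\sigma$ is monotone}, yielding $\L_2$-covering numbers free of any parameter bound (Lemma~\ref{lemma:cover_covnet_shallow} via Theorem~\ref{thm:covnet_kernel_class_general_bound}), and the deep and deepshared classes are handled through pseudo-dimension and fat-shattering bounds (Lemmas~\ref{lemma:cover_covnet_deep} and \ref{lemma:cover_covnet_deepshared}). In particular, the factors $L^4R^8$ and $L^4R^6$ arise from pseudo-dimension bounds of order $\big((W+2)K\big)^2$ with $W\asymp LR^2$ adjustable parameters and $K\asymp LR$ computation units (with an additional $R^2$ from the outer quadratic assembly in the deep case), not from ``compounding per-layer Lipschitz constants'': even granting bounded weights, Lipschitz propagation places a product of layer norms --- exponential in $L$ --- inside the logarithm, and over the actual unconstrained parameter set it gives nothing at all.
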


The proof of the theorem involves \emph{bias-variance-type decompositions} for the estimation error $\vertj{\Chat-\C}_2^2$. To control the bias term, we need the universal approximation property (Theorems~\ref{thm:universal_approximation_shallow} and \ref{thm:universal_approximation_deep}), but now with the additional restriction on the classes \eqref{eq:covnet_class_bounded}. This is ensured by assuming that $R$, $L$ and $\lambda_N$ go to infinity as $N$ diverges (see Remark\,\ref{remark:universal_approximation_bounded}). On the other hand, $dR^2 \Delta_N^4 \log(\Delta_N)/N$, $L^4R^8\Delta_N^4 \log^2(L \Delta_N)/N$, and $L^4R^6\Delta_N^4 \log^2(L \Delta_N)/N$ are linked to the \emph{variance} of the estimators. The conditions of the theorem ensures that the variance also converges to $0$ with the sample size for the different estimators.

\begin{remark}\label{remark:triangular_array}
There is a small technical ambiguity in the statement of Theorem~\ref{thm:covnet_consistency}. The theorem is stated for i.i.d.\ observations distributed as $\X$ when $N$ goes to infinity, whereas the bound on $\X$ is also allowed to evolve with $N$. Thus, the result is to be understood for a triangular sequence of arrays where, for each $N$, the observations are bounded by a constant, which in turn is allowed to diverge keeping up with the assumption of the theorem. However, we avoid stating the theorem in this generality for ease of exposition. The special case of i.i.d.\ observations (i.e., when $\beta_N$ is a constant) follows easily from the theorem.
\end{remark}

Next, we derive the rate of convergence of the estimators.

\begin{theorem}\label{thm:covnet_rate_of_convergence}
Let $\X_1,\ldots,\X_N \overset{\iid}{\sim} \X$, where $\X$ takes values in $\L_2(\Q)$, and $\Q$ is a compact subset of $\R^d$. Suppose that $\|\X\|^2 \le \beta_N$ almost surely, $\E(\X) = 0$ and $\cov(\X) = \C$. Let $\Chat_{R,N}^{\rm sh}$, $\Chat_{R,L,N}^{\rm d}$ and $\Chat_{R,L,N}^{\rm ds}$ be the shallow, the deep, and the deepshared CovNet estimators given by \eqref{eq:covnet_estimator_bounded}. Then,
\begin{align*}
\E\Big(\vertj{\Chat^{\rm sh}_{R,N} - \C}_2^2\Big) &\le 2 \inf_{\G \in \widetilde\F^{\rm sh}_{R,\lambda_N}} \vertj{\G - \C}_2^2 + \O\bigg(\frac{dR^2 \Delta_N^4\log(N)}{N}\bigg), \\
\E\Big(\vertj{\Chat^{\rm d}_{R,L,N} - \C}_2^2\Big) &\le 2 \inf_{\G \in \widetilde\F^{\rm d}_{R,L,\lambda_N}} \vertj{\G - \C}_2^2 + \O\bigg(\frac{L^4R^8\Delta_N^4\log^2(N)}{N}\bigg) \text{, and } \\
\E\Big(\vertj{\Chat^{\rm ds}_{R,L,N} - \C}_2^2\Big) &\le 2 \inf_{\G \in \widetilde\F^{\rm ds}_{R,L,\lambda_N}} \vertj{\G - \C}_2^2 + \O\bigg(\frac{L^4R^6\Delta_N^4\log^2(N)}{N}\bigg).
\end{align*}
Here, $\Delta_N = \max\{\beta_N,|\Q|R\lambda_N\}$ is as defined in Theorem~\ref{thm:covnet_consistency}. In the above, for the deep and the deepshared CovNet estimators, we have assumed that $R>d$.
\end{theorem}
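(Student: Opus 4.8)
The plan is to recognize each of the three estimators as the empirical risk minimizer for a \emph{bounded least-squares problem} in the Hilbert space of Hilbert--Schmidt operators, and then to combine a single ``master'' oracle inequality with covering-number bounds tailored to the three architectures. Writing $Y_n = \X_n \otimes \X_n$, the responses are i.i.d.\ Hilbert--Schmidt operators with $\E(Y_n) = \C$, and minimizing $\vertj{\Chat_N - \G}_2^2$ over $\G \in \widetilde\F_{R,\lambda_N}$ is, up to an additive constant independent of $\G$, the same as minimizing the empirical risk $N^{-1}\sum_{n} \vertj{Y_n - \G}_2^2$. Hence, writing $\Chat$ for any of the three estimators, $\vertj{\Chat - \C}_2^2$ is exactly the excess risk relative to the oracle $\C$. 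Boundedness is what renders everything quantitative: $\vertj{Y_n}_2 = \|\X_n\|^2 \le \beta_N$, while every kernel in $\F_{R,\lambda_N}$ can be written $\mathbf g(\uvec)^\top \Lambda\, \mathbf g(\vvec)$ with $\mathbf g=(g_1,\ldots,g_R)^\top$; since each $g_r \in [0,1]$ for sigmoidal $\sigma$ we have $\|\mathbf g\|_\infty \le \sqrt R$ and $\Lambda \preceq \lambda_N \mathrm I_R$, so $|\,\cdot\,| \le R\lambda_N$ pointwise and $\vertj{\G}_2 \le |\Q| R\lambda_N$. Thus both responses and candidates are controlled by $\Delta_N = \max\{\beta_N,|\Q| R\lambda_N\}$, so $\vertj{Y_n - \G}_2 \le 2\Delta_N$, the squared loss is bounded by $\O(\Delta_N^2)$ and its variance is $\O(\Delta_N^4)$ -- this is the origin of the $\Delta_N^4$ factor in every bound.

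Second, I would establish the master oracle inequality. From the defining inequality $\vertj{\Chat_N - \Chat}_2^2 \le \vertj{\Chat_N - \G_0}_2^2$, where $\G_0$ is a best approximator in the class, expanding both sides around $\C$ gives $\vertj{\Chat - \C}_2^2 \le \vertj{\G_0 - \C}_2^2 + 2\langle \Chat_N - \C,\, \Chat - \G_0\rangle_2$. The last term is an empirical process indexed by the random estimator, and the whole task is to control it uniformly over the class. I would do this with the standard truncation-plus-peeling machinery for bounded least-squares estimates \citep{gyorfi2002}: the empirical risks $N^{-1}\sum_n \vertj{Y_n - \G}_2^2$ are averages of i.i.d.\ reals bounded by $\O(\Delta_N^2)$ with variance $\O(\Delta_N^4)$, so a Bernstein bound together with a covering-number argument and peeling over the shells $\{\vertj{\G-\C}_2 \approx 2^{j}\}$ converts the supremum into a deviation bound. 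This yields an inequality of the form $\E\vertj{\Chat - \C}_2^2 \le 2\inf_{\G \in \widetilde\F_{R,\lambda_N}}\vertj{\G - \C}_2^2 + C\,\Delta_N^4\,\mathcal E(\widetilde\F_{R,\lambda_N})/N$, where $\mathcal E$ denotes the (logarithmic) metric-entropy exponent of the class and the factor $2$ is the usual artifact of absorbing the variance term; this step is architecture-agnostic and is shared by all three estimators.

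Third -- and this is where the three bounds diverge -- I would bound the metric entropy $\log \mathcal N(\epsilon, \widetilde\F_{R,\lambda_N}, \vertj{\cdot}_2)$ of each class, noting $\vertj{\cdot}_2 \le |\Q|\,\|\cdot\|_\infty$ so that a sup-norm cover suffices. The strategy is to discretize the finitely many parameters and exploit that the kernel map $\sum_{r,s}\lambda_{r,s}\, g_r(\uvec)\, g_s(\vvec)$ is Lipschitz in them on the relevant scale, giving $\log \mathcal N(\epsilon) = \O\big(P\,[\log(\Delta_N\,\mathrm{Lip}) + \log(1/\epsilon)]\big)$ with $P$ the parameter count and $\mathrm{Lip}$ the parameter-Lipschitz constant. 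For the shallow class $P = \O(R^2 + Rd)$ and the activations saturate, producing the $dR^2$ prefactor and a single $\log N$; for the deep class each of the $R$ constituents is a depth-$L$, width-$R$ network, the Lipschitz constant compounds multiplicatively across the $L$ layers, and the bilinear $R^2$ aggregation of products $g_r g_s$ inflates the entropy beyond the raw $\O(LR^3)$ parameter count to the $L^4R^8$ prefactor, while the depth-induced $\log(\mathrm{Lip}) = \O(L\log N)$ supplies the extra logarithmic factor and hence the $\log^2 N$; the weight sharing of the deepshared class removes the per-constituent replication of the hidden layers, lowering the exponent to $L^4R^6$. Substituting the three entropy bounds into the master inequality and using $\log(\Delta_N\,\mathrm{Lip}) = \O(\log N)$ under the stated growth regime yields the three displayed rates.

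The hardest part will be the entropy bound for the deep and deepshared classes. Because the weights range over all of $\R^{p\times q}$, a naive parameter-Lipschitz estimate is unavailable, and one must instead exploit the saturation of $\sigma$ (or pseudo-dimension bounds for the constituent networks, cf.\ \citealt{anthony1999}) to cover the composition of $L$ sigmoidal layers, then propagate that resolution through the bilinear form $\Lambda$ in a way that produces exactly the stated powers of $R$ and $L$. Controlling the empirical process uniformly despite the unbounded weights, and tracking the compositional blow-up through the depth so as to recover precisely the $L^4R^8$ and $L^4R^6$ exponents (rather than loose over-counts), is the genuine technical obstacle; once the entropy is pinned down, the substitution into the oracle inequality is routine bookkeeping.
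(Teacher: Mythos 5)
Your overall skeleton does match the paper's: the reduction to bounded least squares over the space of Hilbert--Schmidt operators, the bound $\vertj{\G}_2 \le |\Q|R\lambda_N$ (the paper's Proposition~\ref{prop:HS_norm_bound}), and your master oracle inequality with constant $2$ on the bias and a variance term of order $\Delta_N^4\log\cover(\cdot)/N$ is precisely the paper's Lemma~\ref{lemma:covnet_rate_of_convergence}, proved there with the same toolbox you invoke (a ghost-sample, relative-deviation bound adapted from Theorem~11.4 of \cite{gyorfi2002}, combined with Bernstein's inequality and covering numbers). Up to that point your plan is sound and essentially identical to the paper's.

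The genuine gap is in the entropy step, which is exactly the part that determines the stated exponents. Your primary mechanism --- discretize the finitely many parameters and exploit Lipschitz dependence of the kernel on them --- fails for \emph{all three} classes, not only the deep ones: the weights $\mathbf w_r \in \R^d$ and biases $b_r \in \R$ are unconstrained, so there is no finite parameter grid to discretize and no a priori truncation of the parameter space is available; you acknowledge this for the deep class but still treat the shallow class by parameter counting, and moreover a count of $\O(R^2+Rd)$ parameters would give an $(R^2+Rd)$ prefactor, not the $dR^2$ one, so the arithmetic of your heuristic does not produce the claimed bound even formally. The paper's covering bounds are combinatorial rather than metric-in-parameters: for the shallow class, monotonicity of $\sigma$ gives VC dimension at most $d+2$ for $\{\uvec\mapsto\sigma(\mathbf w^\top\uvec+b)\}$, a VC-based covering theorem (Theorem~9.4 of \cite{gyorfi2002}) yields a cover with a \emph{single} logarithm, and Theorem~\ref{thm:covnet_kernel_class_general_bound} assembles these through the bilinear form to give exponent $\O(dR^2)$ (Lemma~\ref{lemma:cover_covnet_shallow}). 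For the deep class, the constituents' pseudo-dimension is bounded by $\O(L^4R^6)$ via Theorem~14.2 of \cite{anthony1999}, and the covering bound follows from the fat-shattering theorem (Theorem~18.8 there), which is \emph{intrinsically} quadratic in the logarithm --- the $\log^2 N$ does not come from depth-compounded Lipschitz constants; indeed your heuristic $\log(\mathrm{Lip})=\O(L\log N)$ would yield an extra factor of $L$, not an extra $\log N$. Assembly then multiplies by $R^2$, giving $L^4R^8$ (Lemma~\ref{lemma:cover_covnet_deep}). Finally, for the deepshared class the assembly theorem cannot be used at all, since covering constituents separately cannot exploit weight sharing; the paper instead embeds the \emph{entire} kernel into a single derived threshold network and bounds its VC dimension by the algebraic-operations bound (Theorem~8.14 of \cite{anthony1999}), which is what yields $L^4R^6$ (Lemma~\ref{lemma:cover_covnet_deepshared}). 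Without these three architecture-specific combinatorial arguments, your plan cannot recover the exponents $dR^2$, $L^4R^8$, $L^4R^6$, nor the distinction between $\log N$ and $\log^2 N$ in the theorem.
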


The theorem clearly shows the bias-variance-type decomposition for the proposed estimators. To get the exact rates of convergence, we need to quantify the bias terms, which is an approximation theoretic problem. If, for example, the bias term is zero for some finite $R,L$ and $\lambda_N$, then the derived rate of convergence is the same as that of the empirical estimator, except for the logarithmic term. Thus, in this case, our estimator enjoys a nearly minimax rate of convergence. In general, to get the rate of convergence of the bias, we need to make further assumptions. There are two ways of doing this, either by making assumptions on the eigenstructure of the true covariance or by making assumptions on the smoothness of the underlying field $\X$ (see Appendix\,\ref{supp:approximation_rate} for details). For this line of derivations, the rates depend crucially on the approximation error of the constituents of the CovNet model under consideration. For instance, for the shallow CovNet, if we assume that the underlying field $\X$ takes values in $\mathcal S^{\alpha}(\Q)$, the Sobolev space of functions of order $\alpha$ on $\Q$ \citep[see][]{mhaskar1996}, with almost surely bounded Sobolev norm (i.e., $\|\X\|_{\mathcal S^{\alpha}(\Q)}^2 \le \beta$ a.s.), then by selecting $\lambda_N \asymp R$, we can bound the bias term as $\inf_{\G \in \widetilde\F^{\rm sh}_{R,\lambda_N}} \vertj{\G-\C}_2^2 \lesssim R^{-\alpha/d}$ (see \eqref{eq:bias_convergence_rate_Sobolev}). So, for consistency we need $R^{10} = \o(N/(d\log(N))$, while the optimal rate is achieved for $R \asymp (N/d\log(N))^{d/(10d+\alpha)}$. This leads to the rate of convergence $\O\big((d\log(N)/N)^{\alpha/(10d+\alpha)}\big)$. Similarly, one can use results from \cite{langer2021,ohn2019} to bound the bias of the deep and the deepshared CovNet models.

\begin{remark}\label{remark:approximate_minimizer}
Both Theorems~\ref{thm:covnet_consistency} and \ref{thm:covnet_rate_of_convergence} are derived here for a global minimizer of the loss function. But, in practice, we are not guaranteed to find a global minima. Following the proof of Theorem~\ref{thm:covnet_consistency}, it can be shown that the consistency results hold as long as the estimator is within $\o_P(1)$ of the minimizer, i.e., for estimators $\Chat$ satisfying $\vertj{\Chat_N-\Chat}_2^2 \le \inf_{\G \in \widetilde\F_{R,\lambda_N}} \vertj{\Chat_N-\G}_2^2 + \o_P(1)$. Similarly, from the proof of Theorem~\ref{thm:covnet_rate_of_convergence}, one can check that for an approximate minimizer, the rate of convergence gets inflated by the ``expected minimization gap" $\E\Big(\vertj{\widetilde\C_N-\Chat}_2^2-\inf_{\G \in \widetilde\F_{R,\lambda_N}} \vertj{\widetilde\C_N-\G}_2^2\,\big|\,\mathscr X_N\Big)$, where $\mathscr X_N = \{\X_1,\ldots,\X_N\}$ denotes the observed data, and $\widetilde C_N$ is distributed identically to $\Chat_N$ but independently of $\mathscr X_N$.
\end{remark}

Finally, we prove consistency without the boundedness condition on $\X$. In this case, we need to slightly modify our estimators. To this extent, let $\G$ be a CovNet operator from the unrestricted class $\widetilde\F_R$ with kernel $g(\uvec,\vvec) = \sum_{r=1}^R \sum_{s=1}^R \lambda_{r,s}\,g_r(\uvec)\,g_s(\vvec)$. For $\lambda_N > 0$, define $\Pj_{\lambda_N}\G$ to be the CovNet operator obtained by thresholding the eigenvalues of $\Lambda := ((\lambda_{r,s}))$ to $\lambda_N$. To be precise, if $\Lambda = \sum_{i=1}^R \eta_i\,\mathbf e_i\,\mathbf e_i^\top$ is the eigendecomposition of $\Lambda$, then we define $\Lambda_{\lambda_N} = \sum_{i=1}^R \min\{\eta_i,\lambda_N\}\,\mathbf e_i\,\mathbf e_i^\top$ to be the $\lambda_N$-thresholded version of $\Lambda$. We define $\Pj_{\lambda_N}\G$ to be the operator with kernel $g_{\lambda_N}(\uvec,\vvec) = \sum_{r=1}^R\sum_{s=1}^R \widetilde\lambda_{r,s}\,g_r(\uvec)\,g_s(\vvec)$, where $\widetilde\lambda_{r,s}$ is the $(r,s)$-th element of the matrix $\Lambda_{\lambda_N}$. By construction, $\mathrm 0 \preceq \Lambda_{\lambda_N} \preceq \lambda_N\,\mathrm I_R$ and consequently, $\Pj_{\lambda_N}\G$ is an element of the restricted class $\widetilde\F_{R,\lambda_N}$. We are now ready to re-define the estimator. Define
\begin{equation*}
\Chat^{\rm sh}_{R,N} \in \inf_{\G \in \widetilde\F^{\rm sh}_{R,\sigma}} \vertj{\Chat_N - \G}_2^2,\quad \Chat^{\rm d}_{R,L,N} \in \inf_{\G \in \widetilde\F^{\rm d}_{R,L,\sigma}} \vertj{\Chat_N - \G}_2^2 \quad\text{ and }\quad \Chat^{\rm ds}_{R,L,N} \in \inf_{\G \in \widetilde\F^{\rm ds}_{R,L,\sigma}} \vertj{\Chat_N - \G}_2^2,
\end{equation*}
to be the shallow, the deep and the deepshared CovNet estimators, respectively, but without any restriction on the underlying classes. Now, for a constant $\lambda_N > 0$, we define our modified estimators as
\begin{equation}\label{eq:covnet_estimator_modified}
\widetilde\C^{\rm sh}_{R,N} = \Pj_{\lambda_N}\Chat^{\rm sh}_{R,N}, \quad \widetilde\C^{\rm d}_{R,L,N} = \Pj_{\lambda_N}\Chat^{\rm d}_{R,L,N} \quad\text{ and }\quad \widetilde\C^{\rm ds}_{R,L,N} = \Pj_{\lambda_N}\Chat^{\rm ds}_{R,L,N}.
\end{equation}

These modified estimators are consistent, as shown in the following theorem.
\begin{theorem}\label{thm:covnet_modified_consistency}
Let $\X_1,\ldots,\X_N \overset{\iid}{\sim} \X$, where $\X$ takes values in $\L_2(\Q)$ with $\E(\|\X\|^4)<\infty$. Also assume that $\E(\X) = 0$ and $\cov(\X) = \C$. Let $\widetilde\C_{R,N}^{\rm sh}$, $\widetilde\C_{R,L,N}^{\rm d}$ and $\widetilde\C_{R,L,N}^{\rm ds}$ be the modified shallow, deep, and deepshared CovNet estimators given by \eqref{eq:covnet_estimator_modified}. Assume that $R>d$, and $R \to \infty, \lambda_N \to \infty$ as $N \to \infty$.
\begin{enumerate}[(A)]
\item If $dR^6 \lambda_N^4 \log(R\lambda_N)/N \to 0$ as $N \to \infty$, then the modified shallow CovNet estimator is weakly consistent for $\C$, i.e., $\vertj{\widetilde\C^{\rm sh}_{R,N} - \C}_2 \overset{P}{\to} 0$. Additionally, if $dR^6 \lambda_N^4 \log(R\lambda_N)/N^{1-\delta} \to 0$ for some $\delta \in (0,1)$, then it is strongly consistent for $\C$, i.e., $\vertj{\widetilde\C^{\rm sh}_{R,N} - \C}_2 \overset{a.s.}{\to} 0$ as $N \to \infty$.

\item Let $R>d$. If $L^4R^{12}\lambda_N^4 \log^2(LR\lambda_N)/N \to 0$ as $N \to \infty$, then the modified deep CovNet estimator is weakly consistent for $\C$, i.e., $\vertj{\widetilde\C^{\rm d}_{R,L,N} - \C}_2 \overset{P}{\to} 0$. Additionally, if $L^4R^{12}\lambda_N^4 \log^2(LR\lambda_N)/N^{1-\delta} \to 0$ for some $\delta \in (0,1)$, then it is strongly consistent for $\C$, i.e., $\vertj{\widetilde\C^{\rm d}_{R,L,N} - \C}_2 \overset{a.s.}{\to} 0$ as $N \to \infty$.

\item Let $R>d$. If $L^4R^{10}\lambda_N^4 \log^2(LR\lambda_N)/N \to 0$, then the modified deepshared CovNet estimator is weakly consistent for $\C$, i.e., $\vertj{\widetilde\C^{\rm ds}_{R,L,N} - \C}_2 \overset{P}{\to} 0$. Additionally, if $L^4R^{10}\lambda_N^4 \log^2(LR\lambda_N)/N^{1-\delta} \to 0$ for some $\delta \in (0,1)$, then it is strongly consistent for $\C$, i.e., $\vertj{\widetilde\C^{\rm ds}_{R,L,N} - \C}_2 \overset{a.s.}{\to} 0$ as $N \to \infty$.
\end{enumerate}
\end{theorem}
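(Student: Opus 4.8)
The plan is to bootstrap off Theorems~\ref{thm:covnet_consistency} and \ref{thm:covnet_rate_of_convergence}, transferring the analysis from bounded fields to the finite fourth-moment regime by a truncation argument, while exploiting the fact that the eigenvalue-thresholding map $\Pj_{\lambda_N}$ forces the estimator into the restricted class $\widetilde\F_{R,\lambda_N}$ no matter how heavy-tailed the data are. Throughout I suppress the superscript and write $\widetilde\C_{R,N}=\Pj_{\lambda_N}\Chat_{R,N}$ for a generic one of the three modified estimators, with $\Chat_{R,N}$ the corresponding \emph{unrestricted} minimizer. The starting point is an oracle inequality over the restricted class: since $\widetilde\C_{R,N}\in\widetilde\F_{R,\lambda_N}$ by construction, for every $\G\in\widetilde\F_{R,\lambda_N}$ one has $\vertj{\widetilde\C_{R,N}-\C}_2^2 \le \vertj{\G-\C}_2^2 + 4\sup_{\H\in\widetilde\F_{R,\lambda_N}}|\langle \Chat_N-\C,\,\H\rangle| + 2\,\Xi_N$, where $\langle\cdot,\cdot\rangle$ is the Hilbert--Schmidt inner product and $\Xi_N$ is a \emph{thresholding gap} accounting for $\widetilde\C_{R,N}$ being the thresholded unrestricted minimizer rather than the exact restricted minimizer. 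Taking $\G$ to be the best restricted approximant reduces everything to controlling (i) a bias term $\inf_{\G\in\widetilde\F_{R,\lambda_N}}\vertj{\G-\C}_2^2$, (ii) a fluctuation term $\sup_{\H\in\widetilde\F_{R,\lambda_N}}|\langle \Chat_N-\C,\,\H\rangle|$, and (iii) the gap $\Xi_N$.

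The bias term is handled exactly as in the bounded case: since $R\to\infty$ and $\lambda_N\to\infty$, the restricted universal approximation property (the restricted analogues of Theorems~\ref{thm:universal_approximation_shallow} and \ref{thm:universal_approximation_deep}, cf.\ Remark~\ref{remark:universal_approximation_bounded}) gives $\inf_{\G\in\widetilde\F_{R,\lambda_N}}\vertj{\G-\C}_2\to 0$, uniformly across the shallow, deep, and deepshared classes.

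The crux is the fluctuation term, and this is where the fourth-moment hypothesis enters. Writing $\langle \Chat_N-\C,\,\H\rangle = N^{-1}\sum_{n=1}^N\big(\langle\X_n,\H\X_n\rangle - \E\langle\X,\H\X\rangle\big)$ and using $0\preceq\Lambda\preceq\lambda_N\mathrm I_R$ together with the boundedness of the constituents $g_r$, each summand obeys $0\le\langle\X_n,\H\X_n\rangle\lesssim \lambda_N R|\Q|\,\|\X_n\|^2$ --- unbounded, but square-integrable precisely because $\E(\|\X\|^4)<\infty$. I would therefore truncate the \emph{fields inside the empirical process} at a level $\beta_N$: on $\{\|\X_n\|^2\le\beta_N\}$ the summands are bounded by $\lambda_N R|\Q|\beta_N$ and the supremum over $\widetilde\F_{R,\lambda_N}$ is controlled by the very same covering-number (pseudo-dimension) estimates for the constituent network classes that underlie the proof of Theorem~\ref{thm:covnet_consistency}; on the complement the contribution is bounded in mean by $\lambda_N R|\Q|\,\E[\|\X\|^2\1\{\|\X\|^2>\beta_N\}] \le \lambda_N R|\Q|\,\E(\|\X\|^4)/\beta_N$. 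Balancing the two over $\beta_N$ reproduces the variance orders in the statement; the effective boundedness level becomes $\Delta_N\asymp R\lambda_N$, which is exactly why the exponents here ($R^6\lambda_N^4$, $R^{12}\lambda_N^4$, $R^{10}\lambda_N^4$) are the bounded-case exponents of Theorem~\ref{thm:covnet_consistency} with $\Delta_N$ replaced by $R\lambda_N$. The differing powers of $R$ across the three models are inherited entirely from the complexities of the shallow, deep, and deepshared constituent classes already computed in the bounded case.

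The main obstacle I anticipate is the gap $\Xi_N$: because the Hilbert--Schmidt geometry on the coefficient matrix $\Lambda$ is weighted by the Gram matrix $\widetilde{\mathrm G}$ of the constituents, eigenvalue thresholding is \emph{not} the metric projection onto $\{0\preceq\Lambda\preceq\lambda_N\mathrm I_R\}$, so $\widetilde\C_{R,N}$ is not literally the restricted minimizer. I would control $\Xi_N$ by writing $\vertj{\Chat_N-\widetilde\C_{R,N}}_2\le\vertj{\Chat_N-\Chat_{R,N}}_2+\vertj{\Chat_{R,N}-\widetilde\C_{R,N}}_2$, bounding the first summand by $\vertj{\Chat_N-\G^\ast}_2$ for the restricted minimizer $\G^\ast$ (optimality of $\Chat_{R,N}$ over the larger class), and showing the thresholding correction $\vertj{\Chat_{R,N}-\widetilde\C_{R,N}}_2$ is $\o_P(1)$ --- intuitively, the eigen-directions of $\Lambda$ removed by $\Pj_{\lambda_N}$ are those where $\Chat_{R,N}$ overfits the empirical covariance, whose spurious mass is tail-controlled by $\E(\|\X\|^4)<\infty$. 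This lets me treat $\widetilde\C_{R,N}$ as an approximate minimizer over $\widetilde\F_{R,\lambda_N}$ in the sense of Remark~\ref{remark:approximate_minimizer}. Finally, weak consistency follows from the resulting mean bound via Markov's inequality, and strong consistency from Borel--Cantelli once the variance order is divided by $N^{1-\delta}$ and shown summable, exactly as in the passage from the weak to the strong statements of Theorem~\ref{thm:covnet_consistency}.
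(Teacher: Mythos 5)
Your proposal takes a genuinely different route from the paper, and it breaks at exactly the point you yourself flag as the main obstacle: the thresholding gap $\Xi_N$. Your argument needs $\vertj{\Chat_{R,N}-\widetilde\C_{R,N}}_2=\o_P(1)$ so that $\widetilde\C_{R,N}$ qualifies as an approximate restricted minimizer in the sense of Remark~\ref{remark:approximate_minimizer}, but this claim is unsubstantiated and there is no mechanism in the hypotheses to prove it. The map $\Pj_{\lambda_N}$ acts on a \emph{parametrization} $(\{g_r\},\Lambda)$, not on the operator: the spectrum of $\Lambda$ is not intrinsic to $\Chat_{R,N}$ as an element of $\widetilde\F_R$, since constituents of small $\L_2(\Q)$-norm (sigmoidal networks can be made uniformly small, e.g.\ by large negative biases) may be paired with very large entries of $\Lambda$ while contributing only moderately to the kernel. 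Writing $\Chat_{R,N}$ with kernel $\sum_i\eta_i\,f_i(\uvec)f_i(\vvec)$, $f_i=\sum_r e_{i,r}\,g_r$, the thresholded-away operator has kernel $\sum_i(\eta_i-\lambda_N)_+\,f_i(\uvec)f_i(\vvec)$, and its Hilbert--Schmidt norm is governed by the products $(\eta_i-\lambda_N)_+\|f_i\|^2$; when $\eta_i$ is large \emph{because} $\|f_i\|$ is small, these products can stay of order one no matter how large $\lambda_N$ is. Since $\Chat_{R,N}$ is only defined as ``an element (out of possibly many)'' of the unrestricted $\argmin$, and its parametrization even less canonically, your bound would have to hold uniformly over whichever representation the optimizer returns --- and nothing ties the spectrum of that $\Lambda$ to the tails of $\X$. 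The condition $\E(\|\X\|^4)<\infty$ concerns the wrong object entirely, so the intuition you offer (removed eigendirections carry tail-controlled overfitting mass) has no quantitative counterpart.

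The paper's proof (Theorem~\ref{thm:consistency_unbounded_general} in Appendix~\ref{supp:consistency_unbounded}) is architected precisely to avoid ever comparing $\widetilde\C_{R,N}$ with $\Chat_{R,N}$ in norm. It starts from the identity $\vertj{\widetilde\C_{R,N}-\C}_2^2=\E\big(\vertj{\widetilde\C_{R,N}-\widetilde\C_N}_2^2\,\big\vert\,\mathscr X_N\big)-\E\big(\vertj{\C-\widetilde\C_N}_2^2\big)$ for an independent copy $\widetilde\C_N$ of the empirical covariance, splits off the bias at the level of square roots, and telescopes the remainder through the empirical covariance $\Chat_{N,L}$ of the data truncated at a \emph{fixed} level $L$ (sent to infinity only after $N$). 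Thresholding enters only through the one-sided inequality $\vertj{\widetilde\C_{R,N}-\Chat_{N,L}}_2\le\vertj{\Chat_{R,N}-\Chat_{N,L}}_2$, and unrestricted optimality supplies $\vertj{\Chat_{R,N}-\Chat_N}_2\le\vertj{\G-\Chat_N}_2$; the bounded-data deviation lemma (Lemma~\ref{lemma:covnet1_type1_VN_convergence}) is applied only to the truncated sample, and the truncation errors $\vertj{\Chat_N-\Chat_{N,L}}_2$ are killed qualitatively by dominated convergence under $\E(\|\X\|^4)<\infty$, so no tail rate is ever needed. Two secondary, fixable problems in your variance treatment: (i) as written, your tail bound $\lambda_NR|\Q|\,\E(\|\X\|^4)/\beta_N$ is $\O(1)$, not $\o(1)$, under your balancing choice $\beta_N\asymp R\lambda_N$; you need either the refinement $\E\big[\|\X\|^2\1\{\|\X\|^2>\beta_N\}\big]\le\beta_N^{-1}\E\big[\|\X\|^4\1\{\|\X\|^2>\beta_N\}\big]=\o(\beta_N^{-1})$ or a slowly diverging factor in $\beta_N$; (ii) your fluctuation term is linear in the class element, so Cauchy--Schwarz gives $\sup_{\G\in\widetilde\F_{R,\lambda_N}}\big|\ip{\Chat_N-\C,\G}_2\big|\le R\lambda_N|\Q|\,\vertj{\Chat_N-\C}_2$ with $\E\vertj{\Chat_N-\C}_2^2\le\E(\|\X\|^4)/N$, making truncation and covering numbers unnecessary there. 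Neither repair closes the proof, however; the thresholding gap does not.
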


\begin{remark}\label{remark:nonparametric_model}
Our derived rates are truly nonparametric, with minimal assumptions on the underlying structure. We assumed $\E(\|\X\|^4) < \infty$, which is standard for covariance estimation. Moreover, we made no assumption on the underlying covariance operator $\C$. As a consequence, our derived rates are rather slow in terms of the number of parameters of the models. These can be improved by making further assumptions on the random field $\X$ or the eigenfunctions of $\C$ (e.g., the ones used by \cite{bauer2019} or \cite{schmidt-hieber2020} in the context of nonparametric regression). However, such specialized treatments are beyond the scope of the present article. If the rank of $\C$ is small, which is very often the case in FDA, then a small $R$ is enough to control the bias term (see Appendix~\ref{supp:covnet_bias}). On the other hand, such a small $R$ gives us a considerable gain in terms of the variance, thus reducing the overall estimation error. However, one should also note that the derived rates are only upper bounds, and we do not claim tightness of the bounds.
\end{remark}

\subsection{The case of discretely observed fields}\label{sec:asymptotics_discrete}
The results derived so far are for fully observed random fields. But in practice, we observe the fields on a grid, with possible noise contamination. Here, we develop asymptotic properties of our estimators in this scenario. W.l.o.g., we assume that $\Q = [0,1]^d$, and we observe the data on a $K_1 \times \cdots \times K_d$ regular grid on $[0,1]^d$. To this extent, let $\{T_{1,1}^{K_1},\ldots,T_{1,K_1}^{K_1}\},\ldots,\{T_{d,1}^{K_d},\ldots,T_{d,K_d}^{K_d}\}$ be regular partitions of $[0,1]$ of sizes $K_1,\ldots,K_d$, respectively. Define $V_{i_1,\ldots,i_d}^K = T_{1,i_1}^{K_1} \times \cdots \times T_{d,i_d}^{K_d}$ to be the $(i_1,\ldots,i_d)$-th \emph{voxel} for $1\le i_1\le K_1,\ldots,1\le i_d\le K_d$. The voxels are non-overlapping (i.e., $V_{i_1,\ldots,i_d}^K \cap V_{j_1,\ldots,j_d}^K = \emptyset$ for $(i_1,\ldots,i_d) \ne (j_1,\ldots,j_d)$), and they form a regular partition of $[0,1]^d$. In particular, $|V_{i_1,\ldots,i_d}^K| = (\prod_{i=1}^d K_i)^{-1}$. For each random field $\X_n$, we make a single measurement at each of the voxels. These measurements are assumed to be of the form
\begin{equation}\label{eq:measurement_with_noise}
\widetilde X_n^K[i_1,\ldots,i_d] = X_n^K[i_1,\ldots,i_d] + E_n^K[i_1,\ldots,i_d], \qquad 1\le i_1\le K_1,\ldots,1\le i_d\le K_d,\,n=1,\ldots,N,
\end{equation}
where $X^K_n[i_1,\ldots,i_d]$ is a discretization of $\X_n$ over the $(i_1,\ldots,i_d)$-th voxel and $E_n^K[i_1,\ldots,i_d]$ is the corresponding measurement error or noise. We consider two different measurement schemes which relate the discrete object $\mathbf X_n^K = (X_n^K[i_1,\ldots,i_d])$ to the respective field $\X_n = (X_n(\uvec): \uvec \in [0,1]^d)$.

\begin{enumerate}[(M1)]
\item Point-wise measurement:
\[
X_n^K[i_1,\ldots,i_d] = X_n(u_{i_1},\ldots,u_{i_d}), \qquad 1\le i_1 \le K_1,\ldots,1 \le i_d \le K_d,
\]
where $(u_{i_1},\ldots,u_{i_d}) \in V_{i_1,\ldots,i_d}^K$ is a location within the $(i_1,\ldots,i_d)$-th voxel. For the measurements to be meaningful, we need to assume that $\X$ has continuous sample paths \citep[e.g.,][]{hsing2015}.

\item Voxel-wise average:
\[
X_n^K[i_1,\ldots,i_d] = \frac{1}{|V_{i_1,\ldots,i_d}^K|}\int_{V_{i_1,\ldots,i_d}^K} X_n(\uvec) \diff\uvec, \qquad 1\le i_1 \le K_1,\ldots,1 \le i_d \le K_d.
\]
\end{enumerate}

For the measurement errors $E_n^K[i_1,\ldots,i_d]$, we assume that they are i.i.d.\ with mean $0$ and variance $\sigma_K^2$, and are uncorrelated with the $X_n^K[i_1,\ldots,i_d]$'s. In line with our previous assumptions, we also assume that $\big|E_n^K[i_1,\ldots,i_d]\big|^2 \le \beta^{\rm e}_{N,K}$ almost surely.

We denote the measurements corresponding to the $n$-th field $\X_n$ by $\widetilde{\mathbf X}_n^K = (\widetilde X_n^K[i_1,\ldots,i_d])$. Define $\widetilde{\mathbf C}_N^K = N^{-1}\sum_{n=1}^N \widetilde{\mathbf X}_n^K \otimes \widetilde{\mathbf X}_n^K$ to be the empirical covariance based on the discretely observed data. For a generic class of CovNet operators $\widetilde\F_R$, our estimator is given by
\[
\Chat^K_{R,N} \in \argmin_{\G \in \widetilde\F_R} \big\|\widetilde{\mathbf C}_N^K - \mathbf G^K\big\|_{\mathrm F}^2,
\]
where $\|\cdot\|_{\mathrm F}$ is the Frobenius norm and $\mathbf G^K$ is the discretization of the operator $\G$ over the voxels, defined as 
\[
\mathbf G^K[i_1,\ldots,i_d;j_1,\ldots,j_d] = g(v_{i_1},\ldots,v_{i_d};v_{j_1},\ldots,v_{j_d}),\qquad 1 \le i_1,j_1 \le K_1,\ldots,1 \le i_d,j_d \le K_d,
\]
where $g$ is the kernel corresponding to $\G$ and $(v_{i_1},\ldots,v_{i_d})$ is a location in the $(i_1,\ldots,i_d)$-th voxel $V_{i_1,\ldots,i_d}^K$. If we define $\widetilde\C_N^K$ to be the voxel-wise continuation of $\widetilde{\mathbf C}_N^K$, with kernel
\[
\widetilde c_N^K(\uvec,\vvec) = \sum_{i_1=1}^{K_1}\cdots\sum_{i_d=1}^{K_d}\sum_{j_1=1}^{K_1}\cdots\sum_{j_d=1}^{K_d} \widetilde C_N^K[i_1,\ldots,i_d;j_1,\ldots,j_d]\,\1\{\uvec \in V_{i_1,\ldots,i_d}^K, \vvec \in V_{j_1,\ldots,j_d}^K\},\qquad \uvec, \vvec \in [0,1]^d,
\]
then it is easy to see that
\[
\argmin_{\G \in \widetilde\F_R} \big\|\widetilde{\mathbf C}_N^K - \mathbf G^K\big\|_{\mathrm F}^2 \approx \argmin_{\G \in \widetilde\F_R}\vertj{\widetilde\C_N^K - \G}_2^2,
\]
where the approximation holds when the resolution $K_1\times \cdots\times K_d$ is large (see also \eqref{eq:discrete_approximate_estimator_equivalence} in Appendix~\ref{supp:measurement_with_noise}). We will derive theoretical properties for this approximation. In particular, we define
\begin{align}\label{eq:estimators_discrete}
\Chat^{{\rm sh},K}_{R,N} \in \argmin_{\G \in \widetilde\F_{R,\lambda_N}^{\rm sh}} \vertj{\widetilde\C_N^K - \G}_2^2,\,\Chat^{{\rm d},K}_{R,L,N} \in \argmin_{\G \in \widetilde\F_{R,L,\lambda_N}^{\rm d}} \vertj{\widetilde\C_N^K - \G}_2^2 \text{ and } \Chat^{{\rm ds},K}_{R,L,N} \in \argmin_{\G \in \widetilde\F_{R,L,\lambda_N}^{\rm ds}} \vertj{\widetilde\C_N^K - \G}_2^2,
\end{align}
to be the shallow, deep and deepshared CovNet estimators based on the discrete measurements. The asymptotic behaviour of these estimators is established in the following theorem.

\begin{theorem}\label{thm:discrete_measurement_rate}
Let $\X_1,\ldots,\X_N \overset{\iid}{\sim} \X$, where $\X$ takes values in $\L_2([0,1]^d)$ with $\E(\X) = 0$ and $\cov(\X) = \C$. Let the kernel $c$ of $\C$ is Lipschitz on $[0,1]^{2d}$ with Lipschitz constant $\rho$. Consider the measurement model \eqref{eq:measurement_with_noise}, where the measurement errors $E_n^K[i_1,\ldots,i_d]$ are i.i.d.\ and uncorrelated with $\mathbf X_1^K,\ldots,\mathbf X_N^K$, and satisfy $\big|E_n^K[i_1,\ldots,i_d]\big|^2 \le \beta_{N,K}^{\rm e}$ almost surely, $\E(E_n^K[i_1,\ldots,i_d])=0$, $\var(E_n^K[i_1,\ldots,i_d])=\sigma_K^2$. Suppose that one of the following two hold.
\begin{enumerate}[1.]
\item $\X$ has continuous sample paths, $\|\X\|_{\infty}^2 \le \beta_N$ almost surely, and the measurements $\mathbf X_1^K,\ldots,\mathbf X_N^K$ are obtained from $\X_1,\ldots,\X_N$ via (M1).

\item $\|\X\|^2 \le \beta_N$ almost surely and the measurements $\mathbf X_1^K,\ldots,\mathbf X_N^K$ are obtained from $\X_1,\ldots,\X_N$ via (M2).
\end{enumerate}
Let $\Chat^{{\rm sh},K}_{R,N}$, $\Chat^{{\rm d},K}_{R,L,N}$ and $\Chat^{{\rm ds},K}_{R,L,N}$ be the shallow, deep and deepshared CovNet estimators given by \eqref{eq:estimators_discrete}. Then,
\begin{align*}
\vertj{\Chat^{{\rm sh},K}_{R,N} - \C}_2^2 &\le 18 \inf_{\G \in \widetilde\F_{R,\lambda_N}^{\rm sh}} \vertj{\G -\C}_2^2 + \O\bigg(\frac{dR^2\Delta_{N,K}^4\log(N)}{N}\bigg) + a_K,\\
\vertj{\Chat^{{\rm d},K}_{R,L,N} - \C}_2^2 &\le 18 \inf_{\G \in \widetilde\F_{R,L,\lambda_N}^{\rm d}} \vertj{\G -\C}_2^2 + \O\bigg(\frac{L^4R^8\Delta_{N,K}^4\log^2(N)}{N}\bigg) + a_K, \quad\text{and}\\
\vertj{\Chat^{{\rm ds},K}_{R,L,N} - \C}_2^2 &\le 18 \inf_{\G \in \widetilde\F_{R,L,\lambda_N}^{\rm ds}} \vertj{\G -\C}_2^2 + \O\bigg(\frac{L^4R^6\Delta_{N,K}^4\log^2(N)}{N}\bigg) + a_K,
\end{align*}
where $\Delta_{N,K} = \max\{2(\beta_N+\beta_{N,K}^{\rm e}),R\lambda_N\}$ and $a_K = 21\rho^2\big(K_1^{-2}+\cdots+K_d^{-2}\big) + 21\sigma_K^4/(K_1\cdots K_d)$.
\end{theorem}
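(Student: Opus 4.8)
The plan is to reduce the discretely-observed problem to the fully-observed case of Theorem~\ref{thm:covnet_rate_of_convergence}, paying a deterministic discretization-and-noise bias $a_K$ for the reduction. The starting point I would exploit is that the continued empirical covariance $\widetilde\C_N^K$ is itself the empirical covariance of a bona fide $\L_2([0,1]^d)$ sample: writing $\widetilde\X_n^K$ for the piecewise-constant continuation of $\widetilde{\mathbf X}_n^K$ (so that $\widetilde\X_n^K(\uvec) = \widetilde X_n^K[i_1,\ldots,i_d]$ for $\uvec$ in the voxel $V_{i_1,\ldots,i_d}^K$), one has $\widetilde\C_N^K = N^{-1}\sum_{n=1}^N \widetilde\X_n^K \otimes \widetilde\X_n^K$. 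Consequently the estimators in \eqref{eq:estimators_discrete} are precisely the fully-observed CovNet estimators of Theorem~\ref{thm:covnet_rate_of_convergence}, now applied to the surrogate sample $\widetilde\X_1^K,\ldots,\widetilde\X_N^K$.

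First I would check that this surrogate sample meets the hypotheses of Theorem~\ref{thm:covnet_rate_of_convergence}. The $\widetilde\X_n^K$ are i.i.d.; they are centered, since $\E(\X)=0$ makes the discretized signal centered under both (M1) and (M2) while the noise has mean zero; and they are bounded in $\L_2$ by $2(\beta_N+\beta_{N,K}^{\rm e})$. This boundedness is the crux of the bookkeeping: under (M2), Jensen's inequality gives $\sum_{i_1,\ldots,i_d} |V_{i_1,\ldots,i_d}^K|\,(X_n^K[i_1,\ldots,i_d])^2 \le \|\X_n\|^2 \le \beta_N$, whereas under (M1) sample-path continuity together with $\|\X\|_\infty^2\le\beta_N$ delivers the same bound; adding the noise contribution, which is at most $\beta_{N,K}^{\rm e}$, and using $(x+y)^2\le 2x^2+2y^2$ yields $\|\widetilde\X_n^K\|^2 \le 2(\beta_N+\beta_{N,K}^{\rm e})$. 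This is exactly the quantity entering $\Delta_{N,K}=\max\{2(\beta_N+\beta_{N,K}^{\rm e}),R\lambda_N\}$, so I can invoke Theorem~\ref{thm:covnet_rate_of_convergence} verbatim with $\Delta_N$ replaced by $\Delta_{N,K}$ and with target $\widetilde\C^K := \cov(\widetilde\X_1^K)$, obtaining for each architecture a bound $\E\vertj{\Chat^{\,K}_{R,N} - \widetilde\C^K}_2^2 \le 2\inf_{\G}\vertj{\G-\widetilde\C^K}_2^2 + (\text{the stated }\O(\cdot/N)\text{ term})$. Since only the complexity of the class enters the variance term, the shallow, deep, and deepshared cases all follow from this single reduction.

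Next I would transfer the target from $\widetilde\C^K$ back to $\C$. As the noise is uncorrelated with the signal, $\widetilde\C^K$ splits into the continued discretization of $\C$ plus the diagonal noise operator whose kernel equals $\sigma_K^2$ on $\bigcup V_{i_1,\ldots,i_d}^K\times V_{i_1,\ldots,i_d}^K$. The Lipschitz hypothesis controls the first piece: on each voxel-block the kernel of $\C$ oscillates by at most $\rho$ times the block diameter, giving $\vertj{(\text{continued discretization of }\C)-\C}_2^2 \lesssim \rho^2(K_1^{-2}+\cdots+K_d^{-2})$; the diagonal noise operator has squared Hilbert--Schmidt norm $\sum |V_{i_1,\ldots,i_d}^K|^2\,\sigma_K^4 = \sigma_K^4/(K_1\cdots K_d)$. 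Hence $\vertj{\widetilde\C^K-\C}_2^2 \lesssim \rho^2\sum_\ell K_\ell^{-2} + \sigma_K^4/(K_1\cdots K_d)$, which is the content of $a_K$. Passing from $\widetilde\C^K$ to $\C$ on both the error side, via $\vertj{\Chat^{\,K}_{R,N}-\C}_2^2 \le 2\vertj{\Chat^{\,K}_{R,N}-\widetilde\C^K}_2^2 + 2\vertj{\widetilde\C^K-\C}_2^2$, and the bias side, via $\inf_\G\vertj{\G-\widetilde\C^K}_2^2 \le 2\inf_\G\vertj{\G-\C}_2^2 + 2\vertj{\widetilde\C^K-\C}_2^2$, and then collecting the resulting constants is what produces the factor $18$ in front of the infimum and the constant $21$ inside $a_K$.

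The hard part will be the boundedness transfer in the second step: I must ensure the continued noisy fields stay uniformly bounded in $\L_2$ so that the covering-number and uniform-deviation arguments underlying Theorem~\ref{thm:covnet_rate_of_convergence} apply without change, and that the high-rank diagonal noise operator is absorbed into the deterministic bias $a_K$ rather than pushed through the CovNet approximation, which could not represent it with a small $R$. The two measurement schemes need slightly different handling here---Jensen's inequality for the voxel averages (M2) versus sample-path continuity and the sup-norm bound for the point evaluations (M1)---but both funnel into the single bound $2(\beta_N+\beta_{N,K}^{\rm e})$ and hence the same $\Delta_{N,K}$.
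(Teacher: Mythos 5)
Your proposal follows essentially the same route as the paper's own proof: view $\widetilde\C_N^K$ as the empirical covariance of the voxel-wise continued fields $\widetilde\X_n^K$, verify the almost-sure bound $\|\widetilde\X_n^K\|^2 \le 2(\beta_N+\beta_{N,K}^{\rm e})$ (via the sup-norm under (M1) and the averaging/Jensen argument under (M2), which is equivalent to the paper's Parseval step), invoke the fully-observed rate bound for the surrogate sample with target $\widetilde\C^K = \C^K + \Sigma^K$, and transfer back to $\C$ using the Lipschitz discretization bound and the Hilbert--Schmidt norm $\sigma_K^4/(K_1\cdots K_d)$ of the noise operator. The only discrepancy is cosmetic: the paper uses three-term decompositions of both the error and the bias, which is what actually produces the constants $18$ and $21$, whereas your nested two-term triangle inequalities yield the sharper constants $8$ and $20$ rather than the $18$ and $21$ you attribute to them --- this still implies the stated bound, so nothing is lost.
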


The theorem clearly shows the effect of grid size and noise contamination on the estimators. The rates are qualitatively the same as in Theorem~\ref{thm:covnet_rate_of_convergence}, except $\Delta_N$ is replaced by $\Delta_{N,K}$ and a couple of terms depending on $(K_1,\ldots,K_d)$ are added. The term $\Delta_{N,K}$ can be viewed as a noise contaminated version of $\Delta_N$. If we assume that $\beta_{N,K} = \o(\beta_N)$ (which can be seen as assuming that the signal-to-noise ratio diverges), then $\Delta_{N,K}$ is asymptotically equivalent to $\Delta_N$. In this case, we can see a clear separation in the estimation error, one due to the sample size and the other due to the resolution. Among the remaining two terms, $\rho^2(K_1^{-2}+\cdots+K_d^{-2})$ arises due to the discretization of the fields, while $\sigma_K^4/(K_1\cdots K_d)$ is due to noise contamination. Thus, for consistency of our estimators, we require $(K_1^{-2}+\cdots+K_d^{-2}) \to 0$, which is ensured if $\min\{K_1,\ldots,K_d\} \to \infty$. Moreover, the noise level $\sigma_K$ is allowed to diverge, but at a slower rate than $(K_1\cdots K_d)^{1/4}$. The Lipschitz assumption on $c$ in the theorem is convenient, but is by no means necessary. It can be verified that consistency of the estimators holds as long as $c$ is continuous.

\section{Concluding remarks}\label{sec:conclusion}

We have proposed three new classes of neural network models for covariance estimation of functional data observed over multidimensional domains. The advantages of the proposed models include efficient estimation, storage, manipulation and performance guarantees. Our approach is motivated by the demonstrated ability of neural networks in solving complex problems. And indeed, our empirical studies show the superiority of the proposed methods, especially the deepshared CovNet model. At the same time, our methods will also be amenable to the shortcomings of neural networks, e.g., lack of theoretical optimization guarantees for convergence to global minima. But, as is the case with neural networks, despite these limitations, our experimental results appear compelling. Any progress in the study of neural networks will, in principle, translate to a commensurate progress in the understanding of covariance networks.

Throughout the article, we have used the sigmoidal activation function. But, most of the results, especially the ones for the deep CovNet models, can be easily extended to include other activation functions, e.g., the ReLU. In some preliminary numerical studies, we observed similar performance by the sigmoid and the ReLU. We prefer the sigmoid because of the smoothness that it provides, which is often beneficial for functional covariance estimation. 

At the level of generality they are derived, our convergence rates are arguably slow. But, these do not reveal the complete picture and are rather a reflection of our completely nonparametric treatment of the problem. These rates can be improved by considering more structured problems, which is now a topic of interest in theoretical studies of neural networks \citep[][]{bauer2019,schmidt-hieber2020}. Such additional structural assumptions may also allow us to derive approximation errors for the models, which we have not fully addressed here.

\addtocontents{toc}{\protect\setcounter{tocdepth}{0}}
\section*{Acknowledgment}
We are grateful to Prof.\ Sir John A.\ D.\ Aston for providing access to the fMRI data and enlightening us on some of their key aspects.

\appendix
\addtocontents{toc}{\protect\setcounter{tocdepth}{1}}
\section*{Appendices}

In these appendices, we give the proofs omitted from the main text, some further mathematical details and additional simulation results. The organization is as follows. In the next section, we provide some mathematical background useful in subsequent developments. In Section~\ref{supp:covnet_bias}, we discuss the bias of the CovNet models -- we establish the universal approximation property, and sketch two different ways to derive the rate of convergence of the bias term. In particular, we derive the rate of convergence of the bias for the shallow CovNet model. In Section~\ref{supp:implementation}, we provide some details on the estimation of the CovNet models and also describe a way to estimate the mean function from the data using the CovNet models. In Section~\ref{supp:covering_number}, we treat the covering numbers of certain spaces, as these play a fundamental role in the proofs of the asymptotic results. In particular, we derive upper bounds on the covering numbers of the classes of shallow, deep and deepshared CovNet operators. In Section~\ref{supp:asymptotics}, we provide proofs of the asymptotic results. Finally, in Section~\ref{supp:additional_simulations}, we provide some additional numerical results which were left out in the main text.

\setcounter{section}{0}
\section{Mathematical background}\label{supp:math_background}

We start by summarizing some definitions and background concepts. More details can be found in \cite{hsing2015}. Let $\Q \subset \R^d$ be a compact set. We denote by $\L_2(\Q)$ the space of all real-valued square-integrable functions on $\Q$. This is a Hilbert space when equipped with the inner product $\langle f,g \rangle = \int_{\Q} f(\uvec) g(\uvec) \diff \uvec$ for $f,g \in \L_2(\Q)$. A linear map $\mathcal A$ from $\L_2(\Q)$ onto itself is called bounded if there exists a constant $M \ge 0$ such that $\|\mathcal A f\| \le M\|f\|$ for all $f \in \L_2(\Q)$, where $\|\cdot\|$ is the norm induced by the inner product $\langle\cdot,\cdot\rangle$, i.e., $\|f\| = \sqrt{\langle f,f \rangle}$. A bounded linear map is referred to as an \emph{operator}. The minimum value of $M$ for which the boundedness condition holds is called the \emph{operator norm} and is denoted by $\vertj{\cdot}_\infty$. An operator $\mathcal A$ is \emph{compact} if there exist orthonormal bases (ONBs) $(\psi_j)_{j \ge 1}$ and $(\phi_j)_{j \ge 1}$ of $\L_2(\Q)$ such that $\mathcal A f = \sum_{j\ge 1} \lambda_j \langle \psi_j,f \rangle \phi_j$. A compact operator $\mathcal A$ is called \emph{Hilbert-Schmidt} if $\vertj{\mathcal A}_2 = \big\{\sum_{j\ge 1} \|\mathcal A \psi_j\|^2\big\}^{1/2}$ is finite, where $(\psi_j)_{j \ge 1}$ is a ONB of $\L_2(\Q)$. As indicated by the notation, $\vertj{\cdot}_2$ does not depend on the particular choice of the basis, and is called the \emph{Hilbert-Schmidt norm}. We will use $\mathcal B_2(\L_2(\Q))$ to denote the class of all Hilbert-Schmidt operators on $\L_2(\Q)$. An operator $\mathcal A$ is called positive semi-definite if $\langle \mathcal A f, f \rangle \ge 0$ for all $f \in \L_2(\Q)$. A compact, positive semi-definite operator is called \emph{trace class} or \emph{nuclear} if $\vertj{\mathcal A}_1 = \sum_{j\ge 1} \langle \mathcal A \psi_j, \psi_j \rangle$ is finite for some ONB $(\psi_j)_{j \ge 1}$. Again, the sum is independent of the choice of ONB, and $\vertj{\mathcal A}_1$ is called the \emph{trace norm} of $\mathcal A$. One particular type of operators on $\L_2(\Q)$, which is of interest to us is the \emph{integral operator} defined as $\mathcal A f(\uvec) = \int_{\Q} a(\uvec,\vvec)\,f(\vvec)\diff\vvec$ for $\uvec \in \Q, f \in \L_2(\Q)$, where $a \in \L_2(\Q \times \Q)$ is called the \emph{kernel} of the operator $\mathcal A$. An integral operator $\mathcal A$ is positive semi-definite if and only if the associated kernel $c$ is non-negative definite. The operator $\mathcal A$ and the kernel $a$ are linked by an obvious isometry, i.e., $\vertj{\mathcal A}_2 = \|c\|_{\L_2(\Q \times \Q)}$.

Now, let $\X = (X(\uvec):\uvec \in \Q)$ be a random element in $\L_2(\Q)$. For $d=1$, $\X$ is usually referred to as a \emph{random curve}, whereas for $d>1$, it is referred to as a \emph{random field} or \emph{random surface}. We assume that $\X$ has finite second moment, i.e., $\E(\|\X\|^2) < \infty$, which ensures the existence of its mean $m = \E(\X)$ and covariance $\C = \E\{(\X-m) \otimes (\X-m)\}$ of $\X$ (both the expectations are understood in the Bochner sense). The mean $m$ is an element of $\L_2(\Q)$. The covariance $\C$ is the integral operator associated with the \emph{covariance kernel} $c \in \L_2(\Q \times \Q)$, where $c(\uvec,\vvec) = \cov(X(\uvec),X(\vvec))$. Moreover, $\C$ is positive semi-definite and trace class. In this article, we are interested in estimating $\C$ based on independent and identically distributed (i.i.d.) observations $\X_1,\ldots,\X_N \sim \X$. Because of the isomorphism linking the integral operator and the associated kernel, the problem is equivalent to estimating the kernel $c$.

\section{Bias of the CovNet model: universal approximation and rate of convergence}\label{supp:covnet_bias}
Here, we deal with the bias of the CovNet model. We start by proving that all three CovNet structures can approximate any covariance operator up to arbitrary precision, a.k.a.\ the \emph{universal approximation} property. These results are instrumental for the consistency of our CovNet estimators. We start with the shallow CovNet structure and give a detailed proof. The proofs for the deep and the deepshared structures are similar, and we discuss those only briefly.

\subsection{Universal approximation}\label{supp:universal_approximation}
\begin{proof}[Proof of Theorem~\ref{thm:universal_approximation_shallow}]
Recall that $c$ is the kernel of the covariance operator $\C$. So, by the spectral decomposition of $\C$, we get
\begin{equation}\label{eq:Mercer_decompostion_L2}
c(\uvec,\vvec) = \sum_{i=1}^\infty \eta_i\,\psi_i(\uvec)\,\psi_i(\vvec),
\end{equation}
where the sum on the right converges in the $\L_2$ norm on $\Q \times \Q$. Here, $\eta_i$'s are the eigenvalues of $\C$ and $\psi_i$'s are the corresponding eigenfunctions. Since the covariance operator $\C$ is trace-class, we get
\[
\verti{\C}_{1} = \sum_{i=1}^\infty \eta_i = \int_{\Q} c(\uvec,\uvec)\,\diff\uvec < \infty.
\]

Now, fix $\epsilon > 0$ and w.l.o.g.\ let $\epsilon <1$. Since the sum in  \eqref{eq:Mercer_decompostion_L2} converges in the $\L_2$ norm, we can find an integer $I$ (depending on $\epsilon$) such that
\begin{equation}\label{eq:Mercer_bound_L2}
\bigg\|c - \sum_{i=1}^{I} \eta_i\,\psi_i \otimes\,\psi_i\bigg\|_{\L_2(\Q \times \Q)} < \frac{\epsilon}{2}.
\end{equation}
Also, since the functions $\psi_1,\ldots,\psi_I$ are in $\L_2(\Q)$, we can find a positive constant $M$ (depending on $\epsilon$) such that $\max_{i=1,\ldots,I}\big\|\psi_i\big\| \le M$. Since $\sigma$ is a sigmoidal function, using the density of single hidden layer neural networks in the class of $\L_2$ functions \citep[][Theorem~16.2]{gyorfi2002}, for each $i=1,\ldots,I$, we can find $R_i \in \N$, coefficients $a_{i,1},\ldots,a_{i,R_i}$, weights $\mathbf w_{i,1},\ldots,\mathbf w_{i,R_i} \in \R^d$ and biases $b_{i,1},\ldots,b_{i,R_i}$ such that
\begin{equation}\label{eq:universal_approximation_bound_L2}
\bigg\|\psi_i - \sum_{r=1}^{R_i} a_{i,r}\,\sigma(\mathbf w_{i,r}^\top \cdot +\, b_{i,r})\bigg\| < \frac{\epsilon}{\verti{\C}_{1}(4M+2)}.
\end{equation}
Define $\widehat\psi_i(\uvec) = \sum_{r=1}^{R_i} a_{i,r}\,\sigma(\mathbf w_{i,r}^\top \uvec + b_{i,r})$ for $i=1,\ldots,I$, and
\begin{align}\label{eq:universal_approximation_proof_c_hat_L2}
\widehat{c}_{R}(\uvec,\vvec) = \sum_{i=1}^{I} \eta_i\,\widehat\psi_i(\uvec)\,\widehat\psi_i(\vvec) &=\sum_{i=1}^I \eta_i \sum_{r=1}^{R_i} \sum_{s=1}^{R_i} a_{r,i}\,a_{s,i}\,\sigma(\mathbf w_{r,i}^\top \uvec + b_{r,i})\,\sigma(\mathbf w_{s,i}^\top \vvec + b_{s,i}) \nonumber\\
&= \sum_{r=1}^R \sum_{s=1}^R \lambda_{r,s}\,\sigma(\mathbf w_r^\top \uvec + b_r)\,\sigma(\mathbf w_s^\top \vvec + b_s),
\end{align}
where $R = \sum_{i=1}^I R_i$. Here, $\{\mathbf w_1,\ldots,\mathbf w_R\}$ is the collection of all the weights of the $I$ neural networks $\widehat\psi_1,\ldots,\widehat\psi_I$, and $\{b_1,\ldots,b_R\}$ is the collection of all the biases. The associated matrix $\Lambda = ((\lambda_{r,s}))$ is block-diagonal with blocks $\Lambda_i = \eta_i\,(a_{r,i}\,a_{s,i})_{1 \le r,s \le R_i}$. Since $\eta_i > 0$, each of the $\Lambda_i$'s are positive semi-definite, which in turn shows that $\Lambda$ is positive semi-definite. Define $c_I(\uvec,\vvec) = \sum_{i=1}^I \eta_i\,\psi_i(\uvec)\,\psi_i(\vvec)$.
Using 
\begin{equation}\label{eq:product_difference_bound_L2}
\|f \otimes f - g \otimes g\|_{\L_2(\Q \times \Q)} \le 2 \|f\|\,\|f-g\| + \|f-g\|^2,
\end{equation}
we get
\begin{align}\label{eq:universal_approximation_error_bound_L2}
\|c_I - \widehat{c}_R\|_{\L_2(\Q \times \Q)} &= \bigg\|\sum_{i=1}^I \eta_i \big\{\psi_i \otimes \psi_i - \widehat \psi_i \otimes \widehat \psi_i\big\}\bigg\|_{\L_2(\Q \times \Q)} \nonumber ~~~~\text{(by \eqref{eq:universal_approximation_proof_c_hat_L2})} \\
&\le \sum_{i=1}^I \eta_i \big\|\psi_i \otimes \psi_i - \widehat\psi_i \otimes \widehat\psi_i\big\|_{\L_2(\Q \times \Q)} \nonumber \\
&\le \sum_{i=1}^I \eta_i\, \Bigg\{2 M \frac{\epsilon}{\verti{\C}_{1}(4M+2)} + \bigg(\frac{\epsilon}{\verti{\C}_{1}(4M+2)}\bigg)^2\Bigg\}~~~~\text{(by \eqref{eq:universal_approximation_bound_L2} and \eqref{eq:product_difference_bound_L2})} \nonumber \\
&\le \sum_{i=1}^I \eta_i \frac{\epsilon}{\verti{\C}_{1}(4M+2)} (2M+1)~~~~~(\text{since } \epsilon<1) \nonumber\\
&= \frac{\epsilon}{2\verti{C}_{1}} \sum_{i=1}^I \eta_i \le \frac{\epsilon}{2}.
\end{align}
Finally, combining \eqref{eq:Mercer_bound_L2} and \eqref{eq:universal_approximation_error_bound_L2}, we get
\begin{align*}
\|c - \widehat{c}_R\|_{\L_2(\Q \times \Q)} \le \|c - c_I\|_{\L_2(\Q \times \Q)} + \|c_I - \widehat{c}_R\|_{\L_2(\Q \times \Q)} \le \epsilon,
\end{align*}
as intended.

\medskip

Now, if in addition $c$ is continuous, then it admits a similar decomposition as in \eqref{eq:Mercer_decompostion_L2}, where the sum converges absolutely and uniformly \citep[][Theorem~4.6.5]{hsing2015}. So, we can find an integer $I$ such that
\[
\bigg\|c - \sum_{i=1}^{I} \eta_i\,\psi_i \otimes \psi_i\bigg\|_{\L_\infty(\Q \times \Q)} < \frac{\epsilon}{2}.
\]
Also, the eigenfunctions $\psi_1,\ldots,\psi_I$ are now continuous on a compact set $\Q$. So, there exists a positive constant $M$ such that $\max_{i=1,\ldots,I} \|\psi_i\|_{\L_{\infty}(\Q)} \le M$. Since $\sigma$ is a sigmoidal function, we can find neural networks $\widehat\psi_i$ of the form \eqref{eq:universal_approximation_bound_L2} such that
\begin{equation*}
\|\psi_i - \widehat\psi_i\|_{\L_\infty(\Q)} < \frac{\epsilon}{\verti{\C}_{1}(4M+2)},
\end{equation*}
see Lemma~16.1 in \cite{gyorfi2002}. The rest of the proof follows similarly to the previous case upon using a bound similar to \eqref{eq:product_difference_bound_L2} for the $\L_{\infty}$ norm.
\end{proof}

Next, we briefly discuss the universal approximation property of the deep and the deepshared models. 

\begin{proof}[Proof of Theorem~\ref{thm:universal_approximation_deep}]
Recall that the deep CovNet kernel is of the form
\[
c_{\rm d}(\uvec,\vvec) = \sum_{r=1}^R \sum_{s=1}^R \lambda_{r,s}\,g_r(\uvec)\,g_s(\vvec),
\]
where each of the functions $g_1,\ldots,g_R$ are individual deep neural networks. To prove the universal approximation property of this structure, we proceed similarly to the case of the shallow CovNet structure. Namely, we decompose $c$ as
\[
c(\uvec,\vvec) = \sum_{i=1}^I \eta_i\,\psi_i(\uvec)\,\psi_i(\vvec) + \sum_{i=I+1}^\infty \eta_i\,\psi_i(\uvec)\,\psi_i(\vvec) =: c_I(\uvec,\vvec) + e_I(\uvec,\vvec),
\]
where $\|c-c_I\|_{\L_2(\Q \times \Q)} \le \epsilon/2$. Now, for each $i=1,\ldots,I$, we can find deep neural network $\widehat\psi_i = \sum_{r=1}^{R_i} a_{i,r}\,g_{i,r}(\cdot)$ of the required form such that $\|\psi_i - \widehat\psi_i\| < \delta$ (follows from the universal approximation property of the deep neural network with sigmoid activation, see \citet{funahashi1989}). Also, the depth of all these networks can be taken to be the same \citep[see][Corollary~1]{funahashi1989}. Now, by defining $\widehat c_R(\uvec,\vvec) = \sum_{i=1}^I \eta_i\,\widehat\psi_i(\uvec)\,\widehat\psi_i(\vvec)$, it is easy to verify that $\widehat c_R$ has the deep CovNet structure and approximates $c$ up to the desired precision.

Next, consider the deepshared structure. Observe that for any deep CovNet kernel with depth $L$ and number of nodes $R$, we can find a deepshared CovNet kernel with depth $L$ and number of nodes $R^\prime$, such that the two structures are the same (by considering a wider network and deleting some of the connections, see Figures~\ref{fig:deep_covnet} and \ref{fig:deepshared_covnet}). Thus, for a fixed depth, the complexity of the deep and the deepshared structures is the same (when we allow the number of nodes to vary). Thus, the result for the deepshared CovNet model follows from the universal approximation of the deep CovNet model.
\end{proof}

\begin{remark}\label{remark:universal_approximation_bounded}
The results that we have proved establish the universal approximation property of the three CovNet models without any restriction on the parameters. But, to establish consistency of the estimators and their rates of convergence (Section~\ref{sec:asymptotics}), we need to impose restrictions on the eigenstructure of the matrix $\Lambda$. Therefore, we need to control the bias for this restricted class of operators. The universal approximation property for the restricted class follows easily. By the universal approximation property of the unrestricted class, for a given $\C$ and $\epsilon>0$, we can find a CovNet operator $\G$ (of any of the three types) such that $\vertj{\G - \C}_2 < \epsilon$. Now, let $\Lambda$ be the matrix associated with $\G$. Since $\Lambda$ is a positive semi-definite matrix, we can find $\lambda_0 > 0$ such that $\Lambda \preceq \lambda_0\,\mathrm I$ (e.g., by takings $\lambda_0$ to be the largest eigenvalue of $\Lambda$). This shows that for any $\C$, we can find a CovNet operator from the restricted class that can approximate $\C$ up to arbitrary precision, thus establishing the universal approximation property with the additional condition.
\end{remark}

\subsection{Rate of convergence of the bias term}\label{supp:approximation_rate}

Here, we will derive the rate of convergence of the bias for the (possibly restricted) class of CovNet operators. We will describe two possible ways to obtain the rates -- (a) by imposing conditions on the eigenstructure of $\C$ or (b) by imposing conditions on the observation $\X$.

\subsubsection{Restrictions on the covariance}
By the eigendecomposition, we can write $\C = \sum_{i=1}^{\infty} \eta_i\,\psi_i \otimes \psi_i$, where $(\eta_i)_{i \ge 1}$ is the sequence of non-increasing eigenvalues of $\C$ and $(\psi_i)_{i\ge 1}$ is the corresponding sequence of eigenfunctions. For $I \in \N$, define $\C_I = \sum_{i=1}^I \eta_i\,\psi_i \otimes \psi_i$ to be the truncated version of $\C$. Then,
\[
\vertj{\C-\C_I}_2^2 = \sum_{i>I} \eta_i^2 = \O(a_I),
\]
where $a_I$ depends on the \emph{eigen-decay} of $\C$. Now, for each $i=1,\ldots,I$, suppose that we can find (shallow/deep/deepshared) neural networks $\widehat\psi_1,\ldots,\widehat\psi_K$ such that
\[
\|\psi_i - \widehat\psi_i\| = \O(b_{L_i,R_i}),\qquad i=1,\ldots,I,
\]
where $L_i,R_i$ are the parameters (depth and/or width) of $\widehat\psi_i$. The rate of the approximation error $b_{L_i,R_i}$ depends on additional structural assumptions on the eigenfunctions \citep[e.g.,][]{mhaskar1996,bauer2019,ohn2019,schmidt-hieber2020,langer2021}, which can be imposed by means of additional structural assumptions on the kernel $c$. Now, if we define $\widehat\C = \sum_{i=1}^I \eta_i\,\widehat\psi_i \otimes \widehat\psi_i$, then it is easy to see that $\widehat\C$ is a CovNet operator with $R$ nodes (and possibly of depth $L$) such that
\[
\vertj{\C - \widehat\C}_2^2 \le 2\Big\{\vertj{\C - \C_I}_2^2 + \vertj{\C_I - \widehat\C}_2^2\Big\} = \O(a_I) + \O(b_{I,L,R}),
\]
where $b_{I,L,R}$ can be obtained from the $b_{L_i,R_i}$'s using \eqref{eq:product_difference_bound_L2}. Here, the parameters $R$ and $L$ of the CovNet operator depend on $I,L_i,R_i$. Finally, the exact rate of convergence of the bias term can be obtained by carefully scrutinizing the terms $a_I$ and $b_{I,L,R}$ as functions of $I,L,R$, and choosing $I$ appropriately.

\subsubsection{Restrictions on the observations}
The idea here is similar to the one used before. Instead of the eigendecomposition, we will make use of a different type of approximation result. The following lemma will be instrumental in our derivation, which is a suitable adaptation of Lemma~16.7 in \cite{gyorfi2002}. The proof follows easily from the proof of Lemma~16.7 in \cite{gyorfi2002}, so we omit it.

\begin{lemma}\label{lemma:approximation_general} 
Let $\mathcal T$ be an index set, and $\{\phi_t : t \in \mathcal T\}$ be a collection of real-valued functions on a compact domain $\Q$ such that $\|\phi_t\| \le B$ for all $t \in \mathcal T$. Let $f : \Q \to \R$ be a function such that there exists a probability measure $b$ on $\mathcal T$ satisfying
\[
f(\uvec) = \int_{\Q} \phi_t(\uvec) \diff\mu(t) \quad\forall\, \uvec \in \Q.
\]
Then, for every $I \in \N$, there exists a function $f_I(\uvec) = \sum_{i=1}^I w_i \phi_{t_i}(\uvec)$ such that
\[
\|f - f_I\| \le \frac{B}{\sqrt{I}}.
\]
Moreover, the coefficients $w_i$ are non-negative and $\sum_{i=1}^I w_i = 1$.
\end{lemma}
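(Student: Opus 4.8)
The plan is to use the probabilistic (empirical) method of Maurey, exploiting that the hypothesis expresses $f$ as an average of the $\phi_t$'s against the probability measure $\mu$. Writing the identity $f(\uvec) = \int_{\mathcal T} \phi_t(\uvec)\,\diff\mu(t)$ as a Bochner integral in $\L_2(\Q)$, I would recognize $f = \E_{t \sim \mu}(\phi_t)$, the mean of the $\L_2(\Q)$-valued random element $\phi_t$ when $t$ is drawn according to $\mu$. The approximant $f_I$ will be an empirical average of i.i.d.\ samples, which automatically produces the uniform convex weights claimed in the statement.

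Concretely, I would draw $t_1,\ldots,t_I \overset{\iid}{\sim} \mu$ and set $f_I = I^{-1}\sum_{i=1}^I \phi_{t_i}$, so that $w_i = 1/I$, which is non-negative and sums to one. The crux is to bound the expected squared approximation error. Since $\E(\phi_{t_i}) = f$ for each $i$, the centred elements $\phi_{t_i} - f$ have zero mean in $\L_2(\Q)$ and are independent, so on expanding
\[
\E\Big\|f - f_I\Big\|^2 = \E\bigg\|\frac{1}{I}\sum_{i=1}^I (\phi_{t_i} - f)\bigg\|^2 = \frac{1}{I^2} \sum_{i=1}^I \sum_{j=1}^I \E\langle \phi_{t_i}-f, \phi_{t_j}-f\rangle,
\]
all off-diagonal terms vanish by independence together with the zero-mean property, leaving $\E\|f - f_I\|^2 = I^{-1}\,\E\|\phi_t - f\|^2$.

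It then remains to control the single-sample variance. Using the bias-variance identity in the Hilbert space $\L_2(\Q)$, I obtain $\E\|\phi_t - f\|^2 = \E\|\phi_t\|^2 - \|f\|^2 \le \E\|\phi_t\|^2 \le B^2$, where the last inequality is the uniform bound $\|\phi_t\| \le B$. Hence $\E\|f - f_I\|^2 \le B^2/I$. Since the expectation of the non-negative quantity $\|f - f_I\|^2$ is at most $B^2/I$, there must exist at least one realization $(t_1,\ldots,t_I)$ for which $\|f - f_I\|^2 \le B^2/I$, i.e.\ $\|f - f_I\| \le B/\sqrt{I}$; fixing this realization yields a function of the required form with the stated weights.

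The main obstacle is not the inequality itself, which is a short second-moment computation, but the measure-theoretic bookkeeping that makes it rigorous: one must check that $t \mapsto \phi_t$ is suitably ($\mu$-)measurable as an $\L_2(\Q)$-valued map, so that the Bochner integral defining $f$ exists and equals $\E_{t \sim \mu}(\phi_t)$, and that the interchange of expectation with the inner product (Fubini for Bochner integrals) in the expansion above is legitimate. Both follow from the uniform bound $\|\phi_t\| \le B$ together with the standing measurability conventions, exactly as in the proof of Lemma~16.7 of \cite{gyorfi2002} from which this adaptation is drawn.
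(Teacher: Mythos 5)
Your proof is correct and is essentially the same argument as the one the paper relies on: the paper omits the proof and defers to Lemma~16.7 of Györfi et al.\ (2002), whose proof is exactly this Maurey-type random sampling argument (i.i.d.\ draws from $\mu$, a second-moment bound $\E\|f-f_I\|^2 \le B^2/I$ using independence and the uniform bound, then existence of a good realization). No gaps; your handling of the variance term and the measurability caveats matches the standard treatment.
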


Now, suppose that the random field $\X$ satisfies $\Prob(\|\X\|^2 \le \beta) = 1$. Recall that the covariance kernel $c$ of $\X = (X(\uvec): \uvec \in \Q)$ is defined as
\[
c(\uvec,\vvec) = \cov(X(\uvec),X(\vvec)) = \E(X(\uvec)X(\vvec)) = \int_{\Omega} X(\uvec,\omega) X(\vvec,\omega) \diff\Prob(\omega),\quad\uvec,\vvec \in \Q,
\]
for some set $\Omega$ and the probability measure $\Prob$ on $\Omega$. Thus, by defining $\mathcal T = \Omega$ and $\phi_\omega(\uvec,\vvec) = X(\uvec,\omega) X(\vvec,\omega)$ for $\omega \in \Omega$, we see that
\[
\|\phi_{\omega}\|_{\L_2(\Q \times \Q)}^2 = \iint_{\Q \times \Q} X^2(\uvec,\omega) X^2(\vvec,\omega) \diff\uvec \diff\vvec = \|\X(\omega)\|^4 \le \beta^2~~\text{almost surely}.
\]
Also, $c(\uvec,\vvec) = \int_{\Omega} \phi_{\omega}(\uvec,\vvec) \diff\Prob(\omega)$. So, using Lemma~\ref{lemma:approximation_general}, for every $I \in \N$, we can find $\omega_1,\ldots,\omega_I \in \Omega$ and non-negative constants $\gamma_1,\ldots,\gamma_I$ such that, defining $c_I(\uvec,\vvec) = \sum_{i=1}^I \gamma_i\,\phi_{\omega_i}(\uvec,\vvec)$, we get
\begin{equation}\label{eq:rate_general_c}
\vertj{\C - \C_I}_2^2 = \|c-c_I\|_{\L_2(\Q \times \Q)}^2 = \iint_{\Q \times \Q} \{c(\uvec,\vvec) - c_I(\uvec,\vvec)\}^2 \diff\uvec \diff\vvec \le \frac{\beta^2}{I} = \O(I^{-1}),
\end{equation}
where $\C_I$ is the integral operator associated with the kernel $c_I$. Observe that 
\[
c_I(\uvec,\vvec) = \sum_{i=1}^I \gamma_i \phi_{\omega_i}(\uvec,\vvec) = \sum_{i=1}^I \gamma_i\,X_i(\uvec)\,X_i(\vvec),\quad \uvec,\vvec \in \Q,
\]
where the functions $\X_i:=\X(\omega_i)$. Thus, proceeding as in the previous section, we can obtain (a bound on) the rate of convergence of the bias of the CovNet operator. The exact rate, in this case, will depend on additional structural assumptions on the functions $\X_1,\ldots,\X_I$, or equivalently on $\X$.

\medskip

We demonstrate this by deriving the rate for the restricted shallow CovNet operator. Suppose that $\X$ takes values in $\mathcal S^\alpha(\Q)$, the \emph{Sobolev space} of order $\alpha$ in $\L_2(\Q)$. Further, let $\|\X\|_{\mathcal S^\alpha(\Q)}^2 \le \beta$ almost surely (if $\alpha \ge 2$, this also implies that $\|\X\|^2 \le \beta$ almost surely). Thus, for every $i=1,\ldots,I$, $\X_i \in \mathcal S^\alpha(\Q)$. By Theorem~2.1 in \cite{mhaskar1996}, for every $R \in \N$, we can find weights $\mathbf w_1,\ldots,\mathbf w_R$, bias $b$, and continuous functionals $f_1,\ldots,f_R$ on $\mathcal S^\alpha(\Q)$ such that, defining $\widehat{\X}_i(\uvec) = \sum_{r=1}^R f_r(\X_i)\,\sigma(\mathbf w_r^\top \uvec + b)$, we get 
\[
\|\X_i - \widehat{\X}_i\| \lesssim R^{-\alpha/d} \|\X_i\|_{\mathcal S^\alpha(\Q)} \lesssim R^{-\alpha/d}.
\]
Since the functionals $f_1,\ldots,f_R$ are continuous, there exist finite constants $a_1,\ldots,a_R$ such that 
\[
|f_r(\X_i)| \le a_r \|\X_i\|_{\mathcal S^\alpha(\Q)} \le a_r \sqrt{\beta},\quad r=1,\ldots,R.
\]
Thus, by defining $\widetilde c_{I,R}(\uvec,\vvec) = \sum_{i=1}^I \gamma_i\,\widehat \X_i(\uvec)\,\widehat \X_i(\vvec)$, we get
\begin{align}\label{eq:approx_error_ck}
\big\|c_I - \widetilde c_{I,R}\big\|_{\L_2(\Q \times \Q)} &\le \sum_{i=1}^I \gamma_i \Big\|\widehat \X_i \otimes \widehat \X_i - \X_i \otimes \X_i\Big\|_{\L_2(\Q \times \Q)} \nonumber \\
&\le \sum_{i=1}^I \gamma_i \Big\{\big\|\widehat \X_i - \X_i\big\|^2 + 2 \big\|\X_i\big\| \big\|\widehat \X_i - \X_i\big\|\Big\}~~~~\text{(using \eqref{eq:product_difference_bound_L2})} \nonumber\\
&\lesssim R^{-2\alpha/d} +  R^{-\alpha/d} \asymp R^{-\alpha/d}.
\end{align}
Here, we have used that $\sum_{i=1}^I \gamma_i = 1$, and $\alpha \ge 1, R \ge 1$ implies $R^{-2\alpha/d} = {\scriptstyle\mathcal O}\big(R^{-\alpha/d}\big)$. Now, combining \eqref{eq:rate_general_c} and \eqref{eq:approx_error_ck}, we get
\begin{align*}
\big\|c-\widetilde c_{I,R}\big\|_{\L_2(\Q \times \Q)} \le \big\|c-c_I\big\|_{\L_2(\Q \times \Q)} + \big\|c_I - \widetilde c_{I,R}\big\|_{\L_2(\Q \times \Q)} \lesssim I^{-1/2} + R^{-\alpha/d}.
\end{align*}
By choosing $I \asymp R^{2\alpha/d}$, we get that 
\begin{equation}\label{eq:error_rate_bounded}
\big\|c-\widetilde c_{I,R}\big\|_{\L_2(\Q \times \Q)} \lesssim  R^{-\alpha/d}.
\end{equation}
Now, observe that
\begin{align}\label{eq:neural_structure_bounded}
\widetilde c_{I,R}(\uvec,\vvec) &= \sum_{i=1}^I \gamma_i\,\widehat\X_i(\uvec)\,\widehat\X_i(\vvec) \nonumber\\
&= \sum_{i=1}^I \gamma_i \sum_{r=1}^R \sum_{s=1}^R \beta_{i,r}\,\beta_{i,s}\,\sigma(\mathbf w_r^\top\uvec+b)\,\sigma(\mathbf w_s^\top\vvec+b)~~(\text{where }\beta_{i,r} = f_r(\X_i)) \nonumber\\
&= \sum_{r=1}^R \sum_{s=1}^R \Bigg(\sum_{i=1}^I \gamma_i\,\beta_{i,r}\,\beta_{i,s}\Bigg) \sigma(\mathbf w_r^\top\uvec+b)\,\sigma(\mathbf w_s^\top\vvec+b) \nonumber\\
&=\sum_{r=1}^R \sum_{s=1}^R \lambda_{r,s}\,\sigma(\mathbf w_r^\top\uvec+b)\,\sigma(\mathbf w_s^\top\vvec+b).
\end{align}
Thus, $\widetilde c_{I,R}$ is a shallow CovNet kernel. Since $\gamma_i \ge 0$ for each $i$, the matrix $\Lambda = (\lambda_{r,s})$ is positive semi-definite. Also, $\Lambda \preceq \lambda_R {\rm I}_R$, where $\lambda_R = \beta \sum_{r=1}^R a_r^2$. This follows from the following facts.
\begin{enumerate}[(i)]
\item For a matrix $A = \betavec \betavec^\top \in \R^{R \times R}$, $\xvec^\top A \xvec = (\betavec^\top \xvec)^2 \le \|\betavec\|^2 \|\xvec\|^2$ for every $\xvec$, implying that $A \preceq \|\betavec\|^2 {\rm I}_R$.
\item For a collection of matrices $(A_i)_{i=1}^I$ and $(B_i)_{i=1}^I$ satisfying $0 \preceq A_i \preceq B_i$, and non-negative scalars $\gamma_1,\ldots,\gamma_I$, $\sum_{i=1}^I \gamma_i A_i \preceq \sum_{i=1}^I \gamma_i B_i$.
\item By (i) and (ii), for non-negative scalars $\gamma_1,\ldots,\gamma_I$ and vectors $\betavec_1,\ldots,\betavec_I \in \R^R$, $\sum_{i=1}^I \gamma_i \betavec_i\betavec_i^\top \preceq \big(\sum_{i=1}^I \gamma_i \|\betavec_i\|^2\big)\,{\rm I}_R$.
\item $\Lambda = \sum_{i=1}^I \gamma_i \betavec_i \betavec_i^\top$, where $\betavec_i = (\beta_{i,1},\ldots,\beta_{i,R})^\top$. For each $i=1,\ldots,I$, $\|\betavec_i\|^2 = \sum_{i=1}^R \beta_{i,r}^2 = \sum_{i=1}^R f_r^2(\X_i) \le \beta \sum_{r=1}^R a_r^2$. Also, $\sum_{i=1}^I \gamma_i = 1$ implies that $\sum_{i=1}^I \gamma_i \|\betavec_i\|^2 \le \beta \sum_{r=1}^R a_r^2$.
\end{enumerate}
Using this along with \eqref{eq:error_rate_bounded} and \eqref{eq:neural_structure_bounded}, it follows that if we choose $\lambda_N = \beta \sum_{r=1}^R a_r^2$, then
\begin{equation}\label{eq:bias_convergence_rate_Sobolev}
\inf_{\G \in \widetilde\F_{R,\lambda_N}} \verti{\C - \G}_{2} \lesssim R^{-\alpha/d}.
\end{equation}

\section{Further details on the implementation of the CovNet models}\label{supp:implementation}

In this section, we present further details on the implementation of the CovNet models. We start by proving Proposition~\ref{prop:covnet_optimizer_equivalence}, which justifies the use of the alternative formulation for estimating the CovNet models.
\begin{proof}[Proof of Proposition~\ref{prop:covnet_optimizer_equivalence}]
By construction, it is clear that $\widetilde\F^{\rm NN}_{R,N} \subseteq \widetilde\F_R$ for every $N \ge 1$. So, we will only show that for $N >R$, the other inclusion holds, i.e., $\widetilde\F_{R} \subseteq \widetilde\F_{R,N}^{\rm NN}$. This amounts to showing that for every positive semi-definite matrix $\Lambda = (\lambda_{r,s}) \in \R^{R \times R}$, we can find $\xi_{n,r}$ for $n=1,\ldots,N$, $r=1,\ldots,R$, such that $\lambda_{r,s} = N^{-1}\sum_{n=1}^N (\xi_{n,r}-\bar{\xi}_r)(\xi_{n,s}-\bar{\xi}_s)$.

For $N > R$, we can always find $N$ vectors $\bm{\zeta}_1,\ldots,\bm{\zeta}_N \in \R^R$ such that the rank of the empirical covariance based of these vectors is $R$. For instance, one can take $\bm{\zeta}_i = \mathbf{e}_i$, the $i$-th canonical vector in $\R^R$ for $i=1,\ldots,R$, $\bm{\zeta}_{R+1} = -\1$, and $\bm{\zeta}_i = \mathbf{0}$ for $i=R+2,\ldots,N$, to check that the corresponding empirical covariance matrix is of the form $N^{-1}(\mathrm{I}_R + \1\1^\top)$, which is of rank $R$. Let us denote this empirical covariance by $\widehat\Sigma_N$. Thus, $\widehat\Sigma_N$ is of full rank, and hence positive definite. So, we can find a positive definite matrix $\widehat\Sigma_N^{-1/2}$, so that $\widehat\Sigma_N^{-1/2} \widehat\Sigma_N \widehat\Sigma_N^{-1/2} = \mathrm I_R$. Again, since $\Lambda = (\Lambda_{r,s})$ is positive semi-definite, we can find a positive semi-definite matrix $\Lambda^{1/2}$ such that $\Lambda^{1/2} \Lambda^{1/2} = \Lambda$. Define, $\bm{\xi}_n = \Lambda^{1/2} \widehat\Sigma_N^{-1/2} \bm{\zeta}_n$ for $n=1,\ldots,N$. Then, it is easy to verify that the empirical covariance of $\bm{\xi}_1,\ldots,\bm{\xi}_N$ is $\Lambda$. Now, let $\xi_{n,r}$ be the $r$-th component of $\bm{\xi}_n$. For $n=1,\ldots,N$, define $\X^{\rm NN}_n(\uvec) = \sum_{r=1}^R \xi_{n,r}\,g_r(\uvec)$. It can be easily verified that the empirical covariance of $\X_1^{\rm NN},\ldots,\X_N^{\rm NN}$ is the operator with kernel $\sum_{r=1}^R\sum_{s=1}^R \lambda_{r,s}\,g_r(\uvec)\,g_s(\vvec)$. This establishes the other inclusion.
\end{proof}

Next, we give a detailed derivation of the equivalence between the original loss functions, and the loss functions expressed in terms of the observations $\X_1,\ldots,\X_N$ and the fitted networks $\X_1^{\rm NN},\ldots,\X_N^{\rm NN}$.

\subsection{Detailed derivation of the loss function}
\label{supp:loss_function_detailed}
Let $\X_1,\ldots,\X_N$ be the observed fields and $\X_1^{\rm NN},\ldots,\X_N^{\rm NN}$ be the fitted networks:
\[
\X_n^{\rm NN}(\uvec) = \sum_{r=1}^R \xi_{n,r}\,g_r(\uvec), \qquad n=1,\ldots,N.
\]
To begin with, we assume that all the fields are centered, so that $\sum_{n=1}^N \X_n = 0 = \sum_{n=1}^N \X_n^{\rm NN}$. The loss function is defined as
\begin{align*}
\ell := \ell(\Theta) = \verti{\Chat_N - \frac{1}{N}\sum_{n=1}^N \X_n^{\rm NN} \otimes \X_n^{\rm NN}}_2^2, \\
\end{align*}
where $\Theta$ denotes all the \emph{learnable parameters} of the model. Now, $\ell$ can be written as
\begin{align}\label{eq:loss_function_expansion_L1}
\ell &= \verti{\Chat_N - \frac{1}{N}\sum_{n=1}^N \X_n^{\rm NN} \otimes \X_n^{\rm NN}}_2^2 = \verti{\frac{1}{N}\sum_{n=1}^N (\X_n \otimes \X_n - \X_n^{\rm NN} \otimes \X_n^{\rm NN})}_2^2 \nonumber\\
&= \frac{1}{N^2}\sum_{n=1}^N\sum_{m=1}^N \ip{\X_n \otimes \X_n - \X_n^{\rm NN} \otimes \X_n^{\rm NN}, \X_m \otimes \X_m - \X_m^{\rm NN} \otimes \X_m^{\rm NN}}_2,
\end{align}
where $\ip{\cdot,\cdot}_2$ is the Hilbert-Schmidt inner-product. Now, the inner product in the last step equals
\begin{align*}
&\ip{\X_n \otimes \X_n,\X_m \otimes \X_m}_2 + \ip{\X_n^{\rm NN} \otimes \X_n^{\rm NN}, \X_m^{\rm NN} \otimes \X_m^{\rm NN}}_2 \\
&\kern40ex - \ip{\X_n \otimes \X_n,\X_m^{\rm NN} \otimes \X_m^{\rm NN}}_2 - \ip{\X_n^{\rm NN} \otimes \X_n^{\rm NN}, \X_m \otimes \X_m}_2 \\
&= \langle \X_n,\X_m \rangle^2 + \langle \X_n^{\rm NN}, \X_m^{\rm NN} \rangle^2 - \langle \X_n, \X_m^{\rm NN} \rangle^2 - \langle \X_n^{\rm NN}, \X_m \rangle^2.
\end{align*}
Here, we have used that $\ip{\X_1 \otimes \X_2, \mathcal Y_1 \otimes \mathcal Y_2}_2 = \langle \X_1,\mathcal Y_1\rangle\,\langle \X_2,\mathcal Y_2\rangle$. Plugging this back into \eqref{eq:loss_function_expansion_L1}, we get that
\[
\ell = \frac{1}{N^2}\sum_{n=1}^N\sum_{m=1}^N \langle \X_n, \X_m \rangle^2 + \frac{1}{N^2}\sum_{n=1}^N\sum_{m=1}^N \langle \X_n^{\rm NN}, \X_m^{\rm NN} \rangle^2 - \frac{2}{N^2}\sum_{n=1}^N\sum_{m=1}^N \langle \X_n, \X_m^{\rm NN} \rangle^2.
\]
Thus, the loss $\ell$ can be obtained from the inner products $\langle \X_n,\X_m \rangle$, $\langle \X_n^{\rm NN}, \X_m^{\rm NN} \rangle$ and $\langle \X_n, \X_m^{\rm NN} \rangle$, without forming the high-order objects $\Chat_N$ or $\X_n \otimes \X_n$. When the fields are not centered, one can first center them by subtracting the mean and work with the centered fields. Any mean estimation method can be used for this purpose. We can also use a different approach, which allows us to simultaneously estimate the mean.

\subsection{Simultaneous estimation of the mean}\label{sec:loss_function_mean}

In the previous derivations, we assumed that the fields $\X_1,\ldots,\X_N$ as well as the fitted fields $\X_1^{\rm NN},\ldots,\X_N^{\rm NN}$ are centered, i.e., $\bar\X_N = \bar\X_N^{\rm NN} = 0$. When this is not the case, the loss function can be written as
\begin{align*}
\ell &= \verti{\frac{1}{N}\sum_{n=1}^N \X_n \otimes \X_n - \bar\X_N \otimes \bar\X_N - \frac{1}{N}\sum_{n=1}^N \X_n^{\rm NN} \otimes \X_n^{\rm NN} + \bar\X_N^{\rm NN} \otimes \bar\X_N^{\rm NN}}_2^2 \\
&\le 2\,\Bigg\{\verti{\frac{1}{N}\sum_{n=1}^N \X_n \otimes \X_n - \frac{1}{N}\sum_{n=1}^N \X_n^{\rm NN} \otimes \X_n^{\rm NN}}_2^2 + \verti{\bar\X_N \otimes \bar\X_N - \bar\X_N^{\rm NN} \otimes \bar\X_N^{\rm NN}}_2^2\Bigg\} \\
&= 2\,\big\{\widetilde\ell + \verti{\bar\X_N \otimes \bar\X_N - \bar\X_N^{\rm NN} \otimes \bar\X_N^{\rm NN}}_2^2\big\},
\end{align*}
where $\widetilde\ell$ is the loss function without centering. In this case, instead of minimizing the loss functions $\ell$, one can minimize
\begin{equation}\label{eq:covnet_alternate_formulation}
\widetilde \ell + \verti{\bar\X_N \otimes \bar\X_N - \bar\X^{\rm NN}_N \otimes \bar\X^{\rm NN}_N}_2^2.
\end{equation}
Of course, this is an upper bound to the actual criterion. But, in practice, this gives a reasonable approximation and produces reasonable results. As already demonstrated, $\widetilde\ell$ can be computed efficiently, without forming the high-order objects. Using similar techniques, it can be shown that
\begin{align*}
\verti{\bar\X_N \otimes \bar\X_N - \bar\X_N^{\rm NN} \otimes \bar\X_N^{\rm NN}}_2^2 &= \bigg(\frac{1}{N^2}\sum_{n=1}^N\sum_{m=1}^N \langle \X_n,\X_m\rangle \bigg)^2 + \bigg(\frac{1}{N^2}\sum_{n=1}^N\sum_{m=1}^N \langle \X_n^{\rm NN}, \X_m^{\rm NN} \rangle \bigg)^2 \\
&\kern40ex - 2 \bigg(\frac{1}{N^2}\sum_{n=1}^N\sum_{m=1}^N \langle \X_n, \X_m^{\rm NN} \rangle \bigg)^2,
\end{align*}
which again depends on the inner products, thus allowing for efficient computation of the criterion \eqref{eq:covnet_alternate_formulation}. Moreover, as a by-product, we get an estimate of the mean as follows. As before, we minimize the criterion w.r.t.\ the parameters of $g_1,\ldots,g_R$, and the new parameters $\xi_{n,r}$, to get estimated fields
\begin{equation*}
\X^{\rm NN}_n(\uvec) = \sum_{r=1}^R \hat\xi_{n,r}\,\widehat g_r(\uvec), \qquad n=1,\ldots,N.
\end{equation*}
The mean field is then obtained as the empirical mean based on the fitted fields, i.e.,
\begin{equation*}
\widehat m(\uvec) = \sum_{r=1}^R \bar{\hat\xi}_r\,\widehat g_r(\uvec), \qquad \uvec \in \Q,
\end{equation*}
where $\bar{\hat\xi}_r = N^{-1} \sum_{n=1}^N \hat{\xi}_{n,r}$. Thus, once we estimate the parameters of the model, we can get an estimate of the mean with negligible computational overhead.

\begin{figure}[t]
\centering
\begin{tabular}{c}
\small (a) Ex\,1 with $I=5$ \\
\includegraphics[width=0.9\linewidth]{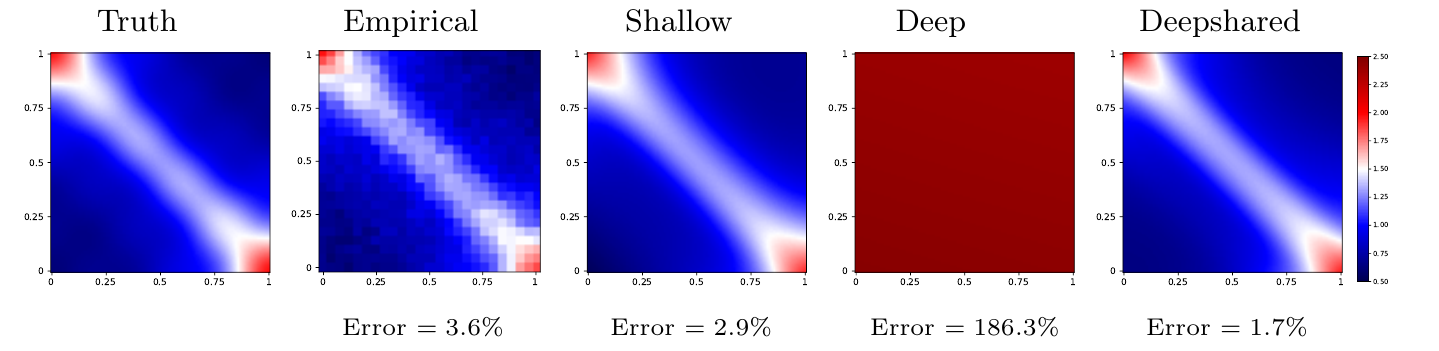} \\ [2pt]
\small (b) Ex\,1 with $I=10$ \\
\includegraphics[width=0.9\linewidth]{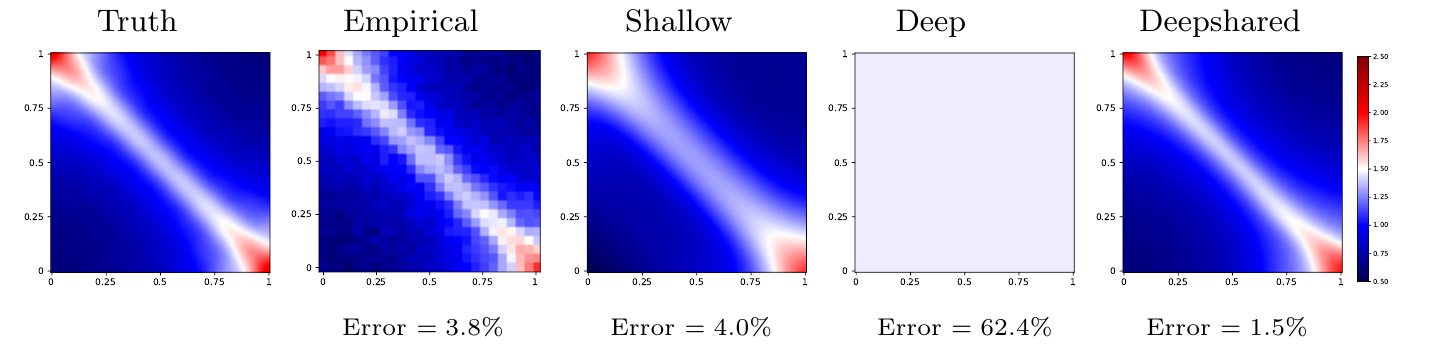} \\ [2pt]
\small (c) Ex\,1 with $I=20$ \\
\includegraphics[width=0.9\linewidth]{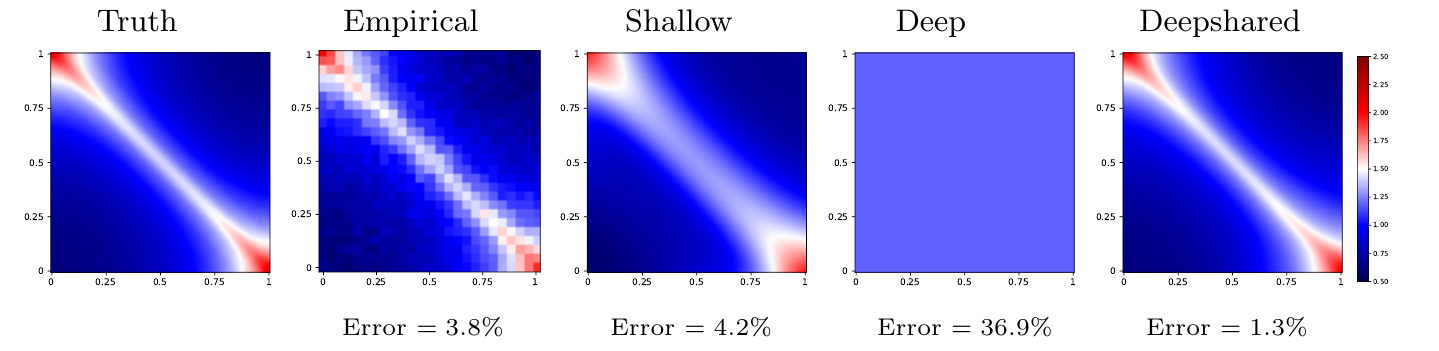}
\end{tabular}
\caption{\label{fig:mean_estimation_fourier}Estimated mean surfaces by different CovNet models in the Fourier basis example with different choices of $I$. The empirical mean surfaces are also shown. The relative estimation errors (in \%) are shown at the bottom of every figure. The plots are heatmaps of the 2D mean surfaces.}
\end{figure}

We demonstrate the usefulness of this method by means of a simulation study. We generated $500$ observations from the Gaussian process on $[0,1]^2$ with mean surface $m$ and covariance kernel $c$ on a regular grid of resolution $25 \times 25$. We took $c$ to be the Mat\'ern covariance kernel (Ex\,5 in Section~\ref{sec:empirical_demonstration}) with $\nu=0.01$. For the mean function, we considered two different setups:
\begin{enumerate}[{Ex}\,1]
\item \emph{Fourier basis}: $m(s,t) = 1 + \sum_{i=1}^I (-1)^i i^{-2} \phi_i(s)\,\phi_i(t)$, where $\phi_i(t) = \sqrt{2} \cos(i\pi t)$ for $i \ge 1$.
\item \emph{Legendre basis}: $m(s,t) = 1 + \sum_{i=1}^I (-1)^i (2i+1) \phi_i(s-0.5)\,\phi_i(t-0.5)$, where $\phi_i$ is the Legendre polynomial of degree $i$.
\end{enumerate}
In both the cases, the number of components $I$ controls the complexity of the mean. The estimated mean surfaces using different CovNet models, along with the true mean and the empirical mean, for different choices of $I$ are shown in Figures~\ref{fig:mean_estimation_fourier} and \ref{fig:mean_estimation_legendre}. For the CovNet models, we used cross-validation to select the hyperparameters. We also calculated the estimation error $\|m - \widehat m\|/\|m\|$ for all the estimators using Monte-Carlo integration, which are shown in the figures. These results clearly exhibit the usefulness of the proposed mean estimation technique. The estimated mean surfaces using the shallow and the deepshared models are very close to the truth. The advantage of having a purely functional form over the discretized empirical estimator is also visible in the figures and is reflected in the estimation errors. The results for the deep CovNet model are, however, not very promising. The deep CovNet model is not able to capture the truth at all. This is perhaps due to the complexity of the deep model, which suggests the need to regularize the deep CovNet model. Further evidence on the need for regularization is furnished by the deepshared model, which has the best performance in all the scenarios. In particular, the deepshared model is able to capture minute details which the shallow model has missed (see Figures~\ref{fig:mean_estimation_fourier}(b), \ref{fig:mean_estimation_fourier}(c), \ref{fig:mean_estimation_legendre}(c)).

One can notice that in \eqref{eq:covnet_alternate_formulation}, the criterion is composed of two components, one accounting for the covariance, while the other accounting for the mean. Instead of simply adding these two components, one may consider a weighted average of the two, depending on their importance. However, we will not pursue this idea further in this article.

\begin{figure}[t]
\centering
\begin{tabular}{c}
\small (a) Ex\,2 with $I=5$ \\
\includegraphics[width=0.9\linewidth]{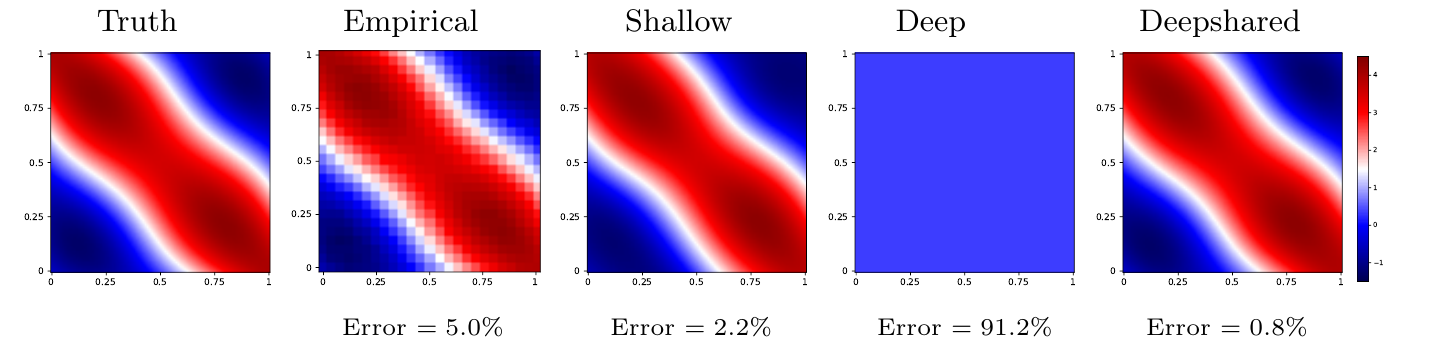} \\ [2pt]
\small (b) Ex\,2 with $I=10$ \\
\includegraphics[width=0.9\linewidth]{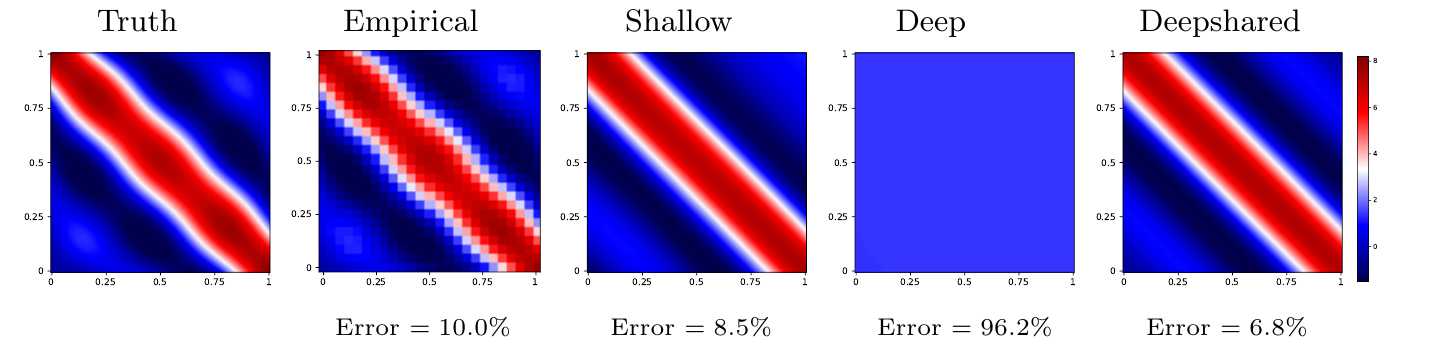} \\ [2pt]
\small (c) Ex\,2 with $I=20$ \\
\includegraphics[width=0.9\linewidth]{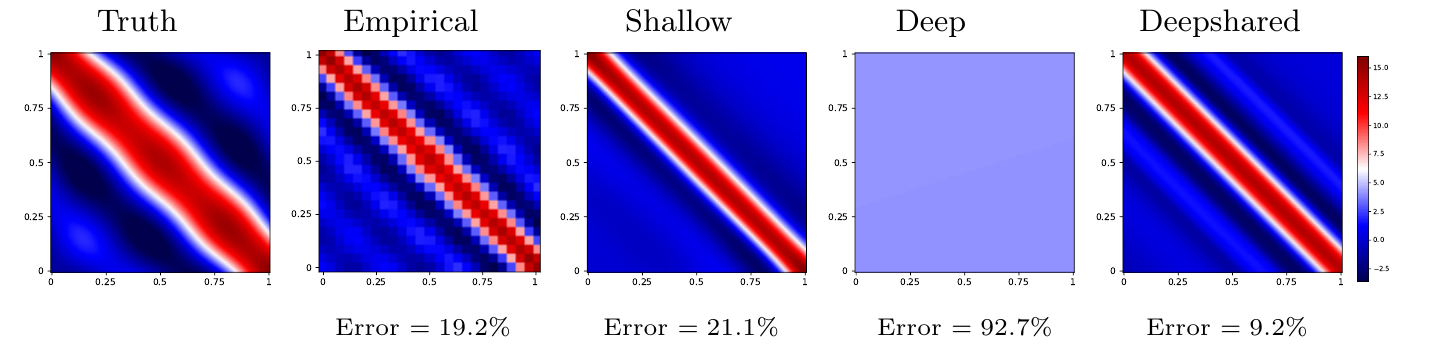}
\end{tabular}
\caption{\label{fig:mean_estimation_legendre}Estimated mean surfaces by different CovNet models in the Legendre basis example with different choices of $I$. The empirical mean surfaces are also shown. The relative estimation errors (in \%) are shown at the bottom of every figure. The plots are heatmaps of the 2D mean surfaces.}
\end{figure}

\subsection{Computing cost}\label{append:implementation_grid}

We end this section with details on implementing the method for measurements on a grid and related computing costs. Recall the generic version of a CovNet kernel \eqref{eq:covnet_kernel_generic}
\[
\sum_{r=1}^R\sum_{s=1}^R \lambda_{r,s}\,g_r(\uvec)\,g_s(\vvec), \qquad \uvec,\vvec \in \Q.
\]
During fitting the CovNet model, using the alternate formulation \eqref{eq:observation_network}--\eqref{eq:covnet_empirical_formulation}, we need to form the $N \times D$ matrix $\mathbf X^{\rm NN} = (\X_n^{\rm NN}(\uvec_i))_{n,i}$, where $\uvec_1,\ldots,\uvec_D$ are the grid points (see the discussion after Remark~\ref{remark:psd_by_design}). This can be done as follows:
\begin{align*}
\underbrace{\mathrm U}_{D \times d} \mapsto \underbrace{\mathrm Z}_{D \times R} = [\underbrace{g_1(\mathrm U)}_{D \times 1},\ldots,\underbrace{g_R(\mathrm U)}_{D \times 1}], \quad \mathbf X^{\rm NN} = \Xi\,\mathrm Z^\top,
\end{align*}
where $\mathrm U = (\uvec_1^\top,\ldots,\uvec_D^\top)^\top$ is the matrix with $\uvec_i$ in its $i$-th row, $g_r(\mathrm U) = (g_r(\uvec_1),\ldots,g_r(\uvec_D))^\top$, i.e., $g_r$ is applied row-wise for a matrix, and $\Xi = (\xi_{n,r})$ is the matrix of coefficients. Also, to compute the eigendecomposition of the fitted CovNet, we need to find the matrix $\widetilde{\mathrm G} = (\widetilde g(r,s))$ with $\widetilde g(r,s) = \int_{\Q} g_r(\uvec)\,g_s(\uvec)\diff\uvec$. As shown in Section~\ref{sec:eigendecomposition}, we approximate
\begin{align*}
\widetilde{\mathrm G} \simeq \bigg(\frac{1}{M} \sum_{j=1}^M g_r(\uvec_j)\,g_s(\uvec_j)\bigg)_{r,s},
\end{align*}
where $\uvec_1,\ldots,\uvec_M$ are i.i.d.\ uniformly distributed over $\Q$. Now, with the conventions above, the approximation to $\widetilde{\mathrm G}$ can be obtained as:
\begin{align*}
\underbrace{\mathrm U}_{M \times d} \mapsto \mathrm Z = \underbrace{[g_1(\mathrm U),\ldots,g_R(\mathrm U)]}_{M \times R}, \quad \widetilde{\mathrm G} = M^{-1} \mathrm Z^\top \mathrm Z,
\end{align*}
where $\mathrm U = (\uvec_1^\top,\ldots,\uvec_M^\top)^\top$ and $g_r(\mathrm U) = (g_r(\uvec_1),\ldots,g_r(\uvec_M))^\top$. Thus, in all practical implementations, the computational cost depends crucially on the cost associated with the evaluation $\mathrm U \mapsto [g_1(\mathrm U),\ldots,g_R(\mathrm U)]$ for a matrix $\mathrm U$.

\smallskip

Recall that, for the shallow CovNet model, $g_r(\uvec) = \sigma(\mathbf w_r^\top \uvec + b_r)$ for $r=1,\ldots,R$. Writing $\mathrm W = (\mathbf w_1^\top,\ldots,\mathbf w_R^\top)^\top$ and $\mathbf b = (b_1,\ldots,b_R)^\top$, it is easy to see that
\begin{align*}
\mathrm U \mapsto \mathrm Z = [g_1(\mathrm U),\ldots,g_R(\mathrm U)] \quad\Leftrightarrow\quad \mathrm Z = \sigma(\mathrm U\,\mathrm W^\top + \mathbf b^\top),
\end{align*}
where the activation $\sigma$ is applied component-wise. This amounts to a computing cost of $\O(DRd)$.

\smallskip

For the deep CovNet model, $g_r$ is defined recursively as in \eqref{eq:deep_neural_network} (the parameters depend on $r$). In this case, $g_r(\mathrm U)$ can be obtained from $\mathrm U$ recursively as
\begin{align*}
\mathrm U_1 &= \sigma(\mathrm U\,\mathrm W_{1,r}^\top + \mathbf b_{1,r}^\top),\\
\mathrm U_{l+1} &= \sigma(\mathrm U_l\,\mathrm W_{l+1,r}^\top + \mathbf b_{l+1,r}^\top), \quad l=1,\ldots,L-1, \\
g_r(\mathrm U) &= \sigma(\mathrm U_L \mathbf w_{L+1,r} + b_{L+1,r}),
\end{align*}
where $\mathrm W_{1,r} \in \R^{p_1 \times d}$, $\mathrm W_{2,r} \in \R^{p_2 \times p_1},\ldots,\mathrm W_{L,r} \in \R^{p_L \times p_{L-1}}$, $\mathbf b_{1,r} \in \R^{p_1},\ldots,\mathbf b_{L,r} \in \R^{p_L}, \mathbf w_{L+1,r} \in \R^{p_L}$, $b_{L+1,r} \in \R$ are the parameters associated with $g_r$. Thus, the computing cost for the evaluation $\mathrm U \mapsto g_r(\mathrm U)$ is $\O\big(D\sum_{l=0}^L p_l p_{l+1}\big)$, where $p_0=d$ and $p_{L+1} = 1$. However, this needs to be done for $r=1,\ldots,R$. Thus, the total cost for computing $\mathrm U \mapsto [g_1(\mathrm U),\ldots,g_R(\mathrm U)]$ for the deep CovNet model is $\O\big(DR\sum_{l=0}^L p_l p_{l+1}\big)$.

\smallskip

Finally, consider the deepshared CovNet model, where $g_1,\ldots,g_R$ are defined recursively as in \eqref{eq:deepshared_neural_network}. In this case $\mathrm Z = [g_1(\mathrm U),\ldots,g_R(\mathrm U)]$ can be computed recursively as follows:
\begin{align*}
\mathrm U_1 &= \sigma(\mathrm U\,\mathrm W_1^\top + \mathbf b_1^\top),\\
\mathrm U_{l+1} &= \sigma(\mathrm U_l\,\mathrm W_{l+1}^\top + \mathbf b_{l+1}^\top), \quad l=1,\ldots,L-1,\\
\mathrm Z &= \sigma(\mathrm U_L\,\mathrm W^\top + \mathbf b^\top),
\end{align*}
where $\mathrm W_1 \in \R^{p_1 \times d},\mathrm W_2 \in \R^{p_2 \times p_1},\ldots,\mathrm W_L \in \R^{p_L \times p_{L-1}}$, $\mathbf b_1 \in \R^{p_1},\ldots,\mathbf b_L \in \R^{p_L}$, $\mathrm W = (\bm\omega_1^\top,\ldots,\bm\omega_R^\top)^\top \in \R^{R \times p_L}$, and $\mathbf b = (\beta_1,\ldots,\beta_R)^\top \in \R^R$ are the parameters associated with $g_1,\ldots,g_R$. Thus, the cost for the evaluation $\mathrm U \mapsto [g_1(\mathrm U),\ldots,g_R(\mathrm U)]$ in this case is $\O\big(D \big(\sum_{l=0}^{L-1} p_lp_{l+1} + Rp_L\big)\big)$, where $p_0 = d$.

\medskip

After obtaining $\mathrm Z = [g_1(\mathrm U),\ldots,g_R(\mathrm U)] \in \R^{D \times R}$, we require an additional $\O(NDR)$ computations to get $\mathbf X^{\rm NN}  = \Xi\,\mathrm Z^\top$. Similarly, for $\widetilde{\mathrm G} = M^{-1}\mathrm Z^\top \mathrm Z$, an additional $\O(MR^2)$ computations are required after obtaining $\mathrm Z \in \R^{M \times R}$ (in this case, all the previous complexities should be understood with $D=M$). We summarize these in Table~\ref{table:computational_complexity} for the special case when $p_1=\cdots=p_L=R$ for the deep and the deepshared CovNet models (as used in our numerical studies). Note that the complexities are always linear in the grid size $D$ or $M$. 

\begin{table}
\centering
\caption{Computational complexities associated with different tasks for different CovNet models.\\ For the deep and the deepshared models, we consider the case when $p_1=\cdots=p_L=R$.\label{table:computational_complexity}}
\begin{tabular}{lccccccc}
&& && Method && \\
Complexity && Shallow CovNet && Deep CovNet && Deepshared CovNet \\[3pt]
Evaluation && $\O\big(DRd\big)$ && $\O\big(DR^2(LR+d)\big)$ && $\O\big(DR(LR+d)\big)$ \\[2pt]
Obtaining $\mathbf X^{\rm NN}$ && $\O\big(DR(N+d)\big)$ && $\O\big(DR(LR^2+dR+N)\big)$ && $\O\big(DR(LR+d+N)\big)$ \\[2pt]
Approximating $\widetilde{\mathrm G}$ && $\O\big(MR(R+d)\big)$ && $\O\big(MR^2(LR+d)\big)$ && $\O\big(MR(LR+d)\big)$
\end{tabular}
\end{table}

\section{Covering numbers}\label{supp:covering_number}

In this section, we derive certain covering numbers that will be instrumental in the proofs of our asymptotic results. We give a brief overview of covering numbers and derive bounds on the covering numbers of some classes of functions useful in our context. The main results of this section are Lemmas~\ref{lemma:cover_covnet_shallow}, \ref{lemma:cover_covnet_deep} and \ref{lemma:cover_covnet_deepshared}, which give upper bounds on the covering numbers for the restricted classes of shallow, deep and deepshared CovNet operators, respectively. The reader can skip this section and go to the next section without loss of continuation.

We start with the definition of covering numbers for general metric spaces. More details can be found in \citet[][Chapter 10]{anthony1999}, \citet[][Chapter 9]{gyorfi2002}, \citet[][Chapter 5]{wainwright2019}. Let $(S,\rho)$ be a metric space, and let $T$ be a subset of $S$. For $\epsilon > 0$, a finite subset $T^\prime$ of $T$ is called an $\epsilon$-cover of $T$ w.r.t.\ the metric $\rho$ if for every $t \in T$ we can find an $t^\prime \in T^\prime$ such that $\rho(t,t^\prime) \le \epsilon$. The $\epsilon$-covering number of $T$ w.r.t.\ $\rho$ is the cardinality of the smallest $\epsilon$-cover of $T$. We denote the covering number by $\cover(\epsilon,T,\rho)$. If no finite $\epsilon$-cover of $T$ exists, then the covering number is defined to be $\infty$.

In our present context, we are interested in the covering numbers of the restricted classes of CovNet operators \eqref{eq:covnet_class_bounded} w.r.t.\ the Hilbert-Schmidt norm. Because of the equivalence between an operator $\G$ and the associated kernel $g$, we get $\vertj{\G}_2 = \|g\|_{\L_2(\Q \times \Q)}$. This shows that if we define $\F$ to be a class of non-negative definite kernels and $\widetilde\F$ to be the corresponding class of integral operators, then 
\begin{equation}\label{eq:operator_kernel_equivalence}
\cover(\epsilon,\widetilde\F,\verti{\cdot}_2) = \cover(\epsilon,\F,\|\cdot\|_{\L_2(\Q \times \Q)}).
\end{equation}
Thus, it is enough to find an upper bound on the covering numbers of the classes of CovNet kernels w.r.t.\ the $\L_2(\Q \times \Q)$ norm. We start by deriving a general bounding result.

\begin{theorem}\label{thm:covnet_kernel_class_general_bound}
Let $G$ be a class of functions from $\Q$ to $\R$ with $\|g\| \le M$ for every $g \in G$. For a positive real number $\lambda_N$ and an integer $R$, define the following class of functions from $\Q \times \Q$ to $\R$:
\[
\F_{R,G,\lambda_N} = \Bigg\{\sum_{r=1}^R\sum_{s=1}^R \lambda_{r,s}\,g_r(\uvec)\,g_s(\vvec): g_r \in G, 0 \preceq \Lambda=(\lambda_{r,s}) \preceq \lambda_N\mathrm{I}_R \Bigg\}.
\]
Then, the $\L_2$ covering number of $\F_{R,G,\lambda_N}$ is bounded as
\[
\cover\left(\epsilon,\F_{R,G,\lambda_N},\|\cdot\|_{\L_2(\Q \times \Q)}\right) \le \left[\frac{M^2R^2\lambda_N+\epsilon}{\epsilon} \times \left\{\frac{2e(4M^2R^2\lambda_N+\epsilon)}{\epsilon} \times \cover\left(\frac{\epsilon}{8MR^2\lambda_N},G,\|\cdot\|\right)\right\}^R\right]^R.
\]
\end{theorem}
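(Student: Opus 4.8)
The plan is to build an explicit $\epsilon$-cover of $\F_{R,G,\lambda_N}$ by diagonalising the coefficient matrix and then covering, separately, the eigenvalues, the eigenvectors, and the constituent functions. Fix a kernel $f=\sum_{r,s}\lambda_{r,s}\,g_r\otimes g_s\in\F_{R,G,\lambda_N}$. Since $\Lambda=(\lambda_{r,s})$ is positive semi-definite with $\Lambda\preceq\lambda_N\mathrm I_R$, I would invoke its spectral decomposition $\Lambda=\sum_{k=1}^R\eta_k\,\mathbf e_k\mathbf e_k^\top$, where $0\le\eta_k\le\lambda_N$ and $\{\mathbf e_k\}_{k=1}^R$ is an orthonormal basis of $\R^R$. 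Setting $\phi_k=\sum_{r=1}^R e_{k,r}\,g_r$, this recasts every kernel in the class as a sum of $R$ rank-one pieces,
\[
f=\sum_{k=1}^R\eta_k\,\phi_k\otimes\phi_k .
\]
The crude norm bound $\|\phi_k\|\le M\sum_r|e_{k,r}|\le MR$ (using $\|g_r\|\le M$ and $|e_{k,r}|\le\|\mathbf e_k\|=1$) will be the workhorse throughout.

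Next I would discretise the three ingredients of each rank-one piece. For each $k=1,\dots,R$: (a) cover the eigenvalue $\eta_k\in[0,\lambda_N]$ on a scalar grid whose mesh is chosen so that its contribution, weighted by $\|\phi_k\|^2\le M^2R^2$, stays within budget, costing $A=\tfrac{M^2R^2\lambda_N+\epsilon}{\epsilon}$ points per eigenvalue; (b) cover the eigenvector $\mathbf e_k$ in the unit ball of $\R^R$ by a Euclidean grid, which by the standard volumetric estimate for balls in $\R^R$ costs at most $B^R$ points with $B=\tfrac{2e(4M^2R^2\lambda_N+\epsilon)}{\epsilon}$; and (c) cover each of the $R$ functions $g_1,\dots,g_R$ using the given cover of $G$ at resolution $\epsilon/(8MR^2\lambda_N)$, costing $C^R$ points with $C=\cover\!\big(\epsilon/(8MR^2\lambda_N),G,\|\cdot\|\big)$. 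Crucially, I would \emph{not} try to exploit the fact that the same $g_1,\dots,g_R$ are shared across the $R$ components: allowing the function approximants to be chosen independently for each $k$ only enlarges the cover, which is harmless for an upper bound and is exactly what produces the $R^2$ in the exponents. Multiplying the per-component count $A\cdot(BC)^R$ and raising to the power $R$ for the $R$ components yields the asserted bound $\big[A\{BC\}^R\big]^R$.

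The analytic core is the error propagation, which I would carry out by telescoping. Replacing $(\eta_k,\mathbf e_k,g_1,\dots,g_R)$ by their grid counterparts $(\eta_k^*,\mathbf e_k^*,g_1^*,\dots,g_R^*)$ and using \eqref{eq:product_difference_bound_L2} together with the elementary estimate $\|\phi\otimes\phi-h\otimes h\|_{\L_2(\Q\times\Q)}\le\|\phi-h\|(\|\phi\|+\|h\|)$, the deviation of a single component splits into an eigenvalue term (of size $|\eta_k-\eta_k^*|\,\|\phi_k\|^2$), an eigenvector term ($\eta_k^*\,\|\phi_k\|\,\|\phi_k-\phi_k^*\|$, driven by $\sum_r|e_{k,r}-e_{k,r}^*|$), and a function term (driven by $\sum_r|e_{k,r}^*|\,\|g_r-g_r^*\|$). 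Bounding $\sum_r|e_{k,r}-e_{k,r}^*|$ and $\sum_r|e_{k,r}^*|$ crudely by $R\|\mathbf e_k-\mathbf e_k^*\|$ and $R$ respectively, I would calibrate the three mesh sizes so that each term absorbs a fixed fraction of the budget, and then sum the three contributions over the $R$ components to keep the total $\L_2(\Q\times\Q)$ error at most $\epsilon$. The main obstacle is precisely this bookkeeping: choosing the resolutions so that the three error sources telescope into a single $\epsilon$ while producing the clean constants $M^2R^2\lambda_N$, $4M^2R^2\lambda_N$, and $8MR^2\lambda_N$, and checking that the grid elements can be taken to obey the same norm bounds, so that the symmetric perturbation estimates apply to both $f$ and its approximant.
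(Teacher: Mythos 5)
Your overall strategy---spectral decomposition of $\Lambda$, then covering the eigenvalues, the coefficient vectors, and the constituent functions separately---is essentially the paper's own proof (which routes the same three steps through its covering lemmas for positive linear combinations, product kernels, and $\ell_1$-bounded combinations); replacing the paper's $\ell_1$-ball coefficient cover by a Euclidean volumetric net of the unit ball is a harmless variation. The genuine problem is quantitative: the ``workhorse'' bounds you commit to, namely $\sum_r|e_{k,r}|\le R$ (hence $\|\phi_k\|\le MR$), $\sum_r|e_{k,r}-e^*_{k,r}|\le R\,\|\mathbf e_k-\mathbf e_k^*\|$, and $\sum_r|e^*_{k,r}|\le R$, are too crude to produce the constants in the statement, and they make your own cost accounting inconsistent. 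Take the eigenvalue grid: with weight $\|\phi_k\|^2\le M^2R^2$ and $R$ components whose errors add, keeping the eigenvalue contribution within a fixed fraction of $\epsilon$ forces a mesh of order $\epsilon/(M^2R^3)$, so each eigenvalue costs on the order of $M^2R^3\lambda_N/\epsilon$ grid points---not the claimed $A=(M^2R^2\lambda_N+\epsilon)/\epsilon$. The same spurious factor of $R$ propagates to the eigenvector net and to the function resolution, which becomes $\epsilon/(8MR^3\lambda_N)$ rather than $\epsilon/(8MR^2\lambda_N)$. As written, your argument therefore proves the theorem with $R^3$ in place of $R^2$ in all three constants, a strictly weaker bound than the one stated.

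The missing idea is Cauchy--Schwarz. Since the eigenvectors are orthonormal, $\sum_{r}|e_{k,r}|\le\sqrt{R}\,\|\mathbf e_k\|_2=\sqrt R$, so $\|\phi_k\|\le M\sqrt R$; likewise $\sum_r|e_{k,r}-e^*_{k,r}|\le\sqrt R\,\|\mathbf e_k-\mathbf e_k^*\|_2$ and $\sum_r|e_{k,r}^*|\le\sqrt R$ (the latter because the net lives in the unit ball). With these sharper bounds your telescoping delivers exactly the stated constants: give the eigenvalue grid half the budget, so the per-eigenvalue mesh is $\epsilon/(2M^2R^2)$ and each eigenvalue costs $\lambda_N M^2R^2/\epsilon+1=A$ points; give a quarter each to the eigenvector net, whose mesh $\epsilon/(8M^2R^2\lambda_N)$ yields at most $1+16M^2R^2\lambda_N/\epsilon\le B$ points per coordinate direction of the volumetric estimate, and to the functions, whose required resolution comes out exactly as $\epsilon/(8MR^2\lambda_N)$, i.e.\ cost $C$ per function. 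Multiplying $A\cdot B^R\cdot C^R$ per component over the $R$ components then gives the claimed $\big[A(BC)^R\big]^R$. This $\sqrt R$-saving step is precisely what the paper does when it relaxes $\sum_r a_r^2=1$ to $\sum_r|a_r|\le\sqrt R$ before covering; without it, the constants $M^2R^2\lambda_N$, $4M^2R^2\lambda_N$ and $8MR^2\lambda_N$ are unattainable. (A further, minor, point your write-up shares with the paper: the approximants built from perturbed, no-longer-orthonormal vectors need not lie in $\F_{R,G,\lambda_N}$ itself, so strictly speaking one obtains an external cover; converting it to an internal one in the sense of the paper's definition costs only a factor of two in resolution.)
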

We first state and prove a few lemmas, which will be used in the proof of the theorem.
\begin{lemma}\label{lemma:cover_linear_combination_positive}
Let $\mathcal I$ be a compact set and $\mathcal H$ be a collection of functions from $\mathcal I$ to $\R$ such that $\|h\|_{\L_2(\mathcal I)} \le M$ for all $h \in \mathcal H$. Define $\mathcal E_{R,\mathcal H,\gamma} = \big\{\sum_{i=1}^R \alpha_i h_i : h_i \in \mathcal H, 0 \le \alpha_i \le \gamma\big\}$. Then, for any $\epsilon_1,\epsilon_2 > 0$,
\[
\cover(\epsilon_1+\epsilon_2,\mathcal E_{R,\mathcal H,\gamma},\|\cdot\|_{\L_2(\mathcal I)}) \le \left\{\left(\frac{MR\gamma}{2\epsilon_2}+1\right)\times\cover\left(\frac{\epsilon_1}{R\gamma},\mathcal H,\|\cdot\|_{\L_2(\mathcal I)}\right)\right\}^R.
\]
\end{lemma}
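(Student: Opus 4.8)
The plan is to construct an $(\epsilon_1+\epsilon_2)$-cover of $\mathcal E_{R,\mathcal H,\gamma}$ as a \emph{product} of two independent covers — one for the scalar coefficients $\alpha_i$ and one for the functions $h_i$ — and then verify the covering radius through a triangle-inequality decomposition that cleanly separates the coefficient error from the function error. The radii of the two component covers will be tuned so that their contributions come out to exactly $\epsilon_1$ and $\epsilon_2$.

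First I would fix the component covers. Let $\mathcal H^\prime$ be a minimal $\delta$-cover of $\mathcal H$ with $\delta = \epsilon_1/(R\gamma)$, so that $|\mathcal H^\prime| = \cover(\epsilon_1/(R\gamma),\mathcal H,\|\cdot\|_{\L_2(\mathcal I)})$; crucially, since the covering number here is defined via covers that are \emph{subsets} of the set being covered, every $\hat h \in \mathcal H^\prime$ lies in $\mathcal H$ and therefore still satisfies $\|\hat h\|_{\L_2(\mathcal I)} \le M$. For the coefficients, let $A$ be a minimal $\eta$-cover of $[0,\gamma]$ with $\eta = \epsilon_2/(MR)$; covering an interval of length $\gamma$ by balls of radius $\eta$ gives $|A| \le \gamma/(2\eta)+1 = MR\gamma/(2\epsilon_2)+1$. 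The candidate cover consists of all $\sum_{i=1}^R \hat\alpha_i\,\hat h_i$ with $\hat\alpha_i \in A$ and $\hat h_i \in \mathcal H^\prime$, of which there are at most $(|A|\,|\mathcal H^\prime|)^R$, matching the asserted bound once the values of $\delta$ and $\eta$ are substituted.

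Next I would check the covering radius. Given $f = \sum_{i=1}^R \alpha_i h_i$, choose for each $i$ a nearest $\hat\alpha_i \in A$ and $\hat h_i \in \mathcal H^\prime$ and set $\hat f = \sum_{i=1}^R \hat\alpha_i \hat h_i$. Using the triangle inequality together with the identity $\alpha_i h_i - \hat\alpha_i \hat h_i = \alpha_i(h_i-\hat h_i) + (\alpha_i-\hat\alpha_i)\hat h_i$,
\[
\|f-\hat f\|_{\L_2(\mathcal I)} \le \sum_{i=1}^R \alpha_i\,\|h_i-\hat h_i\|_{\L_2(\mathcal I)} + \sum_{i=1}^R |\alpha_i-\hat\alpha_i|\,\|\hat h_i\|_{\L_2(\mathcal I)}.
\]
Bounding $\alpha_i \le \gamma$ and $\|h_i-\hat h_i\| \le \epsilon_1/(R\gamma)$ controls the first sum by $R\cdot\gamma\cdot\epsilon_1/(R\gamma) = \epsilon_1$, while $\|\hat h_i\| \le M$ and $|\alpha_i-\hat\alpha_i| \le \epsilon_2/(MR)$ controls the second by $R\cdot M\cdot\epsilon_2/(MR) = \epsilon_2$, so $\|f-\hat f\| \le \epsilon_1+\epsilon_2$ as required.

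The only step I would be genuinely careful about is the second term of the decomposition: it hinges on the replacement functions $\hat h_i$ obeying the same norm bound $\|\hat h_i\| \le M$ as the originals. This is exactly why I flagged above that the cover $\mathcal H^\prime$ is a subset of $\mathcal H$, so the bound is inherited rather than assumed. The rest — calibrating $\delta$ and $\eta$ so the two error budgets land on $\epsilon_1$ and $\epsilon_2$, and multiplying the counts over the $R$ independent indices — is routine bookkeeping of the standard separation-of-variables covering argument.
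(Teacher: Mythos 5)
Your proof is correct and follows essentially the same argument as the paper's: a product cover built from an internal cover of $\mathcal H$ at radius $\epsilon_1/(R\gamma)$ and a cover of $[0,\gamma]$ at radius $\epsilon_2/(MR)$, combined via the standard add-and-subtract decomposition and the bound $\cover(\eta,[0,\gamma],|\cdot|)\le \gamma/(2\eta)+1$. The only (cosmetic) difference is that you split $\alpha_i h_i-\hat\alpha_i\hat h_i$ as $\alpha_i(h_i-\hat h_i)+(\alpha_i-\hat\alpha_i)\hat h_i$, so the norm bound $M$ must be inherited by the cover elements $\hat h_i$ (which you correctly justify via the internal-cover convention), whereas the paper uses the mirror-image split $\alpha_i'(h_i-h_i')+(\alpha_i-\alpha_i')h_i$ and instead needs the cover coefficients to stay in $[0,\gamma]$.
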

\begin{proof}
Without loss of generality, we assume that all the covering numbers defined subsequently are finite. Fix $\epsilon_1,\epsilon_2 > 0$ and let $N_1 = \cover(\epsilon_1,\mathcal H,\|\cdot\|_{\L_2(\mathcal I)})$, $N_2 = \cover(\epsilon_2,[0,\gamma],|\cdot|)$. Thus, we can find a set ${\mathcal H}^\prime_{\epsilon_1} \subset \mathcal H$ of cardinality $N_1$ and a set of real numbers $A^\prime_{\epsilon_2} \subset [0,\gamma]$ of cardinality $N_2$ such that for every $h \in \mathcal H$ and every $\alpha \in [0,\gamma]$, there exists $h^\prime \in {\mathcal H}^\prime_{\epsilon_1}$ and $\alpha^\prime \in A^\prime_{\epsilon_2}$ such that $\|h - h^\prime\|_{\L_2(\mathcal I)} \le \epsilon_1$ and $|\alpha - \alpha^\prime| \le \epsilon_2$. Now, let $\sum_{i=1}^R \alpha_i h_i$ be an element of $\mathcal E_{R,\gamma}$. Let $h_1^\prime,\ldots,h_R^\prime \in {\mathcal H}^\prime_{\epsilon_1}$ and $\alpha_1^\prime,\ldots,\alpha_R^\prime \in A^\prime_{\epsilon_2}$ be such that $\|h_i - h_i^\prime\|_{\L_2(\mathcal I)} \le \epsilon_1$ and $|\alpha_i - \alpha_i^\prime| \le \epsilon_2$ for $i=1,\ldots,R$. Now,
\begin{align*}
\left\|\sum_{i=1}^R \alpha_i h_i - \sum_{i=1}^R \alpha_i^\prime h_i^\prime\right\|_{\L_2(\mathcal I)} &= \left\|\sum_{i=1}^R (\alpha_i - \alpha_i^\prime) h_i + \sum_{i=1}^R \alpha_i^\prime (h_i - h_i^\prime)\right\|_{\L_2(\mathcal I)} \\
&\le \sum_{i=1}^R \underbrace{|\alpha_i-\alpha_i^\prime|}_{\le \epsilon_2}\,\underbrace{\|h_i\|_{\L_2(\mathcal I)}}_{\le M} + \sum_{i=1}^R \underbrace{\alpha_i^\prime}_{\le \gamma}\, \underbrace{\|h_i-h_i^\prime\|_{\L_2(\mathcal I)}}_{\le \epsilon_1} \\
&= R(\gamma\epsilon_1+M\epsilon_2).
\end{align*}
This shows that ${\mathcal E}^\prime_{\epsilon_1,\epsilon_2} := \big\{\sum_{i=1}^R \alpha_i^\prime h_i^\prime : \alpha_i^\prime \in A_{\epsilon_2}^\prime, h_i^\prime \in {\mathcal H}_{\epsilon_1}^\prime\big\}$ is an $\L_2$-cover for $\mathcal E_{R,\gamma}$ of size $R(\gamma\epsilon_1+M\epsilon_2)$. Thus,
\[
\cover\big(R(\gamma\epsilon_1+M\epsilon_2),\mathcal E_{R,\gamma},\|\cdot\|_{\L_2(\mathcal I)}\big) \le |{\mathcal E}^\prime_{\epsilon_1,\epsilon_2}| = (N_1N_2)^R.
\]
Note that $N_2 = \cover(\epsilon_2,[0,\gamma],|\cdot|) \le \gamma/(2\epsilon_2)+1$ \citep[][Example 5.2]{wainwright2019}, which shows that
\[
\cover\big(R(\gamma\epsilon_1+M\epsilon_2),\mathcal E_{R,\gamma},\|\cdot\|_{\L_2(\mathcal I)}\big) \le \left\{\left(\frac{\gamma}{2\epsilon_2}+1\right)\times \cover(\epsilon_1,\mathcal H,\|\cdot\|_{\L_2(\mathcal I)}\right\}^R.
\]
The proof is complete after transforming $\epsilon_1 \mapsto \epsilon_1/(R\gamma)$ and $\epsilon_2 \mapsto \epsilon_2/(RM)$.
\end{proof}
As a corollary, taking $\epsilon_1 = \epsilon_2 = \epsilon/2$, we get the following.
\begin{corollary}\label{cor:cover_linear_combination_positive2}
With $\mathcal E_{R,\mathcal H,\gamma}$ defined as in Lemma~\ref{lemma:cover_linear_combination_positive}, for any $\epsilon > 0$,
\[
\cover(\epsilon,\mathcal E_{R,\mathcal H,\gamma},\|\cdot\|_{\L_2(\mathcal I)}) \le \left\{\left(\frac{MR\gamma}{\epsilon}+1\right)\times\cover\left(\frac{\epsilon}{2R\gamma},\mathcal H,\|\cdot\|_{\L_2(\mathcal I)}\right)\right\}^R.
\]
\end{corollary}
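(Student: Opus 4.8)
The statement is an immediate specialization of Lemma~\ref{lemma:cover_linear_combination_positive}, so the plan is simply to instantiate that lemma at a convenient symmetric choice of its two tolerance parameters and simplify. First I would set $\epsilon_1 = \epsilon_2 = \epsilon/2$ in the lemma. With this choice the governing radius on the left becomes $\epsilon_1 + \epsilon_2 = \epsilon$, which is exactly the tolerance appearing in the corollary, so the left-hand side $\cover(\epsilon, \mathcal E_{R,\mathcal H,\gamma}, \|\cdot\|_{\L_2(\mathcal I)})$ matches the claim verbatim without any further work.

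Next I would substitute $\epsilon_1 = \epsilon_2 = \epsilon/2$ into the right-hand side of the lemma and carry out the elementary arithmetic. The first factor becomes $MR\gamma/(2\epsilon_2) + 1 = MR\gamma/\epsilon + 1$, since $2\epsilon_2 = \epsilon$. The covering radius for the class $\mathcal H$ becomes $\epsilon_1/(R\gamma) = \epsilon/(2R\gamma)$. Collecting these inside the $R$-th power reproduces $\{(MR\gamma/\epsilon + 1)\,\cover(\epsilon/(2R\gamma), \mathcal H, \|\cdot\|_{\L_2(\mathcal I)})\}^R$, which is precisely the claimed bound.

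The main point to note is that there is essentially no obstacle here: the corollary is a direct reparametrization of Lemma~\ref{lemma:cover_linear_combination_positive}, the only content being the two arithmetic simplifications above. The symmetric split $\epsilon_1 = \epsilon_2 = \epsilon/2$ is chosen purely for cleanliness of the resulting expression; in principle one could optimize over all decompositions $\epsilon_1 + \epsilon_2 = \epsilon$ to sharpen the constants, but since this bound is only used up to polynomial factors in the downstream covering-number estimates (for instance in the proof of Theorem~\ref{thm:covnet_kernel_class_general_bound}), the symmetric choice is entirely adequate and has the advantage of yielding the convenient closed form stated in the corollary.
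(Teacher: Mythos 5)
Your proposal is correct and is exactly the paper's argument: the paper obtains this corollary from Lemma~\ref{lemma:cover_linear_combination_positive} by the same symmetric substitution $\epsilon_1=\epsilon_2=\epsilon/2$, and your arithmetic (first factor $MR\gamma/(2\epsilon_2)+1=MR\gamma/\epsilon+1$, inner radius $\epsilon_1/(R\gamma)=\epsilon/(2R\gamma)$) reproduces the stated bound verbatim. Nothing further is needed.
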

We will also use a modified version of Lemma~16.6 in \cite{gyorfi2002}. The proof follows straightforwardly from the proof in \cite{gyorfi2002}. We give a brief sketch of the proof here for completeness.
\begin{lemma}\label{lemma:cover_linear_combination_l1}
Let $\mathcal I$ be a compact set and $\mathcal H$ be a class of functions from $\mathcal I$ to $\R$ with $\|h\|_{\L_2(\mathcal I)} \le M$ for all $h \in \mathcal H$. For a positive real number $\gamma$ and an integer $R$, define the class of functions
\[
\mathcal E_{R,\mathcal H,\gamma} = \left\{\sum_{i=1}^R \alpha_i\,h_i : h_i \in \mathcal H, \sum_{i=1}^R |\alpha_i| \le \gamma\right\}.
\]
Then, for any $\epsilon_1,\epsilon_2 > 0$, we have
\[
\cover(\epsilon_1 + \epsilon_2, \mathcal E_{R,\mathcal H,\gamma}, \|\cdot\|_{\L_2(\mathcal I)}) \le \left\{\frac{e(M\gamma+2\epsilon_2)}{\epsilon_2} \times \cover\left(\frac{\epsilon_1}{\gamma},\mathcal H,\|\cdot\|_{\L_2(\mathcal I)}\right)\right\}^R.
\]
\end{lemma}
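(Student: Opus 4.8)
The plan is to mimic the structure of the proof of Lemma~\ref{lemma:cover_linear_combination_positive}, but to account for the \emph{joint} $\ell_1$ constraint $\sum_{i=1}^R |\alpha_i| \le \gamma$ on the coefficients rather than a coordinate-wise box constraint. The idea is to cover the function class $\mathcal H$ and the coefficient body separately, and then combine the two covers multiplicatively. First I would fix covers at the right scales: let $N_1 = \cover(\epsilon_1/\gamma, \mathcal H, \|\cdot\|_{\L_2(\mathcal I)})$ with a corresponding cover $\mathcal H' \subseteq \mathcal H$, so every $h \in \mathcal H$ admits $h' \in \mathcal H'$ with $\|h - h'\|_{\L_2(\mathcal I)} \le \epsilon_1/\gamma$ and $\|h'\|_{\L_2(\mathcal I)} \le M$ (the latter because $\mathcal H' \subseteq \mathcal H$, using the paper's convention that covers are subsets). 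Separately, write $B_\gamma = \{\bm\alpha \in \R^R : \|\bm\alpha\|_1 \le \gamma\}$ and let $A$ be an $(\epsilon_2/M)$-cover of $B_\gamma$ in the $\ell_1$ metric.

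Next, given $f = \sum_{i=1}^R \alpha_i h_i \in \mathcal E_{R,\mathcal H,\gamma}$, I would pick $h_i' \in \mathcal H'$ with $\|h_i - h_i'\|_{\L_2(\mathcal I)} \le \epsilon_1/\gamma$ for each $i$ and $\bm\alpha' = (\alpha_1',\ldots,\alpha_R') \in A$ with $\|\bm\alpha - \bm\alpha'\|_1 \le \epsilon_2/M$, and set $\tilde f = \sum_{i=1}^R \alpha_i' h_i'$. A triangle-inequality split separates the two error sources:
\[
\Big\|f - \tilde f\Big\|_{\L_2(\mathcal I)} \le \Big\|\sum_{i=1}^R \alpha_i (h_i - h_i')\Big\|_{\L_2(\mathcal I)} + \Big\|\sum_{i=1}^R (\alpha_i - \alpha_i') h_i'\Big\|_{\L_2(\mathcal I)}.
\]
For the first term I would bound $\sum_i |\alpha_i| \|h_i - h_i'\|_{\L_2(\mathcal I)} \le (\epsilon_1/\gamma)\sum_i |\alpha_i| \le \epsilon_1$, crucially exploiting the constraint $\sum_i |\alpha_i| \le \gamma$; for the second term, $\sum_i |\alpha_i - \alpha_i'| \|h_i'\|_{\L_2(\mathcal I)} \le M \|\bm\alpha - \bm\alpha'\|_1 \le \epsilon_2$. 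Hence the collection of all such $\tilde f$ is an $(\epsilon_1 + \epsilon_2)$-cover of $\mathcal E_{R,\mathcal H,\gamma}$, of cardinality at most $N_1^R \cdot |A|$, since there are $N_1^R$ choices of $(h_1',\ldots,h_R')$ and $|A|$ choices of $\bm\alpha'$.

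It remains to bound $|A| = \cover(\epsilon_2/M, B_\gamma, \|\cdot\|_1)$, and this is the step that fixes the final constant and is the part requiring care. The hard part will be producing the factor of $e$ cleanly; the simplest route is a standard volumetric argument for the $\ell_1$-ball, giving $\cover(\delta, B_\gamma, \|\cdot\|_1) \le (1 + 2\gamma/\delta)^R$. With $\delta = \epsilon_2/M$ this reads $(1 + 2M\gamma/\epsilon_2)^R \le \big(e(M\gamma + 2\epsilon_2)/\epsilon_2\big)^R$, the last inequality being the elementary estimate $1 + 2t \le et + 2e$ for $t = M\gamma/\epsilon_2 \ge 0$ (valid since $e > 2$). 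This is exactly the coefficient-covering bound appearing in Lemma~16.6 of \cite{gyorfi2002}, so the ``suitable adaptation'' amounts to carrying that argument through with the $\L_2(\mathcal I)$ norm in place of their ambient norm. Combining the two factors yields
\[
\cover(\epsilon_1 + \epsilon_2, \mathcal E_{R,\mathcal H,\gamma}, \|\cdot\|_{\L_2(\mathcal I)}) \le \Big(N_1 \cdot \frac{e(M\gamma + 2\epsilon_2)}{\epsilon_2}\Big)^R,
\]
which is the claim. The only genuinely delicate point is pinning down the constant in the $\ell_1$-ball covering number; the rest is the routine two-part covering bookkeeping, which is why the statement can be obtained directly from \cite{gyorfi2002} with only the obvious modifications.
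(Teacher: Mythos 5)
Your proof is correct and follows essentially the same route as the paper's: a cover of $\mathcal H$ at scale $\epsilon_1/\gamma$ combined with a cover of the $\ell_1$-ball of coefficients at scale $\epsilon_2/M$, the same triangle-inequality split exploiting $\sum_i|\alpha_i|\le\gamma$ and $\|h_i'\|_{\L_2(\mathcal I)}\le M$, and a multiplicative count of the resulting cover followed by rescaling. The only difference is cosmetic: where the paper quotes the bound $\big(e(\gamma+2\epsilon_2)/\epsilon_2\big)^R$ on the coefficient cover from the proof of Lemma~16.6 in \cite{gyorfi2002}, you derive the volumetric bound $(1+2\gamma/\delta)^R$ for the $\ell_1$-ball and check the elementary inequality $1+2t\le et+2e$, a valid and self-contained substitute that yields the same constant (and, as a bonus, produces an internal cover consistent with the paper's definition of covering numbers).
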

\begin{proof}
Assume without loss of generality that all the covering numbers defined subsequently are finite. Let $N = \cover(\epsilon_1,\mathcal H,\|\cdot\|_{\L_2(\mathcal I)})$. Then, we can find a subset ${\mathcal H}^\prime_{\epsilon_1} \subset \mathcal H$ of cardinality $N$ such that for every $h \in \mathcal H$, we can find $h^\prime \in {\mathcal H}^\prime_{\epsilon_1}$ such that $\|h-h^\prime\|_{\L_2(\mathcal I)} \le \epsilon_1$. Now, let $S_{\gamma} = \{(\alpha_1,\ldots,\alpha_R) \in \R^R: \sum_{i=1}^R |\alpha_i| \le \gamma\}$, and $S_{\gamma,\epsilon_2}^\prime$ be a finite subset of $\R^R$ such that for every $(\alpha_1,\ldots,\alpha_R) \in S_{\gamma}$ we can find $(\alpha_1^\prime,\ldots,\alpha_R^\prime) \in S_{\gamma,\epsilon_2}^\prime$ with $\sum_{i=1}^R |\alpha_i-\alpha_i^\prime| \le \epsilon_2$. Define, ${\mathcal E}_{\epsilon_1,\epsilon_2}^\prime := \big\{\sum_{i=1}^R \alpha_i\,h_i : (\alpha_1,\ldots,\alpha_R) \in S_{\gamma,\epsilon_2}^\prime, h_i \in {\mathcal H}^\prime_{\epsilon_1}\big\}$. Then, for $\sum_{i=1}^R \alpha_i h_i \in \mathcal E_{R,\mathcal H,\gamma}$, we can find $\sum_{i=1}^R \alpha_i^\prime h_i^\prime \in {\mathcal E}_{\epsilon_1,\epsilon_2}^\prime$ such that
\begin{align*}
\left\|\sum_{i=1}^R \alpha_ih_i - \sum_{i=1}^R \alpha_i^\prime h_i^\prime\right\|_{\L_2(\mathcal I)} &= \left\|\sum_{i=1}^R \alpha_i(h_i-h_i^\prime) + \sum_{i=1}^R(\alpha_i-\alpha_i^\prime)h_i^\prime\right\|_{\L_2(\mathcal I)} \\
&\le \underbrace{\sum_{i=1}^R |\alpha_i|}_{\le \gamma}\,\underbrace{\left\|h_i - h_i^\prime\right\|_{\L_2(\mathcal I)}}_{\le \epsilon_1} + \underbrace{\sum_{i=1}^R |\alpha_i - \alpha_i^\prime|}_{\le \epsilon_2}\,\underbrace{\|h_i^\prime\|_{\L_2(\mathcal I)}}_{\le M} \\
&\le \gamma\epsilon_1 + M\epsilon_2.
\end{align*}
This shows that ${\mathcal E}_{\epsilon_1,\epsilon_2}^\prime$ is an $\L_2$ cover for $\mathcal E_{R,\mathcal H,\gamma}$ of size $(\gamma\epsilon_1+M\epsilon_2)$. Thus,
\[
\cover(\gamma\epsilon_1+M\epsilon_2,\mathcal E_{R,\mathcal H,\gamma},\|\cdot\|_{\L_2(\mathcal I)}) \le N^R \times |S_{\gamma,\epsilon_2}^\prime|.
\]
Now, as shown in the proof of Lemma~16.6 in \cite{gyorfi2002},
\[
|S_{\gamma,\epsilon_2}^\prime| \le \left(\frac{e(\gamma+2\epsilon_2)}{\epsilon_2}\right)^R,
\]
which shows that
\[
\cover(\gamma\epsilon_1+M\epsilon_2,\mathcal E_{R,\mathcal H,\gamma},\|\cdot\|_{\L_2(\mathcal I)}) \le \left(\frac{e(\gamma+2\epsilon_2)}{\epsilon_2}\right)^R \times \cover\big(\epsilon_1,\mathcal H,\|\cdot\|_{\L_2(\mathcal I)}\big)^R.
\]
The proof is complete upon transforming $\epsilon_1 \mapsto \epsilon_1/\gamma$ and $\epsilon_2 \mapsto \epsilon_2/M$.
\end{proof}
With $\epsilon_1=\epsilon_2 = \epsilon/2$, we get the following corollary.
\begin{corollary}\label{cor:cover_linear_combination_l1}
With $\mathcal E_{R,\mathcal H,\gamma}$ defined as in Lemma~\ref{lemma:cover_linear_combination_l1}, for any $\epsilon > 0$
\[
\cover(\epsilon, \mathcal E_{R,\mathcal H,\gamma}, \|\cdot\|_{\L_2(\mathcal I)}) \le \left\{\frac{2e(M\gamma+\epsilon)}{\epsilon} \times \cover\left(\frac{\epsilon}{2\gamma},\mathcal H,\|\cdot\|_{\L_2(\mathcal I)}\right)\right\}^R.
\] 
\end{corollary}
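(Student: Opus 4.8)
The plan is to derive this as an immediate specialization of Lemma~\ref{lemma:cover_linear_combination_l1}, which already furnishes a two-parameter bound on $\cover(\epsilon_1+\epsilon_2,\mathcal E_{R,\mathcal H,\gamma},\|\cdot\|_{\L_2(\mathcal I)})$ in terms of an independent resolution $\epsilon_1$ for approximating the component functions $h_i \in \mathcal H$ and a resolution $\epsilon_2$ for approximating the coefficient vector $(\alpha_1,\ldots,\alpha_R)$ inside the $\ell^1$ ball of radius $\gamma$. Since the corollary asks for a cover at a single target resolution $\epsilon$, I would split the total error budget evenly between the two sources by setting $\epsilon_1 = \epsilon_2 = \epsilon/2$. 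With this choice $\epsilon_1+\epsilon_2 = \epsilon$, so the left-hand side of the lemma becomes exactly $\cover(\epsilon,\mathcal E_{R,\mathcal H,\gamma},\|\cdot\|_{\L_2(\mathcal I)})$, which is the quantity to be bounded.

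It then remains only to simplify the right-hand side under this substitution. First I would evaluate the prefactor $e(M\gamma+2\epsilon_2)/\epsilon_2$ at $\epsilon_2=\epsilon/2$, obtaining $e(M\gamma+\epsilon)/(\epsilon/2) = 2e(M\gamma+\epsilon)/\epsilon$, which matches the claimed constant. Next I would substitute $\epsilon_1=\epsilon/2$ into the inner covering number, so that $\cover(\epsilon_1/\gamma,\mathcal H,\|\cdot\|_{\L_2(\mathcal I)}) = \cover(\epsilon/(2\gamma),\mathcal H,\|\cdot\|_{\L_2(\mathcal I)})$. Taking the product of these two quantities and raising to the $R$-th power reproduces the stated bound verbatim, completing the argument.

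There is no genuine obstacle here: the result is a purely algebraic specialization, and all the substantive content—both the geometric idea of covering the function dictionary $\mathcal H$ and the $\ell^1$ coefficient ball separately, and the volumetric estimate $|S_{\gamma,\epsilon_2}^\prime| \le (e(\gamma+2\epsilon_2)/\epsilon_2)^R$ borrowed from Lemma~16.6 of \cite{gyorfi2002}—already resides in the parent Lemma~\ref{lemma:cover_linear_combination_l1}. The only points worth verifying are that the symmetric split $\epsilon_1=\epsilon_2=\epsilon/2$ is admissible, which holds since the lemma is valid for all $\epsilon_1,\epsilon_2>0$, and that this even allocation is adequate for the downstream covering-number estimates; it is, which is precisely why the corollary is stated in this clean single-parameter form rather than retaining the extra flexibility of the lemma.
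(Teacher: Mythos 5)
Your proposal is correct and coincides exactly with the paper's derivation: the paper obtains Corollary~\ref{cor:cover_linear_combination_l1} from Lemma~\ref{lemma:cover_linear_combination_l1} by the same symmetric substitution $\epsilon_1=\epsilon_2=\epsilon/2$, and your algebraic simplification of the prefactor to $2e(M\gamma+\epsilon)/\epsilon$ and of the inner covering radius to $\epsilon/(2\gamma)$ matches the stated bound.
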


We will also need the following result about the covering number of the space of product functions.
\begin{lemma}\label{lemma:cover_product_space}
Let $\mathcal I$ be a compact set and $\mathcal H$ be a collection of functions from $\mathcal I$ to $\R$ such that $\|h\|_{\L_2(\mathcal I)} \le M$ for every $h \in \mathcal H$. Define $\mathcal E = \big\{e: \mathcal I \times \mathcal I \to \R \text{ with }  e(\uvec,\vvec) = h(\uvec)\,h(\vvec), \text{ where } h \in \mathcal H\big\}$. Then,
\[
\cover\big(\epsilon,\mathcal E,\|\cdot\|_{\L_2(\mathcal I \times \mathcal I)}\big) \le \cover\left(\frac{\epsilon}{2M}, \mathcal H, \|\cdot\|_{\L_2(\mathcal I)}\right).
\] 
\end{lemma}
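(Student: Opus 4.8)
The plan is to lift a cover of $\mathcal H$ directly to a cover of the product class $\mathcal E$, exploiting the multiplicative behaviour of the $\L_2(\mathcal I \times \mathcal I)$ norm on rank-one kernels. First I would fix $\epsilon > 0$, set $\delta = \epsilon/(2M)$, and let $\mathcal H' \subseteq \mathcal H$ be a minimal $\delta$-cover of $\mathcal H$ with respect to $\|\cdot\|_{\L_2(\mathcal I)}$, of cardinality $N = \cover(\delta, \mathcal H, \|\cdot\|_{\L_2(\mathcal I)})$ (assume without loss of generality that $N$ is finite, otherwise the bound is vacuous). The candidate cover of $\mathcal E$ is then $\mathcal E' = \{(\uvec,\vvec) \mapsto h'(\uvec)\,h'(\vvec) : h' \in \mathcal H'\}$, whose cardinality satisfies $|\mathcal E'| \le N$.

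The heart of the argument is a telescoping identity together with a norm factorization. Given $h \in \mathcal H$, pick $h' \in \mathcal H'$ with $\|h - h'\|_{\L_2(\mathcal I)} \le \delta$, and write the associated kernels as tensor products $h \otimes h$ and $h' \otimes h'$. I would split $h \otimes h - h' \otimes h' = h \otimes (h - h') + (h - h') \otimes h'$ and invoke the identity $\|f \otimes g\|_{\L_2(\mathcal I \times \mathcal I)} = \|f\|_{\L_2(\mathcal I)}\,\|g\|_{\L_2(\mathcal I)}$, which follows immediately from Fubini since $\int\!\!\int f(\uvec)^2 g(\vvec)^2\,\diff\uvec\,\diff\vvec = \|f\|^2\|g\|^2$. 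This gives $\|h \otimes h - h' \otimes h'\|_{\L_2(\mathcal I \times \mathcal I)} \le \|h\|\,\|h - h'\| + \|h - h'\|\,\|h'\| \le 2M\delta = \epsilon$, where I used $\|h\| \le M$ and $\|h'\| \le M$.

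Since this holds for every $h \in \mathcal H$, the finite set $\mathcal E'$ is an $\epsilon$-cover of $\mathcal E$, so $\cover(\epsilon, \mathcal E, \|\cdot\|_{\L_2(\mathcal I \times \mathcal I)}) \le |\mathcal E'| \le N = \cover(\epsilon/(2M), \mathcal H, \|\cdot\|_{\L_2(\mathcal I)})$, which is the claimed inequality.

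There is no genuine obstacle here, but two points require care. First, I would use the sharper split $h \otimes (h - h') + (h - h') \otimes h'$ rather than the inequality \eqref{eq:product_difference_bound_L2} employed elsewhere in the paper: the latter carries an extra quadratic term $\|h - h'\|^2$ which would inflate the bound beyond $2M\delta$ and spoil the exact constant in $\epsilon/(2M)$. Second, the bound $\|h'\| \le M$ is essential and relies on the cover being \emph{internal}, i.e.\ $\mathcal H' \subseteq \mathcal H$, which is exactly the convention adopted in the definition of covering numbers in this section.
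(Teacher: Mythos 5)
Your proof is correct and follows essentially the same route as the paper: the identical decomposition $h \otimes h - h' \otimes h' = h \otimes (h - h') + (h - h') \otimes h'$, the norm factorization $\|f \otimes g\|_{\L_2(\mathcal I \times \mathcal I)} = \|f\|\,\|g\|$, and the resulting Lipschitz bound $2M\|h - h'\|$ that transports a $\delta$-cover of $\mathcal H$ to a $2M\delta$-cover of $\mathcal E$. Your two points of care (avoiding the quadratic term of \eqref{eq:product_difference_bound_L2}, and internality of the cover so that $\|h'\| \le M$) are both consistent with the paper's conventions and argument.
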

\begin{proof}
Let $e_1 = h_1 \otimes h_1$ and $e_2 = h_2 \otimes h_2$ be two functions in $\mathcal E$. Then,
\begin{align*}
\|e_1 - e_2\|_{\L_2(\mathcal I \times \mathcal I)} &= \|h_1 \otimes h_1 - h_2 \otimes h_2\|_{\L_2(\mathcal I \times \mathcal I)} \\
&= \|h_1 \otimes(h_1 - h_2) + (h_1 - h_2) \otimes h_2\|_{\L_2(\mathcal I \times \mathcal I)} \\
&\le \|h_1 - h_2\|_{\L_2(\mathcal I)} \big(\|h_1\|_{\L_2(\mathcal I)} + \|h_2\|_{\L_2(\mathcal I)}\big) \\
&\le 2M \|h_1 - h_2\|_{\L_2(\mathcal I)}.
\end{align*}
This shows that an $\L_2$-cover for $\mathcal H$ of size $\epsilon$ provides an $\L_2$-cover for $\mathcal E$ of size $2M \epsilon$, proving the lemma.
\end{proof}
We now proceed to prove Theorem~\ref{thm:covnet_kernel_class_general_bound}.
\begin{proof}[Proof of Theorem~\ref{thm:covnet_kernel_class_general_bound}]
Since $\Lambda = (\lambda_{r,s})$ is positive semi-definite, we can find orthonormal vectors $\mathbf{e}_1,\ldots,\mathbf{e}_R \in \R^R$ and non-negative real numbers $\eta_1,\ldots,\eta_R$ such that $\Lambda = \sum_{i=1}^{R} \eta_i\,\mathbf{e}_i\,\mathbf{e}_i^\top$. Moreover, since $\Lambda \preceq \lambda_N {\rm I}_R$, we also get that $\eta_i \le \lambda_N$ for $i=1,\ldots,R$. Using this, we can write
\begin{align*}
\sum_{r=1}^R \sum_{s=1}^R \lambda_{r,s}\,g_r(\uvec)\,g_s(\vvec) &= \sum_{r=1}^R \sum_{s=1}^R \left(\sum_{i=1}^{R} \eta_i\,e_{i,r}\,e_{i,s}\right) g_r(\uvec)\,g_s(\vvec) \\
&= \sum_{i=1}^R \eta_i \left(\sum_{r=1}^R e_{i,r}\,g_r(\uvec)\right)\left(\sum_{s=1}^R e_{i,s}\,g_s(\vvec)\right) = \sum_{i=1}^R \eta_i\,\widetilde g_i(\uvec)\,\widetilde g_i(\vvec),
\end{align*}
where $e_{i,r}$ is the $r$-th coordinate of $\mathbf{e}_i$ and $\widetilde g_i(\uvec) = \sum_{r=1}^R e_{i,r} g_r(\uvec)$. Since $\mathbf{e}_i$'s are orthonormal, $\sum_{r=1}^R e_{ir}^2 = 1$ for every $i=1,\ldots,R$. This shows that we can rewrite $\F_{R,G,\lambda_N}$ as
\begin{equation*}
\F_{R,G,\lambda_N} = \left\{\sum_{i=1}^R \eta_i\,f_i(\uvec)\,f_i(\vvec): f_i \in G_R^{(0)}, 0 \le \eta_i \le \lambda_N\right\},
\end{equation*}
where
\begin{equation*}
G_R^{(0)} = \left\{\sum_{r=1}^R a_r\,g_r(\uvec): g_r \in G, \sum_{r=1}^R a_r^2 = 1\right\}.
\end{equation*}
Now, for any $f \in G_R^{(0)}$, 
\[
\|f\| = \left\|\sum_{r=1}^R a_r\,g_r\right\| \le \sum_{r=1}^R |a_r|\,\underbrace{\|g_r\|}_{\le M} \le M \sum_{r=1}^R |a_r| \le \sqrt{R} M,
\]
where we have used that $\sum_{r=1}^R a_r^2 = 1$ implies $\sum_{r=1}^R |a_r| \le \sqrt{R}$ by the Cauchy-Schwarz inequality. So, for $f \in G_R^{(0)}$, $\|f \otimes f\|_{\L_2(\Q \times \Q)} = \|f\|^2 \le RM^2$. Now, using Corollary~\ref{cor:cover_linear_combination_positive2},
\begin{equation*}\label{eq:cover_general_eq1}
\cover(\epsilon,\F_{R,G,\lambda_N},\|\cdot\|_{\L_2(\Q \times \Q)}) \le \left\{\left(\frac{M^2R^2\lambda_N}{\epsilon}+1\right)\times \cover\left(\frac{\epsilon}{2R\lambda_N},\big\{f \otimes f: f \in G_R^{(0)}\big\}, \|\cdot\|_{\L_2(\Q \times \Q)}\right)\right\}^R.
\end{equation*}
Also, using Lemma~\ref{lemma:cover_product_space}, we get
\[
\cover\left(\epsilon,\big\{f \otimes f: f \in G_R^{(0)}\big\}, \|\cdot\|_{\L_2(\Q \times \Q)}\right) \le \cover\left(\frac{\epsilon}{2\sqrt{R}M},G_R^{(0)},\|\cdot\|\right).
\]
Plugging this into the previous equation, we get
\begin{equation}\label{eq:cover_general_eq2}
\cover(\epsilon,\F_{R,G,\lambda_N},\|\cdot\|_{\L_2(\Q \times \Q)}) \le \left\{\left(\frac{M^2R^2\lambda_N}{\epsilon}+1\right)\times \cover\left(\frac{\epsilon}{4MR^{3/2}\lambda_N},G_R^{(0)},\|\cdot\|\right)\right\}^R.
\end{equation}
Now, $\sum_{r=1}^R a_r^2 = 1$ implies $\sum_{r=1}^R |a_r| \le \sqrt{R}$, and thus
\[
G_R^{(0)} = \left\{\sum_{r=1}^R a_r g_r : g_r \in G, \sum_{r=1}^R a_r^2=1\right\} \subset \left\{\sum_{r=1}^R a_r g_r : g_r \in G, \sum_{r=1}^R |a_r| \le \sqrt{R}\right\}.
\]
Now, using Corollary~\ref{cor:cover_linear_combination_l1}, we bound the covering number of $G_R^{(0)}$ as
\[
\cover(\epsilon,G_R^{(0)},\|\cdot\|) \le \left\{\frac{2e(M\sqrt{R}+\epsilon)}{\epsilon} \times \cover\left(\frac{\epsilon}{2\sqrt{R}},G,\|\cdot\|\right)\right\}^R.
\]
The proof follows by plugging this into \eqref{eq:cover_general_eq2}.
\end{proof}

We will use Theorem~\ref{thm:covnet_kernel_class_general_bound} to derive upper bounds on the covering numbers of the shallow and the deep CovNet kernel classes, respectively. 

\subsection{Covering number of the shallow CovNet class}\label{supp:cover_shallow}
Recall that the (restricted) shallow CovNet class of kernels is defined as
\[
\F_{R,\lambda_N}^{\rm sh} = \left\{\sum_{r=1}^R\sum_{s=1}^R \lambda_{r,s}\,\sigma(\mathbf w_r^\top\uvec+b_r)\,\sigma(\mathbf w_s^\top\vvec+b_s): \mathbf w_r \in \R^d, b_r \in \R, 0 \preceq \Lambda = (\lambda_{r,s}) \preceq \lambda_N \mathrm{I}_R\right\},
\]
which has the form $\F_{R,G,\lambda_N}$, where $G = \big\{\sigma(\mathbf w^\top \uvec+b) : \mathbf w \in \R^d, b \in \R\big\}$. Since the activation function $\sigma$ is sigmoidal, in particular $0 \le \sigma(t) \le 1$ for all $t \in \R$, $\|g\| \le \sqrt{|\Q|}$ for all $g \in G$. Again, since $\sigma$ is non-decreasing, the VC dimension of $G$ is bounded by $d+2$ \citep[see][page 314]{gyorfi2002}. Since $\Q$ is a compact set, the measure $\nu$ defined as $\nu(A) := |\Q \cap A|/|\Q|$ for $A \subset \R^d$, is a probability measure on $\R^d$. So, using Theorem~9.4 in \cite{gyorfi2002}, we get that
\[
\cover\big(\epsilon,G,\|\cdot\|_{\L_2(\nu)}\big) \le 3 \bigg(\frac{2e}{\epsilon^2}\log\frac{3e}{\epsilon^2}\bigg)^{d+2}.
\]
Now, $\|f\| = \sqrt{|\Q|}\,\|f\|_{\L_2(\nu)}$ implies that
\begin{align}\label{eq:cover_prob_to_general}
\cover\big(\epsilon,G,\|\cdot\|\big) &= \cover\bigg(\frac{\epsilon}{\sqrt{|\Q|}}, G, \|\cdot\|_{\L_2(\nu)}\bigg) \le 3 \bigg(\frac{2e|\Q|}{\epsilon^2}\log\frac{3e|\Q|}{\epsilon^2}\bigg)^{d+2}.
\end{align}
Using this in Theorem~\ref{thm:covnet_kernel_class_general_bound} (with $M=\sqrt{|\Q|}$), we get that
\begin{align*}
&\cover\left(\epsilon,\F_{R,\lambda_N}^{\rm sh},\|\cdot\|_{\L_2(\Q \times \Q)}\right) \\
&\le \left[\frac{|\Q|R^2\lambda_N+\epsilon}{\epsilon} \times \left\{\frac{2e(4|\Q|R^2\lambda_N+\epsilon)}{\epsilon} \times \cover\left(\frac{\epsilon}{8\sqrt{|\Q|}R^2\lambda_N},G,\|\cdot\|\right)\right\}^R\right]^R \\
& \le \left[\frac{|\Q|R^2\lambda_N+\epsilon}{\epsilon} \times \left\{\frac{2e(4|\Q|R^2\lambda_N+\epsilon)}{\epsilon} \times 3 \bigg(\frac{128e|\Q|^2R^4\lambda_N^2}{\epsilon^2}\log\frac{192e|\Q|^2R^4\lambda_N^2}{\epsilon^2}\bigg)^{d+2}\right\}^R\right]^R \\
&= \left[\frac{|\Q|R^2\lambda_N+\epsilon}{\epsilon} \times \left\{\frac{6e(4|\Q|R^2\lambda_N+\epsilon)}{\epsilon} \times \bigg(\frac{256e|\Q|^2R^4\lambda_N^2}{\epsilon^2}\log\frac{\sqrt{192e}|\Q|R^2\lambda_N}{\epsilon}\bigg)^{d+2}\right\}^R\right]^R  \\
&\le \left[\frac{|\Q|R^2\lambda_N+\epsilon}{\epsilon} \times \left\{\frac{6e(4|\Q|R^2\lambda_N+\epsilon)}{\epsilon} \times \bigg(\frac{256e\times \sqrt{192e} \times (|\Q|R^2\lambda_N)^3}{\epsilon^3}\bigg)^{d+2}\right\}^R\right]^R~(\text{using } \log(x) \le x) \\
&\le \left(\frac{6e \times (16\sqrt{e}\,|\Q|R^2\lambda_N + \epsilon)}{\epsilon}\right)^{(3d+7)R^2+R}.
\end{align*}
We summarize this in the following Lemma.

\begin{lemma}\label{lemma:cover_covnet_shallow}
For the shallow CovNet class of operators $\widetilde\F_{R,\lambda_N}^{\rm sh}$, the $\vertj{\cdot}_2$-covering number is bounded as
\[
\cover\big(\epsilon,\widetilde\F_{R,\lambda_N}^{\rm sh},\vertj{\cdot}_2\big) \le \left(\frac{c_0 \times (c_1|\Q|R^2\lambda_N + \epsilon)}{\epsilon}\right)^{(3d+7)R^2+R},
\]
where $c_0$ and $c_1$ are constants independent of all the other parameters. In particular, when $R,\lambda_N \to \infty$, we get
\[
\log\cover\big(\epsilon,\widetilde\F_{R,\lambda_N}^{\rm sh},\vertj{\cdot}_2\big) = \O\bigg(dR^2\log\Big(\frac{R^2\lambda_N+\epsilon}{\epsilon}\Big)\bigg).
\]
\end{lemma}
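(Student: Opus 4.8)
The plan is to reduce the problem to the general covering-number estimate of Theorem~\ref{thm:covnet_kernel_class_general_bound} by identifying the appropriate generating class, and to control the covering number of that class via a Vapnik--Chervonenkis (VC) argument. First, I would observe that the restricted shallow kernel class $\F_{R,\lambda_N}^{\rm sh}$ is exactly of the template form $\F_{R,G,\lambda_N}$ appearing in Theorem~\ref{thm:covnet_kernel_class_general_bound}, with generating class $G = \{\sigma(\mathbf w^\top\uvec + b) : \mathbf w \in \R^d,\, b \in \R\}$. By the kernel--operator isometry recorded in \eqref{eq:operator_kernel_equivalence}, namely $\cover(\epsilon,\widetilde\F,\vertj{\cdot}_2) = \cover(\epsilon,\F,\|\cdot\|_{\L_2(\Q\times\Q)})$, it then suffices to bound the $\L_2(\Q\times\Q)$-covering number of the kernel class, which Theorem~\ref{thm:covnet_kernel_class_general_bound} delivers in terms of $\cover(\cdot,G,\|\cdot\|)$ together with a uniform bound $M$ on $\|g\|$ over $g \in G$.

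The second and crucial step is to estimate $\cover(\epsilon,G,\|\cdot\|)$. Since $\sigma$ is sigmoidal, $0 \le \sigma \le 1$, so $\|g\| \le \sqrt{|\Q|}$ for every $g \in G$, which supplies the constant $M = \sqrt{|\Q|}$ required by the theorem. The key structural input is that a monotone $\sigma$ forces the family of subgraphs generated by $G$ to have VC dimension at most $d+2$; I would invoke this bound from \citet{gyorfi2002} and then apply the standard passage from finite VC dimension to a covering-number estimate (Theorem~9.4 therein), yielding $\cover(\epsilon,G,\|\cdot\|_{\L_2(\nu)}) \lesssim (\epsilon^{-2}\log\epsilon^{-2})^{d+2}$ for the normalized probability measure $\nu$ given by $\nu(A) = |\Q\cap A|/|\Q|$. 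The identity $\|f\| = \sqrt{|\Q|}\,\|f\|_{\L_2(\nu)}$ then converts this into a bound for the unnormalized $\L_2(\Q)$ norm.

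Finally, I would substitute the $G$-covering bound, together with $M = \sqrt{|\Q|}$, into the conclusion of Theorem~\ref{thm:covnet_kernel_class_general_bound} and simplify. The simplification consists of collapsing the nested product and exponential factors: using $\log x \le x$ to absorb the logarithmic term into a polynomial in $|\Q|R^2\lambda_N/\epsilon$, and collecting exponents to reach the single exponent $(3d+7)R^2 + R$. Taking logarithms of the resulting bound and retaining the leading order as $R,\lambda_N \to \infty$ then yields the stated rate $\O(dR^2\log((R^2\lambda_N+\epsilon)/\epsilon))$. I expect the main obstacle to be bookkeeping rather than conceptual difficulty: faithfully propagating the various $\epsilon$-rescalings through Theorem~\ref{thm:covnet_kernel_class_general_bound} (and, underlying it, Corollaries~\ref{cor:cover_linear_combination_positive2} and \ref{cor:cover_linear_combination_l1} and Lemma~\ref{lemma:cover_product_space}), and verifying that the accumulated constants and exponents consolidate exactly into the claimed form. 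The VC-to-covering-number passage is the one genuinely nontrivial ingredient, but it is classical and available off the shelf.
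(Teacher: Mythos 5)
Your proposal is correct and follows essentially the same route as the paper's proof: identify $\F^{\rm sh}_{R,\lambda_N}$ as an instance of $\F_{R,G,\lambda_N}$ with $G=\{\sigma(\mathbf w^\top\cdot+b)\}$, bound $\|g\|\le\sqrt{|\Q|}$, invoke the VC-dimension bound $d+2$ for monotone $\sigma$ together with Theorem~9.4 of \cite{gyorfi2002} and the $\L_2(\nu)$-to-$\L_2(\Q)$ rescaling, then substitute into Theorem~\ref{thm:covnet_kernel_class_general_bound} and simplify via $\log x \le x$ to obtain the exponent $(3d+7)R^2+R$. No gaps; the bookkeeping you anticipate is exactly what the paper carries out.
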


\subsection{Covering number of the deep CovNet class}\label{supp:cover_deep}
The (restricted) deep CovNet class of kernels is defined as
\[
\F_{R,L,\mathbf{p},\lambda_N}^{\rm d} = \left\{\sum_{r=1}^R \sum_{s=1}^R \lambda_{r,s}\,g_r(\uvec)\,g_s(\vvec): g_r \in \mathcal D_{L,\mathbf{p}}, 0 \preceq \Lambda = (\lambda_{r,s}) \preceq \lambda_N \mathrm{I}_R\right\},
\]
which is again of the form $\F_{R,G,\lambda_N}$, where $G = \mathcal D_{L,\mathbf p}$ is the class of deep neural networks with depth $L$ and layer-wise widths $p_1,\ldots,p_L$ \eqref{eq:deep_neural_network_class}.

Let $K = p_1+\cdots+p_L$ be the number of \emph{computation units} of a network from the class $\mathcal D_{L,\mathbf p}$ and $W = \sum_{\ell=1}^L p_{\ell} (p_{\ell-1}+1) + p_L$ be the number of \emph{adjustable parameters} \citep[see][Chapter~6 for details]{anthony1999}. Now, by \citet[][Theorem~18.8]{anthony1999}, we can find constants $\alpha_1,\alpha_2,\alpha_3$ such that for any probability distribution $\nu$ on $\R^d$,
\[
\log\cover(\epsilon,\mathcal D_{L,\mathbf p},\|\cdot\|_{\L_2(\nu)}) \le \alpha_1 \textrm{fat}_{\mathcal D_{L,\mathbf p}}(\alpha_2\epsilon) \log^2\left(\frac{\textrm{fat}_{\mathcal D_{L,\mathbf p}}(\alpha_2\epsilon^2)}{\epsilon}\right),
\]
where $\textrm{fat}_{\F}(\cdot)$ is the \emph{fat-shattering dimension} of the class of functions $\F$ \citep[see][Chapter~11]{anthony1999}. Thus, using \eqref{eq:cover_prob_to_general}
\begin{align*}
\log\cover(\epsilon,\mathcal D_{L,\mathbf p},\|\cdot\|) &= \log\cover\left(\frac{\epsilon}{\sqrt{|\Q|}},\mathcal D_{L,\mathbf p},\|\cdot\|_{\L_2(\nu)}\right) \\
&\le \alpha_1 \textrm{fat}_{\mathcal D_{L,\mathbf p}}\left(\alpha_2\frac{\epsilon}{\sqrt{|\Q|}}\right) \log^2\left(\frac{\textrm{fat}_{\mathcal D_{L,\mathbf p}}(\alpha_2\epsilon^2/|\Q|)\sqrt{|\Q|}}{\epsilon}\right).
\end{align*}
Again, for any $\delta > 0$, $\textrm{fat}_{\F}(\delta) \le \textrm{Pdim}(\F)$, where $\textrm{Pdim}(\F)$ is the \emph{pseudo-dimension} of the class of functions $\F$ \citep[see][Theorem~11.13(i)]{anthony1999}. Using this in the previous equation, we get
\[
\log\cover(\epsilon,\mathcal D_{L,\mathbf p},\|\cdot\|) \le \alpha\, \textrm{Pdim}(\mathcal D_{L,\mathbf p}) \log^2\left(\frac{\textrm{Pdim}(\mathcal D_{L,\mathbf p})\sqrt{|\Q|}}{\epsilon}\right),
\]
for some positive constant $\alpha$. Finally, since the activation function is sigmoidal, using Theorem~14.2 in \cite{anthony1999}, we get
\[
\textrm{Pdim}(\mathcal D_{L,\mathbf p}) \le \big((W+2)K\big)^2 + 11(W+2)K \log_2\big(18(W+2)K^2\big) =: d_{W,K}.
\]
Now, Theorem~\ref{thm:covnet_kernel_class_general_bound} gives us that
\begin{align*}
&\log\cover(\epsilon,\F^{\rm d}_{R,L,\mathbf{p},\lambda_N},\|\cdot\|_{\L_2(\Q \times \Q)}) \\
&\kern5ex \le R\left[\log\bigg(\frac{|\Q|R^2\lambda_N+\epsilon}{\epsilon}\bigg) + R\left\{\log\bigg(\frac{2e(4|\Q|R^2\lambda_N+\epsilon)}{\epsilon}\bigg) + \log\cover\bigg(\frac{\epsilon}{8\sqrt{|\Q|}R^2\lambda_N},\mathcal D_{L,\mathbf{p}},\|\cdot\|\bigg)\right\}\right]\\
&\kern5ex \le R\left[\log\bigg(\frac{|\Q|R^2\lambda_N+\epsilon}{\epsilon}\bigg) + R\left\{\log\bigg(\frac{2e(4|\Q|R^2\lambda_N+\epsilon)}{\epsilon}\bigg) + \alpha\, d_{W,K} \log^2\bigg(\frac{8|\Q|d_{W,K}R^2\lambda_N}{\epsilon}\bigg)\right\}\right] \\
&\kern5ex \le R\Big\{1+R\big(1+\alpha\,d_{W,K}\big)\Big\} \log^2\bigg(\frac{c_0|\Q|R^2\lambda_N+\epsilon}{\epsilon}\bigg) \\
&\kern5ex =\Big\{\big(1+\alpha\,d_{W,K}\big)R^2+R\Big\} \log^2\bigg(\frac{c_0d_{W,K}|\Q|R^2\lambda_N+\epsilon}{\epsilon}\bigg),
\end{align*}
for some constant $c_0$ and $\alpha$. We summarize it in the following lemma.
\begin{lemma}\label{lemma:cover_covnet_deep}
Consider the deep CovNet class of operators $\widetilde\F^{\rm d}_{R,L,\mathbf p,\sigma}$, where $L$ is the number of hidden layers and $\mathbf{p}=(p_1,\ldots,p_L)$ are the number of nodes at each of the hidden layers, $\sigma$ is the sigmoidal activation function, and $R$ is the number of components of the model. Then, there exist constants $\alpha$ and $c_0$ such that
\[
\log\cover\big(\epsilon,\widetilde\F^{\rm d}_{R,L,\mathbf p,\sigma},\vertj{\cdot}_2\big) \le \big\{(1+\alpha\,d_{W,K})R^2+R\big\}\,\log^2\bigg(\frac{c_0d_{W,K}|\Q|R^2\lambda_N+\epsilon}{\epsilon}\bigg),
\]
where $W = \sum_{\ell=1}^L p_{\ell} (p_{\ell-1}+1) + p_L$, $K = \sum_{\ell=1}^L p_{\ell}$, and $d_{W,K} = \big((W+2)K\big)^2 + 11(W+2)K \log_2\big(18(W+2)K^2\big)$.
\end{lemma}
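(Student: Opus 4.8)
The plan is to reduce everything to the general covering-number bound already established in Theorem~\ref{thm:covnet_kernel_class_general_bound}, and then supply the one missing ingredient: a covering-number bound for the constituent class of deep networks $\mathcal D_{L,\mathbf p}$. First, by the isometry between an integral operator and its kernel (cf.\ \eqref{eq:operator_kernel_equivalence}), it suffices to bound the $\L_2(\Q\times\Q)$-covering number of the kernel class $\F^{\rm d}_{R,L,\mathbf p,\lambda_N}$. Observing that this class is exactly of the form $\F_{R,G,\lambda_N}$ with $G=\mathcal D_{L,\mathbf p}$, and that the sigmoidal assumption gives $0\le\sigma\le 1$ and hence $\|g\|\le\sqrt{|\Q|}$ for every $g\in G$, I would apply Theorem~\ref{thm:covnet_kernel_class_general_bound} with $M=\sqrt{|\Q|}$. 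This immediately expresses $\log\cover(\epsilon,\F^{\rm d}_{R,L,\mathbf p,\lambda_N},\|\cdot\|_{\L_2(\Q\times\Q)})$ in terms of $R$, $\lambda_N$, $|\Q|$, and the single quantity $\cover(\epsilon/(8\sqrt{|\Q|}R^2\lambda_N),\mathcal D_{L,\mathbf p},\|\cdot\|)$.

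The heart of the argument is therefore controlling the covering number of $\mathcal D_{L,\mathbf p}$ itself, and this is where I expect the main obstacle to lie, since it is the step that injects the neural-network complexity $d_{W,K}$. The route I would take is the standard statistical-learning chain for bounded-output classes: convert from the intrinsic $\L_2(\Q)$ norm to the $\L_2(\nu)$ norm under the normalized Lebesgue probability measure $\nu$ via $\|f\|=\sqrt{|\Q|}\,\|f\|_{\L_2(\nu)}$ (as in \eqref{eq:cover_prob_to_general}); bound the $\L_2(\nu)$-covering number by the fat-shattering dimension using \citet[][Theorem~18.8]{anthony1999}; dominate the fat-shattering dimension by the pseudo-dimension through \citet[][Theorem~11.13(i)]{anthony1999}; and finally bound the pseudo-dimension of a sigmoidal network by $d_{W,K}$ using \citet[][Theorem~14.2]{anthony1999}, where $W$ and $K$ count the adjustable parameters and the computation units. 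The delicate bookkeeping is in tracking how the rescaling of $\epsilon$ propagates through these three successive comparisons, and in verifying that the outcome collapses to the clean form $\alpha\,d_{W,K}\log^2(\cdot)$ for a universal constant $\alpha$.

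With the bound $\log\cover(\epsilon,\mathcal D_{L,\mathbf p},\|\cdot\|)\le \alpha\,d_{W,K}\log^2(d_{W,K}\sqrt{|\Q|}/\epsilon)$ in hand, the final step is purely computational: substitute it into the estimate of Theorem~\ref{thm:covnet_kernel_class_general_bound} evaluated at the rescaled radius $\epsilon/(8\sqrt{|\Q|}R^2\lambda_N)$. Taking logarithms converts the nested powers into the factor $R\{1+R(1+\alpha\,d_{W,K})\}=(1+\alpha\,d_{W,K})R^2+R$ multiplying a squared-logarithm term, and absorbing the remaining polynomial factors of $|\Q|$, $R$, $\lambda_N$, and $d_{W,K}$ into a single constant $c_0$ inside the logarithm (using crude bounds such as $\log x\le x$ where convenient) yields the stated inequality. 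No genuinely new idea is needed beyond Theorem~\ref{thm:covnet_kernel_class_general_bound} and the pseudo-dimension bound; the effort is concentrated in careful constant-tracking and in confirming that the cross terms from the general bound do not inflate the leading $R^2 d_{W,K}$ order.
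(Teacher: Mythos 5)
Your proposal is correct and follows essentially the same route as the paper's own proof: identify $\F^{\rm d}_{R,L,\mathbf p,\lambda_N}$ as an instance of $\F_{R,G,\lambda_N}$ with $G=\mathcal D_{L,\mathbf p}$ and $M=\sqrt{|\Q|}$, bound $\cover(\cdot,\mathcal D_{L,\mathbf p},\|\cdot\|)$ via the chain $\L_2(\nu)$-conversion $\to$ fat-shattering (Theorem~18.8) $\to$ pseudo-dimension (Theorem~11.13(i)) $\to$ the sigmoidal-network bound $d_{W,K}$ (Theorem~14.2), and then substitute into Theorem~\ref{thm:covnet_kernel_class_general_bound} with the same bookkeeping (including $\log x \le x$) to absorb constants.
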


\begin{remark}\label{remark:cover_deep_order}
If we define $p_{\rm max} = \max\{d,p_1,\ldots,p_L\}$, then $W = \O(Lp_{\rm max}^2)$, $K = \O(Lp_{\rm max})$, and hence $d_{W,K} = \O(L^4p_{\max}^6)$. This shows that
\[
\log\cover\big(\epsilon,\widetilde\F^{\rm d}_{R,L,\mathbf p,\sigma},\vertj{\cdot}_2\big) = \O\bigg(L^4R^2p_{\rm max}^6\,\log^2\Big(\frac{L^4R^2\lambda_Np_{\rm max}^6+\epsilon}{\epsilon}\Big)\bigg).
\]
In particular, when $p_1,\ldots,p_L=R$ and $R>d$, we get that
\[
\log\cover\big(\epsilon,\widetilde\F^{\rm d}_{R,L,\mathbf p,\sigma},\vertj{\cdot}_2\big) = \O\bigg(L^4R^8\,\log^2\Big(\frac{L^4R^8\lambda_N+\epsilon}{\epsilon}\Big)\bigg).
\]
\end{remark}

\subsection{Covering number of the deepshared CovNet class}\label{supp:cover_deepshared}

Although the deepshared CovNet kernel has a similar structure to the deep CovNet kernel, unfortunately, Theorem~\ref{thm:covnet_kernel_class_general_bound} is not useful to bound the covering number of the deepshared CovNet class of operators. This is due to the fact that the shared structure of the constituents $g_1,\ldots,g_R$ of a deepshared CovNet kernel $\sum_{r=1}^R\sum_{s=1}^R \lambda_{r,s}\,g_r(\uvec)\,g_s(\vvec)$ cannot be catered for using Theorem~\ref{thm:covnet_kernel_class_general_bound}. So, here, we proceed in a different way.

First, consider a kernel $g$ from the deepshared CovNet class of kernels $\F^{\rm ds}_{R,L,\mathbf p,\sigma}$ (see \eqref{eq:deepshared_covnet_class}). We have demonstrated that the kernel $g$ is a neural network (see Figure~\ref{fig:deepshared_covnet}). For constants $\theta$ and $\mu$, we define another kernel $g^\prime$ associated to $g$ as follows:
\[
g^\prime(\uvec,\vvec) = \1\{\theta\,g(\uvec,\vvec)+\mu > 0\},~~\uvec,\vvec \in \Q.
\]
The kernel $g^\prime$ also has the neural network structure (see Figure~\ref{fig:deepshared_derived}). It has one additional input and one additional layer than $g$, and the output has the \emph{linear threshold} activation function. We define $\F^\prime$ to be the class of all derived kernels from the deepshared CovNet kernel class $\F^{\rm ds}_{R,L,\mathbf p,\sigma}$.

\begin{figure}[t]
\centering
\includegraphics[height=0.5\textheight]{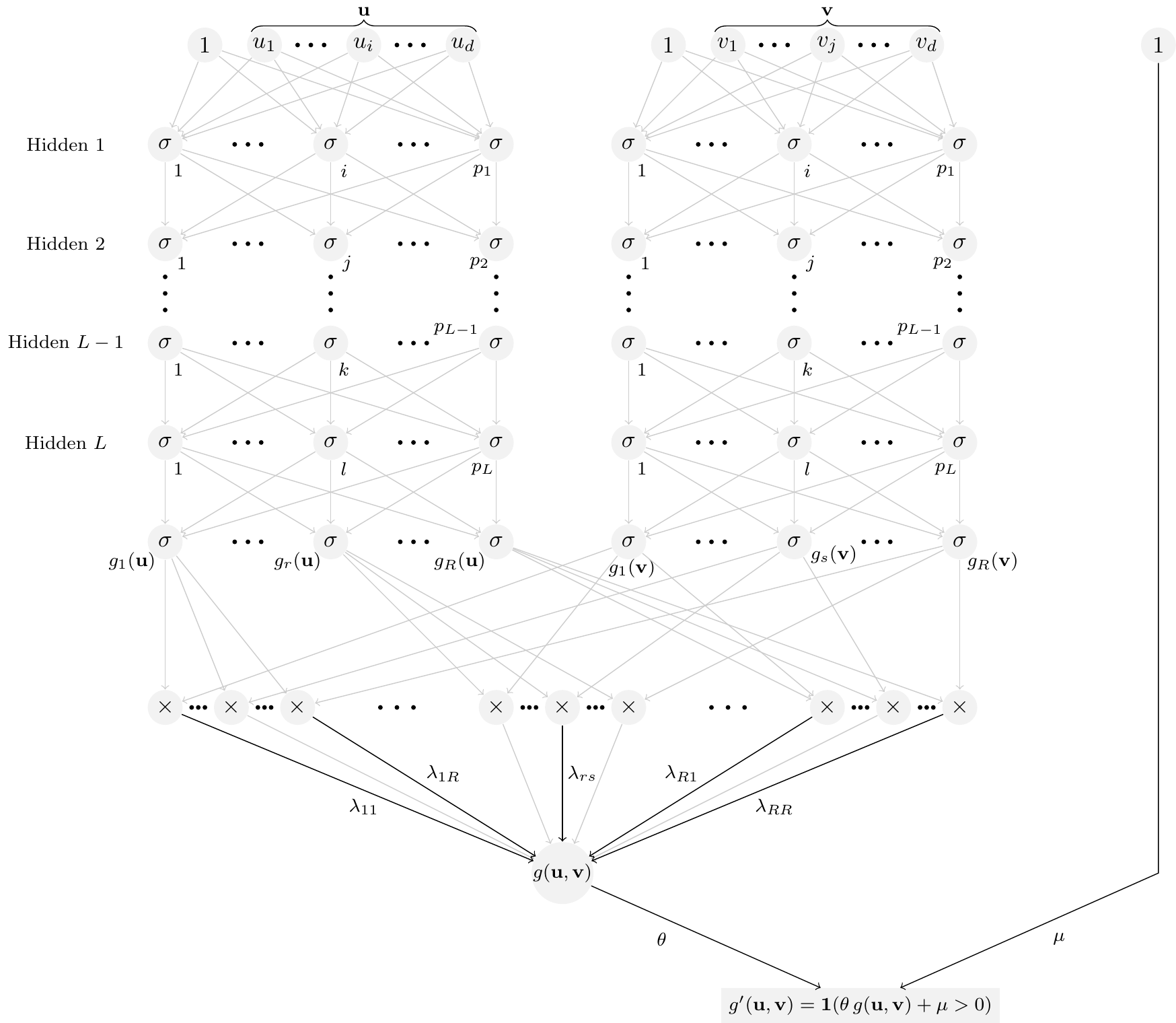}
\caption{\label{fig:deepshared_derived}A schematic representation of the network $g^\prime$ derived from a deepshared network $g$.}
\end{figure}

By construction, using Theorem~14.1 in \citet{anthony1999}, we get that
\begin{align}\label{eq:cover_deepshared_pdim_bound}
\textrm{Pdim}(\F^{\rm ds}_{R,L,\mathbf p,\sigma}) \le \textrm{VCdim}(\F^\prime),
\end{align}
where $\textrm{VCdim}(\F)$ is the Vapnik-Cherbonenkis dimension of the class of functions $\F$. Let $\bm\theta \in \R^W$ denote all the free parameters of the kernel $g^\prime$ (i.e., the weights and biases of the different layers). It is not difficult to see that $W = \sum_{l=1}^L p_l(p_{l-1}+1) + R(p_L+1) + R(R+1)/2 + 2$, where $p_0 = d$. Thus, $g^\prime$ can be viewed as a function from $\R^{2d \times W}$ to $\R$, which maps an element $\wvec = (\uvec^\top,\vvec^\top)^\top \in \R^{2d}$ and $\bm\theta \in \R^W$ to $g^\prime(\uvec,\vvec)$. Also, the function can be evaluated using $t:= 4\sum_{l=1}^L p_l(p_{l-1}+1) + 4R(p_L+1) + 2R^2 + 2$ basic operations of the form:
\begin{itemize}
\item the exponential function $x \mapsto \exp(x)$,
\item the arithmetic operations $+,-,\times,\div$ on real numbers,
\item jumps conditioned on $>,\ge,<,\le,=,\ne$ and comparisons of real numbers,
\end{itemize}
and the output is $\{0,1\}$-valued. Moreover, only $K=2\sum_{l=1}^L p_l + 2R$ of these operations involve the application of the exponential function. Thus, from Theorem~8.14 in \cite{anthony1999}, we get that
\begin{align}\label{eq:cover_deepshared_vcdim_bound}
\mathrm{VCdim}(\F^\prime) \le (W(K+1))^2 + 11 W(K+1)(t+\log_2(9W(K+1)).
\end{align}
Combining \eqref{eq:cover_deepshared_pdim_bound} and \eqref{eq:cover_deepshared_vcdim_bound}, we get that
\begin{align}\label{eq:cover_deepshared_pdim_bound2}
\mathrm{Pdim}(\F^{\rm ds}_{R,L,\mathbf p,\sigma}) \le (W(K+1))^2 + 11 W(K+1)(t+\log_2(9W(K+1)) =: d_{W,K,t}.
\end{align}
Now, using the same derivations used for the deep CovNet class, we get
\begin{align*}
&\log\cover\big(\epsilon,\F^{\rm ds}_{R,L,\mathbf p,\sigma},\|\cdot\|_{\L_2(\Q \times \Q)}\big) = \log\cover\bigg(\frac{\epsilon}{|\Q|},\F^{\rm ds}_{R,L,\mathbf p,\sigma},\|\cdot\|_{\L_2(\nu)} \bigg) \\
&\le \alpha_1 \mathrm{fat}_{\F^{\rm ds}_{R,L,\mathbf p,\sigma}}\bigg(\alpha_2\frac{\epsilon}{|\Q|}\bigg)\,\log^2\Bigg(\frac{\mathrm{fat}_{\F^{\rm ds}_{R,L,\mathbf p,\sigma}}\Big(\frac{\alpha_2\epsilon^2}{|\Q|^2}\Big)|\Q|}{\epsilon}\Bigg) ~~\text{\citep[][Theorem~18.8]{anthony1999}} \\
&\le \alpha_1 \mathrm{Pdim}(\F^{\rm ds}_{R,L,\mathbf p,\sigma})\,\log^2\bigg(\frac{|\Q|\,\mathrm{Pdim}(\F^{\rm ds}_{R,L,\mathbf p,\sigma})}{\epsilon}\bigg) ~~\text{\citep[][Theorem~11.13(i)]{anthony1999}} \\
&\le \alpha_1\,d_{W,K,t}\,\log^2\bigg(\frac{|\Q|\,d_{W,K,t}}{\epsilon}\bigg). ~~~~\text{(by \eqref{eq:cover_deepshared_pdim_bound2})}
\end{align*}
We summarize this in the following lemma.

\begin{lemma}\label{lemma:cover_covnet_deepshared}
Consider the deepshared CovNet class of operators $\widetilde\F^{\rm ds}_{R,L,\mathbf p,\sigma}$, where $L$ is the number of hidden layers and $\mathbf{p}=(p_1,\ldots,p_L)$ are the number of nodes at each of the hidden layers, $\sigma$ is the sigmoidal activation function, and $R$ is the number of components of the model. Then, there exists $\alpha > 0$ such that
\[
\log\cover\big(\epsilon,\widetilde\F^{\rm ds}_{R,L,\mathbf p,\sigma},\vertj{\cdot}_2\big) \le \alpha\,d_{W,K,t}\,\log^2\bigg(\frac{|\Q|\,d_{W,K,t}}{\epsilon}\bigg),
\]
where $W = \sum_{l=1}^L p_l(p_{l-1}+1) + R(p_L+1) + R(R+1)/2 + 2$, $K=2\sum_{l=1}^L p_l + 2R$,  $t= 4\sum_{l=1}^L p_l(p_{l-1}+1) + 4R(p_L+1) + 2R^2 + 2$, and $d_{W,K,t} = \big(W(K+1)\big)^2 + 11 W(K+1)\,\big(t+\log_2(9W(K+1)\big)$.
\end{lemma}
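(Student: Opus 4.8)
The plan is to exploit the fact that a deepshared CovNet kernel, \emph{including} the quadratic combination $\sum_{r,s}\lambda_{r,s}\,g_r(\uvec)\,g_s(\vvec)$, is itself a single feed-forward neural network on the product input $(\uvec,\vvec)\in\R^{2d}$ (as in Figure~\ref{fig:deepshared_covnet}). This is precisely why the product-of-covers bound of Theorem~\ref{thm:covnet_kernel_class_general_bound} is wasteful in the shared case: that result treats $g_1,\ldots,g_R$ as free elements of a class $G$ and raises a per-component cover to the power $R$, thereby ignoring the massive weight sharing and badly over-counting. Instead, I would bound the metric entropy of $\F^{\rm ds}_{R,L,\mathbf p,\sigma}$ directly through its pseudo-dimension, treating the entire kernel as one network.

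First I would pass from the operator covering number to the kernel covering number via the isometry \eqref{eq:operator_kernel_equivalence}, so that it suffices to bound $\cover(\epsilon,\F^{\rm ds}_{R,L,\mathbf p,\sigma},\|\cdot\|_{\L_2(\Q\times\Q)})$. Next I would control $\mathrm{Pdim}(\F^{\rm ds}_{R,L,\mathbf p,\sigma})$ by introducing the thresholded class $\F'$ of binary-valued kernels $g'(\uvec,\vvec)=\1\{\theta\,g(\uvec,\vvec)+\mu>0\}$ (Figure~\ref{fig:deepshared_derived}); by Theorem~14.1 of \citet{anthony1999}, $\mathrm{Pdim}(\F^{\rm ds}_{R,L,\mathbf p,\sigma})\le\mathrm{VCdim}(\F')$. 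The derived object $g'$ is still feed-forward, with one extra input and one extra layer whose output node carries the linear-threshold activation relative to $g$.

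The crux is to bound $\mathrm{VCdim}(\F')$ and to count the relevant size parameters correctly. Here I would invoke Theorem~8.14 of \citet{anthony1999}, which bounds the VC dimension of a class of binary-output networks in terms of the number of adjustable parameters $W$, the number of exponential (sigmoid) operations $K$, and the total number of basic operations $t$ used in evaluation. The bookkeeping must respect the shared structure: the shared hidden layers contribute $\sum_{l=1}^L p_l(p_{l-1}+1)$ parameters \emph{once} (not $R$ times), the unshared final layer contributes $R(p_L+1)$, the symmetric combination matrix $\Lambda$ contributes $R(R+1)/2$ free parameters, and the threshold adds $2$, giving the stated $W$. Because $\uvec$ and $\vvec$ are pushed through identical shared subnetworks, sigmoids are applied $K=2\sum_l p_l+2R$ times, while a full evaluation uses $t=4\sum_l p_l(p_{l-1}+1)+4R(p_L+1)+2R^2+2$ basic operations; Theorem~8.14 then yields $\mathrm{VCdim}(\F')\le d_{W,K,t}$. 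The positive-semidefiniteness constraint on $\Lambda$ only shrinks the class, so the unconstrained symmetric count remains a valid upper bound.

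Finally I would convert this dimension bound into the stated entropy bound. Applying Theorem~18.8 of \citet{anthony1999} to the probability measure $\nu$ proportional to Lebesgue measure on $\Q$ bounds $\log\cover(\cdot,\F^{\rm ds}_{R,L,\mathbf p,\sigma},\|\cdot\|_{\L_2(\nu)})$ by the fat-shattering dimension, and Theorem~11.13(i) gives $\mathrm{fat}(\delta)\le\mathrm{Pdim}\le d_{W,K,t}$ uniformly in $\delta$; rescaling $\|\cdot\|_{\L_2(\nu)}$ to $\|\cdot\|_{\L_2(\Q\times\Q)}$ by the factor $|\Q|$ (exactly as in the deep case) produces $\log\cover\lesssim d_{W,K,t}\log^2(|\Q|\,d_{W,K,t}/\epsilon)$. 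The main obstacle is the parameter/operation count in the third step: one must be careful that weight sharing is counted once while the two input copies double the sigmoid evaluations, since a naive count either reintroduces the factor-$R$ blow-up one is trying to avoid or misstates $K$ and $t$ in a way to which Theorem~8.14 is sensitive.
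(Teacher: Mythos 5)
Your proposal is correct and follows essentially the same route as the paper's own proof: pass to kernels via the isometry, introduce the thresholded class $g'(\uvec,\vvec)=\1\{\theta\,g(\uvec,\vvec)+\mu>0\}$ and bound $\mathrm{Pdim}$ by $\mathrm{VCdim}$ (Theorem~14.1 of \citet{anthony1999}), apply Theorem~8.14 with exactly the paper's shared-structure counts $W$, $K$, $t$, and convert to metric entropy via Theorems~18.8 and~11.13(i). Your bookkeeping observations (shared hidden layers counted once, doubled sigmoid evaluations from the two input copies, $\Lambda\succeq 0$ only shrinking the class) match the paper's treatment, so there is nothing to add.
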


\begin{remark}\label{remark:cover_deepshared_order}
By defining $p_{\rm max} = \max\{d,p_1,\ldots,p_L\}$, we get $W = \O(Lp_{\rm max}^2 + Rp_{\max} + R^2)$, $K = \O(Lp_{\rm max}+R)$ and $t = \O(Lp_{\rm max}^2 + Rp_{\rm max} + R^2)$. Thus, $d_{W,K,t} = \O(L^4p_{\rm max}^6 + R^6 + L^2R^2p_{\rm max}^4 + L^2R^4p_{\rm max}^2 + LR^3p_{\rm max}^3)$. In particular, when $p_1=\ldots=p_L = R$ and $R \ge d$, we get that $d_{W,K,t} = \O(L^4R^6)$, and hence
\[
\log\cover\big(\epsilon,\widetilde\F^{\rm ds}_{R,L,\mathbf p,\sigma},\vertj{\cdot}_2\big) = \O\bigg(L^4R^6\,\log^2\Big(\frac{LR}{\epsilon}\Big)\bigg).
\]
\end{remark}

\section{Proofs of the asymptotic results}\label{supp:asymptotics}

Here, we provide detailed proofs of the asymptotic properties of the CovNet estimators, as laid out in Section~\ref{sec:asymptotics}. We will first derive our results for a general class of models, and then obtain the corresponding results for the CovNet structures as special cases. To be precise, consider a general class of operators $\widetilde\F_N$ (depending on $N$ and possibly on other parameters, which we suppress for ease of exposition) satisfying $\vertj{\G}_2 \le \gamma_N$ for every $\G \in \widetilde\F_N$. Define the following estimator based on $\widetilde\F_N$:
\[
\Chat_{\widetilde\F_N} \in \argmin_{\G \in \widetilde\F_N} \vertj{\Chat_N - \G}_2^2,
\]
where $\Chat_N = N^{-1}\sum_{n=1}^N \X_n \otimes \X_n$ is the empirical covariance operator based on $\X_1,\ldots,\X_N$. We will prove two types of results for this estimator based on two \emph{bias-variance-type} decomposition.

\begin{enumerate}[(A)]
\item \textbf{Consistency}: We will show that $\vertj{\Chat_{\widetilde\F_N} - \C}_2^2 \le B_N + 2 V_N$, where $B_N := \inf_{\G \in \widetilde\F_N} \verti{\G-\C}_2^2$ is the \emph{bias} of the estimator and $V_N$ is the \emph{variance} term, defined as $V_N = \sup_{\G \in \widetilde\F_N} \Big|\vertj{\G - \Chat_N}_2^2 - \E\vertj{\G - \Chat_N}_2^2\Big|$.
We will derive conditions for convergence of $V_N$ to $0$ in terms of $N$, $\gamma_N$, $\beta_N$ and $\cover(\cdot,\widetilde\F_N,\vertj{\cdot}_2)$, the covering number of $\widetilde\F_N$ w.r.t.\ the $\vertj{\cdot}_2$ norm (see Appendix~\ref{supp:covering_number} for details). This will be used to establish weak (in probability) or strong (almost sure) convergence of the estimator.

\vspace{0.02in}

\item \textbf{Rate of convergence}: Here, we will show that $\vertj{\Chat_{\widetilde\F_N} - \C}_2^2 \le \widetilde B_N + \widetilde V_N$, where the bias term satisfies $\E\widetilde B_N = 2 \inf_{\G \in \widetilde\F_N} \vertj{\G-\C}_2^2$. The variance term in this case is $\widetilde V_N = \E\Big(\vertj{\Chat_{\widetilde\F_N} - \Chat_N}_2^2 - \vertj{\C - \Chat_N}_2^2\Big) - 2 \Big(\vertj{\Chat_{\widetilde\F_N} - \Chat_N}_2^2 - \vertj{\C - \Chat_N}_2^2\Big)$. We will derive an upper bound on $\Prob(\widetilde V_N > \epsilon)$ in terms of $N$, $\gamma_N$, $\beta_N$ and the covering number of $\widetilde\F_N$. This will be used to derive an upper bound for $\E\widetilde V_N$ and subsequently, the rate of convergence.
\end{enumerate}

\subsection{Consistency}\label{append:consistency_detailed}

We start with the consistency. In what follows, we denote by $\mathscr X_N$ the data at hand, i.e., $\mathscr X_N = \{\X_1,\ldots,\X_N\}$. Since $\Chat_N$ is an unbiased estimator of $\C$, for any operator $\G$,
\begin{equation}\label{eq:HS_equality_covnet1}
\E\vertj{\G - \Chat_N}_2^2 - \E\vertj{\C - \Chat_N}_2^2 = \verti{\G-\C}_2^2.
\end{equation}
Now, let $\widetilde\C_N$ be a random element distributed identically to $\Chat_N$ and independent of $\mathscr X_N$ (e.g., generated as the empirical covariance of i.i.d.\ observations distributed identically to $\X_1,\ldots,\X_N$, but independent of $\mathscr X_N$). Then, using \eqref{eq:HS_equality_covnet1} we obtain
\begin{align*}
\vertj{\Chat_{\widetilde\F_N} - \C}_2^2 &= \E\Big(\vertj{\Chat_{\widetilde\F_N} - \widetilde\C_N}_2^2 \,\vert\, \mathscr X_N\Big) - \E\Big(\vertj{\C - \widetilde\C_N}_2^2\Big) \\
&= \E\Big(\vertj{\Chat_{\widetilde\F_N} - \widetilde\C_N}_2^2 \,\vert\, \mathscr X_N\Big) - \inf_{\G \in \widetilde\F_N} \E\Big(\vertj{\G-\widetilde\C_N}_2^2\Big) + \inf_{\G \in \widetilde\F_N} \E\Big(\vertj{\G-\widetilde\C_N}_2^2\Big) - \E\Big(\vertj{\C-\widetilde\C_N}_2^2\Big).
\end{align*}
Now,
\begin{align*}
\inf_{\G \in \widetilde\F_N} \E\Big(\vertj{\G-\widetilde\C_N}_2^2\Big) - \E\Big(\vertj{\C - \widetilde\C_N}_2^2\Big) = \inf_{\G \in \widetilde\F_N} \Big\{\E\vertj{\G-\widetilde\C_N}_2^2 - \E\vertj{\C-\widetilde\C_N}_2^2\Big\} = \inf_{\G \in \widetilde\F_N} \vertj{\G-\C}_2^2 = B_N,
\end{align*}
where we have used \eqref{eq:HS_equality_covnet1}. For the other part, we write
\begin{align*}
&\E\Big(\vertj{\Chat_{\widetilde\F_N} - \widetilde\C_N}_2^2 \,\vert\, \mathscr X_N\Big) - \inf_{\G \in \widetilde\F_N} \E\Big(\vertj{\G-\widetilde\C_N}_2^2\Big) \\
&= \sup_{\G \in \widetilde\F_N} \bigg\{\E\Big(\vertj{\Chat_{\widetilde\F_N} - \widetilde\C_N}_2^2 \,\vert\, \mathscr X_N\Big) - \E\Big(\vertj{\G-\widetilde\C_N}_2^2\Big)\bigg\} \\
&\le \sup_{\G \in \widetilde\F_N} \bigg\{\E\Big(\vertj{\Chat_{\widetilde\F_N} - \widetilde\C_N}_2^2 \,\vert\, \mathscr X_N\Big) - \vertj{\Chat_{\widetilde\F_N} - \Chat_N}_2^2 + \vertj{\G-\Chat_N}_2^2 - \E\Big(\vertj{\G-\widetilde\C_N}_2^2\Big)\bigg\} \\
&\le 2 \sup_{\G \in \widetilde\F_N} \bigg|\vertj{\G - \Chat_N} - \E\vertj{\G - \Chat_N}_2^2\bigg| =: 2 V_N,
\end{align*}
where we have used that $\Chat_{\widetilde\F_N} \in \widetilde\F_N$, $\vertj{\Chat_{\widetilde\F_N} - \Chat_N}_2^2 \le \vertj{\G - \Chat_N}_2^2$ and $\E\vertj{\G - \widetilde\C_N}_2^2 = \E\vertj{\G - \Chat_N}_2^2$ for $\G \in \widetilde\F_N$. Using these, we get the \emph{bias-variance} type decomposition
\begin{align}\label{eq:bias_variance_type1}
\vertj{\Chat_{\widetilde\F_N} - \C}_2^2 \le B_N + 2V_N,
\end{align}
where $B_N = \inf_{\G \in \widetilde\F_N} \vertj{\G - \C}_2^2$ and $V_N = \sup_{\G \in \widetilde\F_N} \big|\vertj{\G - \Chat_N} - \E\vertj{\G - \Chat_N}_2^2\big|$. The bias term $B_N$ converges to $0$, which follows from the universal approximation property of the different CovNet models. To control the variance term $V_N$, we use the following lemma which gives conditions under which it converges to $0$.

\begin{lemma}\label{lemma:covnet1_type1_VN_convergence}
Let $\widetilde\F_N$ be a class of operators with $\verti{\G}_2 \le \gamma_N$ for every $\G \in \widetilde\F_N$. Let $\X_1,\ldots,\X_N \overset{\iid}{\sim} \X$, with $\Prob(\|\X\|^2 \le \beta_N)=1$ and $\E(\X) = 0$. Define $V_N = \sup_{\G \in \widetilde\F_N} \big|\vertj{\G-\Chat_N}_2^2 - \E\vertj{\G-\Chat_N}_2^2\big|$. If for every $u > 0$,
\[
\frac{(\beta_N+\gamma_N)^4}{N} \times \log\cover\bigg(\frac{u}{4(\beta_N+\gamma_N)},\widetilde\F_N,\verti{\cdot}_2\bigg) \to 0 \text{ as } N \to \infty,
\]
then $V_N \overset{P}{\to} 0$ as $N \to \infty$. If in addition $(\beta_N+\gamma_N)^4/N^{1-\delta} \to 0$ for some $\delta \in (0,1)$, then $V_N \overset{a.s.}{\to} 0$ as $N \to \infty$.
\end{lemma}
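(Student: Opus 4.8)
The plan is to prove a uniform law of large numbers for $V_N$ by a covering-number-plus-concentration argument, after first peeling off the part of $\vertj{\G-\Chat_N}_2^2$ that genuinely depends on $\G$. Expanding the Hilbert--Schmidt norm and using $\langle \G, \X_n\otimes\X_n\rangle_2 = \langle \G\X_n,\X_n\rangle$, I would write
\[
\vertj{\G-\Chat_N}_2^2 = \frac1N\sum_{n=1}^N h_{\G}(\X_n) + \vertj{\Chat_N}_2^2, \qquad h_{\G}(\X) := \vertj{\G}_2^2 - 2\langle \G\X,\X\rangle .
\]
Subtracting expectations, the remainder $\vertj{\Chat_N}_2^2$ is the same for every $\G$ and therefore leaves the supremum untouched, giving
\[
V_N \le \sup_{\G\in\widetilde\F_N}\Big| \tfrac1N\sum_{n=1}^N h_\G(\X_n) - \E h_\G(\X)\Big| + \big|\vertj{\Chat_N}_2^2 - \E\vertj{\Chat_N}_2^2\big| =: A_N + B_N .
\]
This is the key structural step: all $\G$-dependence in $A_N$ enters only through the bounded functional $h_\G$, which is affine in $\G$, while the genuinely quadratic (V-statistic) term is confined to the single quantity $B_N$ and needs no uniform control.

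For $A_N$ I would run the standard discretisation argument of \citet[][Chapter~9]{gyorfi2002}. Using $\|\X\|^2\le\beta_N$ and $\vertj{\G}_2\le\gamma_N$ one checks the Lipschitz-type estimate $|h_\G(\X)-h_{\G'}(\X)|\le 2(\beta_N+\gamma_N)\vertj{\G-\G'}_2$, so an $\eta$-net of $\widetilde\F_N$ in $\vertj{\cdot}_2$ induces a uniform $2(\beta_N+\gamma_N)\eta$-net of $\{h_\G\}$. Taking $\eta$ proportional to $u/(\beta_N+\gamma_N)$ reduces $\{A_N>u\}$, up to an additive discretisation error of size $u/2$, to a maximum over a finite net of cardinality $M=\cover\big(u/(4(\beta_N+\gamma_N)),\widetilde\F_N,\vertj{\cdot}_2\big)$ (the precise constant being immaterial, as the hypothesis is assumed for every $u>0$). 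Each net member is an average of i.i.d.\ variables of range at most $(\beta_N+\gamma_N)^2$, so a union bound and Hoeffding's inequality yield
\[
\Prob(A_N>u)\le 2M\exp\!\Big(-\frac{cNu^2}{(\beta_N+\gamma_N)^4}\Big)
\]
for an absolute constant $c>0$. The hypothesis $\tfrac{(\beta_N+\gamma_N)^4}{N}\log M\to0$ forces $\log M=o\big(N/(\beta_N+\gamma_N)^4\big)$ and (for a nontrivial net) $N/(\beta_N+\gamma_N)^4\to\infty$, so the exponent dominates $\log M$ and $\Prob(A_N>u)\to0$.

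For $B_N$ I would use bounded differences: replacing a single $\X_i$ changes $\Chat_N$ by at most $2\beta_N/N$ in $\vertj{\cdot}_2$, hence $\vertj{\Chat_N}_2^2$ by at most $4\beta_N^2/N$, so McDiarmid's inequality gives $\Prob(B_N>u)\le 2\exp\big(-u^2N/(8\beta_N^4)\big)$; since $\beta_N\le\beta_N+\gamma_N$, the condition $(\beta_N+\gamma_N)^4/N\to0$ makes $B_N\overset{P}{\to}0$. Combining the two tail bounds gives $V_N\overset{P}{\to}0$, proving weak consistency. For the almost-sure statement I would apply Borel--Cantelli to the very same tail bounds: under the additional assumption $(\beta_N+\gamma_N)^4/N^{1-\delta}\to0$ we have $N/(\beta_N+\gamma_N)^4\ge c'N^{\delta}$ for large $N$, so (using $\log M=o(N/(\beta_N+\gamma_N)^4)$ for the $A_N$ term) both $\Prob(A_N>u)$ and $\Prob(B_N>u)$ are eventually bounded by $\exp(-c''N^{\delta})$, which is summable for $\delta\in(0,1)$; hence $A_N,B_N\overset{a.s.}{\to}0$ and $V_N\overset{a.s.}{\to}0$. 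The step I expect to be most delicate is the treatment of the quadratic term: recognising that $\vertj{\Chat_N}_2^2$ is independent of $\G$ — and can therefore be pulled out of the supremum, so that only a single V-statistic rather than a full empirical process must be concentrated — is what keeps the argument elementary, the remaining care being the bookkeeping of constants in the net resolution so as to invoke the covering-number hypothesis exactly as stated.
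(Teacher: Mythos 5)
Your proof is correct, and it takes a genuinely different --- and more elementary --- route than the paper's. The paper retains the full V-statistic structure $\vertj{\G-\Chat_N}_2^2 = N^{-2}\sum_{n,m}\ip{\G-\X_n\otimes\X_n,\,\G-\X_m\otimes\X_m}_2$, splits $V_N$ into $|V_N-\E V_N| + \E V_N$, controls the first piece by bounded differences applied to the supremum (Lemma~\ref{lemma:covnet1_type1_variance_bound}, Corollary~\ref{cor:covnet1_type1_variance_convergence}), and controls $\E V_N$ by symmetrization into diagonal and off-diagonal Rademacher sums followed by covering-number and conditional Bernstein/bounded-difference arguments (Lemma~\ref{lemma:covnet1_type1_EV1_bound}, Corollary~\ref{cor:covnet1_type1_EV1_convergence}). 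Your decomposition $\vertj{\G-\Chat_N}_2^2 = N^{-1}\sum_n h_{\G}(\X_n) + \vertj{\Chat_N}_2^2$ confines all $\G$-dependence to a bounded i.i.d.\ average, so the uniform control needs only a net, a union bound and Hoeffding, while the residual V-statistic $\vertj{\Chat_N}_2^2$ is a single random variable, not a process, and is dispatched by one application of McDiarmid; no symmetrization and no uniform treatment of a degenerate quadratic term are needed, and you get direct tail bounds on $V_N$ rather than a mean bound plus concentration around the mean. Notably, your linearization is the same device the paper itself uses in its rate-of-convergence argument (proof of Lemma~\ref{lemma:covnet1_type2_general}, where the $\G$-free quantity $\vertj{\C-\Chat_N}_2^2$ is subtracted); you have simply deployed it one lemma earlier, where the paper instead runs the heavier machinery. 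Two minor points, neither a gap: (i) your inference that the hypothesis forces $N/(\beta_N+\gamma_N)^4\to\infty$ requires the covering numbers to stay bounded away from $1$, as you flag; the paper makes the same tacit assumption when it asserts that its covering hypothesis implies the condition $\beta_N^2(\beta_N+\gamma_N)^2/N\to0$ of Corollary~\ref{cor:covnet1_type1_variance_convergence}, and it is harmless for the CovNet classes, whose log-covering numbers diverge. (ii) $h_{\G}$ is quadratic, not affine, in $\G$ because of the term $\vertj{\G}_2^2$; but the centered quantity $h_{\G}(\X_n)-\E h_{\G}(\X) = -2\ip{\G,\,\X_n\otimes\X_n-\C}_2$ is linear in $\G$, which is all your Lipschitz estimate and Hoeffding step actually use, and the range of $h_{\G}(\X_n)$ is $2(\beta_N+\gamma_N)^2$ rather than $(\beta_N+\gamma_N)^2$, which only changes absolute constants.
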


\begin{proof}
The proof is divided into two parts. First, we derive conditions under which $V_N - \E V_N$ converges to $0$, in probability or almost surely (Corollary~\ref{cor:covnet1_type1_variance_convergence}). Then, we derive conditions under which $\E V_N$ converges to $0$ (Corollary~\ref{cor:covnet1_type1_EV1_convergence}). We start with the following lemma.

\begin{lemma}\label{lemma:covnet1_type1_variance_bound}
Consider the setup of Lemma~\ref{lemma:covnet1_type1_VN_convergence}. Then, for all $t \ge 0$,
\[
\Prob\big(|V_N - \E V_N| > t\big) \le 2 \exp\bigg\{-\frac{Nt^2}{8\beta_N^2(\beta_N+\gamma_N)^2}\bigg\}.
\]
\end{lemma}
As a consequence, we get the following.
\begin{corollary}\label{cor:covnet1_type1_variance_convergence}
Consider the setup of Lemma~\ref{lemma:covnet1_type1_VN_convergence}.
\begin{enumerate}[(a)]
\item If $\beta_N^2(\beta_N+\gamma_N)^2/N \to 0$, then $V_N - \E V_N \overset{P}{\to} 0$ as $N \to \infty$. 
\item If $\beta_N^2(\beta_N+\gamma_N)^2/N^{1-\delta} \to 0$ for some $\delta \in (0,1)$, then $V_N - \E V_N \overset{a.s.}{\to} 0$ as $N \to \infty$.
\end{enumerate}
\end{corollary}
\begin{proof}
Part (a) about convergence in probability is easy. For part (b), note that for any $t \ge 0$,
\[
\sum_{n=1}^\infty \Prob\big(|V_N - \E V_N| \ge t \big) \le 2 \sum_{n=1}^\infty \exp\bigg\{-\frac{Nt^2}{8\beta_N^2(\beta_N+\gamma_N)^2}\bigg\} = 2 \sum_{n=1}^\infty \exp\bigg\{-N^\delta \frac{N^{1-\delta}t^2}{8\beta_N^2(\beta_N+\gamma_N)^2}\bigg\} < \infty.
\]
The result now follows from the Borel-Cantelli lemma.
\end{proof}

\begin{proof}[Proof of Lemma~\ref{lemma:covnet1_type1_variance_bound}]
For any $\G \in \widetilde\F_N$,
\begin{equation}\label{eq:V1_different_formulation}
\vertj{\G - \Chat_N}_2^2 = \frac{1}{N^2} \sum_{n=1}^N\sum_{m=1}^N \ip{\G - \X_n\otimes \X_n,\G - \X_m\otimes \X_m}_2. 
\end{equation}
For $\G \in \widetilde\F_N$, define $h_{\G} : \L_2(\Q) \times \L_2(\Q) \to \R$ as
\[
h_{\G}(x,y) = \ip{\G - x \otimes x,\G - y\otimes y}_2,~ x,y \in \L_2(\Q).
\]
Then, for any $x,y \in \L_2(\Q)$ with $\|x\|^2, \|y\|^2 \le \beta_N$, we get
\begin{align*}
\big|h_{\G}(x,x) - h_{\G}(y,y)\big| &= \Big|\vertj{\G - x \otimes x}_2^2 - \vertj{\G - y\otimes y}_2^2\Big| = \Big|\ip{y \otimes y - x \otimes x, 2\G - x \otimes x - y \otimes y}_2\Big| \\
&\le \big(\|x\|^2 + \|y\|^2\big)\,\big(2\vertj{\G}_2 + \|x\|^2 + \|y\|^2\Big) \le 4\beta_N(\beta_N+\gamma_N).
\end{align*}
Also, for any $x,y,z \in \L_2(\Q)$ with $\|x\|^2, \|y\|^2, \|z\|^2 \le \beta_N$
\begin{align*}
\big|h_{\G}(x,z) - h_{\G}(y,z)\big| &= \Big|\ip{\G - x \otimes x,\G - z \otimes z}_2 - \ip{\G - y\otimes y, \G - z \otimes z}_2\Big| \\
&= \Big|\ip{y \otimes y - x \otimes x, \G - z\otimes z}_2\Big| \le \big(\|x\|^2+\|y\|^2\big)\,\big(\vertj{\G}_2 + \|z\|^2\big) \le 2\beta_N(\beta_N+\gamma_N).
\end{align*}
Now, for $\G \in \widetilde\F_N$, define $f_{\G}: \L_2(\Q)^N \to \R$ as $f_{\G}(x_1,\ldots,x_N) = \sum_{n=1}^N\sum_{m=1}^N h_{\G}(x_n,x_m)$. The function $f_{\G}$ is symmetric in its arguments, and for any $x,x^\prime,x_2,\ldots,x_N \in \L_2(\Q)$ with squared norm bounded by $\beta_N$,
\[
\big|f_{\G}(x,x_2,\ldots,x_N) - f_{\G}(x^\prime,x_2,\ldots,x_N)\big| \le 4N\beta_N(\beta_N+\gamma_N).
\]
Then, by defining $g: \L_2(\Q)^N \to \R$ as
\[
g(x_1,\ldots,x_N) = \sup_{\G \in \widetilde\F_N} \Big|\sum_{n=1}^N\sum_{m=1}^N \big\{h_{\G}(x_n,x_m) - \E h_{\G}(\X_n,\X_m)\big\} \Big| = \sup_{\G \in \widetilde\F_N} \Big|f_{\G}(x_1,\ldots,x_N) - \E f_{\G}(\X_1,\ldots,\X_N)\Big|,
\]
we get that $g$ is a symmetric function in its arguments, and for any $\G \in \widetilde\F_N$,
\begin{align*}
&\Big|f_{\G}(x,x_2,\ldots,x_N) - \E f_{\G}(\X_1,\ldots,\X_N)\Big| - g(x^\prime,x_2,\ldots,x_N) \\
&\le \Big|f_{\G}(x,x_2,\ldots,x_N) - \E f_{\G}(\X_1,\ldots,\X_N)\Big| - \Big|f_{\G}(x^\prime,x_2,\ldots,x_N) - \E f_{\G}(\X_1,\ldots,\X_N)\Big| \\
&\le \Big|f_{\G}(x,x_2,\ldots,x_N) - f_{\G}(x^\prime,x_2,\ldots,x_N)\Big| \le 4N\beta_N(\beta_N+\gamma_N).
\end{align*}
This, upon taking supremum over $\G \in \widetilde\F_N$ and interchanging the roles of $x,x^\prime$, gives us 
\[
|g(x,x_2,\ldots,x_N) - g(x^\prime,x_2,\ldots,x_N)| \le 4N\beta_N(\beta_N+\gamma_N).
\]
Now, using the method of bounded difference \citep[][Corollary 2.21]{wainwright2019},
we get that, for all $t \ge 0$,
\[
\Prob\big(|g(\X_1,\ldots,\X_N) - \E g(\X_1,\ldots,\X_N)| > t\big) \le 2 \exp\bigg\{-\frac{t^2}{8N^3\beta_N^2(\beta_N+\gamma_N)^2}\bigg\}.
\]
The proof is complete upon noting that $V_N = g(\X_1,\ldots,\X_N)/N^2$ (cf. \eqref{eq:V1_different_formulation}).
\end{proof}

Next, we derive bounds on $\E V_N$.
\begin{lemma}\label{lemma:covnet1_type1_EV1_bound}
Consider the setup of Lemma~\ref{lemma:covnet1_type1_VN_convergence}. Then, for every $u > 0$,
\begin{align*}
\E V_N \le 4u &+ \frac{16(\beta_N+\gamma_N)^2}{N^3u} \times \cover\bigg(\frac{Nu}{4(\beta_N+\gamma_N)},\widetilde\F_N,\verti{\cdot}_2\bigg) \times \exp\bigg\{-\frac{N^3u^2}{4(\beta_N+\gamma_N)^2}\bigg\} \\
&+ \frac{128(\beta_N+\gamma_N)^4}{Nu} \times \cover\bigg(\frac{u}{4(\beta_N+\gamma_N)},\widetilde\F_N,\verti{\cdot}_2\bigg) \times \exp\bigg\{-\frac{Nu^2}{32(\beta_N+\gamma_N)^4}\bigg\}.
\end{align*}
\end{lemma}
This gives us the following condition for the convergence of $\E V_N$ to $0$.
\begin{corollary}\label{cor:covnet1_type1_EV1_convergence}
Consider the setup of Lemma~\ref{lemma:covnet1_type1_VN_convergence}. If for every $u>0$,
\[
\frac{(\beta_N+\gamma_N)^4}{N} \times \log\cover\bigg(\frac{u}{4(\beta_N+\gamma_N)},\widetilde\F_N,\verti{\cdot}_2\bigg) \to 0 \text{ as } N \to \infty,
\]
then $\E V_N \to 0$ as $N \to \infty$.
\end{corollary}
\begin{proof}
The condition ensures that for every $u > 0$, $\lim_{N \to \infty} \E V_N \le 4u$. The proof is completed upon taking limit as $u$ goes to $0$.
\end{proof}
\begin{proof}[Proof of Lemma~\ref{lemma:covnet1_type1_EV1_bound}]
Let $\X_1^\prime,\ldots,\X_N^\prime \overset{\iid}{\sim} \X$ be independent of $\X_1,\ldots,\X_N$. Now,
\begin{align}\label{eq:EV1_convergence_eq1}
\E V_N &= \E\Bigg\{\sup_{\G \in \widetilde\F_N} \bigg|\frac{1}{N^2}\sum_{n=1}^N\sum_{m=1}^N h_{\G}(\X_n,\X_m) - \frac{1}{N^2}\sum_{n=1}^N\sum_{m=1}^N \E h_{\G}(\X_n^\prime,\X_m^\prime)\bigg|\Bigg\} \nonumber\\
& =\E\Bigg[\sup_{\G \in \widetilde\F_N} \Bigg\{\E\bigg|\frac{1}{N^2}\sum_{n=1}^N\sum_{m=1}^N h_{\G}(\X_n,\X_m) - \frac{1}{N^2}\sum_{n=1}^N\sum_{m=1}^N h_{\G}(\X_n^\prime,\X_m^\prime)\bigg|\,\Big\vert\mathcal D_N\Bigg\}\Bigg] \nonumber\\
& \le \E\Bigg[\E\Bigg\{\sup_{\G \in \widetilde\F_N} \bigg|\frac{1}{N^2}\sum_{n=1}^N\sum_{m=1}^N h_{\G}(\X_n,\X_m) - \frac{1}{N^2}\sum_{n=1}^N\sum_{m=1}^N h_{\G}(\X_n^\prime,\X_m^\prime)\bigg|\,\Big\vert\mathcal D_N\Bigg\}\Bigg]~~(\text{by Fatou's lemma}) \nonumber\\
& = \E\Bigg\{\sup_{\G \in \widetilde\F_N} \bigg|\frac{1}{N^2}\sum_{n=1}^N\sum_{m=1}^N h_{\G}(\X_n,\X_m) - \frac{1}{N^2}\sum_{n=1}^N\sum_{m=1}^N h_{\G}(\X_n^\prime,\X_m^\prime)\bigg|\Bigg\} \nonumber\\
&\le \frac{1}{N^2} \E\Bigg[\sup_{\G \in \widetilde\F_N} \bigg|\sum_{n=1}^N \Big\{h_{\G}(\X_n,\X_n) - h_{\G}(\X_n^\prime,\X_n^\prime)\Big\}\bigg| + \sup_{\G \in \widetilde\F_N} \bigg|\sumsum_{1 \le n \ne m \le N} \Big\{h_{\G}(\X_n,\X_m) - h_{\G}(\X_n^\prime,\X_m^\prime)\Big\}\bigg|\Bigg] \nonumber\\
&=: E_1 + E_2.
\end{align}
By symmetry, we can show that
\begin{align}\label{eq:EV1_convergence_eq2}
E_1 \le \frac{2}{N^2} \E\bigg[\sup_{\G \in \widetilde\F_N} \bigg|\sum_{n=1}^N \zeta_n h_{\G}(\X_n,\X_n)\bigg|\bigg] \text{ and } E_2 \le \frac{2}{N^2} \E\bigg[\sup_{\G \in \widetilde\F_N} \bigg|\sumsum_{1 \le n \ne m \le N} \zeta_n\zeta_m h_{\G}(\X_n,\X_m)\bigg|\bigg],
\end{align}
where $\zeta_1,\ldots,\zeta_N$ are i.i.d.\ \emph{Rademacher} random variables, which take the values $\pm 1$ with equal probability, independent of all the other variables. Next, we derive upper bounds on
\[
\E\bigg[\sup_{\G \in \widetilde\F_N} \bigg|\sum_{n=1}^N \zeta_n h_{\G}(\X_n,\X_n)\bigg|\bigg] \text{ and } \frac{2}{N^2} \E\bigg[\sup_{\G \in \widetilde\F_N} \bigg|\sumsum_{1 \le n \ne m \le N} \zeta_n\zeta_m h_{\G}(\X_n,\X_m)\bigg|\bigg].
\]

For any $\G_1,\G_2 \in \widetilde\F_N$,
\begin{align*}
\bigg|\frac{1}{N}\sum_{n=1}^N \zeta_n \big\{h_{\G_1}(\X_n,\X_n) - h_{\G_2}(\X_n,\X_n)\big\}\bigg| &=\bigg|\frac{1}{N}\sum_{n=1}^N \zeta_n \ip{\G_1-\G_2,\G_1+\G_2-2\X_n\otimes \X_n}_2\bigg| \\
&\le \vertj{\G_1-\G_2}_2 \times \frac{1}{N}\sum_{n=1}^N \vertj{\G_1+\G_2 - 2\X_n \otimes \X_n}_2 \\
&\le 2(\beta_N+\gamma_N) \times \vertj{\G_1-\G_2}_2.
\end{align*}
Now, for $\delta > 0$, let $\widetilde\F_N(\delta)$ be an $\delta$-cover of the least possible size for $\widetilde\F_N$ w.r.t.\ the $\vertj{\cdot}_2$ norm. Then, using the previous equation, for all $t > 0$,
\begin{align*}
&\Prob\Bigg\{\sup_{\G \in \widetilde\F_N} \bigg|\frac{1}{N}\sum_{n=1}^N \zeta_n h_{\G}(\X_n,\X_n)\bigg| > t \Bigg\} \nonumber\\
&= \E\,\Prob\Bigg\{\sup_{\G \in \widetilde\F_N} \bigg|\frac{1}{N} \sum_{n=1}^N \zeta_n h_{\G}(\X_n,\X_n)\bigg| > t \,\Big\vert\, \mathscr X_N  \Bigg\} \nonumber\\
&\le \E\,\Prob\Bigg\{\sup_{\G \in \widetilde\F_N\big(\frac{t}{4(\beta_N+\gamma_N)}\big)} \bigg|\frac{1}{N} \sum_{n=1}^N \zeta_n h_{\G}(\X_n,\X_n)\bigg| > \frac{t}{2} \,\Big\vert\, \mathscr X_N  \Bigg\} \nonumber \\
&\le \Big|\widetilde\F_N\Big(\frac{t}{4(\beta_N+\gamma_N)}\Big)\Big| \times \E\Bigg[\sup_{\G \in \widetilde\F_N\big(\frac{t}{4(\beta_N+\gamma_N)}\big)}\Prob\bigg\{\bigg|\frac{1}{N} \sum_{n=1}^N \zeta_n h_{\G}(\X_n,\X_n)\bigg| > \frac{t}{2} \,\Big\vert\, \mathscr X_N \bigg\}\Bigg]. \nonumber
\end{align*}
Note that the cardinality of $\widetilde\F_N(\delta)$ is $\cover(\delta,\widetilde\F_N,\vertj{\cdot}_2)$. Also, $\sum_{n=1}^N h_{\G}^2(\X_n,\X_n) \le N(\beta_N+\gamma_N)^2$ for all $\G \in \widetilde\F_N(t/(4(\beta_N+\gamma_N)))$. So, using the Bernstein's inequality we get that for all $\G \in \widetilde\F_N(t/(4(\beta_N+\gamma_N)))$,
\begin{align*}
\Prob\bigg\{\bigg|\frac{1}{N} \sum_{n=1}^N \zeta_n h_{\G}(\X_n,\X_n)\bigg| > \frac{t}{2} \,\Big\vert\, \mathscr X_N \bigg\} \le 2\,\exp\bigg\{-\frac{Nt^2}{4(\beta_N+\gamma_N)^2}\bigg\},
\end{align*}
where the bound is free of $\mathscr X_N$. All these give us that for all $t > 0$,
\begin{align}\label{eq:tail_bound_E1}
&\Prob\Bigg\{\sup_{\G \in \widetilde\F_N} \bigg|\frac{1}{N}\sum_{n=1}^N \zeta_n h_{\G}(\X_n,\X_n)\bigg| > t \Bigg\} \le 2 \times \cover\bigg(\frac{t}{4(\beta_N+\gamma_N)},\widetilde\F_N,\verti{\cdot}_2\bigg) \times \exp\bigg\{-\frac{Nt^2}{4(\beta_N+\gamma_N)^2}\bigg\}.
\end{align}
Again, for any $\G_1,\G_2 \in \widetilde\F_N$,
\begin{align*}
&\bigg|\frac{1}{N^2} \sumsum_{1 \le m \ne n \le N} \zeta_n\zeta_m\big\{h_{\G_1}(\X_n,\X_m) - h_{\G_2}(\X_n,\X_m)\big\}\bigg| \\
&= \bigg|\frac{1}{N^2} \sumsum_{1 \le m \ne n \le N} \zeta_n\zeta_m\ip{\G_1-\G_2,\G_1+\G_2-\X_n\otimes\X_n-\X_m\otimes\X_m}_2\bigg| \\
&\le \vertj{\G_1-\G_2}_2 \times \frac{1}{N^2}\sumsum_{1 \le n \ne m \le N} \vertj{\G_1+\G_2-\X_n\otimes\X_n-\X_m\otimes\X_m}_2 \\
&\le 2\,(\beta_N+\gamma_N)\,\vertj{\G_1-\G_2}_2.
\end{align*}
Using this, and proceeding in a similar way as before, we can show that for all $t>0$,
\begin{align*}
&\Prob\Bigg\{\sup_{\G \in \widetilde\F_N} \bigg|\frac{1}{N^2}\sumsum_{1 \le m \ne n \le N} \zeta_n\zeta_m h_{\G}(\X_m,\X_n)\bigg| > t \Bigg\} \\
&\le \cover\bigg(\frac{t}{4(\beta_N+\gamma_N)},\widetilde\F_N,\vertj{\cdot}_2\bigg) \times \E\Bigg[\sup_{\G \in \widetilde\F_N\big(\frac{t}{4(\beta_N+\gamma_N)}\big)}\Prob\bigg\{\bigg|\frac{1}{N^2}\sumsum_{1 \le m \ne n \le N} \zeta_n\zeta_m h_{\G}(\X_m,\X_n)\bigg| > \frac{t}{2} \,\Big\vert\, \mathscr X_N\bigg\}\Bigg].
\end{align*}
Note that $|h_{\G}(\X_n,\X_n)| \le (\beta_N+\gamma_N)^2$ almost surely for all $\G \in \widetilde\F_N$. Thus, using the method of bounded difference \citep[][Corollary~2.21]{wainwright2019}, we can show that for all $\G \in \widetilde\F_N(t/(4(\beta_N+\gamma_N)))$,
\begin{align*}
\Prob\bigg\{\bigg|\frac{1}{N^2}\sumsum_{1 \le m \ne n \le N} \zeta_n\zeta_m h_{\G}(\X_m,\X_n)\bigg| > \frac{t}{2} \,\Big\vert\, \mathscr X_N\bigg\} \le 2\,\exp\bigg\{-\frac{Nt^2}{32(\beta_N+\gamma_N)^4}\bigg\},
\end{align*}
for all $t>0$. Thus, we finally get that for all $t>0$,
\begin{align}\label{eq:tail_bound_E2}
&\Prob\Bigg\{\sup_{\G \in \widetilde\F_N} \bigg|\frac{1}{N^2}\sumsum_{1 \le n \ne m \le N} \zeta_n\zeta_m h_{\G}(\X_n,\X_m)\bigg| > t\Bigg\} \nonumber\\
&\kern30ex\le 2 \times \cover\bigg(\frac{t}{4(\beta_N+\gamma_N)},\widetilde\F_N,\verti{\cdot}_2\bigg) \times \exp\bigg\{-\frac{Nt^2}{32(\beta_N+\gamma_N)^4}\bigg\}.
\end{align}
Now, for a non-negative random variable $Y$,
\[
\E(Y) = \int_{0}^{\infty} \Prob(Y>t) \diff t \le u + \int_{u}^\infty \Prob(Y>t) \diff t,
\]
holds for every $u > 0$. Using this with \eqref{eq:tail_bound_E1}, we get that
\begin{align*}
& \E\Bigg[\frac{1}{N^2} \sup_{\G \in \widetilde\F_N} \bigg|\sum_{n=1}^N \zeta_n h_{\G}(\X_n,\X_n)\bigg| > t \Bigg] \\
&\kern5ex \le u + 2 \int_{u}^\infty \cover\bigg(\frac{Nt}{4(\beta_N+\gamma_N)},\widetilde\F_N,\verti{\cdot}_2\bigg) \times \exp\bigg\{-\frac{N^3t^2}{4(\beta_N+\gamma_N)^4}\bigg\} \diff t \\
&\kern5ex \le u + 2 \times \cover\bigg(\frac{Nu}{4(\beta_N+\gamma_N)},\widetilde\F_N,\verti{\cdot}_2\bigg) \int_{u}^\infty \exp\bigg\{-\frac{N^3ut}{4(\beta_N+\gamma_N)^4}\bigg\} \diff t\\
&\kern5ex = u + 2 \times \frac{4(\beta_N+\gamma_N)^2}{N^3u} \times \cover\bigg(\frac{Nu}{4(\beta_N+\gamma_N)},\widetilde\F_N,\verti{\cdot}_2\bigg) \times \exp\bigg\{-\frac{N^3u^2}{4(\beta_N+\gamma_N)^4}\bigg\}.
\end{align*}
Similarly, from \eqref{eq:tail_bound_E2}, we get
\begin{align*}
&\E\Bigg[\frac{1}{N^2} \sup_{\G \in \widetilde\F_N} \bigg|\sumsum_{1 \le n \ne m \le N} \zeta_n\zeta_m h_{\G}(\X_n,\X_m)\bigg| > t\Bigg] \\
&\kern5ex \le u + 2 \times \frac{32(\beta_N+\gamma_N)^4}{Nu} \times \cover\bigg(\frac{u}{4(\beta_N+\gamma_N)},\widetilde\F_N,\verti{\cdot}_2\bigg) \times \exp\bigg\{-\frac{Nu^2}{32(\beta_N+\gamma_N)^4}\bigg\}.
\end{align*}
The proof is complete upon substituting these upper bounds in \eqref{eq:EV1_convergence_eq1} in addition to \eqref{eq:EV1_convergence_eq2}.
\end{proof}

The proof of Lemma~\ref{lemma:covnet1_type1_VN_convergence} now follows upon noting that the stipulated assumptions imply that the conditions of Corollaries~\ref{cor:covnet1_type1_variance_convergence} and \ref{cor:covnet1_type1_EV1_convergence} are satisfied.
\end{proof}

Next, we derive an upper bound on the Hilbert-Schmidt norm of an operator from the shallow, deep and deepshared CovNet classes. In particular, we get the following.
\begin{proposition}\label{prop:HS_norm_bound}
Let $\widetilde\F_N$ be any of the three CovNet classes of operators (i.e., shallow/deepshared/deep). Then, for any $\G \in \widetilde\F_N$, $\vertj{\G}_2 \le R\lambda_N|\Q|$.
\end{proposition}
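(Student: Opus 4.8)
The plan is to pass from the operator to its kernel and recognize the squared Hilbert–Schmidt norm as a matrix trace. By the isometry $\vertj{\G}_2=\|g\|_{\L_2(\Q\times\Q)}$, where $g(\uvec,\vvec)=\sum_{r,s}\lambda_{r,s}\,g_r(\uvec)\,g_s(\vvec)$ is the common form of all three CovNet kernels, I would expand $\|g\|_{\L_2(\Q\times\Q)}^2=\iint_{\Q\times\Q}g(\uvec,\vvec)^2\diff\uvec\diff\vvec$, integrate out $\uvec$ and $\vvec$ separately, and collect the resulting sums into the Gram matrix $\widetilde{\mathrm G}=((\widetilde g(r,s)))$ of the constituents, where $\widetilde g(r,s)=\int_\Q g_r(\uvec)\,g_s(\uvec)\diff\uvec$ (the same matrix appearing in Section~\ref{sec:eigendecomposition}). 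A direct index bookkeeping then gives the identity
\[
\vertj{\G}_2^2 \;=\; \mathrm{tr}\big(\widetilde{\mathrm G}\Lambda\widetilde{\mathrm G}\Lambda\big)\;=\;\mathrm{tr}\big((\widetilde{\mathrm G}\Lambda)^2\big).
\]

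The two structural inputs are now the boundedness of the constituents and the eigenvalue restriction on $\Lambda$. For all three architectures the constituents $g_r$ are obtained by a final application of the sigmoidal activation, so $0\le g_r\le 1$ pointwise on $\Q$; hence $\|g_r\|^2=\int_\Q g_r^2\le|\Q|$ and therefore $\mathrm{tr}(\widetilde{\mathrm G})=\sum_{r=1}^R\|g_r\|^2\le R|\Q|$. The second input is the constraint $0\preceq\Lambda\preceq\lambda_N\mathrm I_R$ coming from the restricted classes \eqref{eq:covnet_class_bounded}.

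To bound $\mathrm{tr}((\widetilde{\mathrm G}\Lambda)^2)$ without being wasteful, I would use that $\widetilde{\mathrm G}\Lambda$ is similar to $\widetilde{\mathrm G}^{1/2}\Lambda\widetilde{\mathrm G}^{1/2}\succeq 0$, so its eigenvalues $\mu_1,\dots,\mu_R$ are nonnegative. This yields $\mathrm{tr}((\widetilde{\mathrm G}\Lambda)^2)=\sum_i\mu_i^2\le(\sum_i\mu_i)^2=(\mathrm{tr}(\widetilde{\mathrm G}\Lambda))^2$, and then, by monotonicity of the trace under the Loewner order, $\mathrm{tr}(\widetilde{\mathrm G}\Lambda)=\mathrm{tr}(\widetilde{\mathrm G}^{1/2}\Lambda\widetilde{\mathrm G}^{1/2})\le\lambda_N\,\mathrm{tr}(\widetilde{\mathrm G})\le R\lambda_N|\Q|$. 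Combining gives $\vertj{\G}_2^2\le(R\lambda_N|\Q|)^2$, i.e.\ the claim. The derivation is uniform across the shallow, deep and deepshared classes, since only the pointwise boundedness $0\le g_r\le 1$ (and not the internal architecture of $g_r$) enters.

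The calculation is essentially routine linear algebra, so the only real point of care — and what I would flag as the ``obstacle'' — is constant-sharpness rather than existence of a bound. A naive route, diagonalizing $\Lambda=\sum_i\eta_i\mathbf e_i\mathbf e_i^\top$ and applying the triangle inequality termwise as in the proof of Theorem~\ref{thm:covnet_kernel_class_general_bound}, bounds each $\|\widetilde g_i\|^2$ by $R|\Q|$ and loses an extra factor of $R$, producing the weaker $\vertj{\G}_2\le R^2\lambda_N|\Q|$. It is precisely the trace representation together with the inequality $\sum_i\mu_i^2\le(\sum_i\mu_i)^2$ — valid because the eigenvalues of the PSD product $\widetilde{\mathrm G}\Lambda$ are nonnegative — that recovers the stated bound $R\lambda_N|\Q|$.
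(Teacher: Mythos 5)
Your proof is correct, but it takes a genuinely different route from the paper's. The paper works in $\L_\infty$: since the kernel $g$ is non-negative definite, its supremum over $\Q\times\Q$ is attained on the diagonal, and there $g(\uvec,\uvec)=\mathbf a^\top\Lambda\mathbf a$ with $a_r=g_r(\uvec)\in[0,1]$, so $\Lambda\preceq\lambda_N\mathrm I_R$ gives $\|g\|_{\L_\infty(\Q\times\Q)}\le R\lambda_N$ directly; integrating this pointwise bound over $\Q\times\Q$ yields $\vertj{\G}_2\le R\lambda_N|\Q|$. You instead stay entirely in $\L_2$, computing $\vertj{\G}_2^2$ exactly as $\mathrm{tr}\big((\widetilde{\mathrm G}\Lambda)^2\big)$ and controlling it by spectral arguments on a product of positive semi-definite matrices. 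Both routes avoid the lossy factor-of-$R$ triangle inequality, and your diagnosis of that pitfall is accurate: naively reusing the spectral decomposition from the proof of Theorem~\ref{thm:covnet_kernel_class_general_bound} does produce the weaker bound $R^2\lambda_N|\Q|$. What the paper's route buys: it is more elementary, and the intermediate uniform bound $\|g\|_{\L_\infty(\Q\times\Q)}\le R\lambda_N$ is strictly more information than the Hilbert--Schmidt bound (for these continuous kernels it also controls the diagonal, hence the trace norm $\vertj{\G}_1\le R\lambda_N|\Q|$). What your route buys: the identity $\vertj{\G}_2^2=\mathrm{tr}\big((\widetilde{\mathrm G}\Lambda)^2\big)$ is exact, so all slack is isolated in two clean inequalities, and you only ever need $\L_2$ control of the constituents via $\mathrm{tr}(\widetilde{\mathrm G})\le R|\Q|$, not pointwise bounds; your bound would therefore survive with $|\Q|$ replaced by $\max_r\|g_r\|^2$ for constituents that are merely square-integrable. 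One small rewording would make your argument airtight: when $\widetilde{\mathrm G}$ is singular, $\widetilde{\mathrm G}\Lambda$ is not literally similar to $\widetilde{\mathrm G}^{1/2}\Lambda\widetilde{\mathrm G}^{1/2}$; instead invoke the fact that $AB$ and $BA$ have the same characteristic polynomial for square matrices, applied with $A=\widetilde{\mathrm G}^{1/2}$ and $B=\widetilde{\mathrm G}^{1/2}\Lambda$, which gives nonnegativity of the eigenvalues of $\widetilde{\mathrm G}\Lambda$ without any invertibility assumption.
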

\begin{proof}
We start with the class of shallow CovNet operators $\widetilde\F_{R,\lambda_N}^{\rm sh}$. Let $\G \in \widetilde\F_{R,\lambda_N}^{\rm sh}$ be an operator with kernel of the form
\[
g(\uvec,\vvec) = \sum_{r=1}^R \sum_{s=1}^R \lambda_{r,s}\,\sigma(\mathbf w_r^\top\uvec+b_r)\,\sigma(\mathbf w_s^\top\vvec+b_s),\quad \uvec,\vvec \in \Q,
\]
where $\Lambda = (\lambda_{r,s})$ satisfies $\mathrm 0 \preceq \Lambda \preceq \lambda_N \mathrm{I}_R$. Since $g$ is a non-negative definite kernel 
\[
\|g\|_{\L_{\infty}(\Q \times \Q)} := \sup_{\uvec,\vvec \in \Q} |g(\uvec,\vvec)| = \sup_{\uvec \in \Q} g(\uvec,\uvec).
\]
It is easy to see that for any $\uvec \in \Q$, 
\[
g(\uvec,\uvec) = \sum_{r=1}^{R} \sum_{s=1}^{R} \lambda_{r,s}\, \sigma(\mathbf w_r^\top \uvec + b_r)\,\sigma(\mathbf w_s^\top \uvec + b_s) = \sum_{r=1}^{R}\sum_{s=1}^{R} \lambda_{r,s}\,a_r\,a_s = {\bf a}^\top \Lambda {\bf a},
\]
where $a_r = \sigma(\mathbf w_r^\top \uvec + b_r), r=1,\ldots,R$, and ${\bf a} = (a_1,\ldots,a_R)^\top$. Since $0 \preceq \Lambda \preceq \lambda_N {\rm I}_R$, ${\bf a}^\top \Lambda {\bf a} \le \lambda_N {\bf a}^\top{\bf a} = \lambda_N \sum_{r=1}^{R} a_r^2$. Also, since $\sigma$ is a sigmoidal activation function, $0 \le a_r \le 1$ for $r=1,\ldots,R$. This gives us $\|g\|_{\L_{\infty}(\Q \times \Q)} \le R \lambda_N$, which in turn shows 
\begin{equation*}
\|g\|^2_{\L_2(\Q \times \Q)} = \iint_{\Q \times \Q} g^2(\uvec,\vvec)\diff\uvec\diff\vvec \le R^2 \lambda_N^2 |\Q|^2.
\end{equation*}
Thus, for every $\G \in \widetilde\F_{R,\lambda_N}^{\rm sh}$, $\verti{\G}_2 = \|g\|_{\L_2(\Q \times \Q)} \le R\lambda_N|\Q|$, proving the result for the shallow CovNet class. For the deep and the deepshared CovNet classes, consider a kernel of the form 
\[
g(\uvec,\vvec) = \sum_{r=1}^R \sum_{s=1}^R \lambda_{r,s}\,g_r(\uvec)\,g_s(\vvec),\quad\uvec,\vvec \in \Q,
\]
where $\|g_r\|_{\L_{\infty}(\Q)} \le M$ for all $r=1,\ldots,R$, and $\mathrm 0 \preceq \Lambda = (\lambda_{r,s}) \preceq \lambda_N\,\mathrm I_R$. Then, similarly to the previous derivations, it can be shown that $\|g\|_{\L_{\infty}(\Q \times \Q)} \le R\lambda_N M^2$ and $\|g\|_{\L_2(\Q \times \Q)} \le R\lambda_N M^2 |\Q|$. Since our activation function is sigmoidal, in particular since $|\sigma(t)| \le 1$ for all $t$, it follows that $M \le 1$. Thus, for the deep as well as the deepshared CovNet class, we get that $\sup_{\G \in \widetilde\F_N} \vertj{\G}_2 \le R\lambda_N |\Q|$.
\end{proof}

Now, we combine the pieces together. For the shallow CovNet estimator, using \eqref{eq:bias_variance_type1}, we get that
\[
\vertj{\Chat^{\rm sh}_{R,N} - \C}_2^2 \le B_N + 2V_N,
\]
where $B_N = \inf_{\G \in \widetilde\F_{R,\lambda_N}^{\rm sh}} \vertj{\G - \C}_2^2$ and $V_N = \sup_{\G \in \widetilde\F_{R,\lambda_N}^{\rm sh}} \big|\vertj{\G-\Chat_N}_2^2 - \E\vertj{\G-\Chat_N}_2^2\big|$. By Theorem~\ref{thm:universal_approximation_shallow} and Remark~\ref{remark:universal_approximation_bounded}, $B_N \to 0$ as $R,\lambda_N \to \infty$. From Lemma~\ref{lemma:covnet1_type1_VN_convergence}, we know that $V_N \overset{P}{\to} 0$ if, for all $u > 0$,
\[
\frac{(\beta_N+\gamma_N)^4}{N} \times \log\cover\bigg(\frac{u}{4(\beta_N+\gamma_N)},\widetilde\F_{R,\lambda_N}^{\rm sh},\verti{\cdot}_2\bigg) \to 0 \text{ as } N \to \infty,
\]
and $V_N \overset{a.s.}{\to} 0$ if additionally $(\beta_N+\gamma_N)^4/N^{1-\delta} \to 0$ for some $\delta \in (0,1)$, where $\gamma_N = \argmax_{\G \in \widetilde\F_{R,\lambda_N}^{\rm sh}}\vertj{\G}_2$. By Proposition~\ref{prop:HS_norm_bound}, $\gamma_N \le R\lambda_N|\Q|$. Also, Lemma~\ref{lemma:cover_covnet_shallow} shows that
\[
\log\cover\big(\epsilon,\widetilde\F^{\rm sh}_{R,\lambda_N},\vertj{\cdot}_2\big) = \O\bigg(dR^2\log\Big(\frac{R^2\lambda_N+\epsilon}{\epsilon}\Big)\bigg),~~\text{ as } R, \lambda_N \to \infty.
\]
Thus, when $R,\lambda_N \to \infty$ as $N \to \infty$ in such a way that $dR^2 \Delta_N^4 \log(\Delta_N)/N \to 0$, where $\Delta_N = \max\{\beta_N,\gamma_N\} = \max\{\beta_N,|\Q|R\lambda_N\} $, it is easy to see that the condition for convergence of $V_N$ to $0$ in probability is satisfied. This shows the convergence of $\vertj{\Chat_{R,N}^{\rm sh} - \C}_2^2$ to $0$ in probability as $N \to \infty$. For almost sure convergence, we additionally require $(\beta_N+\gamma_N)^4/N^{1-\delta} \to 0$ for some $\delta \in (0,1)$, or equivalently $\Delta_N^4/N^{1-\delta} \to 0$ as $N \to \infty$. This proves part (A) of Theorem~\ref{thm:covnet_consistency}.

The proof of parts (B) and (C) follow similarly upon noting that by Proposition~\ref{prop:HS_norm_bound}, $\vertj{\G}_2 \le R\lambda_N|\Q|$ for all $\G \in \widetilde\F_{R,L,\lambda_N}^{\rm d}$ or $\widetilde\F_{R,L,\lambda_N}^{\rm ds}$, and the following about the covering numbers:
\begin{align*}
\log\cover\big(\epsilon,\widetilde\F^{\rm d}_{R,L,\lambda_N},\vertj{\cdot}_2\big) &= \O\bigg(L^4R^8\,\log^2\Big(\frac{L^4R^8\lambda_N+\epsilon}{\epsilon}\Big)\bigg) \quad \text{(Lemma~\ref{lemma:cover_covnet_deep} and Remark~\ref{remark:cover_deep_order}), and }\\
\log\cover\big(\epsilon,\widetilde\F^{\rm ds}_{R,L,\lambda_N},\vertj{\cdot}_2\big) &= \O\bigg(L^4R^6\,\log^2\Big(\frac{LR}{\epsilon}\Big)\bigg) \quad \text{(Lemma~\ref{lemma:cover_covnet_deepshared} and Remark~\ref{remark:cover_deepshared_order})}.
\end{align*}

\subsection{Rates of convergence}\label{append:rates_of_convergence_detailed}

To derive the rate of convergence, we use a different bias-variance type decomposition. Recall that $\widetilde\C_N$ is a random element, distributed identically to $\Chat_N$ independently of the data $\mathscr X_N$. Using \eqref{eq:HS_equality_covnet1}, we get
\begin{align}\label{eq:bias_variance_type2}
\vertj{\Chat_{\widetilde\F_N} - \C}_2^2 &= \E\Big(\vertj{\Chat_{\widetilde\F_N} - \widetilde\C_N}_2^2 \,\big\vert\, \mathscr X_N\Big) - \E\Big(\vertj{\C - \widetilde\C_N}_2^2\Big) \nonumber\\
&= \E\Big(\vertj{\Chat_{\widetilde\F_N} - \widetilde\C_N}_2^2 \,\big\vert\, \mathscr X_N\Big) - \E\Big(\vertj{\C - \Chat_N}_2^2\Big) - 2\Big(\vertj{\Chat_{\widetilde\F_N} - \Chat_N}_2^2 - \vertj{\C - \Chat_N}_2^2\Big) \nonumber\\
&\kern46ex + 2\Big(\vertj{\Chat_{\widetilde\F_N} - \Chat_N}_2^2 - \vertj{\C - \Chat_N}_2^2\Big) \nonumber\\
&=: \widetilde V_{N} + \widetilde B_{N}.
\end{align}
For the second component, we can write 
\begin{align*}
\widetilde B_{N} = 2\bigg(\inf_{\G \in \widetilde\F_N} \vertj{\G - \Chat_N}_2^2 - \vertj{\C - \Chat_N}_2^2\bigg) = 2 \inf_{\G \in \widetilde\F_N} \Big(\vertj{\G - \Chat_N}_2^2 - \vertj{\C - \Chat_N}_2^2\Big),
\end{align*}
so that
\begin{align}\label{eq:rate_of_convergence_covnet1_bias}
\E\widetilde B_N &= 2 \E\bigg\{\inf_{\G \in \widetilde\F_N} \Big(\vertj{\G - \Chat_N}_2^2 - \vertj{\C - \Chat_N}_2^2\Big)\bigg\} \nonumber\\
&\le 2 \inf_{\G \in \widetilde\F_N} \bigg\{\E\Big(\vertj{\G - \Chat_N}_2^2 - \vertj{\C - \Chat_N}_2^2\Big)\bigg\} = 2 \inf_{\G \in \widetilde\F_N} \vertj{\G - \C}_2^2,
\end{align}
where in the last step we have used \eqref{eq:HS_equality_covnet1}. This is the bias of the estimator which, according to the universal approximation theorem, converges to $0$.

Next, we focus on the variance term $\widetilde V_N$. For $t>0$, we get
\begin{align}\label{eq:covnet1_type2_variance_tail_bound}
&\Prob\big(\widetilde V_N > t\big) = \Prob\bigg\{\E\Big(\vertj{\Chat_{\widetilde\F_N} - \widetilde\C_N}_2^2 \,\big\vert\, \mathscr X_N\Big) - \E\Big(\vertj{\C - \widetilde\C_N}_2^2\Big) - 2\Big(\vertj{\Chat_{\widetilde\F_N} - \Chat_N}_2^2 - \vertj{\C - \Chat_N}_2^2\Big) > t \bigg\} \nonumber\\
&\kern5ex \le \Prob\bigg\{\exists\,\G \in \widetilde\F_N : \E\Big(\vertj{\G - \widetilde\C_N}_2^2 - \vertj{\C - \widetilde\C_N}_2^2\Big) - 2\Big(\vertj{\G - \Chat_N}_2^2 - \vertj{\C - \Chat_N}_2^2\Big) > t \bigg\} \nonumber\\
&\kern5ex = \Prob\bigg\{\exists\,\G \in \widetilde \F_N : \E\Big(\vertj{\G - \Chat_N}_2^2 - \vertj{\C - \Chat_N}_2^2\Big) - 2\Big(\vertj{\G - \Chat_N}_2^2 - \vertj{\C - \Chat_N}_2^2\Big) > t \bigg\}~~(\text{since }\widetilde\C_N \eqdist \Chat_N) \nonumber\\
&\kern5ex = \Prob\bigg\{\exists\,\G \in \widetilde\F_N : \E\Big(\vertj{\G-\Chat_N}_2^2 - \vertj{\C-\Chat_N}_2^2\Big) - \Big(\vertj{\G-\Chat_N}_2^2 - \vertj{\C-\Chat_N}_2^2\Big) \nonumber\\
&\kern50ex > \frac{1}{2} \bigg(\frac{t}{2} + \frac{t}{2} + \E\Big(\vertj{\G-\Chat_N}_2^2 - \vertj{\C-\Chat_N}_2^2\Big)\bigg)\bigg\}.
\end{align}
To bound the last probability, we will use the following lemma, which is a modified version of Theorem~11.4 in \cite{gyorfi2002}.

\begin{lemma}\label{lemma:covnet1_type2_general}
Let $\X_1,\ldots,\X_N \overset{\mathrm{i.i.d.}}{\sim} \X \in \L_2(\Q)$ be such that $\Prob(\|\X\|^2 \le \beta) = 1$, $\E(\X) = 0$ and $\E(\X \otimes \X) = \C$. Define $\Chat_N = N^{-1}\sum_{n=1}^N \X_n \otimes \X_n$ to be the empirical covariance operator based on $\X_1,\ldots,\X_N$. Let $\widetilde\F$ be a class of operators satisfying $\verti{\G}_2 \le \gamma$ for all $\G \in \widetilde\F$. Then, for every $a,b > 0$ and $\delta \in \big(0,\frac{1}{2}\big]$,
\begin{align*}
&\Prob\Bigg[\exists\,\G \in \widetilde\F : \E\Big(\vertj{\G-\Chat_N}_2^2 - \vertj{\C-\Chat_N}_2^2\Big) - \Big(\vertj{\G-\Chat_N}_2^2 - \vertj{\C-\Chat_N}_2^2\Big) \\
&\kern50ex > \delta\bigg\{a + b + \E\Big(\vertj{\G-\Chat_N}_2^2 - \vertj{\C-\Chat_N}_2^2\big) \bigg\}\Bigg] \\
&\le 14 \times \cover\bigg(\frac{\epsilon b}{80(\beta+\gamma)^3},\widetilde\F,\verti{\cdot}\bigg) \times \exp\bigg\{-\frac{\epsilon^2(1-\epsilon) a N}{214(\beta+\gamma)^4\,(1+\epsilon)}\bigg\}.
\end{align*}
\end{lemma}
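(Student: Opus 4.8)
The plan is to recast the event as a one-sided relative deviation of an i.i.d.\ empirical mean and then invoke a signed-function adaptation of Theorem~11.4 in \cite{gyorfi2002}. For each $\G \in \widetilde\F$ define the single-observation functional
\[
g_{\G}(x) = \vertj{\G}_2^2 - \vertj{\C}_2^2 - 2\,\langle (\G-\C)x, x\rangle, \qquad x \in \L_2(\Q).
\]
Because the $\vertj{\Chat_N}_2^2$ terms cancel in the difference of the two squared Hilbert--Schmidt norms, one checks that $\vertj{\G-\Chat_N}_2^2 - \vertj{\C-\Chat_N}_2^2 = N^{-1}\sum_{n=1}^N g_{\G}(\X_n)$, while $\E\langle(\G-\C)\X,\X\rangle = \ip{\G-\C,\C}_2$ gives $\E g_{\G}(\X) = \vertj{\G-\C}_2^2 \ge 0$. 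Hence the probability to be bounded equals $\Prob\{\exists\,\G \in \widetilde\F : \E g_{\G}(\X) - N^{-1}\sum_n g_{\G}(\X_n) > \delta(a+b+\E g_{\G}(\X))\}$, exactly the form treated by the relative-deviation inequality of \cite{gyorfi2002}, with $\delta$ as the deviation parameter (the parameter written $\delta$ in the event and $\epsilon$ in the bound is one and the same), $a,b$ as the two offsets, and $\mathcal G = \{g_{\G} : \G \in \widetilde\F\}$ as the function class. Note that $g_\G$ depends on a single observation, so there is no $U$-statistic complication here.

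Next I would verify the structural hypotheses on $\mathcal G$, expressed through $\beta$ and $\gamma$. From $\vertj{\C}_2 \le \E\|\X\|^2 \le \beta$, the inequality $\vertj{\cdot}_\infty \le \vertj{\cdot}_2$, and $\|\X\|^2 \le \beta$ a.s., I obtain the range bound $|g_{\G}| \le \vertj{\G-\C}_2(\vertj{\G}_2+\vertj{\C}_2) + 2\vertj{\G-\C}_2\|\X\|^2 \le 3(\beta+\gamma)^2 =: B_1$. For the variance, centering gives $g_{\G}-\E g_{\G} = -2\,\ip{\G-\C,\, \X\otimes\X - \C}_2$, so by Cauchy--Schwarz for the Hilbert--Schmidt inner product together with $\E\vertj{\X\otimes\X-\C}_2^2 \le \E\|\X\|^4 \le \beta^2$,
\[
\mathrm{Var}(g_{\G}) = 4\,\E\big[\ip{\G-\C,\, \X\otimes\X-\C}_2^2\big] \le 4\,\vertj{\G-\C}_2^2\,\E\vertj{\X\otimes\X-\C}_2^2 \le 4\beta^2\,\E g_{\G},
\]
whence the Theorem~11.4 hypothesis $\E(g_{\G}^2)\le B_2\,\E g_{\G}$ holds with the sharp constant $B_2 \asymp (\beta+\gamma)^2$. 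For the metric-entropy input I would transfer covering numbers from $\widetilde\F$ to $\mathcal G$: since $|g_{\G_1}(x)-g_{\G_2}(x)| \le 2(\beta+\gamma)\vertj{\G_1-\G_2}_2$ uniformly in $x$, any $\vertj{\cdot}_2$-cover of $\widetilde\F$ at radius $\eta$ yields a cover of $\mathcal G$ at radius $2(\beta+\gamma)\eta$; composing this Lipschitz constant with the cover scale $\asymp b\delta/B_1$ demanded by the master inequality reproduces the scale $\epsilon b/(80(\beta+\gamma)^3)$ of the statement (the extra factor $(\beta+\gamma)$ beyond $B_1\asymp(\beta+\gamma)^2$ being precisely this Lipschitz constant).

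With these three ingredients I would then adapt the proof of Theorem~11.4: symmetrize by a ghost sample $\X_1',\ldots,\X_N'$, insert Rademacher signs, condition on the data, replace $\mathcal G$ by a minimal transported $\vertj{\cdot}_2$-cover of $\widetilde\F$, and on each cover element apply a Bernstein bound whose variance proxy and range are controlled by $B_2$ and $B_1$; the normalization by $a+b+\E g_{\G}$ is what converts the quadratic Bernstein exponent into one linear in $a$. Collecting the numerical constants into $14$, $80$, $214$, and substituting for $B_2$ the convenient (non-tight) over-estimate $(\beta+\gamma)^4$ — valid once $\beta+\gamma\ge 1$, which holds asymptotically since $\beta_N,\gamma_N\to\infty$ — yields the displayed exponent $\epsilon^2(1-\epsilon)aN/(214(\beta+\gamma)^4(1+\epsilon))$. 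The main obstacle is exactly this signed-function adaptation: unlike the $[0,B]$-valued functions of the original theorem, $g_{\G}$ can be negative, so the variance-to-mean bound no longer comes for free from the range and must be installed by the Cauchy--Schwarz computation above; once it is in place, the symmetrization and peeling steps of \cite{gyorfi2002} carry over essentially verbatim, which is what makes the proof ``straightforward.''
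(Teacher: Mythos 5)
Your proposal follows essentially the same route as the paper's proof: the paper likewise rewrites $\vertj{\G-\Chat_N}_2^2 - \vertj{\C-\Chat_N}_2^2$ as the i.i.d.\ empirical mean $N^{-1}\sum_n h_{\G}(\X_n)$ with $h_{\G}(x)=\vertj{\G}_2^2-\vertj{\C}_2^2-2\ip{\G-\C,x\otimes x}_2$ (your $g_{\G}$), installs exactly your three ingredients --- the range bound of order $(\beta+\gamma)^2$, the Cauchy--Schwarz variance-to-mean bound $\var(h_\G)\le 4\beta^2\,\E h_\G$ replacing the automatic bound for $[0,B]$-valued classes, and the Lipschitz transfer $|h_{\G_1}(x)-h_{\G_2}(x)|\le 2(\beta+\gamma)\vertj{\G_1-\G_2}_2$ converting $\vertj{\cdot}_2$-covers of $\widetilde\F$ into $L_1$-covers of the function class --- and then runs the ghost-sample/Rademacher/cover/Bernstein machinery of Theorem~11.4 in \cite{gyorfi2002}, which is precisely the signed-function adaptation you describe. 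Your identification of the $\delta$/$\epsilon$ notation in the statement as one and the same parameter is also consistent with the paper's proof.
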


\begin{proof}
Note that for any $\G \in \widetilde\F_N$,
\begin{align*}
\vertj{\G - \Chat_N}_2^2 - \vertj{\C - \Chat_N}_2^2 &= \vertj{\G}_2^2 - \vertj{\C}_2^2 - \frac{2}{N} \sum_{n=1}^N \ip{\G-\C, \X_n\otimes \X_n}_2^2 \\
&=\frac{1}{N}\sum_{n=1}^N \big(\verti{\G}_2^2 - \verti{\C}_2^2 - 2 \ip{\G-\C,\X_n\otimes \X_n}_2\big) \\
&= \frac{1}{N}\sum_{n=1}^N \big\{\verti{\G - \X_n \otimes \X_n}_2^2 - \verti{\C - \X_n \otimes \X_n}_2^2\big\}.
\end{align*}
Thus, the probability in question equals
\begin{align*}
&\Prob\Bigg[\exists\,\G \in \widetilde\F_N : \E\Big(\vertj{\G - \X \otimes \X}_2^2 - \vertj{\C - \X \otimes \X}_2^2\Big) - \frac{1}{N}\sum_{n=1}^N \big\{\verti{\G - \X_n \otimes \X_n}_2^2 - \verti{\C - \X_n \otimes \X_n}_2^2\big\} \\
&\kern50ex > \delta\bigg\{a + b + \E\Big(\vertj{\G - \X \otimes \X}_2^2 - \vertj{\C - \X \otimes \X}_2^2\Big)\bigg\}\Bigg],
\end{align*}
and we will derive an upper bound for this. For $\G \in \widetilde\F_N$, define $h_{\G} : \L_2(\Q) \to \R$ by
\[
h_{\G}(x) = \verti{\G - x \otimes x}_2^2 - \verti{\C - x \otimes x}_2^2, ~~x \in \L_2(\Q).
\]
With this, the probability in question can be written as
\begin{align*}
\Prob\Big\{\exists\,\G \in \widetilde\F_N: \E h_{\G}(\X) - \frac{1}{N}\sum_{n=1}^N h_{\G}(\X_i) \ge \epsilon\big(a+b+\E h_{\G}(\X)\big)\Big\}.
\end{align*}
We first note a few facts about the function $h_{\G}$. Note that
\[
- \verti{\C - x \otimes x}_2^2 \le h_{\G}(x) \le \verti{\G - x \otimes x}_2^2, \text{ for all } x \in \L_2(\Q).
\]
Thus, using $\vertj{\G_1 + \G_2}_2^2 \le 2 (\vertj{\G_1}_2^2 + \vertj{\G_2}_2^2)$, with $\Prob\big(\|\X\|^2 \le \beta_N\big)=1$, $\vertj{\C}_2 \le \beta_N$ and $\vertj{\G}_2 \le \gamma_N$, we get
\begin{equation}\label{eq:lemma_covnet2_type2_general_fact1}
|h_{\G}(\X)| \le \max\{4\beta_N^2,2(\beta_N^2+\gamma_N^2)\} =: \eta_N \text{ almost surely for all } \G \in \widetilde\F_N.
\end{equation}
Again,
\begin{equation}\label{eq:lemma_covnet2_type2_general_fact2}
\E h_{\G}(\X) = \E\big(\verti{\G - \X \otimes \X}_2^2 - \verti{\C - \X \otimes \X}_2^2\big) = \verti{\G - \C}_2^2 \le 2(\beta_N^2+\gamma_N^2).
\end{equation}
Using the alternative form $h_{\G}(x) = \verti{\G}_2^2 - \verti{\C}_2^2 - 2 \ip{\G-\C,x \otimes x}_2$, we get
\begin{align}\label{eq:lemma_covnet2_type2_general_fact3}
\var\big(h_{\G}(\X)\big) = 4\var\big(\ip{\G-\C,\X \otimes \X}_2\big) &\le 4 \E\big(\ip{\G-\C,\X \otimes \X}_2^2\big) \le 4\,\verti{\G-\C}_2^2\,\E\big(\|\X\|^4\big) \le 4\beta_N^2\,\E h_{\G}(\X).
\end{align}
Finally, using \eqref{eq:lemma_covnet2_type2_general_fact2} and \eqref{eq:lemma_covnet2_type2_general_fact3}, we get
\begin{align}\label{eq:lemma_covnet2_type2_general_fact4}
\E h_{\G}^2(\X) = \var\big(h_{\G}(\X)\big) + \big(\E h_{\G}(\X)\big)^2 \le (6\beta_N^2+2\gamma_N^2) \E h_{\G}(\X). 
\end{align}

We are now ready to prove the lemma. In our proof, we borrow heavily from the proof of Lemma~11.4 in \cite{gyorfi2002}. In what follows, $\X_1^\prime,\ldots,\X_N^\prime \overset{\iid}{\sim} \X$ are distributed independently of $\X_1,\ldots,\X_N$. Also, we denote the data at hand by $\mathscr X_N$, i.e., $\mathscr X_N := \{\X_1,\ldots,\X_N\}$. Define $\G_N$ depending on $\mathscr X_N$ such that
\[
\E h_{\G_N}(\X) - \frac{1}{N} \sum_{n=1}^N h_{\G_N}(\X_n) \ge \epsilon\big(a+b+\E h_{\G_N}(\X)\big),
\]
if such a $\G_N$ exists in $\widetilde\F_N$; otherwise choose an arbitrary $\G \in \widetilde\F_N$. Then, using Chebyshev's inequality
\begin{align*}
&\Prob\Big\{\E\big(h_{\G_N}(\X) \,\vert\, \mathscr X_N\big) - \frac{1}{N} \sum_{n=1}^N h_{\G_N}(\X_n^\prime) > \frac{\epsilon}{2} (a + b) + \frac{\epsilon}{2} \E\big(h_{\G_N}(\X) \,\vert\, \mathscr X_N\big) \,\vert\, \mathscr X_N\Big\} \\
&\le \frac{\var\Big\{\frac{1}{N}\sum_{n=1}^N h_{\G_N}(\X_n^\prime) \,\vert\, \mathscr X_N\Big\}}{\Big\{\frac{\epsilon}{2} (a + b) + \frac{\epsilon}{2} \E\big(h_{\G_N}(\X) \,\vert\, \mathscr X_n\big)\Big\}^2} = \frac{\var\big(h_{\G_N}(\X) \,\vert\, \mathscr X_N\big)}{N\Big\{\frac{\epsilon}{2} (a + b) + \frac{\epsilon}{2} \E\big(h_{\G_N}(\X) \,\vert\, \mathscr X_N\big)\Big\}^2} \\
&\le \frac{4\beta_N^2\E\big(h_{\G_N}(\X) \,\vert\, \mathscr X_N\big)}{N\Big\{\frac{\epsilon}{2} (a + b) + \frac{\epsilon}{2} \E\big(h_{\G_N}(\X) \,\vert\, \mathscr X_N\big)\Big\}^2} = \frac{16\beta_N^2}{N\epsilon^2} \times \frac{\E\big(h_{\G_N}(\X) \,\vert\, \mathscr X_N\big)}{\Big\{(a + b) + \E\big(h_{\G_N}(\X) \,\vert\, \mathscr X_N\big)\Big\}^2}\\
&\le \frac{16\beta_N^2}{N\epsilon^2} \times \frac{1}{4(a+b)}~~\bigg(\text{using } \frac{x}{(\alpha+x)^2} \le \frac{1}{4\alpha} \text{ for all } x \ge 0 \text{ and } \alpha \ge 0\bigg) \\
&= \frac{4\beta_N^2}{N\epsilon^2(a+b)},
\end{align*}
where, we have used \eqref{eq:lemma_covnet2_type2_general_fact3} in the third line. Hence, for $N \ge 32 \beta_N^2/(\epsilon^2(a+b))$,
\begin{equation}\label{eq:lemma_covnet2_type2_general_eq1}
\Prob\Big\{\E\big(h_{\G_N}(\X) \,\vert\, \mathscr X_N\big) - \frac{1}{N} \sum_{n=1}^N h_{\G_N}(\X_n^\prime) \le \frac{\epsilon}{2} (a + b) + \frac{\epsilon}{2} \E\big(h_{\G_N}(\X) \,\vert\, \mathscr X_N\big) \,\vert\, \mathscr X_N\Big\} \ge \frac{7}{8}.
\end{equation}
Now,
\begin{align*}
&\Prob\bigg\{\exists\,\G \in \widetilde\F_N: \frac{1}{N}\sum_{n=1}^N h_{\G}(\X_n^\prime) - \frac{1}{N} \sum_{n=1}^N h_{\G}(\X_n) \ge \frac{\epsilon}{2} (a + b) + \frac{\epsilon}{2} \E h_{\G}(\X)\bigg\} \\
&\kern5ex \ge \Prob\bigg\{\frac{1}{N} \sum_{n=1}^N h_{\G_N}(\X_n^\prime) - \frac{1}{N} \sum_{n=1}^N h_{\G_N}(\X_n) \ge \frac{\epsilon}{2} (a + b) + \frac{\epsilon}{2} \E\big(h_{\G_N}(\X) \,\vert\, \mathscr X_N\big)\bigg\}\\
&\kern5ex\ge \Prob\bigg\{\E\big(h_{\G_N}(\X) \,\vert\, \mathscr X_N\big) - \frac{1}{N} \sum_{n=1}^N h_{\G_N}(\X_n) \ge \epsilon(a + b) + \epsilon \E\big(h_{\G_N}(\X) \,\vert\, \mathscr X_N\big), \\
&\kern12ex \E\big(h_{\G_N}(\X) \,\vert\, \mathscr X_N\big) - \frac{1}{N} \sum_{n=1}^N h_{\G_N}(\X_n^\prime) \ge \frac{\epsilon}{2} (a + b) + \frac{\epsilon}{2} \E\big(h_{\G_N}(\X) \,\vert\, \mathscr X_N\big) \bigg\}\\
&\kern5ex \ge \frac{7}{8} \times \Prob\bigg\{\E\big(h_{\G_N}(\X) \,\vert\, \mathscr X_N\big) - \frac{1}{N} \sum_{n=1}^N h_{\G_N}(\X_n) \ge \epsilon(a + b) + \epsilon \E\big(h_{\G_N}(\X) \,\vert\, \mathscr X_N\big)\bigg\} \\
&\kern60ex (\text{conditioning on } \mathscr X_N \text{ and using \eqref{eq:lemma_covnet2_type2_general_eq1}}) \\
&\kern5ex = \frac{7}{8} \times \Prob\bigg\{\exists\,\G \in \widetilde\F_N: \E h_{\G}(\X) - \frac{1}{N} \sum_{n=1}^N h_{\G}(\X_n) \ge \epsilon(a + b) + \epsilon \E h_{\G}(\X) \bigg\}.
\end{align*}
So, for $N \ge 32 \beta_N^2/(\epsilon^2(a+b))$,
\begin{align}\label{eq:lemma_covnet2_type2_general_eq2}
&\Prob\bigg\{\exists\,\G \in \widetilde\F_N: \E h_{\G}(\X) - \frac{1}{N} \sum_{n=1}^N h_{\G}(\X_n) \ge \epsilon(a + b) + \epsilon \E h_{\G}(\X) \bigg\} \nonumber\\
&\kern10ex \le \frac{8}{7} \times \Prob\bigg\{\exists\,\G \in \widetilde\F_N: \frac{1}{N} \sum_{n=1}^N h_{\G}(\X_n^\prime) - \frac{1}{N} \sum_{n=1}^N h_{\G}(\X_n) \ge \frac{\epsilon}{2} (a + b) + \frac{\epsilon}{2} \E h_{\G}(\X) \bigg\}.
\end{align}
Now, for events $E_1,E_2,E_3$, $\Prob(E_1) \le \Prob(E_1 \cap E_2 \cap E_3) + \Prob(E_2^c) + \Prob(E_3^c)$, 
using which we write
\begin{align}\label{eq:lemma_covnet2_type2_general_eq3}
&\Prob\bigg\{\exists\,\G \in \widetilde\F_N: \frac{1}{N}\sum_{n=1}^N h_{\G}(\X_n^\prime) - \frac{1}{N}\sum_{n=1}^N h_{\G}(\X_n) \ge \frac{\epsilon}{2} (a + b) + \frac{\epsilon}{2} \E h_{\G}(\X) \bigg\} \nonumber\\
&\kern5ex \le \Prob\bigg\{\exists\,\G \in \widetilde\F_N: \frac{1}{N}\sum_{n=1}^N h_{\G}(\X_n^\prime) - \frac{1}{N}\sum_{n=1}^N h_{\G}(\X_n) \ge \frac{\epsilon}{2} (a + b) + \frac{\epsilon}{2} \E h_{\G}(\X),\nonumber\\
&\kern12ex \frac{1}{N}\sum_{n=1}^N h_{\G}^2(\X_n) - \E h_{\G}^2(\X) \le \epsilon\Big(a+b+\frac{1}{N}\sum_{n=1}^N h_{\G}^2(\X_n) + \E h_{\G}^2(\X) \Big),\nonumber\\
&\kern12ex \frac{1}{N}\sum_{n=1}^N h_{\G}^2(\X_n^\prime) - \E h_{\G}^2(\X) \le \epsilon\Big(a+b+\frac{1}{N}\sum_{n=1}^N h_{\G}^2(\X_n^\prime) + \E h_{\G}^2(\X) \Big) \bigg\} \nonumber\\
&\kern5ex + 2 \Prob\bigg\{\exists\,\G \in \widetilde\F_N: \frac{\frac{1}{N}\sum_{n=1}^N h_{\G}^2(\X_n) - \E h_{\G}^2(\X)}{a+b+\frac{1}{N}\sum_{n=1}^N h_{\G}^2(\X_n) + \E h_{\G}^2(\X)} > \epsilon\bigg\}.
\end{align}
Using Theorem 11.6 of \cite{gyorfi2002}, and noting that $h_{\G}^2(\X) \le \eta_N^2$ almost surely (cf. \eqref{eq:lemma_covnet2_type2_general_fact1}), we get
\begin{align}\label{eq:lemma_covnet2_type2_general_part2}
&\Prob\bigg\{\exists\,\G \in \widetilde\F_N: \frac{\frac{1}{N}\sum_{n=1}^N h_{\G}^2(\X_n) - \E h_{\G}^2(\X)}{a+b+\frac{1}{N}\sum_{n=1}^N h_{\G}^2(\X_n) + \E h_{\G}^2(\X)} > \epsilon\bigg\} \nonumber\\
&\kern10ex \le 4 \times \E \cover_1\bigg(\frac{(a+b)\epsilon}{5}, \{h_{\G}^2: \G \in \widetilde \F_N\}, \mathscr X_N\bigg) \times \exp\bigg\{-\frac{3\epsilon^2(a+b) N}{40\eta_N^2}\bigg\},
\end{align}
where $\cover_1(\epsilon,F,\mathscr X_N)$ is the \emph{random $L_1$-covering number} of the class of functions $F$ on $\mathscr X_N$, see \citet[][Chapter~9]{gyorfi2002} for details. This provides an upper bound for the second part in \eqref{eq:lemma_covnet2_type2_general_eq3}. For the first part, the second inequality inside the probability implies
\[
\E h_{\G}^2(\X) \ge \frac{1-\epsilon}{1+\epsilon} \times \frac{1}{N}\sum_{n=1}^N h_{\G}^2(\X_n) - \frac{\epsilon(a+b)}{1+\epsilon}.
\]
Also, $\E h_{\G}^2(\X) \le (6\beta_N^2+2\gamma_N^2) \E h_{\G}(\X)$ (cf. \eqref{eq:lemma_covnet2_type2_general_fact4}). 
So, the first probability can be bounded by
\begin{align}\label{eq:lemma_covnet2_type2_general_eq4}
&\Prob\bigg[\exists\,\G \in \widetilde\F_N: \frac{1}{N}\sum_{n=1}^N h_{\G}(\X_n^\prime) - \frac{1}{N}\sum_{n=1}^N h_{\G}(\X_n) \ge \frac{\epsilon}{2}(a+b) \nonumber\\
&\kern20ex + \frac{\epsilon}{2} \times \frac{1}{4(3\beta_N^2+\gamma_N^2)}\bigg\{\frac{(1-\epsilon)}{(1+\epsilon)} \frac{1}{N}\sum_{n=1}^N h_{\G}^2(\X_n^\prime) + \frac{(1-\epsilon)}{(1+\epsilon)} \frac{1}{N}\sum_{n=1}^N h_{\G}^2(\X_n) - 2 \frac{\epsilon(a+b)}{(1+\epsilon)}\bigg\}\Bigg]\nonumber\\
&= \Prob\bigg[\exists\,\G \in \widetilde\F_N: \frac{1}{N}\sum_{n=1}^N \zeta_n \big(h_{\G}(\X_n^\prime) - h_{\G}(\X_n)\big) \ge \frac{\epsilon}{2}(a+b) - \frac{\epsilon^2 (a+b)}{4(3\beta_N^2+\gamma_N^2)\,(1+\epsilon)} \nonumber\\
&\kern50ex + \frac{\epsilon(1-\epsilon)}{8(3\beta_N^2+\gamma_N^2)\,(1+\epsilon)} \frac{1}{N}\sum_{n=1}^N \big(h_{\G}^2(\X_n^\prime) + h_{\G}^2(\X_n)\big)\bigg]\nonumber\\
&\le 2 \times \Prob\bigg[\exists\,\G \in \widetilde\F_N: \bigg|\frac{1}{N}\sum_{n=1}^N \zeta_n h_{\G}(\X_n)\bigg| \ge \frac{1}{2}\bigg(\frac{\epsilon}{2}(a+b) - \frac{\epsilon^2(a+b)}{4(3\beta_N^2+\gamma_N^2)\,(1+\epsilon)}\bigg) \nonumber\\
&\kern50ex +\frac{\epsilon(1-\epsilon)}{8(3\beta_N^2+\gamma_N^2)\,(1+\epsilon)} \frac{1}{N}\sum_{n=1}^N h_{\G}^2(\X_n)\bigg].
\end{align}
Here, $\zeta_n$'s are $\iid$ Rademacher random variables, which take values $\pm 1$ with equal probability and are independent of all other variables. Next, we derive an upper bound for the probability in \eqref{eq:lemma_covnet2_type2_general_eq4}.

Given $\mathscr X_N$ and $\delta>0$, let $\mathcal H_{\delta}(\mathscr X_N)$ be the smallest subset of $\{h_{\G} : \G \in \widetilde\F_N\}$ such that for every $\G \in \widetilde\F_N$ we can find $h \in \mathcal H_{\delta}(\mathscr X_N)$ satisfying
\[
\frac{1}{N} \sum_{n=1}^N \big|h_{\G}(\X_n) - h(\X_n)\big| < \delta.
\]
Then, for each $\G \in \widetilde\F_N$, we can find $h \in \mathcal H_{\delta}(\mathscr X_N)$ such that
\[
\bigg|\frac{1}{N}\sum_{n=1}^N \zeta_n h_{\G}(\X_n)\bigg| = \bigg|\frac{1}{N}\sum_{n=1}^N \zeta_n\big\{h(\X_n)+h_{\G}(\X_n) - h(\X_n)\big\}\bigg| \le \bigg|\frac{1}{N}\sum_{n=1}^N \zeta_n h(\X_n)\bigg| + \delta.
\]
Also,
\begin{align*}
\frac{1}{N}\sum_{n=1}^N h_{\G}^2(\X_n) &= \frac{1}{N}\sum_{n=1}^N h^2(\X_n) + \frac{1}{N}\sum_{n=1}^N \big\{h_{\G}^2(\X_n) - h^2(\X_n)\big\} \\
&= \frac{1}{N}\sum_{n=1}^N h^2(\X_n) + \frac{1}{N}\sum_{n=1}^N \big(h_{\G}(\X_n)+h(\X_n)\big)\big(h_{\G}(\X_n)-h(\X_n)\big) \\
&\ge \frac{1}{N}\sum_{n=1}^N h^2(\X_n) - 2\,\eta_N\,\frac{1}{N}\sum_{n=1}^N \big|h_{\G}(\X_n)-h(\X_n)\big|~~(\text{since } |h_{\G}(\X)|, |h(\X)| \le \eta_N) \\
&\ge \frac{1}{N}\sum_{n=1}^N h^2(\X_n) - 2\delta\eta_N.
\end{align*}
Thus,
\begin{align}\label{eq:lemma_covnet2_type2_general_eq5}
& \Prob\bigg[\exists\,\G \in \widetilde\F_N: \bigg|\frac{1}{N}\sum_{n=1}^N \zeta_n h_{\G}(\X_n)\bigg| \ge \frac{1}{2}\bigg(\frac{\epsilon}{2}(a+b) - \frac{\epsilon^2(a+b)}{4(3\beta_N^2+\gamma_N^2)\,(1+\epsilon)}\bigg) \nonumber\\
&\kern50ex + \frac{\epsilon(1-\epsilon)}{8(3\beta_N^2+\gamma_N^2)\,(1+\epsilon)} \frac{1}{N}\sum_{n=1}^N h_{\G}^2(\X_n) \,\Big\vert\, \mathscr X_N\bigg] \nonumber\\
&\le \Prob\bigg[\exists\,h \in \mathcal H_{\delta}(\mathscr X_N): \bigg|\frac{1}{N}\sum_{n=1}^N \zeta_n h(\X_n)\bigg| + \delta \ge \frac{1}{2}\bigg(\frac{\epsilon}{2}(a+b) - \frac{\epsilon^2(a+b)}{4(3\beta_N^2+\gamma_N^2)\,\eta_N(1+\epsilon)}\bigg) \nonumber\\
&\kern50ex + \frac{\epsilon(1-\epsilon)}{8(3\beta_N^2+\gamma_N^2)\,(1+\epsilon)} \bigg\{\frac{1}{N}\sum_{n=1}^N h^2(\X_n) - 2\delta\eta_N\bigg\} \,\Big\vert\, \mathscr X_N\bigg] \nonumber\\
&= \Prob\bigg[\exists\,h \in \mathcal H_{\delta}(\mathscr X_N): \bigg|\frac{1}{N}\sum_{n=1}^N \zeta_n h(\X_n)\bigg| \ge \frac{\epsilon}{4}(a+b) - \frac{\epsilon^2(a+b)}{8(3\beta_N^2+\gamma_N^2)\,(1+\epsilon)} - \delta - \delta \frac{\epsilon(1-\epsilon)\eta_N}{4(3\beta_N^2+\gamma_N^2)\,(1+\epsilon)} \nonumber\\
&\kern55ex + \frac{\epsilon(1-\epsilon)}{8(3\beta_N^2+\gamma_N^2)\,(1+\epsilon)} \frac{1}{N}\sum_{n=1}^N h^2(\X_n) \,\Big\vert\, \mathscr X_N\bigg] \nonumber\\
&\le \big|\mathcal H_\delta(\mathscr X_N)\big| \sup_{h \in \mathcal H_\delta(\mathscr X_N)} \Prob\Bigg[\bigg|\frac{1}{N}\sum_{n=1}^N \zeta_n h(\X_n)\bigg| \ge \frac{\epsilon}{4}(a+b) - \frac{\epsilon^2(a+b)}{8(3\beta_N^2+\gamma_N^2)\,(1+\epsilon)} - \delta - \delta \frac{\epsilon(1-\epsilon)\eta_N}{4(3\beta_N^2+\gamma_N^2)\,(1+\epsilon)} \nonumber\\
&\kern55ex + \frac{\epsilon(1-\epsilon)}{8(3\beta_N^2+\gamma_N^2)\,(1+\epsilon)} \frac{1}{N}\sum_{n=1}^N h^2(\X_n) \,\Big\vert\, \mathscr X_N\Bigg].
\end{align}
Set $\delta = \epsilon b/5$, so that
\[
\frac{\epsilon b}{4} - \frac{\epsilon^2 b}{8(3\beta_N^2+\gamma_N^2)\,(1+\epsilon)} - \delta - \delta \frac{\epsilon(1-\epsilon)\eta_N}{4(3\beta_N^2+\gamma_N^2)\,(1+\epsilon)} \ge 0.
\]
With this choice, the right side of \eqref{eq:lemma_covnet2_type2_general_eq5} can be bounded by
\begin{align*}
&\Big|\mathcal H_{\frac{\epsilon b}{5}}(\mathscr X_N)\Big| \sup_{h \in \mathcal H_{\frac{\epsilon b}{5}}(\mathscr X_N)} \Prob\Bigg[\bigg|\frac{1}{N}\sum_{n=1}^N \zeta_n h(\X_n)\bigg| \ge \frac{\epsilon}{4} a - \frac{\epsilon^2 a}{8(3\beta_N^2+\gamma_N^2)\,(1+\epsilon)} \\
&\kern50ex + \frac{\epsilon(1-\epsilon)}{8(3\beta_N^2+\gamma_N^2)\,(1+\epsilon)} \frac{1}{N}\sum_{n=1}^N h^2(\X_n) \,\Big\vert\, \mathscr X_N\Bigg].
\end{align*}
Let $V_n = \zeta_n h(\X_n)$ for $n=1,\ldots,N$ and $\sigma^2 = N^{-1}\sum_{n=1}^N \var\big(V_n \,\vert\, \mathscr X_N\big) = N^{-1}\sum_{n=1}^N h^2(\X_n)$. The probability on the last equation equals $\Prob\big[|N^{-1}\sum_{n=1}^N V_n| \ge C_1 + C_2 \sigma^2 \,\vert\, \mathscr X_N\big]$, with 
\[
C_1 = \frac{\epsilon}{4} a - \frac{\epsilon^2 a}{8(3\beta_N^2+\gamma_N^2)\,(1+\epsilon)} \text{ and } C_2 = \frac{\epsilon(1-\epsilon)}{8(3\beta_N^2+\gamma_N^2)\,(1+\epsilon)}.
\]
Given $\mathscr X_N$, $V_1,\ldots,V_N$ are independent random variables with $|V_n| = |h_{\G}(\X_n)| \le \eta_N$ and $\E(V_n \,\vert\, \mathscr X_N) = 0$ for $n=1,\ldots,N$. Also, both $C_1$ and $C_2$ are non-negative. So, using Bernstein's inequality \citep[][Lemma~A.2]{gyorfi2002}, we get
\begin{align*}
\Prob\Bigg[\bigg|\frac{1}{N}\sum_{n=1}^n V_n\bigg| \ge C_1 + C_2 \sigma^2 \,\Big\vert\, \mathscr X_N\Bigg] \le 2 \times \exp\bigg\{-18N\frac{C_1 C_2}{(2C_2\eta_N+3)^2}\bigg\},
\end{align*}
see \citet[][pages 217--218]{gyorfi2002}. Plugging in the expressions for $C_1$ and $C_2$ and noting that
\[
C_1 = \frac{\epsilon}{4} a - \frac{\epsilon^2 a}{8(3\beta_N^2+\gamma_N^2)\,(1+\epsilon)} \ge \frac{\epsilon}{4}a - \frac{\epsilon}{32}a = \frac{7\epsilon a}{32},
\]
we get
\[
18 N \frac{C_1 C_2}{(2C_2\eta_N+3)^2} \ge \frac{3\epsilon^2(1-\epsilon)aN}{10(3\beta_N^2+\gamma_N^2)\,(1+\epsilon)}.
\]
All these finally give us
\begin{align*}
&\Prob\Bigg[\bigg|\frac{1}{N}\sum_{n=1}^N \zeta_n h(\X_n)\bigg| \ge \frac{\epsilon}{4} a - \frac{\epsilon^2 a}{8(3\beta_N^2+\gamma_N^2)\,(1+\epsilon)} + \frac{\epsilon(1-\epsilon)}{8(3\beta_N^2+\gamma_N^2)\,(1+\epsilon)} \frac{1}{N}\sum_{n=1}^N h^2(\X_n) \,\Big\vert\, \mathscr X_N\Bigg] \\
&\kern65ex \le 2 \times \exp\bigg\{- \frac{3\epsilon^2(1-\epsilon)aN}{10(3\beta_N^2+\gamma_N^2)\,(1+\epsilon)}\bigg\}.
\end{align*}
This upper bound does not depend on $\mathscr X_N$. So, using $\Prob(A) = \E\,\Prob(A \,\vert\, B)$, we get
\begin{align}\label{eq:lemma_covnet2_type2_general_eq6}
& \Prob\Bigg[\exists\,\G \in \widetilde\F_N: \bigg|\frac{1}{N}\sum_{n=1}^N \zeta_n h_{\G}(\X_n)\bigg| \ge \frac{1}{2}\bigg(\frac{\epsilon}{2}(a+b) - \frac{\epsilon^2(a+b)}{4(3\beta_N^2+\gamma_N^2)\,(1+\epsilon)}\bigg) \nonumber\\
&\kern55ex + \frac{\epsilon(1-\epsilon)}{8(3\beta_N^2+\gamma_N^2)\,(1+\epsilon)} \frac{1}{N}\sum_{n=1}^N h_{\G}^2(\X_n)\Bigg] \nonumber \\
&\le 2 \times \E\Big|\mathcal H_{\frac{\epsilon b}{5}}(\mathscr X_N)\Big| \times \exp\bigg\{- \frac{3\epsilon^2(1-\epsilon)aN}{10(3\beta_N^2+\gamma_N^2)\,(1+\epsilon)}\bigg\}.
\end{align}

Next, we derive upper bounds on $\cover_1(\epsilon,\{h_{\G}^2 : \G \in \widetilde\F_N\},\mathscr X_N)$ and $\big|\mathcal H_{\delta}(\mathscr X_N)\big|$. For any $\G_1,\G_2 \in \widetilde\F_N$,
\begin{align*}
\frac{1}{N}\sum_{n=1}^N \big|h_{\G_1}(\X_n) - h_{\G_2}(\X_n)\big| &= \frac{1}{N}\sum_{n=1}^N \big|\verti{\G_1}_{2}^2 - \verti{\G_2}_{2}^2 - 2\ip{\G_1-\G_2,\X_n \otimes \X_n}_2\big| \\
&= \frac{1}{N}\sum_{n=1}^N \big|\ip{\G_1-\G_2,\G_1+\G_2-2\X_n \otimes \X_n}_2\big| \\
&\le \verti{\G_1 - \G_2}_{2} \times \frac{1}{N}\sum_{n=1}^N \verti{\G_1 + \G_2 - 2\X_n \otimes \X_n}_2 \\
&\le 2(\beta_N+\gamma_N) \times \verti{\G_1-\G_2}_2. 
\end{align*}
This shows that a $\delta$-cover for $\widetilde\F_N$ w.r.t.\ the $\verti{\cdot}_2$ norm is equivalent to a random $L_1$-cover for $\{h_{\G} : \G \in \widetilde\F_N\}$ on $\mathscr X_N$ of size $2(\beta_N+\gamma_N)\delta$. Thus, 
\begin{equation}\label{eq:lemma_covnet2_type2_general_cover1}
\big|\mathcal H_{\delta}(\mathscr X_N)\big| \le \cover\bigg(\frac{\delta}{2(\beta_N+\gamma_N)},\widetilde\F, \verti{\cdot}_2\bigg).
\end{equation}
Note that this upper bound does not depend on $\mathscr X_N$. In fact, it is not a random quantity. Again,
\begin{align*}
\frac{1}{N}\sum_{n=1}^N \big|h_{\G_1}^2(\X_n) - h_{\G_2}^2(\X_n)\big| &= \frac{1}{N}\sum_{n=1}^N \big|h_{\G_1}(\X_n)-h_{\G_2}(\X_n)\big|\,\big|h_{\G_1}(\X_n)+h_{\G_2}(\X_n)\big| \\
&\le 2\eta_N \times \frac{1}{N}\sum_{n=1}^N \big|h_{\G_1}(\X_n)-h_{\G_2}(\X_n)\big| \\
&\le 4\eta_N(\beta_N+\gamma_N) \times \verti{\G_1 - \G_2}_2.
\end{align*}
This gives us
\begin{equation}\label{eq:lemma_covnet2_type2_general_cover2}
\cover_1\big(\delta,\{h_{\G}^2: \G \in \widetilde\F_N\},\mathscr X_N\big) \le \cover\bigg(\frac{\delta}{4\eta_N(\beta_N+\gamma_N)},\widetilde\F_N,\verti{\cdot}_2\bigg).
\end{equation}
This, again, is a non-random upper bound. We now assemble all the pieces together. By \eqref{eq:lemma_covnet2_type2_general_eq2} and \eqref{eq:lemma_covnet2_type2_general_eq3}, for every $N \ge 32\beta_N^2/(\epsilon^2(a+b))$,
\begin{align*}
\Prob\bigg[\exists\,\G \in \widetilde\F_N : \E h_{\G}(\X) - \frac{1}{N}\sum_{n=1}^N h_{\G}(\X) \ge \epsilon\big(a+b+\E h_{\G}(\X)\big)\bigg] \le \frac{8}{7} \times (P_1 + 2 P_2),
\end{align*}
where, by \eqref{eq:lemma_covnet2_type2_general_eq4} and \eqref{eq:lemma_covnet2_type2_general_eq6},
\begin{align*}
P_1 \le 4 \times \E\Big|\mathcal H_{\frac{\epsilon b}{5}}(\mathscr X_N)\Big| \times \exp\bigg\{- \frac{3\epsilon^2(1-\epsilon)aN}{10(3\beta_N^2+\gamma_N^2)\,(1+\epsilon)}\bigg\},
\end{align*}
and by \eqref{eq:lemma_covnet2_type2_general_part2},
\begin{align*}
P_2 \le 4 \times \E \cover_1\bigg(\frac{(a+b)\epsilon}{5}, \{h_{\G}^2: \G \in \widetilde\F_N\}, \mathscr X_N\bigg) \times \exp\bigg\{-\frac{3\epsilon^2(a+b) N}{40\eta_N^2}\bigg\}.
\end{align*}
Using these, with the bounds on the covering numbers \eqref{eq:lemma_covnet2_type2_general_cover1}, \eqref{eq:lemma_covnet2_type2_general_cover2}, we get
\begin{align*}
&\Prob\bigg[\exists\,\G \in \widetilde\F_N : \E h_{\G}(\X) - \frac{1}{N}\sum_{n=1}^N h_{\G}(\X) \ge \epsilon\big(a+b+\E h_{\G}(\X)\big)\bigg] \\
&\kern5ex \le \frac{8}{7} \times 4 \times \E\Big|\mathcal H_{\frac{\epsilon b}{5}}(\mathscr X_N)\Big| \times \exp\bigg\{-\frac{3\epsilon^2(1-\epsilon) a N}{10(3\beta_N^2+\gamma_N^2)\,(1+\epsilon)}\bigg\} \\
&\kern15ex + \frac{16}{7} \times 4 \times \E \cover_1\bigg(\frac{(a+b)\epsilon}{5}, \{h_{\G}^2: \G \in \widetilde\F_N\}, \mathscr X_N\bigg) \times \exp\bigg\{-\frac{3\epsilon^2(a+b) N}{40\eta_N^2}\bigg\} \\
&\kern5ex \le \frac{32}{7} \times \cover\bigg(\frac{\epsilon b}{10(\beta_N+\gamma_N)},\widetilde\F_N,\verti{\cdot}_2\bigg) \times \exp\bigg\{-\frac{3\epsilon^2(1-\epsilon) a N}{10(3\beta_N^2+\gamma_N^2)\,(1+\epsilon)}\bigg\} \\
&\kern15ex + \frac{64}{7} \times \cover\bigg(\frac{(a+b)\epsilon}{20\eta_N(\beta_N+\gamma_N)},\widetilde\F_N,\verti{\cdot}_2\bigg) \times \exp\bigg\{-\frac{3\epsilon^2(a+b) N}{40\eta_N^2}\bigg\} \\
&\kern5ex \le 14 \times \cover\bigg(\frac{\epsilon b}{20\eta_N(\beta_N+\gamma_N)},\widetilde\F_N,\verti{\cdot}\bigg) \times \exp\bigg\{-\frac{3\epsilon^2(1-\epsilon) a N}{40\eta_N^2\,(1+\epsilon)}\bigg\} \\
&\kern5ex \le 14 \times \cover\bigg(\frac{\epsilon b}{80(\beta_N+\gamma_N)^3},\widetilde\F_N,\verti{\cdot}\bigg) \times \exp\bigg\{-\frac{\epsilon^2(1-\epsilon) a N}{214(\beta_N+\gamma_N)^4\,(1+\epsilon)}\bigg\},
\end{align*}
where we have used that $\eta_N := \max\{4\beta_N^2, 2(\beta_N^2+\gamma_N^2)\} \le 4(\beta_N+\gamma_N)^2$. This proves the result for $N \ge 32\beta_N^2/(\epsilon^2(a+b))$. For $N < 32\beta_N^2/(\epsilon^2(a+b))$,
\begin{align*}
\exp\bigg\{-\frac{\epsilon^2(1-\epsilon) a N}{214(\beta_N+\gamma_N)^4\,(1+\epsilon)}\bigg\} \ge \frac{1}{14},
\end{align*}
so the inequality holds trivially. This completes the proof.
\end{proof}

\begin{remark}
When $\beta_N=\gamma_N$, we can mimic the same proof to get the following upper bound with slightly better constants:
\[
14 \times \cover\bigg(\frac{\epsilon b}{160\beta_N^3},\widetilde\F_N,\verti{\cdot}\bigg) \times \exp\bigg\{-\frac{\epsilon^2(1-\epsilon) a N}{214\beta_N^4(1+\epsilon)}\bigg\}.
\]
\end{remark}

\medskip

We use \eqref{eq:covnet1_type2_variance_tail_bound} and Lemma~\ref{lemma:covnet1_type2_general} with $\epsilon = 1/2, a = b = t/2$, to get
\begin{align*}
\Prob\big(\widetilde V_N > t\big) \le 14 \times \cover\bigg(\frac{t}{320(\beta_N+\gamma_N)^3},\widetilde\F_N,\vertj{\cdot}_2\bigg) \times \exp\bigg\{-\frac{Nt}{5136(\beta_N+\gamma_N)^4}\bigg\}.
\end{align*}
Now, for any non-negative random variable $Y$, 
\begin{align*}
\E(Y) = \int_0^\infty \Prob(Y > t) \diff t \le u + \int_u^\infty \Prob(Y > t) \diff t,
\end{align*}
for every $u > 0$. Using this, we get that for all $u>0$,
\begin{align*}
&\E \widetilde V_N \le u + \int_u^\infty \Prob\big(\widetilde V_N > t\big) \diff t \nonumber\\
&\kern9ex \le u + 14 \times \int_u^\infty \cover\bigg(\frac{t}{320(\beta_N+\gamma_N)^3},\widetilde \F,\verti{\cdot}_2\bigg) \times \exp\bigg\{-\frac{Nt}{5136(\beta_N+\gamma_N)^4}\bigg\} \nonumber\\
&\kern9ex \le u + 14 \times \cover\bigg(\frac{u}{320(\beta_N+\gamma_N)^3},\widetilde \F,\verti{\cdot}_2\bigg) \times \int_u^\infty \exp\bigg\{-\frac{Nt}{5136(\beta_N+\gamma_N)^4}\bigg\} \nonumber\\
&\le u + \frac{14 \times 5136(\beta_N+\gamma_N)^4}{N} \times \cover\bigg(\frac{u}{320(\beta_N+\gamma_N)^3},\widetilde \F,\verti{\cdot}_2\bigg) \times \exp\bigg\{-\frac{Nu}{5136(\beta_N+\gamma_N)^4}\bigg\}.
\end{align*}
To obtain the rate of convergence of $\E\widetilde V_N$, we (approximately) minimize the above quantity w.r.t.\ $u$. In particular, by choosing
\[
u = \frac{5136(\beta_N+\gamma_N)^4}{N} \Bigg\{\log(14) + \log\cover\bigg(\frac{5136(\beta_N+\gamma_N)}{320N},\widetilde\F_N,\vertj{\cdot}_2\bigg)\Bigg\},
\]
we get that
\begin{align}\label{eq:rate_of_convergence_covnet1_variance}
\E\widetilde V_N &\le \frac{5146(\beta_N+\gamma_N)^4}{N} \Bigg\{1+\log(14)+\log\cover\bigg(\frac{5136(\beta_N+\gamma_N)}{320N},\widetilde\F_N,\vertj{\cdot}_2\bigg)\Bigg\} \nonumber \\
&= \O\bigg(\frac{(\beta_N+\gamma_N)^4}{N} \times \log\cover\Big(\frac{\beta_N+\gamma_N}{N},\widetilde\F_N,\vertj{\cdot}_2\Big)\bigg).
\end{align}

Finally, combining \eqref{eq:rate_of_convergence_covnet1_bias} and \eqref{eq:rate_of_convergence_covnet1_variance} with \eqref{eq:bias_variance_type2}, we get the following lemma.
\begin{lemma}\label{lemma:covnet_rate_of_convergence}
Let $\widetilde\F_N$ be a class of operators with $\verti{\G}_2 \le \gamma_N$ for every $\G \in \widetilde\F_N$. Let $\X_1,\ldots,\X_N \overset{\iid}{\sim} \X$, with $\Prob(\|\X\|^2 \le \beta_N)=1$, $\E(\X)=0$ and $\var(\X)=\C$. Define $\Chat_{\widetilde\F_N} = \inf_{\G \in \widetilde\F_N} \vertj{\Chat_N - \G}_2^2$, where $\Chat_N = N^{-1} \sum_{n=1}^N \X_n \otimes \X_n$ is the empirical covariance operator. Then,
\begin{align*}
\E\Big(\vertj{\Chat_{\widetilde\F_N}-\C}_2^2\Big) &\le 2 \inf_{\G \in \widetilde\F_N} \vertj{\G - \C}_2^2 + \O\bigg(\frac{\Delta_N^4}{N} \times \log\cover\bigg(\frac{\Delta_N}{N},\widetilde\F_N,\vertj{\cdot}_2\bigg)\bigg),
\end{align*}
where $\Delta_N = \max\{\beta_N,\gamma_N\}$.
\end{lemma}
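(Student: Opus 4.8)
The plan is to package the two estimates already assembled along the way into the stated bound, handling bias and variance separately in expectation. The starting point is the second bias--variance-type decomposition \eqref{eq:bias_variance_type2}, namely $\vertj{\Chat_{\widetilde\F_N}-\C}_2^2 = \widetilde B_N + \widetilde V_N$ with the two summands defined there. Taking expectations and bounding $\E\widetilde B_N$ and $\E\widetilde V_N$ in turn will yield the claim, noting that $\Delta_N = \max\{\beta_N,\gamma_N\}$ makes $(\beta_N+\gamma_N)^4 \asymp \Delta_N^4$ and the covering radius $(\beta_N+\gamma_N)/N \asymp \Delta_N/N$, so the two presentations of the remainder term agree up to constants.

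For the bias term the key tool is the unbiasedness identity \eqref{eq:HS_equality_covnet1}, which lets me rewrite $\widetilde B_N = 2\inf_{\G\in\widetilde\F_N}\big(\vertj{\G-\Chat_N}_2^2 - \vertj{\C-\Chat_N}_2^2\big)$. Interchanging expectation and infimum via $\E\inf \le \inf\E$ and applying \eqref{eq:HS_equality_covnet1} once more gives $\E\widetilde B_N \le 2\inf_{\G\in\widetilde\F_N}\vertj{\G-\C}_2^2$, which is exactly the first term in the claim; this is the routine half.

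The variance term is where the real work sits, and I would control it through its tail. Specializing the uniform deviation bound of Lemma~\ref{lemma:covnet1_type2_general} with $\epsilon=1/2$ and $a=b=t/2$ to the event defining $\widetilde V_N$ (cf.\ \eqref{eq:covnet1_type2_variance_tail_bound}) yields $\Prob(\widetilde V_N > t) \le 14\,\cover\big(t/(320(\beta_N+\gamma_N)^3),\widetilde\F_N,\vertj{\cdot}_2\big)\exp\{-Nt/(5136(\beta_N+\gamma_N)^4)\}$. I then integrate through $\E\widetilde V_N \le u + \int_u^\infty \Prob(\widetilde V_N>t)\,\diff t$, pulling the monotone covering number out at the threshold $u$ and integrating the exponential, and finally choose $u \asymp (\beta_N+\gamma_N)^4 N^{-1}\log\cover((\beta_N+\gamma_N)/N,\widetilde\F_N,\vertj{\cdot}_2)$ to balance the two contributions. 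This produces the term $\O\big(\Delta_N^4 N^{-1}\log\cover(\Delta_N/N,\widetilde\F_N,\vertj{\cdot}_2)\big)$. The genuine obstacle is the tail bound itself, that is, establishing Lemma~\ref{lemma:covnet1_type2_general}: it requires a ghost-sample symmetrization conditional on the data, the variance-to-mean inequality $\E h_{\G}^2(\X)\lesssim(\beta_N^2+\gamma_N^2)\,\E h_{\G}(\X)$ so that the normalizing term $\E h_{\G}(\X)$ can absorb the empirical fluctuations, control of the random $L_1$-covering number of $\{h_{\G}^2\}$ by the $\vertj{\cdot}_2$-covering number of $\widetilde\F_N$, and a final Bernstein step exploiting the almost-sure bound $|h_{\G}(\X)|\le\eta_N\le 4(\beta_N+\gamma_N)^2$.

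Combining the bias bound $\E\widetilde B_N \le 2\inf_{\G}\vertj{\G-\C}_2^2$ with the variance bound for $\E\widetilde V_N$ in the decomposition \eqref{eq:bias_variance_type2} then completes the argument.
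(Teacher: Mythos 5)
Your proposal is correct and follows essentially the same route as the paper's own proof: the decomposition \eqref{eq:bias_variance_type2}, the bound $\E\widetilde B_N \le 2\inf_{\G\in\widetilde\F_N}\vertj{\G-\C}_2^2$ via \eqref{eq:HS_equality_covnet1} and $\E\inf\le\inf\E$, the tail bound on $\widetilde V_N$ from Lemma~\ref{lemma:covnet1_type2_general} with $\epsilon=1/2$, $a=b=t/2$, and the truncated tail integration with the optimized threshold $u$. You also correctly identify that the substantive technical content lies in Lemma~\ref{lemma:covnet1_type2_general} (symmetrization, the variance-to-mean bound, covering-number control, and Bernstein), which is exactly how the paper structures the argument.
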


As in the case of consistency, we use this lemma in conjunction with the fact that $\vertj{\G}_2 \le |\Q|R\lambda_N$ for an CovNet operator $\G$ from any of the three CovNet classes (shallow, deep or deepshared) and with the bounds on the covering numbers of these classes to get the rates given in Theorem~\ref{thm:covnet_rate_of_convergence}.

\subsection{Consistency without boundedness}\label{supp:consistency_unbounded}

We now extend the consistency results by removing the boundedness condition on $\X$. Recall that our estimators are re-defined in this case. As before, we prove a general result and then show the particular cases of shallow, deep and deepshared CovNet models. To this effect, consider a kernel of the form
\begin{equation}\label{eq:covnet_kernel_general}
g(\uvec,\vvec) = \sum_{r=1}^R \sum_{s=1}^R \lambda_{r,s}\,g_r(\uvec)\,g_s(\vvec),~~\uvec,\vvec \in \Q,
\end{equation}
where $\Lambda:=(\lambda_{r,s})$ is positive semi-definite. We denote the corresponding integral operator by $\G$. For $\lambda_N > 0$, define $\Pj_{\lambda_N}\G$ to be the operator obtained by thresholding the eigenvalues of $\Lambda$ to $\lambda_N$. That is, if $\Lambda = \sum_{i=1}^R \eta_i\,\mathbf e_i\,\mathbf e_i^\top$ is the eigendecomposition of $\Lambda$, then we define $\Lambda_{\lambda_N} = \sum_{i=1}^R \max\{\eta_i,\lambda_N\}\,\mathbf e_i\,\mathbf e_i^\top$ to be the $\lambda_N$-thresholded version of $\Lambda$. We define $\Pj_{\lambda_N}\G$ as the integral operator with kernel $g_{\lambda_N}(\uvec,\vvec) = \sum_{r=1}^R\sum_{s=1}^R \widetilde\lambda_{r,s}\,g_r(\uvec)\,g_s(\vvec)$, where $\widetilde\lambda_{r,s}$ is the $(r,s)$-th element of the thresholded matrix $\Lambda_{\lambda_N}$. By construction, $\mathrm 0 \preceq \Lambda_{\lambda_N} \preceq \lambda_N\,\mathrm I_R$. Let $\widetilde\F_R$ be a class of operators with kernels of the form \eqref{eq:covnet_kernel_general} and we denote the corresponding class of restricted operators by $\widetilde\F_{R,\lambda_N}$. Define
\[
\Chat_{R,N} = \inf_{\G \in \widetilde\F_R} \vertj{\Chat_N - \G}_2^2,
\]
to be the estimator without any restriction on the underlying class. Now, for a constant $\lambda_N > 0$, our modified estimator is defined as
\begin{equation}\label{eq:covnet_estimator_modified_general}
\widetilde\C_{R,N} = \Pj_{\lambda_N}\Chat_{R,N}.
\end{equation}
The following theorem illustrates the conditions for consistency of the modified estimator. The result is similar to Theorem~10.2 of \cite{gyorfi2002}.

\begin{theorem}\label{thm:consistency_unbounded_general}
Let $\X_1,\ldots,\X_N \overset{\iid}{\sim} \X$, where $\E(\|\X\|^4) < \infty$, $\E(\X) = 0$ and $\var(\X) = \C$. Suppose that $R_N,\lambda_N \to \infty$ as $N \to \infty$. Also, suppose that the underlying class of operators $\widetilde\F_{R,\lambda_N}$ is an universal approximator, i.e., $\inf_{\G \in \widetilde\F_{R,\lambda_N}} \verti{\G - \C}_2^2 \to 0$ as $N \to \infty$. If for every $u > 0$,
\[
\frac{R_N^4 \lambda_N^4}{N} \times \log\cover\bigg(\frac{u}{R_N\lambda_N},\widetilde\F_{R,\lambda_N},\verti{\cdot}_2\bigg) \to 0 \text{ as } N \to \infty,
\]
then $\vertj{\widetilde\C_{R,N} - \C}_2^2$ converges in probability to $0$ as $N \to \infty$. If in addition $(R_N\lambda_N)^4/N^{1-\delta} \to 0$ for some $\delta \in (0,1)$, then the previous convergence holds almost surely.
\end{theorem}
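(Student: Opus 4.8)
The plan is to reduce Theorem~\ref{thm:consistency_unbounded_general} to the bounded case already treated in Lemma~\ref{lemma:covnet1_type1_VN_convergence} and the bias--variance decomposition \eqref{eq:bias_variance_type1}, by means of a truncation argument in the spirit of Theorem~10.2 of \cite{gyorfi2002}. The decisive structural fact is that the thresholded estimator $\widetilde\C_{R,N}=\Pj_{\lambda_N}\Chat_{R,N}$ lies in the restricted class $\widetilde\F_{R,\lambda_N}$, so by Proposition~\ref{prop:HS_norm_bound} it obeys the uniform bound $\vertj{\widetilde\C_{R,N}}_2\le\gamma_N:=|\Q|R\lambda_N$. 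This uniform control on the estimator is precisely what the eigenvalue thresholding buys, and it is why the hypothesis involves only $R\lambda_N$ and no data bound: the scale $u/(R\lambda_N)$ in the covering number matches the bounded-case scale $u/(4(\beta_N+\gamma_N))$ once the (auxiliary) data bound is taken of the same order as $\gamma_N$.

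First I would set up the truncation. Fix a level $b_N\to\infty$ with $b_N=\O(R\lambda_N)$ and define bounded surrogate fields $\mathbf Y_n=\X_n\1\{\|\X_n\|^2\le b_N\}$, with empirical covariance $\C_N^Y=N^{-1}\sum_{n}\mathbf Y_n\otimes\mathbf Y_n$. Using $\E(\|\X\|^4)<\infty$ and dominated convergence, both $\E\vertj{\Chat_N-\C_N^Y}_2$ and $\vertj{\C-\E(\mathbf Y\otimes\mathbf Y)}_2$ tend to $0$ (the nonzero mean $\E\mathbf Y=-\E[\X\1\{\|\X\|^2>b_N\}]\to0$ contributes only a negligible rank-one correction). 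Thus, modulo $o_P(1)$ remainders, the estimation problem for $\X$ becomes a bounded problem with data bound $b_N$ and estimator bound $\gamma_N$, to which the triangular-array version of the bounded theory (Remark~\ref{remark:triangular_array}) applies; the variance term is then handled by Lemmas~\ref{lemma:covnet1_type1_variance_bound} and \ref{lemma:covnet1_type1_EV1_bound} under the stated covering condition, exactly as in Corollary~\ref{cor:covnet1_type1_variance_convergence}.

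The one genuinely new ingredient, and the main obstacle, is to control the effect of $\Pj_{\lambda_N}$ so that $\widetilde\C_{R,N}$ qualifies as an approximate minimizer over $\widetilde\F_{R,\lambda_N}$ in the sense of Remark~\ref{remark:approximate_minimizer}. Since $\Chat_{R,N}$ minimizes $\vertj{\Chat_N-\cdot}_2^2$ over the larger unrestricted class $\widetilde\F_R\supseteq\widetilde\F_{R,\lambda_N}$, it already fits $\Chat_N$ at least as well as any restricted operator; what must be shown is that thresholding the eigenvalues of its coefficient matrix $\Lambda$ does not inflate the fit by more than $o_P(1)$. The thresholding is the Euclidean projection of $\Lambda$ onto the convex set $\{M:0\preceq M\preceq\lambda_N\mathrm I_R\}$, so in a fixed $g_r$-basis it is a contraction toward restricted targets; the subtlety is that the Hilbert--Schmidt geometry is weighted by the Gram matrix $((\langle g_r,g_s\rangle))$ whereas the thresholding acts on the bare $\Lambda$. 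Reconciling the two --- using $|\sigma|\le1$, hence boundedness of the constituents $g_r$, to control the Gram weights, and invoking a near-optimal restricted approximant $\G^\ast$ of $\C$ (Remark~\ref{remark:universal_approximation_bounded}) whose coefficient matrix already satisfies $\preceq\lambda_N\mathrm I_R$ for $N$ large, so that $\Pj_{\lambda_N}\G^\ast=\G^\ast$ --- is the crux of the argument.

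Finally I would assemble the pieces in the decomposition \eqref{eq:bias_variance_type1}: the bias $\inf_{\G\in\widetilde\F_{R,\lambda_N}}\vertj{\G-\C}_2^2\to0$ by the restricted universal-approximation hypothesis, the variance $\to0$ by the covering condition, and the truncation remainders are $o_P(1)$, giving $\vertj{\widetilde\C_{R,N}-\C}_2\overset{P}{\to}0$. For the almost-sure statement I would upgrade the two concentration bounds via Borel--Cantelli using the exponential tail of Lemma~\ref{lemma:covnet1_type1_variance_bound} together with the strengthened requirement $(R\lambda_N)^4/N^{1-\delta}\to0$, and control the truncation terms almost surely through a first Borel--Cantelli argument on $\sum_N\Prob(\|\X\|^2>b_N)$ (the fourth moment suffices). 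The three displayed conditions of Theorem~\ref{thm:covnet_modified_consistency} then follow by inserting the class-specific covering bounds of Lemmas~\ref{lemma:cover_covnet_shallow}, \ref{lemma:cover_covnet_deep} and \ref{lemma:cover_covnet_deepshared} for the shallow, deep and deepshared models.
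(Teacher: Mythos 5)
Your overall strategy (truncate, reduce to the bounded theory, then undo the truncation) is in the same spirit as the paper, and your observation that the covering scale $u/(R_N\lambda_N)$ matches the bounded-case scale $u/(4(\beta_N+\gamma_N))$ once the data bound is taken of order $\gamma_N$ is correct. But the proposal has a genuine gap, and it is exactly the step you yourself flag as ``the crux'': you never prove that the thresholded operator $\widetilde\C_{R,N}=\Pj_{\lambda_N}\Chat_{R,N}$ is an approximate minimizer over $\widetilde\F_{R,\lambda_N}$, yet your final assembly invokes the decomposition \eqref{eq:bias_variance_type1}, whose derivation uses that the estimator \emph{exactly} minimizes $\vertj{\Chat_N-\G}_2^2$ over the class being analyzed. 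For $\widetilde\C_{R,N}$ this is false a priori, so without the missing step the argument is circular. The paper avoids this obstacle rather than solving it: it writes $\vertj{\widetilde\C_{R,N}-\C}_2^2=\E\big(\vertj{\widetilde\C_{R,N}-\widetilde\C_N}_2^2\mid\mathscr X_N\big)-\E\big(\vertj{\C-\widetilde\C_N}_2^2\big)$ with an independent copy $\widetilde\C_N$, splits off the bias $B_N\le\inf_{\G\in\widetilde\F_{R,\lambda_N}}\vertj{\G-\C}_2$, and telescopes the remaining supremum $A_N$ into eight norm-scale terms, in which the minimizing property is used only for the \emph{unrestricted, unthresholded} $\Chat_{R,N}$ against a restricted competitor $\G$ (fifth term), while the thresholding enters only through the single contraction inequality $\vertj{\widetilde\C_{R,N}-\Chat_{N,L}}_2\le\vertj{\Chat_{R,N}-\Chat_{N,L}}_2$ toward the \emph{truncated} empirical covariance (third term). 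Nothing like your ``approximate minimizer modulo Gram-matrix geometry'' claim is ever needed -- which is fortunate, because it is not clear that claim can be established at all, precisely for the reason you identify (thresholding acts on the bare $\Lambda$ while the Hilbert--Schmidt geometry is weighted by the Gram matrix of the $g_r$).

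A second, independent problem is that your truncation scheme breaks the almost-sure statement. The paper truncates the \emph{data} at a \emph{fixed} level $L$ (projection onto $\{y:\|y\|^2\le L\}$, with $L\le\gamma_N$), applies Lemma~\ref{lemma:covnet1_type1_VN_convergence} with data bound $L$ and class bound $\gamma_N$, lets $N\to\infty$ with $L$ fixed -- so $\vertj{\Chat_N-\Chat_{N,L}}_2$ converges almost surely, by the ordinary SLLN, to $\E\vertj{\X\otimes\X-\X_L\otimes\X_L}_2$ -- and only then sends $L\to\infty$, killing the limit by dominated convergence under $\E\|\X\|^4<\infty$. Your growing threshold $b_N\asymp R_N\lambda_N$ instead forces you to control the truncation remainder at rate $\o(1/(R_N\lambda_N))$ almost surely along the triangular array, and the Borel--Cantelli bound you propose, $\sum_N\Prob(\|\X\|^2>b_N)\le\E(\|\X\|^4)\sum_N b_N^{-2}$, need not be finite under the stated hypotheses: $R_N\lambda_N\to\infty$ may be arbitrarily slow (say logarithmic in $N$), in which case $\sum_N b_N^{-2}=\infty$. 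The fixed-$L$/double-limit device in the paper is not cosmetic; it is what allows both the variance control and the almost-sure conclusion to go through with only a fourth moment, and your sketch does not supply a substitute for it.
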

\begin{proof}
Note that $\vertj{\widetilde\C_{R,N} - \C}_2^2 = \E\big(\vertj{\widetilde\C_{R,N} - \widetilde\C_N}_2^2 \,\vert\, \mathscr X_N\big) - \E\big(\vertj{\C - \widetilde\C_N}_2^2\big)$, where $\widetilde\C_N$ is distributed identically to $\Chat_N$, independently of $\mathscr X_N$ (cf.~\eqref{eq:HS_equality_covnet1}). Now, $\E\big(\vertj{\C - \widetilde\C_N}_2^2\big) = N^{-1} \E\big(\vertj{\C - \X \otimes \X}_2^2\big)$ converges to $0$ as $N \to \infty$. Thus, it is enough to show that $\Big\{\E\big(\vertj{\widetilde\C_{R,N} - \widetilde\C_N}_2^2 \,\vert\, \mathscr X_N\big)\Big\}^{1/2} - \Big\{\E\big(\vertj{\C - \widetilde\C_N}_2^2\big)\Big\}^{1/2}$ converges to $0$ as $N \to \infty$ in the appropriate notion (i.e., in probability or almost surely). 
We write
\begin{align*}
0 &\le \Big\{\E\Big(\vertj{\widetilde\C_{R,N} - \widetilde\C_N}_2^2 \Big\vert\, \mathscr X_N\Big)\Big\}^{1/2} - \Big\{\E\Big(\vertj{\C - \widetilde\C_N}_2^2\Big)\Big\}^{1/2} \nonumber\\
&= \Big\{\E\Big(\vertj{\widetilde\C_{R,N} - \widetilde\C_N}_2^2 \Big\vert\, \mathscr X_N\Big)\Big\}^{1/2} - \inf_{\G \in \widetilde\F_{R,\lambda_N}} \Big\{\E\Big(\vertj{\G - \widetilde\C_N}_2^2\Big)\Big\}^{1/2} \nonumber \\
&\kern10ex + \inf_{\G \in \widetilde\F_{R,\lambda_N}} \Big\{\E\Big(\vertj{\G - \widetilde\C_N}_2^2\Big)\Big\}^{1/2} - \Big\{\E\Big(\vertj{\C - \widetilde\C_N}_2^2\Big)\Big\}^{1/2}\\
&=: A_N + B_N.
\end{align*}
For the second term, 
\begin{align*}
B_N & =\inf_{\G \in \widetilde\F_{R,\lambda_N}} \Big\{\E\Big(\vertj{\G - \widetilde\C_N}_2^2\Big)\Big\}^{1/2} - \Big\{\E\Big(\vertj{\C - \widetilde\C_N}_2^2\Big)\Big\}^{1/2} \\
&\le \inf_{\G \in \widetilde\F_{R,\lambda_N}} \bigg|\Big\{\E\Big(\vertj{\G - \widetilde\C_N}_2^2\Big)\Big\}^{1/2} - \Big\{\E\Big(\vertj{\C - \widetilde\C_N}_2^2\Big)\Big\}^{1/2}\bigg| \\
&\le \inf_{\G \in \widetilde\F_{R,\lambda_N}} \Big\{\vertj{\G - \C}_2^2\Big\}^{1/2}~~(\text{by  \eqref{eq:HS_equality_covnet1}}) \\
&= \inf_{\G \in \widetilde\F_{R,\lambda_N}} \vertj{\G - \C}_2,
\end{align*}
which by assumption converges to $0$ as $N \to \infty$. So, for convergence in probability, it is enough to show that $\Prob\big(A_N \le 0\big) \to 1$ as $N \to \infty$. Similarly, for almost sure convergence, it is enough to show that $\Prob\big(\limsup_{N \to \infty} A_N \le 0\big) = 1$.

\medskip

Recall that for $\G \in \widetilde\F_{R,\lambda_N}$, $\vertj{\G}_2 \le \gamma_N := R\lambda_N|\Q|M^2$, where $M = \sup_{g \in \mathcal H} \|g\|$. Let $L>0$ be arbitrary. Since $R_N, \lambda_N \to \infty$, we can assume w.l.o.g.\ that $L \le \gamma_N$. Define $\X_L$ to be the projection of $\X$ onto $\{y \in \L_2(\Q): \|y\|^2 \le L\}$. Similarly, for $n=1,2,\ldots$, define $\X_{n,L}$ to be the projection of $\X_n$ onto $\{y \in \L_2(\Q): \|y\|^2 \le L\}$. We define $\C_L = \E(\X_L \otimes \X_L)$ to be the covariance operator of $\X_L$ and $\Chat_{N,L} = N^{-1} \sum_{n=1}^N \X_{n,L} \otimes \X_{n,L}$ to be its empirical counterpart based on $\X_{1,L},\ldots,\X_{N,L}$. Then,
\begin{align*}
A_N & =\bigg(\Big\{\E\Big(\vertj{\widetilde\C_{R,N} - \widetilde\C_N}_2^2 \,\big\vert\, \mathscr X_N\Big)\Big\}^{1/2} - \inf_{\G \in \widetilde\F_{R,\lambda_N}} \Big\{\E\Big(\vertj{\G - \widetilde\C_N}_2^2\Big)\Big\}^{1/2}\bigg) \\
&= \sup_{\G \in \widetilde\F_{R,\lambda_N}} \bigg(\Big\{\E\Big(\vertj{\widetilde\C_{R,N} - \widetilde\C_N}_2^2 \,\big\vert\, \mathscr X_N\Big)\Big\}^{1/2} - \Big\{\E\Big(\vertj{\G - \widetilde\C_N}_2^2\Big)\Big\}^{1/2}\bigg) \\
& = \sup_{\G \in \widetilde\F_{R,\lambda_N}} \bigg(\Big\{\E\Big(\vertj{\widetilde\C_{R,N} - \widetilde\C_N}_2^2 \,\big\vert\, \mathscr X_N\Big)\Big\}^{1/2} - \Big\{\E\Big(\vertj{\widetilde\C_{R,N} - \widetilde\C_{N,L}}_2^2 \,\big\vert\, \mathscr X_N\Big)\Big\}^{1/2} \\
&\kern13ex + \Big\{\E\Big(\vertj{\widetilde\C_{R,N} - \widetilde\C_{N,L}}_2^2 \,\big\vert\, \mathscr X_N\Big)\Big\}^{1/2} - \Big\{\vertj{\widetilde\C_{R,N} - \Chat_{N,L}}_2^2\Big\}^{1/2} \\
&\kern13ex + \vertj{\widetilde\C_{R,N} - \Chat_{N,L}}_2 - \vertj{\Chat_{R,N} - \Chat_{N,L}}_2 \\
&\kern13ex + \vertj{\Chat_{R,N} - \Chat_{N,L}}_2 - \vertj{\Chat_{R,N} - \Chat_N}_2 \\
&\kern13ex + \vertj{\Chat_{R,N} - \Chat_N}_2 - \vertj{\G - \Chat_N}_2 \\
&\kern13ex + \vertj{\G - \Chat_N}_2 - \vertj{\G - \Chat_{N,L}}_2 \\
&\kern13ex + \Big\{\vertj{\G - \Chat_{N,L}}_2^2\Big\}^{1/2} - \Big\{\E\Big(\vertj{\G - \Chat_{N,L}}_2^2\Big)\Big\}^{1/2} \\
&\kern13ex + \Big\{\E\Big(\vertj{\G - \Chat_{N,L}}_2^2\Big)\Big\}^{1/2} - \Big\{\E\Big(\vertj{\G - \Chat_N}_2^2\Big)\Big\}^{1/2} \bigg).
\end{align*}
By definition, the fifth term is non-positive. Also, by construction, the third term is non-positive. Of the remaining, the first and the eighth terms are bounded above by $\big\{\E\big(\vertj{\Chat_N - \Chat_{N,L}}_2^2\big) \big\}^{1/2}$, while the fourth and the sixth terms are bounded above by $\vertj{\Chat_N - \Chat_{N,L}}_2$. Finally, the second and the seventh terms are bounded above by $\sup_{\G \in \widetilde\F_{R,\lambda_N}} \Big|\vertj{\G-\Chat_{N,L}}_2 - \big\{\E\big(\vertj{\G - \Chat_{N,L}}_2^2\big)\big\}^{1/2}\Big|$. Under the assumption, $\sup_{\G \in \widetilde\F_{R,\lambda_N}} \big|\vertj{\G-\Chat_{N,L}}_2^2 - \E\big(\vertj{\G - \Chat_{N,L}}_2^2\big)\big| \to 0$ in probability as $N \to \infty$ (cf. Lemma~\ref{lemma:covnet1_type1_VN_convergence}). Using this and the uniform continuity of $x \mapsto \sqrt{x}$ on $[0,\infty)$, we get
\[
\sup_{\G \in \widetilde\F_{R,\lambda_N}} \bigg|\vertj{\G-\Chat_{N,L}}_2 - \Big\{\E\big(\vertj{\G - \Chat_{N,L}}_2^2\big)\Big\}^{1/2}\bigg| \overset{P}{\to} 0 \text{ as } N \to \infty.
\]
For the other two terms, observe that $\Chat_N - \Chat_{N,L} = N^{-1} \sum_{n=1}^N Z_{n,L}$,
where $Z_{n,L} = \X_n \otimes \X_n - \X_{n,L} \otimes \X_{n,L}$. Note that $\E(Z_{n,L}) = \C - \C_L$. Now, $\vertj{\Chat_N - \Chat_{N,L}}_2 \le N^{-1} \sum_{n=1}^N \vertj{Z_{n,L}}_2$, which converges almost surely to $\E(\vertj{Z_L}_2) = \E\big(\vertj{\X \otimes \X - \X_L \otimes \X_L}_2\big)$ as $N \to \infty$. On the other hand,
\begin{align*}
\E\vertj{\Chat_N - \Chat_{N,L}}_2^2 = \E\verti{\frac{1}{N}\sum_{n=1}^N Z_{n,L}}_2^2 &= \frac{1}{N} \E \vertj{Z_L}_2^2 + \frac{N(N-1)}{N^2} \vertj{\E(Z_L)}_2^2 \\
&= \frac{1}{N} \E \vertj{\X \otimes \X - \X_L \otimes \X_L}_2^2 + \frac{N(N-1)}{N^2} \vertj{\C - \C_L}_2^2. 
\end{align*}
Since $\E \|\X\|^4 < \infty$, this last term converges to $\vertj{\C - \C_L}_2^2$ as $N \to \infty$ for all $L > 0$. Thus, for all $L > 0$,
\begin{align*}
\Prob\Big(A_N \le 2\E\vertj{\X \otimes \X - \X_L \otimes \X_L}_2 + 2\vertj{\C - \C_L}_2\Big) \to 1 \text{ as } N \to \infty.
\end{align*}
Note that $\vertj{\C - \C_L}_2 \le \E(\vertj{\X \otimes \X - \X_L \otimes \X_L}_2)$. Now, using \eqref{eq:product_difference_bound_L2},
\begin{align*}
\E\big(\vertj{\X \otimes \X - \X_L \otimes \X_L}_2\big) &\le 2 \E\big(\|\X\|\,\|\X - \X_L\|\big) + \E\big(\|\X - \X_L\|^2\big) \\
&\le 2 \Big\{\E\big(\|\X\|^2\big)\,\E\big(\|\X - \X_L\|^2\big)\Big\}^{1/2} + \E\big(\|\X - \X_L\|^2\big).
\end{align*}
By the dominated convergence theorem, the last quantity converges to $0$ as $L \to \infty$. Thus, taking limit as $L \to \infty$, we get that $\Prob(A_N \le 0) \to 1$ as $N \to \infty$, proving the convergence of $\vertj{\widetilde\C_{R,N} - \C}_2^2$ to $0$ in probability.

\medskip

For the almost sure convergence, all the steps remain the same, except now with the additional condition
\[
\sup_{\G \in \widetilde\F_{R,\lambda_N}} \bigg|\vertj{\G-\Chat_{N,L}}_2 - \Big\{\E\big(\vertj{\G - \Chat_{N,L}}_2^2\big)\Big\}^{1/2}\bigg| \overset{a.s.}{\to} 0 \text{ as } N \to \infty,
\]
see Lemma~\ref{lemma:covnet1_type1_VN_convergence}. Thus, by the same steps we get $\Prob\big(\limsup_{N \to \infty} A_N \le 0) = 1$, completing the proof.
\end{proof}

\subsection{The case of discretely observed data}\label{supp:measurement_with_noise}
Here, we derive the asymptotic properties of our estimators in the discrete measurement regime (Section~\ref{sec:asymptotics_discrete}). Again, we will derive the results for a general class of operators $\widetilde\F_N$. We define $\widetilde\X_n^K = (\widetilde X_n^K(\uvec): \uvec \in [0,1]^d)$ to be the voxel-wise continuation of the $n$-th measurement $\widetilde{\mathbf X}_n^K = (\widetilde X_n^K[i_1,\ldots,i_d])$, with 
\[
\widetilde X_n^K(\uvec) = \sum_{i_1=1}^{K_1}\cdots\sum_{i_d=1}^{K_d} \widetilde X_n^K[i_1,\ldots,i_d]\,\1\big\{\uvec \in V_{i_1,\ldots,i_d}^K\big\}.
\]
Our estimator in this case is defined as
\begin{equation}\label{eq:discrete_estimator_general}
\Chat_{\widetilde\F_N}^K \in \argmin_{\G \in \widetilde\F_N} \vertj{\widetilde\C_N^K - \G}_2^2,
\end{equation}
where $\widetilde\C_N^K = N^{-1}\sum_{n=1}^N \widetilde\X_n^K \otimes \widetilde\X_n^K$ is the empirical covariance based on $\widetilde\X_1^K,\ldots,\widetilde\X_N^K$. In the following, for an operator $\A$, we write $\mathbf A^K \in \R^{K_1\times\cdots\times K_d \times K_1\times\cdots\times K_d}$ to denote its discretized version w.r.t.\ the voxels $V_{i_1,\ldots,i_d}^K, 1 \le i_1 \le K_1,\ldots, 1 \le i_d \le K_d$, and $\A^K$ to denote the voxel-wise continuation of $\mathbf A^K$. The Frobenius norm of $\mathbf A^K$ is defined as
\[
\|\mathbf A^K\|_{\rm F}^2 = \sum_{i_1=1}^{K_1}\cdots\sum_{i_d=1}^{K_d}\sum_{j_1=1}^{K_1}\cdots\sum_{j_d=1}^{K_d} A^2[i_1,\ldots,i_d;j_1,\ldots,j_d].
\]
It is easy to verify that $\|\mathbf A^K\|_{\rm F}^2 = (K_1\cdots K_d)^2\verti{\A^K}_2^2$. Thus, for a class of covariance operators $\widetilde\F$,
\begin{equation}\label{eq:discrete_approximate_estimator_equivalence}
\argmin_{\G \in \widetilde\F} \big\|\widehat{\mathbf C}_N^K - \mathbf G^K\big\|_{\rm F}^2 = \argmin_{\G \in \widetilde\F} \verti{\Chat_N^K - \G^K}_2^2 \approx \argmin_{\G \in \widetilde\F} \verti{\Chat_N^K - \G}_2^2,
\end{equation}
where the last assertion holds when $K$ is large. This justifies the definition of our estimator in \eqref{eq:discrete_estimator_general}.

\medskip

By \eqref{eq:measurement_with_noise}, it follows that $\widetilde\X_n^K = \X_n^K + \mathcal E_n^K$, where $\X_n^K = (X_n^K(\uvec): \uvec \in [0,1]^d)$ and $\mathcal E_n^K = (E_n^K(\uvec): \uvec \in [0,1]^d)$ are the voxel-wise continuations of $\mathbf X_n^K = (X_n^K[i_1,\ldots,i_d])$ and $\bm{\mathcal E}_n^K = (\epsilon_n^K[i_1,\ldots,i_d])$, respectively, i.e.,
\begin{align*}
X_n^K(\uvec) = \sum_{i_1=1}^{K_1}\cdots\sum_{i_d=1}^{K_d} X_n^K[i_1,\ldots,i_d]\,\1\big\{\uvec \in V_{i_1,\ldots,i_d}^K\big\} \,\text{ and }\, E_n^K(\uvec) = \sum_{i_1=1}^{K_1}\cdots\sum_{i_d=1}^{K_d} E_n^K[i_1,\ldots,i_d]\,\1\big\{\uvec \in V_{i_1,\ldots,i_d}^K\big\}.
\end{align*}
It is easy to see that $\widetilde\X_1^K,\ldots,\widetilde\X_N^K$ are i.i.d.\ with mean zero. Let $\widetilde\C^K = \E(\widetilde\X_1^K \otimes \widetilde\X_1^K)$ be the covariance of $\widetilde\X_1^K$ (the existence of $\C^K$ will be proved shortly). Since $\mathbf X_n^K$ and $\bm{\mathcal E}_n^K$ are uncorrelated, it follows that $\widetilde\C^K = \C^K + \Sigma^K$, where $\C^K = \E(\X^K \otimes \X^K)$ is the covariance of $\X^K$ and $\Sigma^K = \E(\mathcal E^K \otimes \mathcal E^K)$ is the covariance of $\mathcal E^K$. To quantify the rate of convergence of the estimator \eqref{eq:discrete_estimator_general}, we use the following decomposition
\begin{align}\label{eq:discrete_plus_noise_upper_bound}
\verti{\Chat_{\widetilde\F_N}^K  - \C}_2^2 &= \verti{\Chat_{\widetilde\F_N}^K  - \widetilde\C^K + \widetilde\C^K - \C}_2^2 \nonumber\\
&= \verti{\widetilde\C_{\widetilde\F_N}^K  - \widetilde\C^K + \C^K - \C + \Sigma^K}_2^2 \nonumber\\
&\le 3 \verti{\widetilde\C_{\widetilde\F_N}^K  - \widetilde\C^K}_2^2 + 3 \verti{\C^K - \C}_2^2 + 3 \verti{\Sigma^K}_2^2.
\end{align}
We will separately bound each term in the above.

\medskip

Firstly, note that $\|\widetilde\X_n^K\|^2 \le 2 \|\X_n^K\|^2 + 2 \|\mathcal E_n^K\|^2$. Now,
\[
\|\X_n^K\|^2 = \int_{[0,1]^d} \big(X_n^K(\uvec)\big)^2 \diff\uvec = \frac{1}{K_1\cdots K_d}\sum_{i_1=1}^{K_1}\cdots\sum_{i_d=1}^{K_d} \big(X_n^K[i_1,\ldots,i_d]\big)^2.
\]
Under (M1), since $\|\X\|_{\infty}^2 = \sup_{\uvec \in [0,1]^d} X^2(\uvec) \le \beta_N$ almost surely, it follows that $\|\X_n^K\|^2 \le \beta_N$ almost surely. Under (M2), if we define $\mathcal I_{i_1,\ldots,i_d}^K(\uvec) = |V_{i_1,\ldots,i_d}^K|^{-1/2}\1\big\{\uvec \in V_{i_1,\ldots,i_d}^K\big\}$, then it follows that $X_n^K[i_1,\ldots,i_d] = \sqrt{K_1\cdots K_d}\,\langle \X_n,\mathcal I_{i_1,\ldots,i_d}^K\rangle$. Thus,
\[
\|\X_n^K\|^2 = \frac{1}{K_1\cdots K_d}\sum_{i_1=1}^{K_1}\cdots\sum_{i_d=1}^{K_d} \big(X_n^K[i_1,\ldots,i_d]\big)^2 = \sum_{i_1=1}^{K_1}\cdots\sum_{i_d=1}^{K_d} \langle \X,\mathcal I_{i_1,\ldots,i_d}^K\rangle^2.
\]
Now, the functions $\mathcal I_{i_1,\ldots,i_d}^K, 1 \le i_1 \le K_1,\ldots,1 \le i_d \le K_d$ are orthonormal, and hence by Parseval's inequality it follows that $\|\X_n^K\|^2 \le \|\X\|^2 \le \beta_N$ almost surely. In a similar way, it follows that
\[
\|\mathcal E_n^K\|^2 = \frac{1}{K_1\cdots K_d} \sum_{i_1=1}^{K_1}\cdots\sum_{i_d=1}^{K_d} \big(E_n^K[i_1,\ldots,i_d]\big)^2 \le \beta_{N,K}^{\rm e} \text{ almost surely}.
\]
Thus, under both measurement schemes, we get that $\|\widetilde\X_n^K\|^2 \le \widetilde\beta_{N,K}:=2(\beta_N+\beta_{N,K}^{\rm e})$ almost surely. This also ensures that $\widetilde\C^K,\C^K$ and $\Sigma^K$ are well-defined. Now, since $\widetilde\C_N^K$ is the empirical covariance based on $\widetilde\X_1^K,\ldots,\widetilde\X_N^K$, which are i.i.d.\ with $\E(\X_1^K) = 0$, $\var(\X_1^K) = \widetilde\C^K$ and $\|\widetilde\X_n^K\|^2 \le \widetilde\beta_{N,K}$ almost surely, it follows from Lemma~\ref{lemma:covnet_rate_of_convergence} that
\begin{equation}\label{eq:discrete_plus_noise_rate_eq1}
\E\Big(\verti{\widetilde\C_{\widetilde\F_N}^K - \widetilde\C^K}_2^2\Big) \le 2 \inf_{\G \in \widetilde\F_N} \verti{\G - \widetilde\C^K}_2^2 + \O\bigg(\frac{\widetilde\Delta_{N,K}^4}{N} \times \log\cover\bigg(\frac{\widetilde\Delta_{N,K}}{N},\widetilde\F_N,\vertj{\cdot}_2\bigg)\bigg),
\end{equation}
where $\widetilde\Delta_{N,K} = \max\{\widetilde\beta_{N,K},\gamma_N\}$. For the first term on the right-hand side, we use the upper-bound
\begin{align}\label{eq:discrete_plus_noise_rate_eq2}
\inf_{\G \in \widetilde\F_N} \verti{\G - \widetilde\C^K}_2^2 &= \inf_{\G \in \widetilde\F_N} \verti{\G - \C + \C - \widetilde\C^K}_2^2 \nonumber\\
&= \inf_{\G \in \widetilde\F_N} \verti{\G - \C + \C - \C^K - \Sigma^K}_2^2 \nonumber\\
&\le 3 \inf_{\G \in \widetilde \F_N} \verti{\G - \C}_2^2 + 3 \verti{\C - \C^K}_2^2 + 3 \verti{\Sigma^K}_2^2.
\end{align}

Secondly, because of the discretized nature of $\C^K$, we get that
\[
\vertj{\C^K - \C}_2^2 = \frac{1}{(K_1\cdots K_d)^2}\sum_{i_1=1}^{K_1}\cdots\sum_{i_d=1}^{K_d}\sum_{j_1=1}^{K_1}\cdots\sum_{j_d=1}^{K_d} \iint_{V_{i_1,\ldots,i_d}^K \times V_{j_1,\ldots,j_d}^K} \big\{c^K(\uvec,\vvec) - c(\uvec,\vvec)\big\}^2 \diff\uvec \diff\vvec,
\]
where $c^K$ is the kernel corresponding to $\C^K$, defined as $c^K(\uvec,\vvec) = \cov\{\X_1^K(\uvec),\X_1^K(\vvec)\}$. It follows that, under (M1), $c^K(\uvec,\vvec) = c(u_{i_1},\ldots,u_{i_d};u_{j_1},\ldots,u_{j_d})$ for $\uvec \in V_{i_1,\ldots,i_d}^K, \vvec \in V_{j_1,\ldots,j_d}^K$. Thus, for $\uvec \in V_{i_1,\ldots,i_d}^K, \vvec \in V_{j_1,\ldots,j_d}^K$, using the Lipschitz property of $c$, we get
\begin{align*}
\big|c^K(\uvec,\vvec) - c(\uvec,\vvec)\big\|^2 &\le \rho^2 \|(\uvec,\vvec) - (u_{i_1},\ldots,u_{i_d};u_{j_1},\ldots,u_{j_d})\big\|^2 \\
&\le \rho^2 \bigg(\frac{1}{K_1^2} + \cdots + \frac{1}{K_d^2}\bigg).
\end{align*}
Again, under (M2), $\X_1^K(\uvec) = \langle X, |V_{i_1,\ldots,i_d}^K|^{-1} \1\big\{\uvec \in V_{i_1,\ldots,i_d}^K\big\}\rangle$ for $\uvec \in V_{i_1,\ldots,i_d}^K$, so that 
\begin{align*}
c^K(\uvec,\vvec) &= \frac{1}{|V_{i_1,\ldots,i_d}^K|\,|V_{j_1,\ldots,j_d}^K|}\langle \C\,\1\big\{\uvec \in V_{i_1,\ldots,i_d}^K\big\},\big\{\vvec \in V_{j_1,\ldots,j_d}^K\big\}\rangle_2 \\
&= \frac{1}{|V_{i_1,\ldots,i_d}^K|\,|V_{j_1,\ldots,j_d}^K|} \iint_{V_{i_1,\ldots,i_d}^K \times V_{j_1,\ldots,j_d}^K} c(\uvec^\prime,\vvec^\prime) \diff\uvec^\prime \diff\vvec^\prime.
\end{align*}
Thus, under this scheme again, for $\uvec \in V_{i_1,\ldots,i_d}^K, \vvec \in V_{j_1,\ldots,j_d}^K$, we get
\begin{align*}
\big|c^K(\uvec,\vvec) - c(\uvec,\vvec)\big\|^2 &= \bigg|\frac{1}{|V_{i_1,\ldots,i_d}^K|\,|V_{j_1,\ldots,j_d}^K|}\iint_{V_{i_1,\ldots,i_d}^K \times V_{j_1,\ldots,j_d}^K} c(\uvec^\prime,\vvec^\prime) \diff\uvec^\prime \diff\vvec^\prime - c(\uvec,\vvec)\bigg|^2 \\
&=\bigg|\frac{1}{|V_{i_1,\ldots,i_d}^K|\,|V_{j_1,\ldots,j_d}^K|}\iint_{V_{i_1,\ldots,i_d}^K \times V_{j_1,\ldots,j_d}^K} \underbrace{\big|c(\uvec^\prime,\vvec^\prime) - c(\uvec,\vvec)\big|}_{\le \rho\sqrt{\frac{1}{K_1^2}+\cdots+\frac{1}{K_d^2}}}\diff\uvec^\prime\diff\vvec^\prime\bigg| \\
&\le \rho^2 \bigg(\frac{1}{K_1^2} + \cdots + \frac{1}{K_d^2}\bigg).
\end{align*}
Hence, under both (M1) and (M2), we get that
\begin{equation}\label{eq:discrete_plus_noise_rate_eq3}
\vertj{\C^K-\C}_2^2 \le \rho^2 \bigg(\frac{1}{K_1^2} + \cdots + \frac{1}{K_d^2}\bigg).
\end{equation}

Thirdly, by the assumptions of Theorem~\ref{thm:discrete_measurement_rate}, it follows that
\begin{align}\label{eq:discrete_plus_noise_rate_eq4}
\verti{\Sigma^K}_2^2 &= \frac{1}{(K_1\cdots K_d)^2} \sum_{i_1=1}^{K_1}\cdots\sum_{i_d=1}^{K_d}\sum_{j_1=1}^{K_1}\cdots\sum_{j_d=1}^{K_d} \big\{\cov(E_n^K[i_1,\ldots,i_d],E_n^K[j_1,\ldots,j_d])\big\}^2 \nonumber\\
&= \frac{1}{(K_1\cdots K_d)^2} \sum_{i_1=1}^{K_1}\cdots\sum_{i_d=1}^{K_d} \big\{\var(E_n^K[i_1,\ldots,i_d])\big\}^2 \nonumber\\
&= \frac{\sigma_K^4}{K_1\cdots K_d}.
\end{align}

Combining \eqref{eq:discrete_plus_noise_rate_eq1}--\eqref{eq:discrete_plus_noise_rate_eq4} with \eqref{eq:discrete_plus_noise_upper_bound}, we get that
\begin{align*}
\E\Big(\verti{\Chat_{\widetilde\F_N}^K  - \C}_2^2\Big) &\le 18 \inf_{\G \in \widetilde\F_N} \verti{\G - \C}_2^2 + 21 \rho^2\bigg(\frac{1}{K_1^2}+\cdots+\frac{1}{K_d^2}\bigg) + \frac{21\sigma_K^4}{K_1\cdots K_d} \\
&\kern30ex + \O\bigg(\frac{\widetilde\Delta_{N,K}^4}{N} \times \log\cover\bigg(\frac{\widetilde\Delta_{N,K}}{N},\widetilde\F_N,\vertj{\cdot}_2\bigg)\bigg),
\end{align*}
where $\widetilde\Delta_{N,K} = \max\{2(\beta_N+\beta_{N,K}^{\rm e}),\gamma_N\}$. Using this, with $\gamma_N = R\lambda_N$ and the derived bounds on the covering numbers, we get the rates as given in Theorem~\ref{thm:discrete_measurement_rate}.

\section{Additional simulation results}\label{supp:additional_simulations}

In this section, we provide some additional simulation results which were skipped in the main text. We start with the estimation errors for Ex\,1--5 in the main text in 2D for the fixed resolution and varying sample size regime, which are shown in Figure~\ref{fig:all_2D_N}. The results are as expected -- the estimation error for all the methods decreases as the sample size increases. 

\begin{figure}[h!t]
\centering
\begin{tabular}{cc}
(a) Brownian sheet & (b) Rotated Brownian sheet \\
\includegraphics[width=0.4\linewidth]{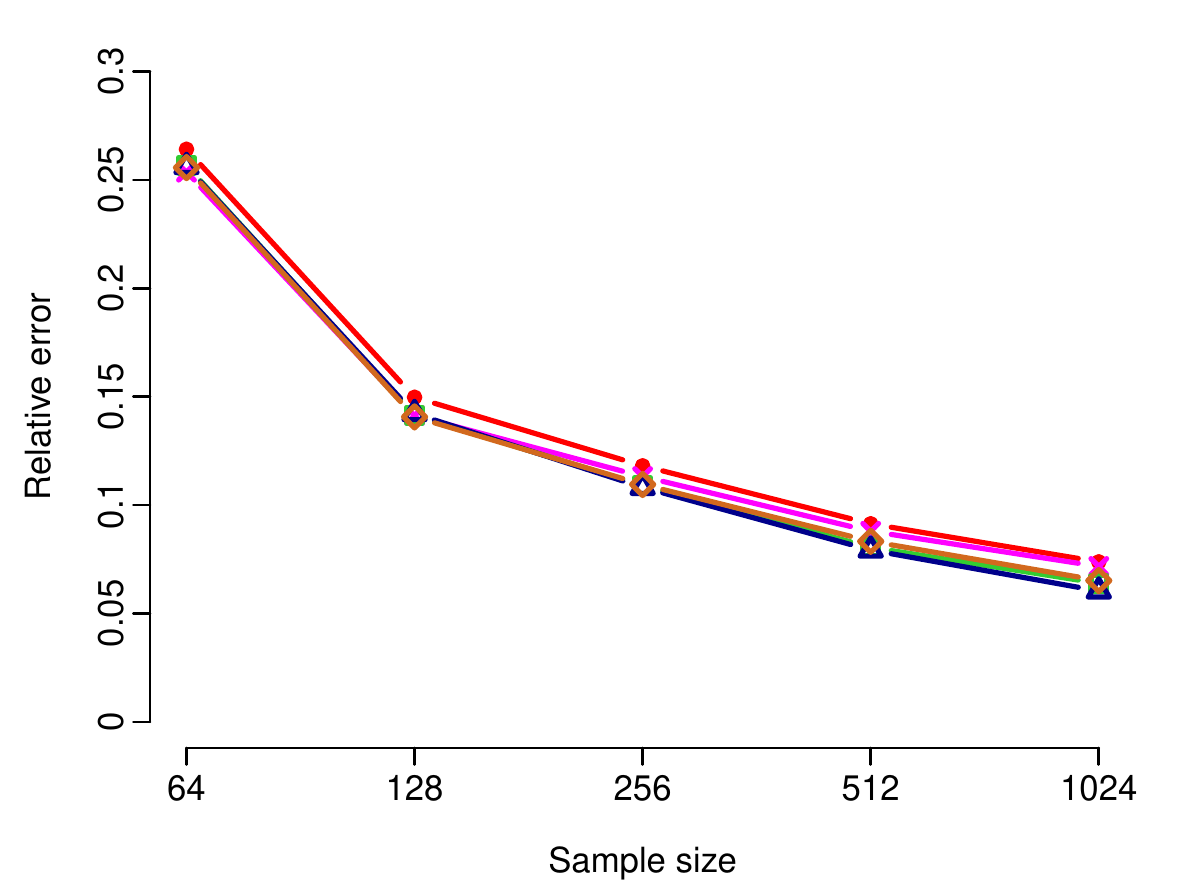} &
\includegraphics[width=0.4\linewidth]{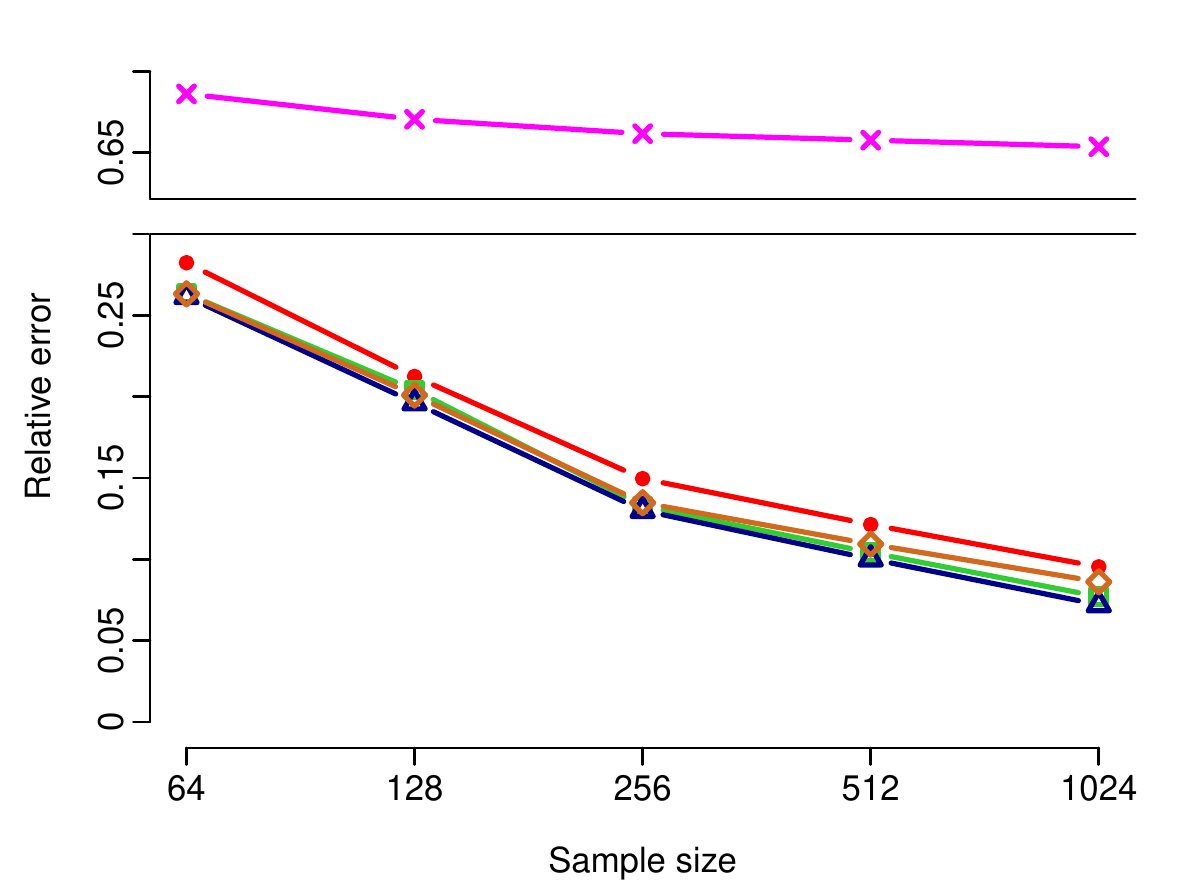} \\ [5pt]
(c) Integrated Brownian sheet & (d) Rotated integrated Brownian sheet \\
\includegraphics[width=0.4\linewidth]{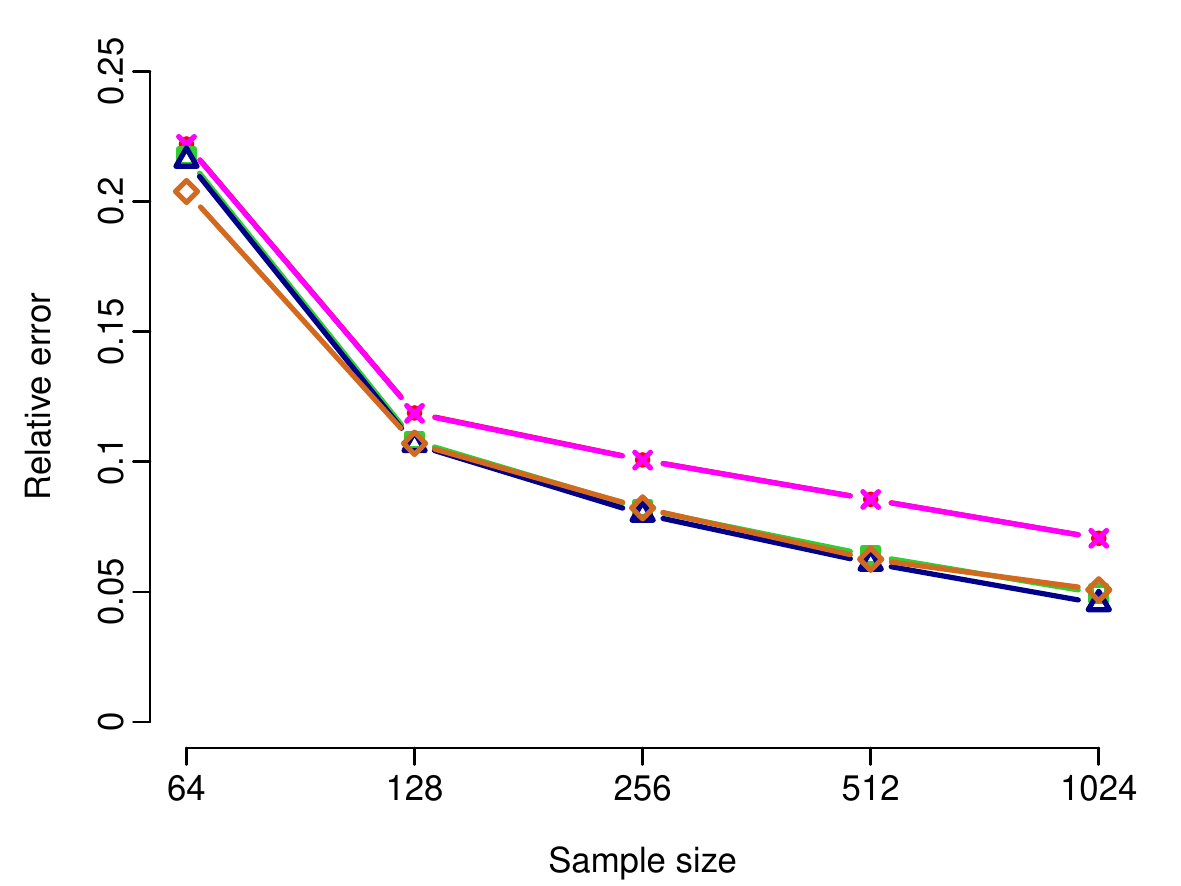} &
\includegraphics[width=0.4\linewidth]{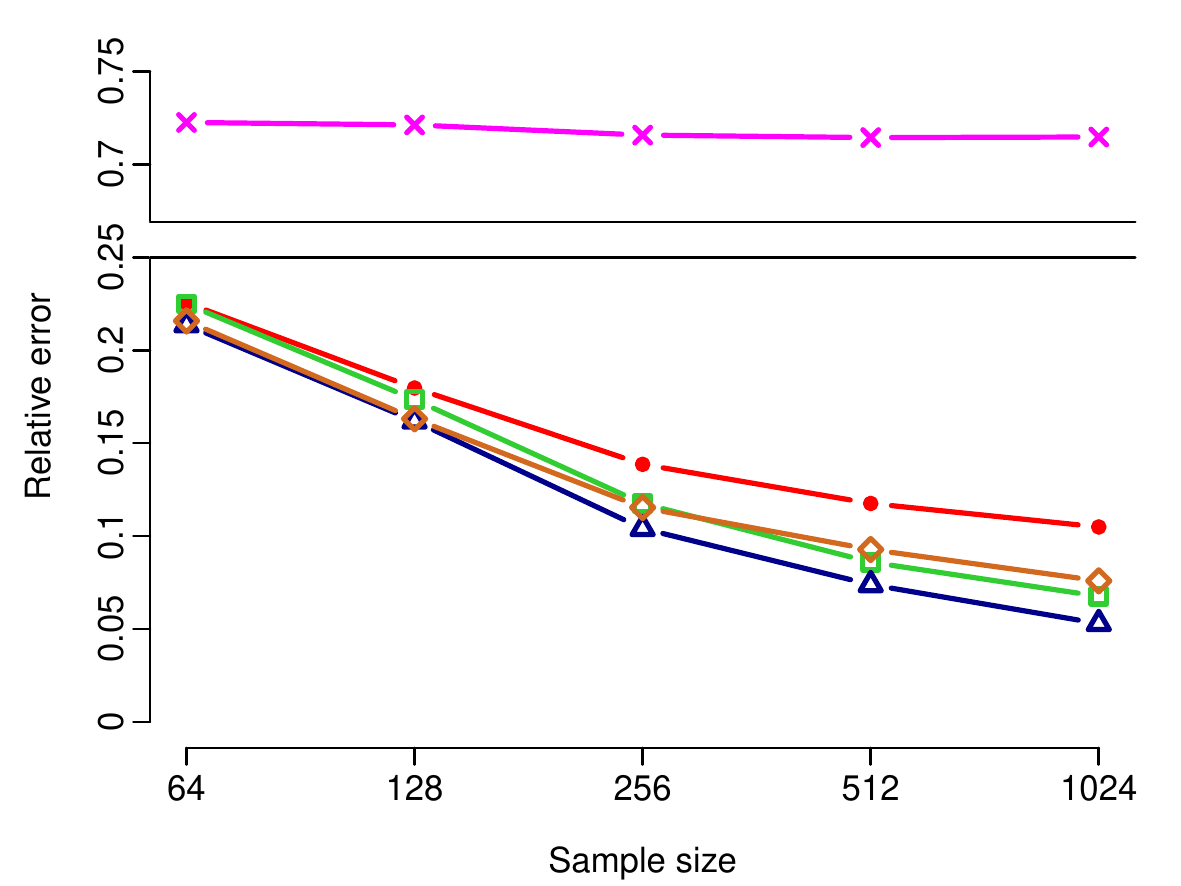} \\ [5pt]
\multicolumn{2}{c}{(e) Matern ($\nu=0.01$)} \\
\multicolumn{2}{c}{\includegraphics[width=0.4\linewidth]{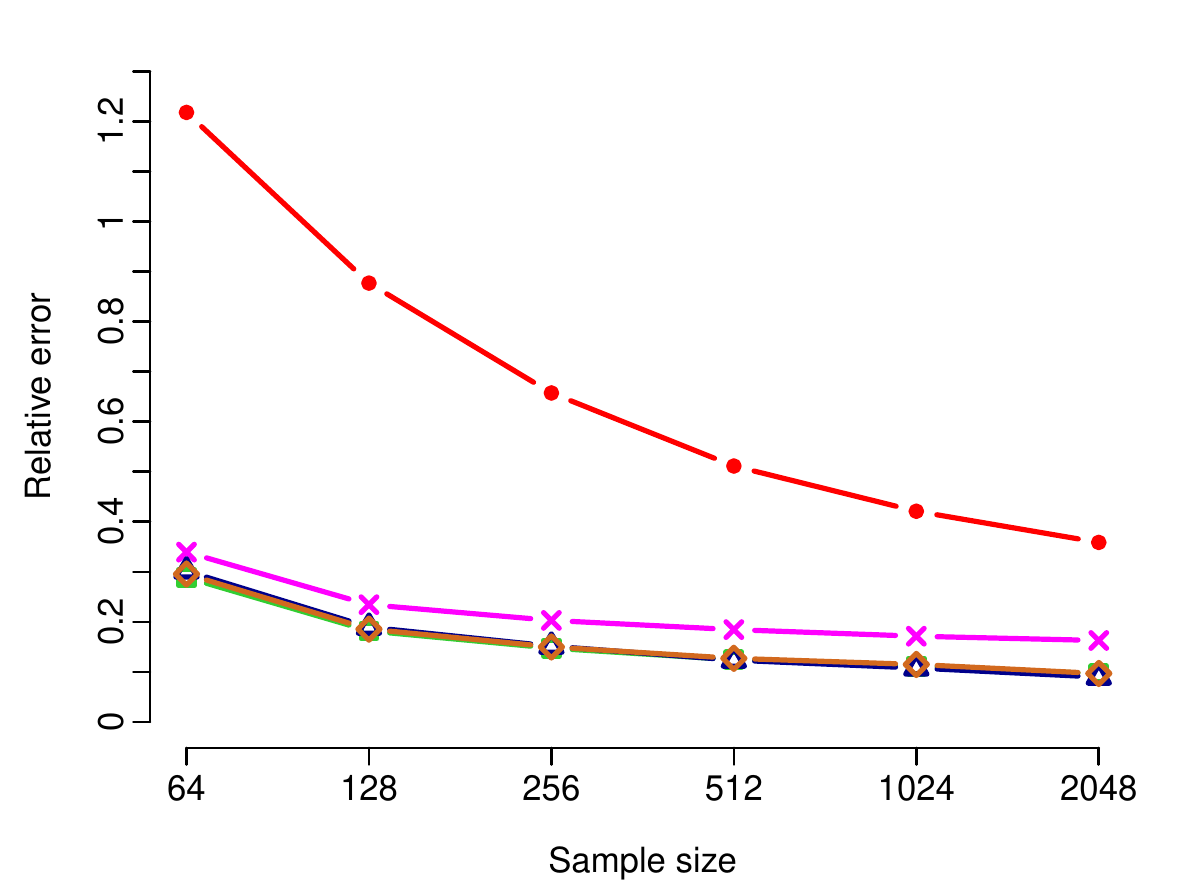}} \\
\multicolumn{2}{c}{\textbf{Legend:} Empirical \Line[Emp]\Emp\Line[Emp]~~ Best separable \Line[Sep]\Sep\Line[Sep]~~ Shallow \Line[Sh]\Sh\Line[Sh]~~ Deep \Line[D]\D\Line[D]~~ DeepShared \Line[DS]\DS\Line[DS]}
\end{tabular}
\caption{\label{fig:all_2D_N} Relative errors of different methods for different examples in 2D. Results are reported for a fixed resolution of $25 \times 25$ and varying sample size. The numbers are averages based on $25$ simulation runs.}
\end{figure}

In Ex\,1 and 3, when the true covariance is separable, although all the methods perform almost similarly, the CovNet estimators seem to have an edge over the others. In Ex\,2 and 4, the separability of the true covariance is broken by rotation, which has a dire consequence on the performance of the best separable estimator, but not on the CovNet estimators. The advantage of the CovNet estimators is more prominent for larger sample sizes, as can be seen in the results for the integrated Brownian motion (both the usual and the rotated versions) and the Mat\'ern covariance models (Figures~\ref{fig:all_2D_N}(c)--(e)).

\begin{figure}[t]
\centering
\begin{tabular}{cc}
(a) Rotated Brownian sheet: $N=500$ & (b) Rotated integrated Brownian sheet: $N=500$ \\
\includegraphics[width=0.4\linewidth]{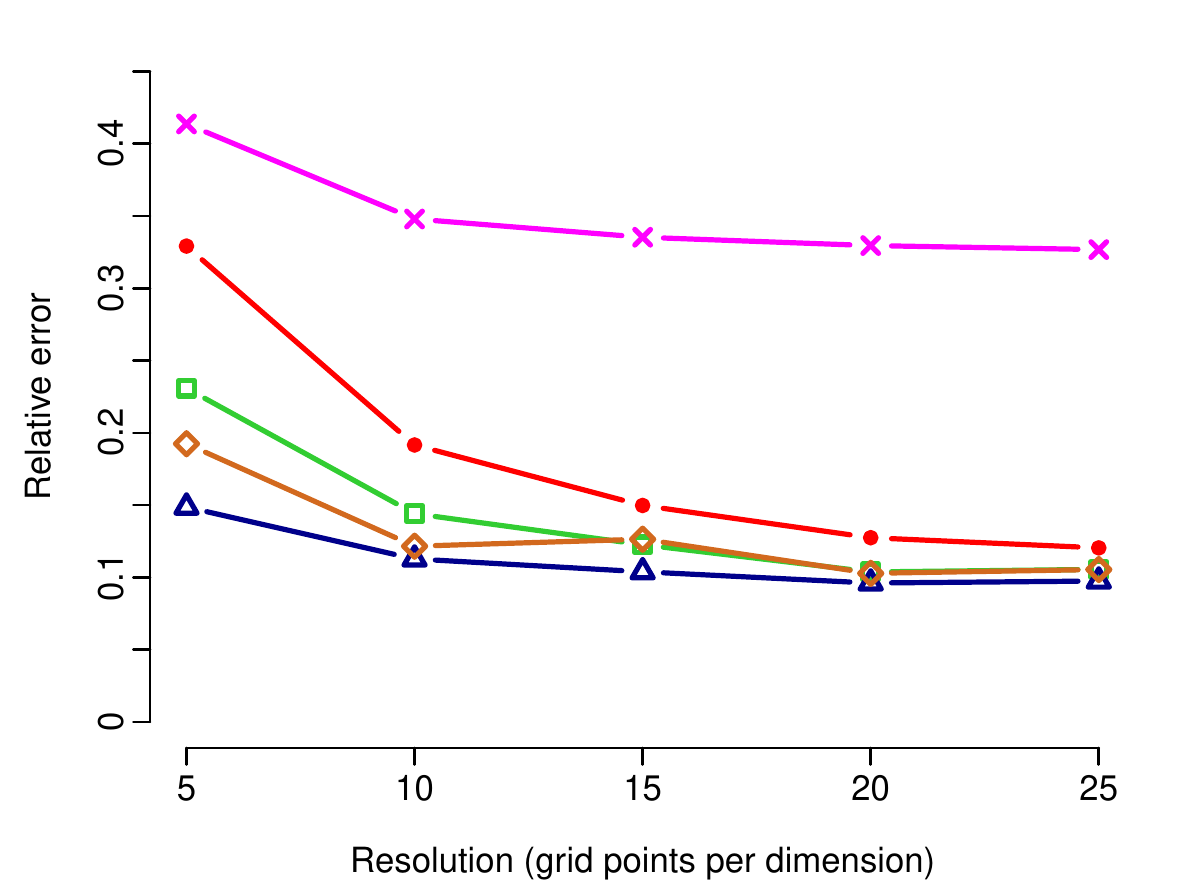} &
\includegraphics[width=0.4\linewidth]{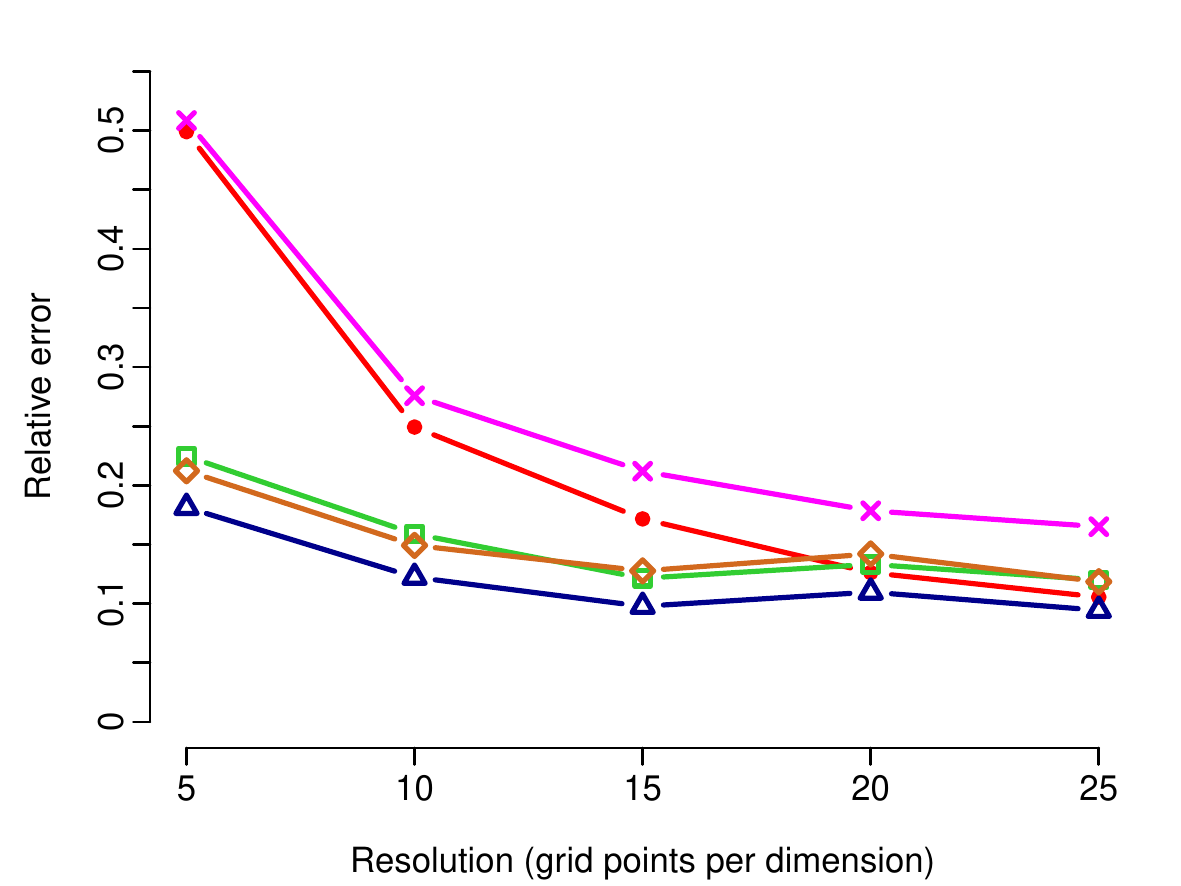} \\[5pt]
(c) Mat\'ern covariance: $N=250$, resolution $15 \times 15 \times 15$ & (d) Mat\'ern covariance: $N=500$, $\nu=0.01$  \\
\includegraphics[width=0.4\linewidth]{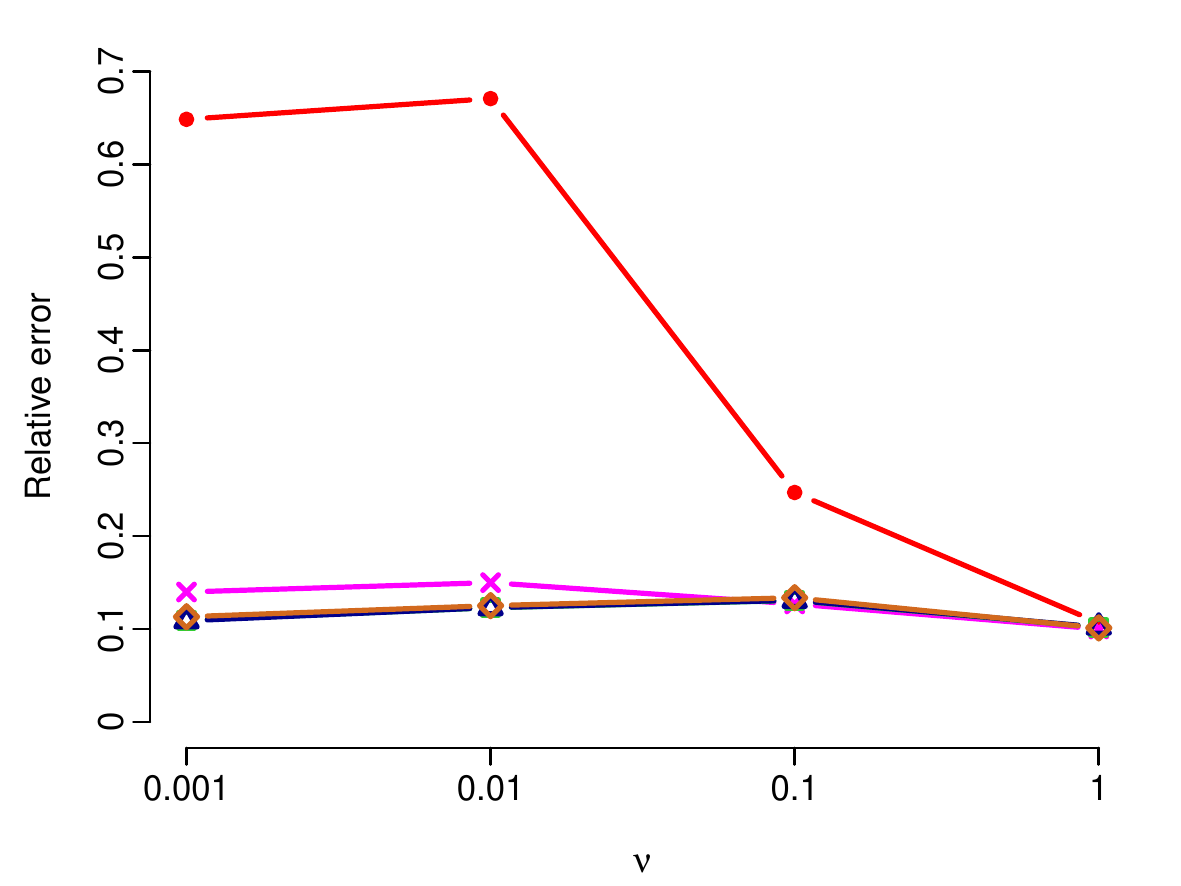} &
\includegraphics[width=0.4\linewidth]{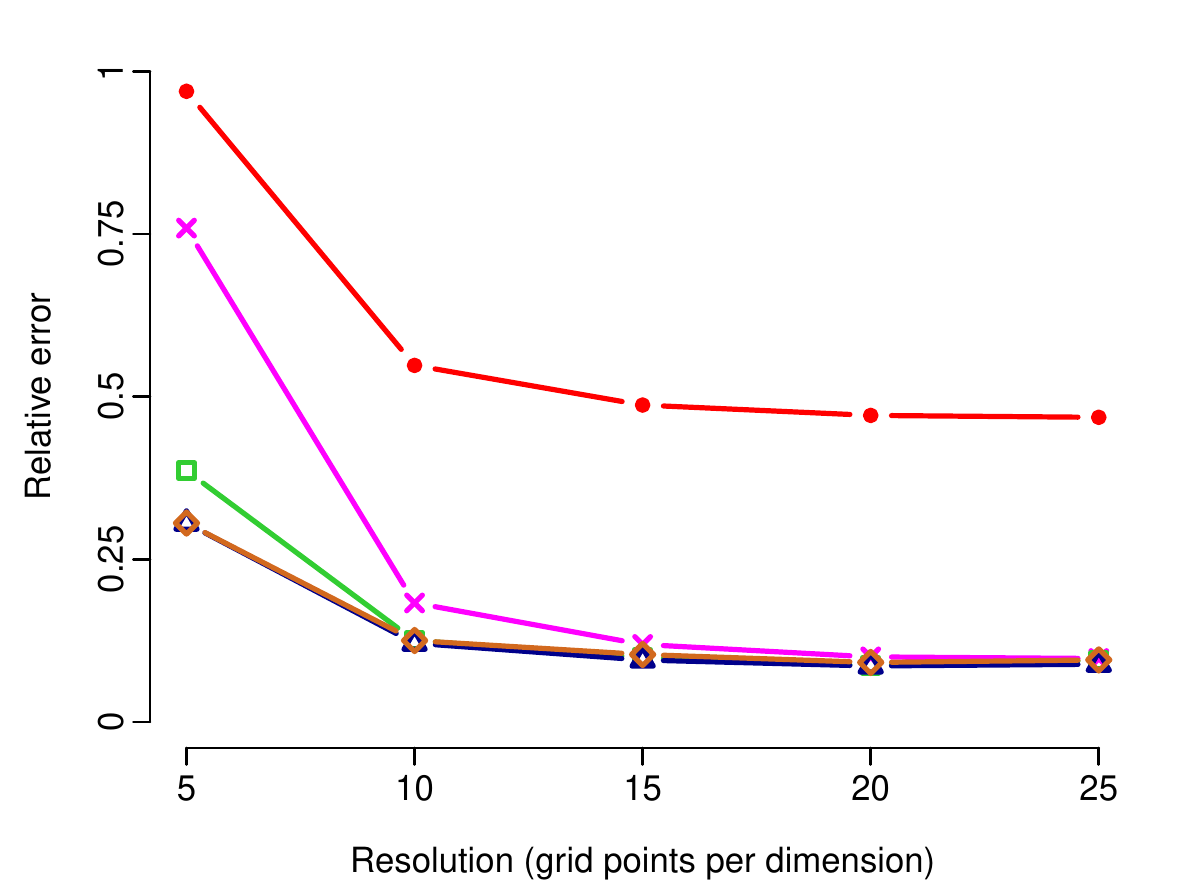} \\
\multicolumn{2}{c}{\textbf{Legend:} Empirical \Line[Emp]\Emp\Line[Emp]~~ Best separable \Line[Sep]\Sep\Line[Sep]~~ Shallow \Line[Sh]\Sh\Line[Sh]~~ Deep \Line[D]\D\Line[D]~~ DeepShared \Line[DS]\DS\Line[DS]}
\end{tabular}
\caption{\label{fig:errors_3D} Relative errors of different methods for different examples in 3D. In (a) and (b), results are for a fixed sample size of $500$ and varying resolutions. In (c), results are for sample size $250$ and resolution $15 \times 15 \times 15$, with varying smoothness parameter $\nu$. In (d), results are for sample size $500$ and $\nu=0.01$ with varying resolutions. The numbers are averages based on $20$ simulation runs.}
\end{figure}

\medskip

Next, we consider the results in 3D, which are reported in Figure~\ref{fig:errors_3D}. Here, we only report the results for the non-separable examples (Ex\,2, 4 and 5). For the rotated examples (Ex\,2 and 4), we take $\mathrm O$ to be the composition of the basic rotations by $45^\circ$ along the $x,y$, and $z$-axes, respectively. Formally, $\mathrm O = \mathrm O_z \mathrm O_y \mathrm O_x$, where
\[
\mathrm O_x = \begin{pmatrix} 1 & 0 & 0 \\ 0 & 1/\sqrt{2} & -1/\sqrt{2} \\ 0 & 1/\sqrt{2} & 1/\sqrt{2} \end{pmatrix}, 
\mathrm O_y = \begin{pmatrix} 1/\sqrt{2} & 0 & 1/\sqrt{2} \\ 0 & 1 & 0 \\ -1/\sqrt{2} & 0 & 1/\sqrt{2} \end{pmatrix} \text{ and }
\mathrm O_z = \begin{pmatrix} 1/\sqrt{2} & -1/\sqrt{2} & 0 \\ 1/\sqrt{2} & 1/\sqrt{2} & 0 \\ 0 & 0 & 1 \end{pmatrix}.
\]
The best separable estimator is designed specifically for problems which have the 2D structure (specifically, for spatio-temporal problems), with no straightforward extension to the 3D scenario. To accommodate for this, for the best separable estimator, we combined two of the three dimensions together to transform the data into 2D. This gives us three different results depending on which two dimensions are combined. In Figure~\ref{fig:errors_3D}, we report the best (minimum error) among these three results.

In 3D, our basic findings remain the same as in 2D. The CovNet estimators outperform the empirical and the best separable estimators, especially when the resolution is low. Also, the smoothness of the integrated Brownian sheet enhances the performance of the CovNet operators, as opposed to the other two estimators. For the Mat\'ern example, the empirical estimator performs very poorly for smaller values of $\nu$. One interesting observation here is that the effect of non-separability is not so severe on the best separable estimator, particularly for the integrated Brownian sheet. This may be due to the fact that during the construction of the best separable estimator in 3D, we merged two dimensions together, which somehow caters for the non-separability. Among the different CovNet estimators, the deepshared model had the best performance.

\begin{table}[t]
\centering
\caption{\label{table:errors_CV_3D}Relative errors (in \%) of the CovNet models with hyperparameters chosen using $5$-fold cross-validation. Difference from the least observed error over the range of hyperparameters is shown in parentheses. Relative errors for the empirical and the best separable estimators are also reported. The reported numbers are based on one simulation run with $N=500$ and $K=15$.}
\begin{tabular}{lrrrrr}
Example & Empirical & Best separable & Shallow & Deep & Deepshared \\ [3pt]
Rotated Brownian sheet 3D & $15.38$ & $33.83$ & $12.84\,(0.00)$ & $11.64\,(0.00)$ & $10.67\,(0.35)$ \\
Rotated integrated Brownian sheet 3D & $15.35$ & $19.70$ & $16.41\,(1.54)$ & $12.57\,(0.73)$ & $11.16\,(0.81)$ \\ [2pt]
Matern 3D $\nu=0.001$ & $47.52$ & $12.38$ & $10.01\,(0.00)$ & $10.94\,(1.23)$ & $10.10\,(0.00)$ \\
Matern 3D $\nu=0.01$ & $49.14$ & $12.38$ & $11.13\,(0.00)$ & $11.83\,(0.36)$ & $11.88\,(0.98)$ \\
Matern 3D $\nu=0.1$ & $18.94$ & $11.65$ & $12.76\,(1.72)$ & $12.21\,(0.09)$ & $11.99\,(1.07)$ \\
Matern 3D $\nu=1$ & $9.94$ & $9.66$ & $9.64\,(0.00)$ & $9.61\,(0.00)$ & $9.45\,(0.55)$
\end{tabular}
\end{table}

\medskip

Finally, in Table~\ref{table:errors_CV_3D}, we show the results for the proposed cross-validation strategy on the examples in 3D. As before, for each example, we report relative errors for the three CovNet models (shallow, deep and deepshared) with hyperparameters selected via cross-validation based on a single simulation run with $500$ observations at a resolution of $15 \times 15 \times 15$. For each model, we also show the difference from the least observed relative error over the range of hyperparameters. The relative errors for the empirical and the best separable estimators are also reported to facilitate comparison. Similar to 2D, in all the examples, the average difference from the best result was less than $1\%$ for all the CovNet models, while the maximum difference was less than $1.75\%$. So, here too, the cross-validation method identifies a good set of hyperparameters.

\bibliographystyle{imsart-nameyear}
\bibliography{biblio}

\end{document}